\newcommand{\unnicefrac}[2]{\ensuremath{#1\mkern-1.5mu/\mkern-1.5mu{#2}}}
\newcolumntype{L}{>{\(}l<{\)}} 
\definecolor{myorange}{HTML}{DA7200}
\definecolor{myblue}{HTML}{0050DC}
\definecolor{lipicsYellow}{rgb}{0.5,0.5,0.5}
\newcommand\noshowkeys{\def\hideNextShowKeysLabel{test}}
\renewcommand*\showkeyslabelformat[1]{%
\@ifundefined{hideNextShowKeysLabel}{%
\noexpandarg%
\StrSubstitute{#1}{ }{\textvisiblespace}[\TEMP]%
\parbox[t]{\marginparwidth}{\raggedright\normalfont\small\ttfamily\{{\color{red!50!black}\expandafter\seqsplit\expandafter{\TEMP}}\}}%
}{}
}
\tikzset{external/up to date check=diff}
\renewcommand{\todo}[2][]{\tikzexternaldisable\@todo[#1]{#2}\tikzexternalenable}
\def\@enddocumenthook{}%
\def\temp{&} \catcode`&=\active \let&=\temp
\newcommand{\mytikzcdcontext}[2]{
  \begin{tikzpicture}[baseline=(mainnode.base)]
    \node (mainnode) [inner sep=0, outer sep=0] {\begin{tikzcd}[#2]
        #1
    \end{tikzcd}};
  \end{tikzpicture}}
\def\myargs{#1}%
\edef\mydiagram{\noexpand\mytikzcdcontext{\expandonce\BODY}{\expandonce\myargs}}%
\setlist[enumerate,1]{label=(\arabic*),font=\normalfont,align=left,leftmargin=0pt,labelindent=0pt,listparindent=\parindent,labelwidth=0pt,itemindent=!,topsep=3pt,parsep=0pt,itemsep=3pt,start=1}
\setlist[enumerate,2]{label=(\alph*),font=\normalfont,labelindent=*,leftmargin=*,start=1}
\setlist[itemize]{labelindent=*,leftmargin=*,topsep=5pt,itemsep=3pt}
\setlist[description]{labelindent=*,leftmargin=*,itemindent=-1 em}
\newcommand{\sortcoprod}[1][i]{{\textstyle\coprod}}
\newcommand{\gap}{\mathbin{\_}}
\newcommand{\CO}{\mathcal{O}}
\newcommand{\op}[1]{\ensuremath{\mathsf{#1}}}
\newcommand{\CoPaR}{\ensuremath{\mathsf{CoPaR}}\xspace}
\newcommand{\inj}{\op{in}}
\newcommand{\inl}{\op{inl}}
\newcommand{\inr}{\op{inr}}
\newcommand{\old}{{\op{old}}}
\newcommand{\id}{\op{id}}
\newcommand{\geZero}{>^{\!\!\smash{?}}}
\renewcommand{\Im}{\mathop{\op{Im}}}
\renewcommand{\ker}{\mathop{\op{ker}}}
\newcommand{\out}{\pi}
\newcommand{\Pot}{\mathcal{P}}
\newcommand{\Bag}{\mathcal{B}}
\newcommand{\partialto}{\rightharpoonup}
\newcommand{\fpair}[1]{\langle #1\rangle}
\newcommand{\Set}{\ensuremath{\mathsf{Set}}\xspace}
\newcommand{\Potf}{\ensuremath{\mathcal{P}_{\mathrm{f}}}\xspace}
\newcommand{\Dist}{\ensuremath{\mathcal{D}}\xspace}
\newcommand{\Bagf}{\ensuremath{\mathcal{B}_{\mathrm{f}}}\xspace}
\newcommand{\C}{\ensuremath{\mathcal{C}}\xspace}
\newcommand{\D}{\ensuremath{\mathcal{D}}\xspace}
\newcommand{\E}{\ensuremath{\mathcal{E}}\xspace}
\newcommand{\Z}{\ensuremath{\mathds{Z}}\xspace}
\newcommand{\N}{\ensuremath{\mathds{N}}\xspace}
\newcommand{\B}{\ensuremath{\mathds{B}}\xspace}
\newcommand{\R}{\ensuremath{\mathds{R}}\xspace}
\newcommand{\Pad}{\ensuremath{\mathsf{Pad}}\xspace}
\newcommand{\Comp}{\ensuremath{\mathsf{Comp}}\xspace}
\newcommand{\Coalg}{\ensuremath{\mathsf{Coalg}}\xspace}
\newcommand{\Inputs}{\ensuremath{A}}
\newcommand{\Leftblock}{\mathcal{L}}
\newcommand{\Middleblock}{\mathcal{M}}
\renewcommand{\gets}{\ensuremath{\xspace\mathop{:=}}}
\newcommand{\groupsum}{\raisebox{2pt}{$\scriptstyle\sum$}}
\newcommand{\coprodfunctor}{\raisebox{1pt}{$\scriptstyle\coprod$}}
\newcommand{\scriptcoprodfunctor}{\raisebox{1pt}{$\scriptscriptstyle\coprod$}}
\newcommand{\textqt}[1]{``#1''}
\newcommand{\scissors}{\ding{34}} 
\tikzstyle{inlinecd}=[column sep = 5mm, row sep = 5mm]
\tikzstyle{weakly}=[dash pattern=on 1pt off 1pt]
\tikzstyle{shiftarr}=[
\newsavebox{\mypullbackcorner}%
\sbox{\mypullbackcorner}{%
\tikzexternaldisable%
\begin{tikzpicture}
    \draw[-] (0,0) -- (.5em,.5em) -- (0,1em);
\end{tikzpicture}%
\tikzexternalenable}
\newcommand{\pullbackangle}[2][]{\arrow[phantom,to path={
                     -- ($ (\tikztostart)!1cm!#2:([xshift=8cm]\tikztostart) $)
                        node[anchor=west,pos=0.0,rotate=#2,
                        inner xsep = 0]
                        {\begin{tikzpicture}[minimum
                        height=1mm,baseline=0,#1]
    \draw[-] (0,0) -- (.5em,.5em) -- (0,1em);
                        \end{tikzpicture}}}]{}}
\newcommand{\descto}[3][]{\arrow[phantom]{#2}[#1]{\text{\footnotesize{}\begin{tabular}{c}#3\end{tabular}}}}
  \savedanchor\centerpoint{\pgfpointorigin}
  \savedanchor\centerpoint{\pgfpointorigin}
\tikzset{
      functornode/.style={
        draw=black!80,
        line width=0.5pt,
        anchor=out,
        xshift=8mm,
        fill=none,
        execute at begin node=$,%
        execute at end node=$,%
        rounded corners= 2pt,
      },
      unary/.style={
        functornode,
        shape=unaryfunctor,
        minimum height = 2em,
        minimum width = 2.2em,
      },
      binary/.style={
        functornode,
        shape=binaryfunctor,
        inner ysep=1pt,
      },
}
\tikzstyle{partitionBlock}=[
\theoremstyle{plain}
\newtheorem{theorem}[thm]{Theorem}
\newtheorem{corollary}[thm]{Corollary}
\newtheorem{lemma}[thm]{Lemma}
\newtheorem{proposition}[thm]{Proposition}
\theoremstyle{definition}
\newtheorem{algorithm}[thm]{Algorithm}
\newtheorem{assumption}[thm]{Assumption}
\newtheorem{construction}[thm]{Construction}
\newtheorem{definition}[thm]{Definition}
\newtheorem{examples}[thm]{Examples}
\newtheorem{example}[thm]{Example}
\newtheorem{notation}[thm]{Notation}
\newtheorem{remark}[thm]{Remark}
\theoremstyle{remark}
\numberwithin{equation}{section}
\def\tikzarrow#1{\tikzset{external/export next=false}\tikz[commutative diagrams/every diagram,baseline=-3.1pt]
  \draw[commutative diagrams/.cd,every arrow, every label,#1] (0,0) -- +(1.2em,0);}
\newsavebox{\boxtwoheadrightarrow}
\sbox{\boxtwoheadrightarrow}{\tikzarrow{->>}}
\def\twoheadrightarrow{\mathbin{\usebox{\boxtwoheadrightarrow}}}
\def\epito{\twoheadrightarrow}
\newsavebox{\boxmonoto}
\sbox{\boxmonoto}{\tikzarrow{>->}}
\def\rightarrowtail{\mathbin{\usebox{\boxmonoto}}}
\def\monoto{\rightarrowtail}
\def\subto{\hookrightarrow}
\def\upa{\mathord{\uparrow}}
\newcommand{\takeout}[1]{\empty}
\newcommand{\ith}{\ensuremath{i^\text{th}}\xspace}
\newcommand{\copar}{\textsf{CoPaR}\xspace}
\title[Efficient and Modular Coalgebraic Partition Refinement]{Efficient and Modular\\ Coalgebraic Partition Refinement}
\begin{document}

\titlecomment{Work performed as part of the DFG-funded project
  COAX (MI~717/5-1 and SCHR 1118/12-1) }

\author[T.~Wißmann]{Thorsten Wißmann}
\author[U.~Dorsch]{Ulrich Dorsch}
\author[S.~Milius]{Stefan Milius}
\author[L.~Schröder]{Lutz Schröder}
\address{Friedrich-Alexander-Universität Erlangen-Nürnberg, Germany}	
\email{\{thorsten.wissmann,ulrich.dorsch,stefan.milius,lutz.schroeder\}@fau.de}  

\begin{abstract}
  We present a generic partition refinement algorithm that quotients
  coalgebraic systems by behavioural equivalence, an important task in
  system analysis and verification.  Coalgebraic generality allows us
  to cover not only classical relational systems but also, e.g.\
  various forms of weighted systems and furthermore to flexibly
  combine existing system types. Under assumptions on the type functor
  that allow representing its finite coalgebras in terms of nodes and
  edges, our algorithm runs in time $\mathcal{O}(m\cdot \log n)$ where
  $n$ and $m$ are the numbers of nodes and edges, respectively. The
  generic complexity result and the possibility of combining system
  types yields a toolbox for efficient partition refinement
  algorithms. Instances of our generic algorithm match the
  run-time of the best known algorithms for unlabelled transition
  systems, Markov chains, deterministic automata (with fixed
  alphabets), Segala systems, and for color refinement.
\end{abstract}

\maketitle


\section{Introduction}
The minimization of a state based system typically consists of two steps:
\begin{enumerate}
\item Removal of unreachable states.
\item Identification of states exhibiting the same behaviour, w.r.t.~a
  suitable notion of `sameness'; here we are interested in
  \emph{minimization under bisimilarity}.
\end{enumerate} The computation of reachable states is usually
accomplished by a straightforward search through the transition graph
of a system. Minimization under
bisimilarity however is more complex because of its corecursive
nature: whether two states are bisimilar depends on which of their
successors are bisimilar. In the present work, we present a generic
algorithm to perform bisimilarity minimization efficiently for a broad
class of systems.

The task of minimization appears as a subtask in state space reduction
(e.g.~\cite{BlomO05}) or non-interference
checking~\cite{MeydenZ07}. The notion of bisimulation was first
defined for relational systems~\cite{Benthem77,Milner80,Park81}; it
was later extended to other system types including probabilistic
systems~\cite{LarsenS91,DesharnaisEA02}, weighted
automata~\cite{Buchholz08}, and (weighted) tree
automata~\cite{HoegbergEA09,HoegbergEA07}. More generally,
\emph{universal coalgebra} (see e.g.~Rutten~\cite{Rutten00}) provides
a framework capturing all these types of systems uniformly, and their
notions of bisimulation appear as special instances of Aczel and
Mendler's notion of bisimulation for
coalgebras~\cite{aczelmendler:89}.

The importance of minimization under bisimilarity appears to increase with
the complexity of the underlying system type. E.g.\ while in LTL model
checking, minimization drastically reduces the state space but,
depending on the application, does not necessarily lead to a speedup
in the overall balance~\cite{FislerV02}, in probabilistic model
checking, minimization under strong bisimilarity does lead to
substantial efficiency gains~\cite{KatoenEA07}. This is the reason why
model checkers implement bisimilarity minimization, e.g.~the mCRL2
toolset~\cite{BunteEA19} provides explicit routines for comparing and
minimizing systems w.r.t.~strong bisimilarity and also other types of
equivalences.  

The algorithmics of minimization, often referred to as \emph{partition
  refinement} or \emph{lumping}, has received a fair amount of
attention. Since bisimilarity is a greatest fixpoint, it is more or
less immediate that it can be calculated in polynomial time by
approximating this fixpoint from above following Kleene's fixpoint
theorem. For transition systems, Kanellakis and
Smolka~\cite{KanellakisSmolka83,KanellakisS90} introduced an algorithm that in fact runs
in time $\mathcal{O}(nm)$ where $n$ is the number of nodes and $m$ is
the number of transitions. An even more efficient algorithm running in
time~$\mathcal{O}(m\log n)$ was later described by Paige and
Tarjan~\cite{PaigeTarjan87}; this bound holds even if the number of
action labels is not fixed~\cite{Valmari09}. Current algorithms
typically apply further optimizations to the Paige-Tarjan algorithm,
achieving better average-case behaviour but the same worst-case
behaviour~\cite{DovierEA04}. Probabilistic minimization has undergone
a similarly dynamic
development~\cite{BaierEM00,CattaniS02,ZhangEA08,GrooteEA18}, and the
best algorithms for minimization of Markov chains now have the same
$\CO(m\log n)$ run-time as the relational Paige-Tarjan
algorithm~\cite{HuynhTian92,DerisaviEA03,ValmariF10}.  Using ideas
from abstract interpretation, Ranzato and Tapparo~\cite{RanzatoT08}
have developed a relational partition refinement algorithm that is
generic over \emph{notions of process equivalence}. As instances, they
recover the classical Paige-Tarjan algorithm for strong bisimilarity
and an algorithm for stuttering equivalence, and obtain new algorithms
for simulation equivalence and for a new process equivalence; the
generic run-time analysis, however, is coarser for this algorithm, and
in particular does not recover the $\CO(m\log n)$ bound for the
classical Paige-Tarjan algorithm. Recently, Groote et
al.~\cite{GrooteEA17} have presented an improved algorithm for
relational partition refinement that covers stuttering, branching and
strong bisimilarity.


In this paper we follow an orthogonal approach and provide a generic
partition refinement algorithm that can be instantiated for many
different \emph{types} of systems (e.g.~nondeterministic,
probabilistic, weighted). The key to genericity is to use the methods of
universal coalgebra. That is, we encapsulate
transition types of systems as endofunctors on sets (or a more general
category), and model systems as coalgebras for a given type
functor. 

\subsection*{Overview of the paper}
In \autoref{sec:prelim}, the categorical generalizations of the standard set
operations on partitions and equivalence relations are introduced. A short
introduction to coalgebras as a framework for state-based systems is given.

In order to explain the generic pattern that existing partition
refinement algorithms in the literature follow, we exhibit in
\autoref{sec:fuzzyAlgo} an informal partition refinement algorithm in
natural language that operates on a high level of generality.

In \autoref{sec:cat}, the generic pattern is made precise by a
categorical construction, in which we work with coalgebras for a
monomorphism-preserving endofunctor on a category with image
factorizations. Here we present a quite general category-theoretic
partition refinement algorithm, and we prove its correctness. The
algorithm is parametrized over a \op{select} routine that determines
which observations are used to split blocks of states. We present two
\op{select} routines; one yields a known coalgebraic final-chain
algorithm (e.g.~\cite{DBLP:conf/fossacs/AdamekBHKMS12,KonigKupper14}),
the other routine is \textqt{select the smaller half}, a trick that
goes back to Hopcroft~\cite{Hopcroft71} and lies at the heart of most
modern partition refinement algorithms including Paige and
Tarjan's~\cite{PaigeTarjan87}, being responsible for the logarithmic
(rather than linear) dependence of the run-time on the number of
states.

While the categorical construction recomputes the involved partitions from
scratch in each iteration, we present an optimized version of our algorithm
(\autoref{sec:opti}) that computes the partitions incrementally. For the
correctness of the optimization, we need to restrict to sets and assume that the
type endofunctor satisfies a condition we call \emph{zippability}. This property
holds, e.g.\ for all polynomial endofunctors on sets and for the type functors
of labelled and weighted transition systems, but is not closed under composition
of functors.

In order to be able to provide a concrete presentation of our
algorithm and perform a complexity analysis we make the algorithm
parametric in an abstract \emph{refinement interface} of the type
functor, which encapsulates simple functor specific computations
needed to implement the \textqt{select the smaller half}
routine. In \autoref{sec:interface} we introduce
refinement interfaces, and we provide several examples for various zippable
type functors of interest and show that they can be implemented with a
linear run-time.

Then in \autoref{sec:efficient} we provide pseudocode for the
algorithm using the incremental computation of the partitions from
\autoref{sec:opti} and with the \textqt{select the smaller half}
routine hard-wired. We show that if the
refinement interface operations can be implemented to run in linear time, then the
algorithm runs in time $\CO((m+n) \cdot \log n)$, where $n$ is the number of
states and $m$ the number of `edges' in a syntactic encoding of the input
coalgebra. We thus recover the complexity of the most efficient known algorithms
for transition systems (Paige and Tarjan~\cite{PaigeTarjan87}), for weighted
systems (Valmari and Franceschinis~\cite{ValmariF10}), for the task of color
refinement (Berkholz, Bonsma, and Grohe~\cite{BerkholzBG17}), as well as
Hopcroft's classical automata minimization algorithm~\cite{Hopcroft71} for a
fixed alphabet $A$ and $m= n\cdot |A|$.

\autoref{sec:multisorted} is devoted to modularity and explains how to
handle combinations of system types, in particular functor
composition. We will see that this can be achieved with just a bit of
extra preprocessing, so that our main algorithm need not be adjusted
at all. In fact, given a functor $T$ built as a term from finitary
functors $\Set^k\to \Set$, we first recall from
\cite{SchroderPattinson11} how this induces a functor
$\bar T\colon \Set^n\to\Set^n$ on multisorted sets, and we present a
transformation from finite $T$-coalgebras to finite
$\bar T$-coalgebras (with possibly more states) that reflects
bisimilarity minimization. We then present a new construction that
provides for every functor on the category $\C^n$, where $\C$ is any
extensive category (e.g.~\Set), a functor on $\C$ and a transformation
from coalgebras of the former to coalgebras of the latter that
preserves the size of carriers (i.e.~the number of states) and
preserves and reflects bisimilarity minimization. This yields a
reduction from bisimilarity minimization of $T$-coalgebras to
minimization of $\coprodfunctor \bar T\Delta$-coalgebras, where
$\Delta\colon \Set \to \Set^n$ is the diagonal functor and
$\coprodfunctor\colon \Set^n \to \Set$ takes coproducts. The latter
problem is solved by the algorithm from \autoref{sec:efficient},
because we prove that if $T$ is built from functors fulfilling our
assumptions, then $\coprodfunctor \bar T \Delta$ fulfils the
assumptions too -- even if $T$ itself does not.

As instances of this result, we obtain an efficient \emph{modular}
algorithm for systems whose type is built from basic system types
fulfilling our assumptions, e.g.~probability, non-determinism,
weighted branching (with weights in an arbitrary abelian group), by
composition, finite products and finite coproducts
(\autoref{ex:comp}).

One of these instances is an $\CO((m+n) \log (m+n))$ algorithm for
Segala systems, to our knowledge a new result (more precisely, we
improve an earlier bound established by Baier, Engelen, and
Majster-Cederbaum~\cite{BaierEM00}, roughly speaking by letting only
non-zero probabilistic edges enter into the time bound). Note that Groote et
al.'s recent algorithm~\cite{GrooteEA18} for Segala systems, which was
discovered independently and almost at the same time, has a similar
complexity as ours.
We also obtain efficient minimization algorithms for general Segala
systems and alternating systems~\cite{Hansson:1994:TPF:561335}. In
further work~\cite{DMSW19} we extend our algorithm to cover weighted
branching (with weights in an arbitrary monoids) and tree automata,
for which we obtain an algorithm improving the previous best
complexity for minimization w.r.t.~backwards bisimulation. In addition
\emph{op.~cit.} presents a generic implementation in the form of the
partition refinement tool \copar.

This paper is an extended and completely reworked version of a
previous conference paper~\cite{dmsw17}. Besides providing detailed
proofs of all our results, we have included the new
\autoref{sec:multisorted} showing that modularity is achieved without
any adjustment of our algorithm.

\paragraph{\bf Acknowledgement.} We would like to thank the anonymous
referees for their comments, which helped to improve the presentation
of our paper. 

\section{Preliminaries}
\label{sec:prelim}
\tikzsetfigurename{partref-prelim-}

It is advisable for readers to be familiar with basic category
theory~\cite{joyofcats}.  However, our results can be understood by
reading the notation in the usual set-theoretic way, which corresponds
to their meaning in $\Set$, the category of sets and functions. For the
convenience of the reader we recall some concepts that are central for
the categorical version of our algorithm.

\subsection{Equivalence Relations and Partitions, Categorically}\label{sec:equiv}

Our most general setting is a category $\C$ in which we have a well-behaved
notion of equivalence relation corresponding to quotient
objects. We will assume that $\C$ has finite products and
pullbacks. 
\begin{notation} \label{pullbackNotation} The terminal object of $\C$
  is denoted by $1$, with unique morphisms $!\colon A\to 1$. In \Set,
  $1 = \{0\}$ as usual. We denote the product of objects $A$, $B$ by
  $A \xleftarrow{\out_1} A×B \xrightarrow{\out_2} B$. Given
  $f\colon D\to A$ and $g\colon D\to B$, the morphism induced by the
  universal property of the product $A \times B$ is denoted by
  $\fpair{f,g}\colon D\to A\times B$.

  For morphisms $f\colon A\to D$, $g\colon B\to D$, we denote by
  \[
    \begin{mytikzcd}
      P \arrow{r}{\pi_1}
      \arrow{d}[swap]{\pi_2}
      \pullbackangle{-45}
      & A
      \arrow{d}{f}
      \\
      B
      \arrow{r}{g}
      & D
    \end{mytikzcd}
  \]
  that $P$ (together with the projections $\pi_1, \pi_2$) is the pullback of $f$
  along $g$. In \Set, we have
  \[
    P = \{ (a,b) \in A\times B \mid f(a) = g(b) \}.
  \]
\end{notation}

\noindent The \emph{kernel} $\ker f$ of a morphism $f\colon A\to B$ is
the pullback of $f$ along itself.
%
We write $\rightarrowtail$ for monomorphisms (in $\Set$, the
monomorphisms are the ~injections), and $\epito$ for regular
epimorphisms; by definition,~$q\colon A\to B$ is a regular epimorphism
($q\colon A \epito B$) if there exists a parallel pair of morphisms
$f,g\colon R\rightrightarrows A$ such that $q$ is the coequalizer of
$f$ and $g$. In \Set, the coequalizer of
$f,g\colon R\rightrightarrows A$ is the quotient of $A$ modulo the
smallest equivalence relation on $A$ relating $f(r)$ and $g(r)$ for
all $r\in R$. Here, one can think of $R$ as a set of witnesses~$r$ of
the pairs $(f(r), g(r))$ generating that equivalence (note that there
may be several witnesses for the same pair). Hence, we will often
denote the coequalizer of $f,g\colon R\rightrightarrows A$ by
$\kappa_R\colon A\twoheadrightarrow A/R$, and such a map represents a
\emph{quotient}. When $f$ and $g$ are clear from the context, we will
just write $R\rightrightarrows A$.

Kernels and coequalizers allow us to talk about equivalence relations
and partitions in a category, and when we speak of an
\emph{equivalence relation on the object $X$} of $\C$ we mean the
kernel of some morphism with domain $X$. Indeed, recall that for
every set $A$, every equivalence relation $\sim$ is the kernel of the
canonical quotient map $\kappa_\sim\colon A \epito A/\mathord{\sim}$, and there is a
bijection between equivalence relations and partitions on
$A$. In order to obtain a similar bijection for more general
categories than $\Set$ we make the following global assumption.

\begin{assumption}
  \label{ass:C}
  We assume throughout that $\C$ is a finitely complete category that
  has coequalizers and in which regular epimorphisms are closed under composition.
\end{assumption}
\begin{examples}
  Examples of categories satisfying~\autoref{ass:C} abound. In
  particular, every \emph{regular} category with coequalizers
  satisfies our assumptions. The category $\Set$ of sets and functions is
  regular. Every topos is regular, and so is every
  finitary variety, i.e.~a category of algebras for a finitary
  signature satisfying given equational axioms (e.g.~monoids, groups,
  vector spaces etc.). If $\C$ is regular, so is the functor category
  $\C^\E$ for any category $\E$. For our main applications, we will be
  interested in the special case $\C^n$ where $n$ is a natural number,
  i.e.~the case where $\E$ is the discrete category with the set of
  objects $\{1,\ldots, n\}$.

  The category of posets and the category of topological spaces both
  fail to be regular but still satisfy our assumptions.
\end{examples}

\noindent In \Set, a function $f\colon A\to B$ factorizes through the
partition $\unnicefrac{A}{\ker f}$ induced by its kernel, via the
function $[-]_f\colon A\twoheadrightarrow \unnicefrac{A}{\ker f}$ taking
equivalence classes
\begin{equation}
  [x]_f := \{ x'\in A\mid f(x) = f(x')\} = \{ x' \in D\mid (x,x') \in \ker f\}.
  \label{eq:funEquivClass}
\end{equation}
Well-definedness of functions on $\unnicefrac{A}{\ker f}$ is determined
precisely by the universal property of $[-]_f$ as a coequalizer of $\ker
f\rightrightarrows A$. In particular, $f$ induces an injection
$\unnicefrac{A}{\ker f}\rightarrowtail B$; together with $[-]_f$, this is the
factorization of $f$ into a regular epimorphism and a monomorphism.

More generally, our category $\C$ from \autoref{ass:C} has a
(RegEpi,Mono)-factorization system \cite[Prop.~14.22]{joyofcats}, that
is, every morphism $f\colon A\to B$ has a factorization
\[
  \begin{mytikzcd}
    A
    \arrow[shiftarr={yshift=6mm}]{rr}{f}
    \arrow[->>]{r}{e}
    & \Im(f)
    \arrow[>->]{r}{m}
    & B,
  \end{mytikzcd}
\]
where $m$ is a monomorphism and $e$ is a regular epimorphism,
specifically the coequalizer of the kernel
$\pi_1,\pi_2\colon \ker f\rightrightarrows A$. The subobject
$m\colon \Im(f) \monoto B$ is called the \emph{image} and the
factorization $f = m \cdot e$ the \emph{image factorization} of
$f$. In every category, we have the \emph{diagonal fill-in property}
for monomorphisms~$m$ and regular epimorphisms~$e$: Whenever
$f\cdot e=m\cdot g$ then there exists a unique diagonal~$d$ such that
$d\cdot e=g$ and $m\cdot d=f$, implying that
(RegEpi,Mono)-factorizations are unique.

Using image factorizations it is easy to show that in our category $\C$, there is a
bijection between kernels $K\rightrightarrows A$ and quotients of $A$
-- the two directions of this bijection are given by taking the kernel
of a coequalizer and by taking the coequalizer of a kernel.
In particular, every regular epimorphism is the
coequalizer of its kernel.

Furthermore, the above bijection is in fact an order isomorphism
between the natural partial orderings on kernels and quotients,
respectively. In detail, relations from $A$ to $B$ in~$\C$,
i.e.~jointly monic spans $A \leftarrow R \to B$, and in particular
kernels, represent subobjects of $A\times B$, which are ordered by
inclusion in the usual way: We say that a relation
$\fpair{p_1,p_2}\colon R \monoto A \times B$ (or a kernel) is
\emph{finer than} a relation
$\fpair{p_1',p_2'}\colon R' \monoto A\times B$ if there exists
$m\colon R \monoto R'$ (necessarily unique and monic) such that
$p_i' \cdot m = p_i$, for $i = 1,2$.  We write $\cap$
(\emph{intersection}) and $\cup$ (\emph{union}) for meets and joins of
kernels in the inclusion ordering on \emph{relations} (not equivalence
relations or kernels) on $A$. In this notation,
\begin{equation}\label{eq:kercap}
  \ker \fpair{f,g} =\ker f\cap\ker g;
\end{equation}
in particular, kernels are stable under intersection of relations.
Similarly, a quotient represented by
$q_1\colon A\twoheadrightarrow B_1$ is \emph{finer than} a quotient
represented by $q_2\colon A\twoheadrightarrow B_2$ if there exists
$b\colon B_1\epito B_2$ (necessarily unique and regular epic) with
$q_2=b\cdot q_1$.

We need a few simple observations on kernels that are familiar when
instantiated to~$\Set$:
\begin{remark}\label{rem:kernel}
  \begin{enumerate}
  \item\label{i:comp} For every $f\colon X \to Y$ and $g\colon Y \to Z$, 
    $\ker(f)$ is finer than $\ker(g\cdot f)$.
  \item\label{i:mono} For every $f\colon X \to Y$ and every mono $m\colon Y
    \monoto Z$, $\ker(m\cdot f) = \ker f$.
  \item\label{i:monofac} For every $f\colon X \to Y$ and regular epi $q\colon X
    \epito Z$, $\ker(f) = \ker (q)$ iff there exists a mono $m\colon
    Z \monoto Y$ with $f = m\cdot q$. One implication follows from the
    previous point, and conversely, take the image factorization $f =
    n \cdot e$. Then $\ker(e) = \ker(f)$ by point~\ref{i:mono}, and
    therefore $e$ and $q$ represent the same quotient, which means
    there exists some isomorphism $i\colon Z \to \Im(f)$ with $i \cdot q =
    e$. It follows that $m = n \cdot i$ is the desired mono.
  \item\label{i:finer} For every $f\colon X \to Z$ and regular epi $e\colon X \epito Y$,
    $\ker(e)$ is finer than $\ker(f)$ iff
    there exists some morphism $g\colon Y \to Z$ such that $g \cdot e = f$.

    To see this let $p,q\colon \ker(e) \rightrightarrows X$ be the kernel
    pair of $e$. If $\ker(e)$ is finer than
    $\ker(f)$, then we have $f\cdot p = f\cdot q$. Hence, since $e$
    is the coequalizer of its kernel pair we obtain $g$ as desired
    from the universal property of $e$. For the other direction apply \ref{i:comp}.


  \item\label{i:squares} 
    Whenever $\ker (a\colon D\to A) = \ker (b\colon D\to B)$ then
    $\ker (a\cdot g) = \ker (b\cdot g)$, for every $g\colon W\to D$.

    Indeed, the kernel $\ker(a \cdot g)$ can be obtained from $\ker a$
    by pasting pullback squares as shown below (and similarly for $b\colon
    D \to B$):
    \[
      \begin{mytikzcd}
        \ker\big( a \cdot g\big)
        \arrow{r}
        \arrow{d}
        \pullbackangle{-45}
        & \bullet
        \pullbackangle{-45}
        \arrow{r}
        \arrow{d}
        & W
        \arrow{d}{g}
        \\
        \bullet
        \arrow{d}
        \arrow{r}
        \pullbackangle{-45}
        & \ker a
        \arrow{d}
        \arrow{r}
        \pullbackangle{-45}
        & D
        \arrow{d}{a}
        \\
        W
        \arrow{r}{g}
        & D
        \arrow{r}{a}
        & A
      \end{mytikzcd}
    \]
    So if $\ker a = \ker b$, then $\ker(a\cdot g) = \ker(b\cdot g)$.
  \end{enumerate}
\end{remark}

\noindent Even though only existence of coequalizers is assumed, $\C$
has more colimits:

\begin{lemma} \label{lem:regepi-pushout}
  $\C$ has pushouts of regular epimorphisms (i.e.~pushouts of spans containing
  at least one regular epimorphism).\twnote{sollten wir die nicht lieber
  pushouts \emph{along} reg epis nennen? Inverse images sind ja auch pullbacks
  along monos und nicht pullbacks of monos}
\end{lemma}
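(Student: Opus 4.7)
The proof idea is the standard construction of pushouts from coequalizers by exploiting the kernel pair of the regular epimorphism; we do not need coproducts. Fix a span $B \xleftarrow{e} A \xrightarrow{f} C$ where $e$ is a regular epimorphism. By \autoref{ass:C}, $\C$ has pullbacks, so the kernel pair $\pi_1,\pi_2\colon \ker e \rightrightarrows A$ of $e$ exists, and since $e$ is a regular epimorphism it is the coequalizer of $\pi_1, \pi_2$ (this was noted already in the paragraph following \autoref{rem:kernel}). I would build the pushout in three steps.

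\textbf{Step 1: Construct the opposite leg.} Form the coequalizer
\[
  e'\colon C \epito D \quad\text{of}\quad f\cdot \pi_1, f\cdot \pi_2 \colon \ker e \rightrightarrows C,
\]
which exists by \autoref{ass:C}. Then $e'\cdot f$ satisfies $(e'\cdot f)\cdot \pi_1 = (e'\cdot f)\cdot \pi_2$, so by the universal property of $e$ as coequalizer of $\pi_1,\pi_2$ there is a unique $f'\colon B \to D$ with $f'\cdot e = e'\cdot f$. This produces a commutative square.

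\textbf{Step 2: Verify the universal property.} Given any competing cocone $g\colon B \to X$, $h\colon C \to X$ with $g\cdot e = h\cdot f$, chasing around the kernel pair gives $h\cdot f\cdot \pi_1 = g\cdot e\cdot \pi_1 = g\cdot e\cdot \pi_2 = h\cdot f\cdot \pi_2$. By the universal property of $e'$, there is a unique $u\colon D \to X$ with $u\cdot e' = h$. To see $u\cdot f' = g$, precompose with $e$: $u\cdot f'\cdot e = u\cdot e'\cdot f = h\cdot f = g\cdot e$, and cancel the (regular) epimorphism $e$. Uniqueness of $u$ follows because any mediator must satisfy $u\cdot e' = h$, and $e'$ is epic.

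\textbf{Step 3: Symmetry and closure remarks.} Pushouts are symmetric in the two legs of the span, so the case where it is $f\colon A\to C$ (rather than $e\colon A\to B$) that is the regular epimorphism is handled by swapping roles. Note also that by construction $e'$ is a regular epimorphism, so pushouts of regular epimorphisms along arbitrary morphisms stay within the class of regular epimorphisms, which will be useful later.

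The only point that requires more than bookkeeping is the verification in Step~2 that $u\cdot f' = g$, and this cancellation crucially uses that $e$ is epic; everything else is a direct application of the universal properties of coequalizer and kernel pair, so I expect no real obstacle.
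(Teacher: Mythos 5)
Your proposal is correct and follows essentially the same route as the paper: form the kernel pair of the regular epi $e$, coequalize its composites with the other leg $f$ to get $e'$, induce the opposite leg $f'$ via the universal property of $e$ as coequalizer of its kernel pair, and verify the universal property by the same two cancellations against the epis $e$ and $e'$. The closing remark that the pushout of a regular epi along an arbitrary morphism is again a regular epi is a useful bonus the paper leaves implicit (it is invoked later in Lemma~\ref{lemmaSimple}).
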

\begin{proof}
  Let $X\xleftarrow{e}Y\xrightarrow{h}W$ be a span, with $e$ a
  regular epi. Let $(\pi_1,\pi_2)$ be the kernel pair of $e$, and
  let $q\colon W\epito Z$ be the coequalizer of $h\cdot \pi_1$ and $h\cdot \pi_2$.
  Since $e$ is the
  coequalizer of $\pi_1,\pi_2$ there exists $r\colon X\to Z$ such
  that $r\cdot e=q\cdot h$. We claim that the square in
  \begin{equation*}
    \begin{mytikzcd}
      \ker e
      \arrow[shift left=1]{r}{\pi_1}
      \arrow[shift right=1]{r}[swap]{\pi_2}
      & Y \arrow[->>]{r}[above]{e} \arrow{d}[left]{h} & 
      X\arrow{d}[right]{r} \\
      & W\arrow[->>]{r}[below]{q} & Z
    \end{mytikzcd}
  \end{equation*}
  is a pushout. Uniqueness of mediating morphisms is clear since $q$
  is epic; it remains to show existence. So let
  $W\xrightarrow{g} U \xleftarrow{f} X$ be such that
  $g\cdot h = f\cdot e$. Then
  $g\cdot h\cdot \pi_1=f\cdot e\cdot\pi_1=f\cdot e\cdot \pi_2=g\cdot
  h\cdot\pi_2$,
  so by the universal property of $q$ we obtain $k\colon Z\to U$ such
  that $k\cdot q=g$. It remains to check that $k\cdot r=f$. Indeed, we
  have $k\cdot r\cdot e=k\cdot q\cdot h=g\cdot h=f\cdot e$, which
  implies the claim because $e$ is epic.
\end{proof}

\subsection{Coalgebra}\label{sec:coalg}
We now briefly recall basic notions from universal coalgebra, seen as
a unified framework for state-based reactive systems. For introductory
texts, see~\cite{Rutten00,JacobsR97,Adamek05,Jacobs17}.  Given an
endofunctor $H\colon \C \to \C$, a \emph{coalgebra} is a pair
$(X,\xi)$
where $X$ is an object of $\C$ called the \emph{carrier} and thought
of as an object of \emph{states}, and $\xi\colon X \to HX$ is a morphism
called the \emph{structure} of the coalgebra. Our leading examples are
the following.
\begin{example}\label{ex:coalg}
  \begin{enumerate}
  \item Labelled transition systems with labels from a set $A$ are
    coalgebras for the functor $HX = \Pot(A \times X)$ (and unlabelled
    transition systems are simply coalgebras for~$\Pot$) on $\Set$. Explicitly,
    a coalgebra $\xi\colon X\to HX$ assigns to each state $x$ a set
    $\xi(x)\in\Pot(A\times X)$, and this represents the transition
    structure at $x$: $x$ has an $a$-transition to $y$ iff
    $(a,y)\in \xi(x)$.
    In concrete examples, we restrict to the finite powerset functor $\Potf X =
    \{S\in \Pot X \mid S\text{ finite}\}$, and coalgebras for $HX =
    \Potf(A\times X)$ are finitely branching LTSs.
  \item\label{ex:coalg:2} Weighted transition systems with weights from a
    commutative monoid $(M,+,0)$ are modelled as coalgebras as
    follows. We consider the \emph{monoid-valued} functor $M^{(-)}$
    defined on sets~$X$ by
    \[
      M^{(X)} = \{ f\colon X \to M \mid f(x) \neq 0\text{ for only finitely many
      }x\},
    \]
    and on maps $h\colon X \to Y$ by
    \[
      M^{(h)}(f)(y) = \textstyle\sum_{h(x) = y} f(x).
    \]
    $M$-weighted transition systems are in bijective correspondence
    with coalgebras for $M^{(-)}$, and for $M$-weighted
    labelled transition systems one takes $(M^{(-)})^A$, where $A$ is
    the label alphabet (see~\cite{gs01}).

  \item\label{item:bags} The finite powerset functor $\Potf$ is the
    monoid-valued functor $\B^{(-)}$ for the Boolean monoid
    $\B = (2, \vee, 0)$.  The \emph{bag functor} $\Bagf$, which
    assigns to a set~$X$ the set of bags (i.e.~finite multisets)
    on~$X$, is the monoid-valued functor for the additive monoid of
    natural numbers $(\N, + , 0)$.

  \item Probabilistic transition systems are modelled coalgebraically
    using the distribution functor~$\Dist$. This is the subfunctor
    $\Dist X\subseteq \R_{\ge 0}^{(X)}$, where $\R_{\ge 0}$ is the
    monoid of addition on the non-negative reals, given by
    $\Dist X=\{f\in\R_{\ge 0}^{(X)}\mid \sum_{x \in X} f(x) = 1\}$.

  \item\label{i:segala} Simple (resp.~general) Segala
    systems~\cite{Segala95} strictly alternate between
    non-deterministic and probabilistic transitions. 
    Simple Segala systems can
    be modelled as coalgebras for the set functor $\Potf(A×\Dist(-))$,
    which means that for every label $a \in A$, a state
    non-deterministically proceeds to one of a finite number of
    possible probability distributions over states. General Segala
    systems are coalgebras for $\Potf\Dist(A \times -)$, which means
    that a state $s$ non-deterministically proceeds to one of a finite
    number of distributions over the set of $A$-labelled transitions
    from $s$.
    
  \item\label{i:poly} Let $\Sigma$ be a signature (a.k.a ranked alphabet), i.e.~a
    set of (operation) symbols, each with a prescibed natural number, its
    \emph{arity}. The corresponding \emph{polynomial
      functor} $H_\Sigma\colon \Set \to \Set$ maps a set $X$ to the set
    \[
      H_\Sigma X = \coprod_{n \in \N} \Sigma_n \times X^n,
    \]
    where $\Sigma_n$ is the set of symbols of arity $n$, and
    $H_\Sigma$ acts similarly on maps. Note that the elements of $H_\Sigma X$
    may be identified with shallow terms over $X$, i.e.~formal
    expressions $\sigma(x_1, \ldots, x_n)$, where $\sigma \in \Sigma$ is an
    $n$-ary symbol and $x_1, \ldots, x_n \in X$.  

    A coalgebra $\xi\colon X \to H_\Sigma X$ is a deterministic system,
    where the coalgebra structure assigns to each state $x \in X$ a
    tuple $\xi(x) = (\sigma, x_1, \ldots, x_n)$ in which $\sigma \in
    \Sigma$ is an $n$-ary \emph{output} symbol and $x_1, \ldots, x_n$
    are successor states, one for every input in $\{1, \ldots, n\}$. 
    Coalgebras for $H_\Sigma$ can also be thought of as top-down
    deterministic tree automata.
  \item For a fixed finite set $A$, the coalgebras of the functor
    $HX=2\times X^A$ are deterministic automata for the input alphabet
    $A$ (neglecting the initial state). Concretely, a coalgebra
    structure $\fpair{o,\delta}\colon X\to 2\times X^A$ consists of
    the characteristic function $o\colon X \to 2$ of the set of final
    states of the automaton and the next state function $\delta$. Note
    that the functor $H$ is (naturally isomorphic to) a polynomial
    functor, where the signature $\Sigma$ consists of two $A$-ary
    operation symbols.
  \end{enumerate}
\end{example}

\noindent
\begin{minipage}{.8\textwidth}
  A \emph{coalgebra morphism} from a coalgebra $(X,\xi)$ to a coalgebra
  $(Y,\zeta)$ is a morphism $h\colon X \to Y$ such that
  $\zeta \cdot h = Hh \cdot \xi$; intuitively, coalgebra morphisms preserve
  observable behaviour.  Coalgebras and their morphisms form a
  category $\Coalg(H)$. The forgetful functor $\Coalg(H)\to \C$
  creates all colimits, so $\Coalg(H)$ has all colimits that $\C$ has;
  in particular, our running assumptions imply the following:
\end{minipage}%
\begin{minipage}{.2\textwidth}%
\begin{flushright}%
\begin{mytikzcd}
    X
    \arrow{r}{\xi}
    \arrow{d}[swap]{h}
    & HX
    \arrow{d}{Hh}
    \\
    Y \arrow{r}{\zeta}
    & HY
  \end{mytikzcd}%
\end{flushright}%
\end{minipage}

\begin{corollary}\label{cor:coalg-colims}
  $\Coalg(H)$ has all coequalizers and pushouts of regular
  epimorphisms.
\end{corollary}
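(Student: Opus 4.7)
The plan is to invoke the fact, just recalled in the paragraph above the statement, that the forgetful functor $U\colon \Coalg(H)\to \C$ creates all colimits, and to reduce each of the two claims to a corresponding colimit-existence statement for $\C$ that has already been established. Creation of colimits means in particular that whenever a diagram $D$ in $\Coalg(H)$ has the property that $U\circ D$ admits a colimit in $\C$, there is a unique lift of this colimit to one in $\Coalg(H)$; hence it suffices in each case to exhibit the required colimit in the base category.

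For coequalizers the reduction is immediate: \autoref{ass:C} provides coequalizers in $\C$, so any parallel pair of coalgebra morphisms admits a coequalizer in $\C$ after applying $U$, which then lifts uniquely by creation.

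For pushouts of regular epimorphisms I would take a span $(X,\xi)\xleftarrow{e}(Y,\eta)\xrightarrow{h}(W,\omega)$ in $\Coalg(H)$ whose first leg $e$ is a regular epi, and apply \autoref{lem:regepi-pushout} to the $U$-image. The one point that requires a short argument is that $Ue$ is also a regular epi in $\C$: this holds because $e$, being a regular epi in $\Coalg(H)$, is the coequalizer of some pair of coalgebra morphisms, and since $U$ creates (hence preserves) coequalizers, $Ue$ is the coequalizer of the corresponding pair in $\C$ and thus a regular epi there. Once $Ue$ is known to be a regular epi, \autoref{lem:regepi-pushout} supplies a pushout of the $U$-image span in $\C$, and creation of colimits lifts it uniquely to a pushout in $\Coalg(H)$.

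The main (and only) mildly delicate step is the regular-epi preservation clause, but this is a direct corollary of preservation of coequalizers by $U$; everything else is a formal application of the creation-of-colimits principle for $U$.
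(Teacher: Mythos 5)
Your proof is correct and follows essentially the same route as the paper, which presents the corollary as an immediate consequence of the forgetful functor creating all colimits, combined with \autoref{ass:C} (for coequalizers) and \autoref{lem:regepi-pushout} (for pushouts along regular epis). Your extra step showing that a regular epi in $\Coalg(H)$ is carried by a regular epi in $\C$ is a harmless refinement; note that in the paper's usage (e.g.~in \autoref{lemmaSimple}) the legs of the span are \emph{quotients}, i.e.~already carried by regular epis of $\C$, so that step is not even needed there.
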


\noindent A \emph{subcoalgebra} of a coalgebra $(X,\xi)$ is represented
by a coalgebra morphism $m\colon (Y,\zeta) \monoto (X,\xi)$ such that $m$ is
a monomorphism in $\C$. Likewise, a \emph{quotient} of a coalgebra
$(X,\xi)$ is represented by a coalgebra morphism
$q\colon (X,\xi) \epito (Y,\zeta)$ carried by a regular epimorphism $q$ of
$\C$. If $H$ preserves monomorphisms, then the image factorization
structure on $\C$ lifts to coalgebras in the sense that every
coalgebra morphism $f$ has a factorization $f = m \cdot e$ into
coalgebra morphisms $m$ and $e$ such that $m$ is a monomorphism and
$e$ a regular epimorphism in $\C$ (see e.g.~\cite[Lemma 2.5]{MPW19}).

Recall that a coalgebra is called \emph{simple} if it
does not have any non-trivial quotients~\cite{Gumm03}. We will use the following
equivalent characterization:
\begin{proposition}\label{prop:simple}
  If $H$ preserves monomorphisms,
  then a coalgebra $(X,\xi)$ is simple iff every coalgebra
 morphism with domain $(X,\xi)$ is carried by a monomorphism.
\end{proposition}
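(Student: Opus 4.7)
The plan is to exploit the image factorization of coalgebra morphisms. Since $H$ preserves monomorphisms, any coalgebra morphism $f\colon (X,\xi) \to (Y,\zeta)$ factors in $\Coalg(H)$ as $f = m \cdot e$ with $e$ a regular epi and $m$ a mono in $\C$ (as quoted from \cite[Lemma~2.5]{MPW19}). This lifted factorization is the main tool on both sides.

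First I would prove the forward implication. Assuming $(X,\xi)$ is simple, take any coalgebra morphism $f\colon (X,\xi) \to (Y,\zeta)$ and factor it as $f = m \cdot e$ in $\Coalg(H)$. Then $e\colon (X,\xi) \epito (\Im(f), \bar\zeta)$ represents a quotient of $(X,\xi)$, which by simplicity must be trivial, so $e$ is an isomorphism. Hence $f = m \cdot e$ is a composite of an iso and a mono in $\C$, and therefore itself a monomorphism.

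For the converse, suppose every coalgebra morphism out of $(X,\xi)$ is carried by a monomorphism. Let $q\colon (X,\xi) \epito (Y,\zeta)$ be a regular-epi coalgebra morphism representing a quotient. By assumption, $q$ is also a monomorphism in $\C$. The remaining step is the standard observation that a morphism which is simultaneously a regular epimorphism and a monomorphism is an isomorphism: if $q$ is the coequalizer of $\pi_1,\pi_2\colon R\rightrightarrows X$, then $q\cdot \pi_1 = q\cdot \pi_2$ together with $q$ mono yields $\pi_1 = \pi_2$, so $q$ is the coequalizer of a trivial parallel pair and thus an iso. Hence the only quotients of $(X,\xi)$ are trivial, i.e.\ $(X,\xi)$ is simple.

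The only delicate point is the lifting of image factorizations from $\C$ to $\Coalg(H)$, which is exactly where the assumption that $H$ preserves monomorphisms enters, but this has already been recorded in the paragraph preceding the proposition. Everything else is a routine manipulation of the universal properties of coequalizers and image factorizations.
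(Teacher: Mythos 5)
Your proof is correct and follows essentially the same route as the paper: both directions use the lifted image factorization for the forward implication and the fact that a regular epi which is also monic is an isomorphism for the converse. The only difference is that you spell out the latter fact explicitly via the coequalizer's parallel pair, which the paper leaves implicit.
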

\begin{proof}
  For necessity, consider a coalgebra morphism
  $h\colon (X,\xi) \to (Y,\zeta)$ and take its image factorization to obtain
  $q\colon (X,\xi) \twoheadrightarrow (\Im(h),i)$ and
  $m\colon (\Im(h),i) \monoto (Y,\zeta)$ in $\Coalg(H)$. Since $(X,\xi)$ is
  simple, $q$ is an isomorphism, and so $h = m\cdot q$ is a
  monomorphism. For sufficiency, consider $q\colon (X,\xi)\epito(Y,\zeta)$
  with~$q$ a regular epimorphism in~$\C$. By assumption, $q$ is also
  monic, whence an isomorphism.
\end{proof}
\noindent Intuitively, in a simple coalgebra all states exhibiting the
same observable behaviour are already identified. This paper is
concerned with the design of algorithms for computing \emph{the}
simple quotient of a given coalgebra:
\begin{lemma}\label{lemmaSimple}
  A simple quotient of a coalgebra is unique (up to
  isomorphism). Concretely, let $(X,\xi)$ be a coalgebra, and let
  $e_i\colon (X,\xi)\epito(Y_i,\zeta_i)$, $i=1,2$, be quotients with
  $(D_i,d_i)$ simple. Then $(D_1,d_1)$ and $(D_2,d_2)$ are isomorphic;
  more precisely, $e_1$ and $e_2$ represent the same quotient.
\end{lemma}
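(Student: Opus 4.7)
The plan is to use the pushout of the two quotients in $\Coalg(H)$ and show that both pushout injections must simultaneously be regular epimorphisms (by construction) and monomorphisms (by simplicity of the targets), hence isomorphisms.

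Concretely, since $e_1$ and $e_2$ are regular epimorphisms in $\Coalg(H)$, their pushout exists by \autoref{cor:coalg-colims}; call it
\[
  \begin{mytikzcd}
    (X,\xi) \arrow[->>]{r}{e_1} \arrow[->>]{d}[swap]{e_2} & (Y_1,\zeta_1) \arrow{d}{f_1} \\
    (Y_2,\zeta_2) \arrow{r}{f_2} & (P,p).
  \end{mytikzcd}
\]
First I would apply \autoref{lem:regepi-pushout} twice, once to each orientation of the span $(Y_1,\zeta_1) \xleftarrow{e_1} (X,\xi) \xrightarrow{e_2} (Y_2,\zeta_2)$, to conclude that both $f_1$ and $f_2$ are regular epimorphisms (pushing out a regular epi along any morphism yields a regular epi on the opposite side, as is visible in the construction in the proof of that lemma).

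Next, since $f_1$ is a coalgebra morphism out of the simple coalgebra $(Y_1,\zeta_1)$, \autoref{prop:simple} (whose hypothesis that $H$ preserves monomorphisms I take as part of the ambient setting for this proposition) yields that $f_1$ is carried by a monomorphism; likewise for $f_2$. A morphism that is both a regular epimorphism and a monomorphism is an isomorphism, so both $f_1$ and $f_2$ are isomorphisms in $\Coalg(H)$. In particular $(Y_1,\zeta_1) \cong (Y_2,\zeta_2)$, and the coalgebra isomorphism $\varphi := f_1^{-1}\cdot f_2 \colon (Y_2,\zeta_2) \to (Y_1,\zeta_1)$ satisfies $f_1 \cdot \varphi \cdot e_2 = f_2 \cdot e_2 = f_1 \cdot e_1$, whence $\varphi \cdot e_2 = e_1$ by cancelling the monic $f_1$. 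This exhibits $e_1$ and $e_2$ as representing the same quotient.

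The main obstacle is ensuring both pushout legs are regular epimorphisms: \autoref{lem:regepi-pushout} as stated produces only the leg opposite the given regular epi, so the argument must invoke it in both orientations of the span. Everything else -- existence of the pushout, the passage from simplicity to monicity, and the regular-epi-plus-mono-equals-iso fact -- is immediate from the statements already established in the preliminaries.
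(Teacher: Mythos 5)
Your proof is correct and follows essentially the same route as the paper's: form the pushout of the two quotients in $\Coalg(H)$ (via \autoref{cor:coalg-colims}), observe that both legs $f_1,f_2$ are regular epimorphisms because regular epis are stable under pushout (your double application of \autoref{lem:regepi-pushout} is exactly how the paper justifies this), and conclude from simplicity that they are isomorphisms. The one divergence is the final step: the paper gets ``$f_i$ is an isomorphism'' directly from the definition of simplicity (a regular epimorphism out of a simple coalgebra represents a quotient and must therefore be trivial), whereas you detour through \autoref{prop:simple} to obtain monicity and then use ``regular epi $+$ mono $=$ iso''; this is fine but silently imports the hypothesis that $H$ preserves monomorphisms, which the lemma as stated (it precedes \autoref{ass:sec3}) does not assume and the paper's argument does not need.
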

\begin{proof}
  By \autoref{cor:coalg-colims}, there is a pushout
  $Y_1\xrightarrow{f_1}Z\xleftarrow{f_2}Y_2$ of
  $Y_1\xleftarrow{e_1}X\xrightarrow{e_2}Y_2$ in $\Coalg(H)$. Since
  regular epimorphisms are generally stable under pushouts,
  $f_1$ and $f_2$ are regular
  epimorphisms, hence isomorphisms because $Y_1$ and $Y_2$ are simple;
  this proves the claim.
\end{proof}
\noindent Existence of the simple quotient can be shown under additional
assumptions on $\C$ (cf.~\autoref{R:final} below):
\begin{thmC}[\cite{Gumm03}]
  Assume that $\C$ is cocomplete and cowellpowered. Then every
  coalgebra $(X,\xi)$ has a simple quotient given by the cointersection
  (i.e.~the wide pushout) of all quotient coalgebras
  \[
    q\colon (X,\xi) \epito (X', \xi').
  \]
\end{thmC}

\noindent For $\C = \Set$, two elements $x\in X$ and $y\in Y$ of
coalgebras $(X,\xi)$ and $(Y, \zeta)$ are \emph{behaviourally
  equivalent} if they can be merged by coalgebra
morphisms, that is, if there exist a coalgebra $(Z,\omega)$ and
coalgebra morphisms $f\colon (X,\xi)\to (Z,\omega)$,
$g\colon (Y,\zeta) \to (Z,\omega)$ such that $f(x)=g(y)$. 
Intuitively, the simple quotient of a coalgebra in \Set is its quotient modulo
behavioural equivalence. In our main examples, this means that we minimize
w.r.t.~standard bisimilarity-type equivalences:

\begin{example} \label{exCoalgEquivalence}
  Behavioural equivalence instantiates to various notions of
  bisimilarity:
  \begin{enumerate}
  \item Park-Milner bisimilarity on labelled transition systems~\cite{aczelmendler:89};
  \item weighted bisimilarity on weighted transition systems~\cite[Proposition~2]{Klin09};
  \item stochastic bisimilarity on probabilistic transition systems~\cite{Klin09};
  \item Segala bisimilarity on simple and general Segala systems~\cite[Theorem~4.2]{BARTELS200357}.
  \end{enumerate}
\end{example}

\begin{remark}\label{R:final}
  A \emph{final coalgebra} is a terminal object in the category of
  coalgebras, i.e.~a coalgebra $(T,\tau)$ such that every coalgebra
  $(X,\xi)$ has a unique coalgebra morphism into $(T,\tau)$.  There
  are reasonable conditions under which a final coalgebra is
  guaranteed to exist, e.g.~when $\C$ is a locally presentable
  category (in particular, when $\C=\Set$) and $H$ is
  accessible~\cite{AdamekR94}. If $H$ preserves monomorphisms and has a final
  coalgebra $(T,\tau)$, then the simple quotient of a coalgebra
  $(X,\xi)$ is the image of $(X,\xi)$ under the unique morphism into
  $(T,\tau)$; in particular, in this case every coalgebra has a simple
  quotient.
\end{remark}

Finally, we note a useful result that implies that computing the
simple quotient of a coalgebra for $H$ can be reduced to computing the simple
quotient of its induced coalgebra for a superfunctor of $H$:
\smnote{This is a beautiful and general reduction result,
  that I'd put up front and not hide in some technical development
  on many-sorted sets! Besides it is needed for the correctness of
  automata minimization with $\Potf$ in Section~8.}
\begin{proposition}\label{P:reduction}
  Suppose that $m\colon H \monoto G$ is a natural transformation with
  monomorphic components. Then every $H$-coalgebra $\xi\colon X \to HX$
  and its induced $G$-coalgebra
  \[
    X \xrightarrow{\xi} HX \xrightarrow{m_X} GX
  \]
  have the same quotients and, hence, the same simple ones (if they exists). 
\end{proposition}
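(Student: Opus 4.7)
The plan is to show, at the level of individual regular epimorphisms out of $X$, that lifting to an $H$-coalgebra morphism and lifting to a $G$-coalgebra morphism are equivalent; then the two coalgebras automatically have the same posets of quotients and, by \autoref{lemmaSimple}, the same simple quotient (whenever one exists).

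So fix a regular epimorphism $q\colon X \epito Y$ and let $p_1,p_2\colon K \rightrightarrows X$ be its kernel pair, so that $q$ is the coequalizer of $p_1,p_2$. For the forward direction, suppose $\zeta\colon Y \to HY$ satisfies $\zeta \cdot q = Hq \cdot \xi$, i.e.~$q$ is an $H$-coalgebra morphism. Then $\zeta' := m_Y \cdot \zeta$ makes $q$ a $G$-coalgebra morphism, since by naturality of $m$
\[
  \zeta' \cdot q \;=\; m_Y \cdot \zeta \cdot q \;=\; m_Y \cdot Hq \cdot \xi \;=\; Gq \cdot m_X \cdot \xi.
\]

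For the converse, suppose $\zeta'\colon Y \to GY$ satisfies $\zeta' \cdot q = Gq \cdot m_X \cdot \xi$. We seek $\zeta\colon Y \to HY$ with $\zeta \cdot q = Hq \cdot \xi$; by the universal property of the coequalizer $q$, it suffices to show that $Hq \cdot \xi$ coequalizes $p_1,p_2$. Composing with the monomorphism $m_Y$ and using naturality of $m$ gives, for $i=1,2$,
\[
  m_Y \cdot Hq \cdot \xi \cdot p_i \;=\; Gq \cdot m_X \cdot \xi \cdot p_i \;=\; \zeta' \cdot q \cdot p_i,
\]
and the right-hand side is independent of $i$ because $q \cdot p_1 = q \cdot p_2$. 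Cancelling the monomorphism $m_Y$ yields $Hq \cdot \xi \cdot p_1 = Hq \cdot \xi \cdot p_2$, so the required $\zeta$ exists uniquely.

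The main (mild) subtlety is the appeal to cancellation of $m_Y$, which is exactly where monomorphicity of the components of $m$ is used; everything else is formal diagram chasing with naturality and the universal property of a coequalizer. Since the argument exhibits a bijection between coalgebra structures on $Y$ making $q$ a morphism for $H$ and those making it a morphism for $G$, the classes of quotients of $(X,\xi)$ and of $(X, m_X \cdot \xi)$ coincide; in particular, so do their simple quotients whenever these exist.
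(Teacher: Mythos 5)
Your proof is correct and follows essentially the same route as the paper: the forward direction is the same naturality computation, and your converse simply unfolds the paper's appeal to the diagonal fill-in property for the regular epi $q$ against the mono $m_Y$ into its proof via the kernel pair and the universal property of the coequalizer. Both arguments use naturality of $m$ and monomorphy of $m_Y$ at exactly the same point, so there is no substantive difference.
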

\begin{proof}
  We prove only the first claim. Let $q\colon X\epito Y$ be a regular
  epimorphism. It suffices to show that $q$ carries an $H$-coalgebra
  morphism with domain $(X,\xi)$ iff it carries a
  $G$-coalgebra morphism with domain
  $(X,m_X \cdot \xi)$. Note that $m\colon H \monoto G$ induces an
  embedding $\Coalg(H)\to\Coalg(G)$. Hence, `only
  if' is clear, and we prove `if'. Suppose that $q$ is a coalgebra
  morphism $(X, m_X\cdot \xi) \to (Y,\zeta)$. Then the outside of
  the following diagram commutes:
  \[
    \begin{mytikzcd}
      X
      \arrow{r}{\xi}
      \arrow[->>]{d}[swap]{q}
      & HX
      \arrow{r}{m_X}
      \arrow{d}[swap]{Hq}
      & GX
      \arrow{d}{G q}
      \\
      Y
      \arrow[dashed]{r}[near end]{\exists! \zeta'}
      \arrow{rr}[swap]{\zeta}
      & |[yshift=5mm]| HY
      \arrow[>->]{r}{m_Y}
      & GY
    \end{mytikzcd}
  \]
  By the naturality of $m$ the right-hand part commutes. Hence, since
  $m_Y$ is monic we obtain $\zeta'$ as in the diagram by the diagonal
  fill-in property (Section~\ref{sec:equiv}), making $q$ an
  $H$-coalgebra morphism $(X,\xi)\epito(Y,\zeta')$.
\end{proof}

\section{Partition Refinement from an Abstract Point of View}
\label{sec:fuzzyAlgo}
\tikzsetfigurename{partref-fuzzy-}


In the next section we will provide an abstract partition refinement
algorithm and formally prove its correctness.  Our main contribution
is \emph{genericity}: we are able to state and prove our results at a
level of abstraction that uniformly captures various kinds of state
based systems. This allows us to instantiate our efficient generic
algorithm to many different (combinations of) transition structures.
Before describing the abstract algorithm (\autoref{catPT}) formally,
we now give an informal description of a partition refinement
algorithm, which will make it clear which parts of partition
refinement algorithms are generic and which parts are specific to a
particular transition type. Although \autoref{catPT} works in
categorical generality, we use set-theoretic parlance in the present
informal discussion.

Given a system with a set~$X$ of states, one of the core ideas of the
known partition refinement algorithms mentioned so far, in particular
the algorithms by Hopcroft~\cite{Hopcroft71} and
Paige-Tarjan~\cite{PaigeTarjan87}, is to maintain two equivalence
relations~$P$ and~$Q$ on~$X$, represented by the corresponding
partitions $X/P$ and $X/Q$, where $X/P$ will be \textqt{one transition
  step ahead of $X/Q$}, so the relation $P$ is a refinement of
$Q$. Therefore, the elements of $X/P$ are called \emph{subblocks} and
the elements of $X/Q$ are called \emph{compound blocks}.

Initially, we put $X/Q = \{X\}$ and let $X/P$ be the initial partition
with respect to the \textqt{output behaviour} of the states in
$X$. For example, in the case of deterministic automata, the output
behaviour of a state is its finality, i.e.~the initial partition
separates final from non-final states; in the case of
transition systems, the initial partition separates deadlock states
from states with successors; and for weighted systems, the initial
partition groups the states by the sum of weights of outgoing edges.

\begin{figure}
  \newboolean{partRefExampleGraphShowLabels}
  \newcommand{\maybelabel}[1]{%
    \ifthenelse{\boolean{partRefExampleGraphShowLabels}}%
    {\ensuremath{#1}}%
    {}}%
  \newcommand{\partRefExampleGraph}[3]{
    \setboolean{partRefExampleGraphShowLabels}{#1}
  \begin{tikzpicture}[
    x=14mm,
    y=14mm,
    baseline=(x2.base),
    every node/.append style={
      inner sep = 1pt,
    },
    transitions/.style={
      -{>[length=2mm,width=2mm]},
      every loop/.append style={
        -{>[length=2mm,width=2mm]}, 
      },
      shorten <= 1pt,
      shorten >= 1pt,
      every node/.append style={
        above,
        font=\footnotesize,
        fill=white,
        inner sep=0pt,
        outer sep=2pt,
        text depth=0pt,
      },
    },
    transitions thick/.style={
      transitions,
      line width=3pt,
      draw=white, 
      every node/.append style={
        execute at begin node=\setboolean{partRefExampleGraphShowLabels}{false},
        execute at end node=\setboolean{partRefExampleGraphShowLabels}{#1},
      },
    },
    partitionQ/.append style={
      partitionBlock,
      inner sep = 5pt,
      dashed,
    },
    partitionP/.append style={
      partitionBlock,
      inner sep = 3pt,
      rounded corners=2.0mm,
    },
    ]
    \node (x0) at (-150:1) {$x_0$};
    \node (x1) at (150:1) {$x_1$};
    \node (x2) at (0,0) {$x_2$};
    \node (x3) at (30:1) {$x_3$};
    \node (x4) at (-30:1) {$x_4$};
    \foreach \style in {transitions thick,transitions} {
      \path[\style]
      (x0) edge node[left] {\maybelabel{\frac{1}{2}}} (x1)
      (x1) edge node[pos=0.3,above right] {\maybelabel{1}} (x2)
      (x0) edge node[pos=0.3,below right] {\maybelabel{\frac{3}{2}}} (x2)
      (x1) edge[loop above] node {\maybelabel{1}} (x1)
      (x2) edge node[pos=0.6,above left] {\maybelabel{1}} (x3)
      (x2) edge node[pos=0.6,below left] {\maybelabel{-1}} (x4)
      ;
    }
    \begin{scope}[on background layer]
      \foreach \block in {#2} {
        \node[partitionQ, fit=\block] {};
      }
      \foreach \block in {#3} {
        \node[partitionP, fit=\block] {};
      }
    \end{scope}
  \end{tikzpicture}
  }
  \centering
  \def\arraystretch{1.4}
  \newcommand{\legendPartition}[1]{
    \begin{tikzpicture}[baseline=(node.base),every node/.append style={
        partitionBlock,
        inner sep=2pt,
        rounded corners=1.0mm,
      }]
      \node[dashed] (node) {\(X/Q_{#1}\)};
      \node[solid,anchor=west] at ([xshift=6mm]node.east){\(X/P_{#1}\)};
    \end{tikzpicture}
  }
  \begin{tabular}{@{}lccc@{}}
    \toprule
    & \makecell{Initial Partitions \\ \legendPartition{0}}
    & \makecell{After the first iteration \\ \legendPartition{1}}
    & \makecell{After the second iteration \\ \legendPartition{2}}
      \\
      \midrule
  \begin{subfigure}{6mm}
    \caption{}
    \label{fig:exPartRef:LTS}
  \end{subfigure}
  & \partRefExampleGraph{false}{(x0) (x1) (x2) (x3) (x4)}{(x0) (x1) (x2),(x3) (x4)}
  & \partRefExampleGraph{false}{(x0) (x1) (x2),(x3) (x4)}{(x0) (x1),(x2),(x3) (x4)}
  & \partRefExampleGraph{false}{(x0) (x1),(x2),(x3) (x4)}{(x0) (x1),(x2),(x3) (x4)}
  \\
  \begin{subfigure}{6mm}
    \caption{}
    \label{fig:exPartRef:Markov}
  \end{subfigure}
  \rule{0pt}{18mm} 
  & \partRefExampleGraph{true}{(x0) (x1) (x2) (x3) (x4)}{(x0) (x1),(x2) (x3) (x4)}
  & \partRefExampleGraph{true}{(x0) (x1),(x2) (x3) (x4)}{(x0),(x1),(x2) (x3) (x4)}
  & \partRefExampleGraph{true}{(x0),(x1),(x2) (x3) (x4)}{(x0),(x1),(x2) (x3) (x4)}
    \\
  \bottomrule
  \end{tabular}
  \caption{Example of the partition refinement in \textsc{(a)}
    transition systems ($H=\Potf$) and  \textsc{(b)} $\R$-weighted systems ($H=\R^{(-)}$). The partition
    $X/Q_i$ is indicated by dashed lines, and $X/P_i$ by solid lines.}
  \label{fig:exPartRef}
\end{figure}

\begin{algorithm}[Informal Partition Refinement] \label{informalPT}
  Given a system on $X$ and initial partitions $X/Q=\{X\}$ and $X/P$
  as above, iterate the following while~$P$ is different from~$Q$:
\begin{enumerate}[ref=(\arabic*)]
\item \label{stepInformalS}
  Pick 
  a subblock $S$ in $X/P$ that
  is properly contained in a compound block $C \in X/Q$, i.e.~$S\subsetneqq C$.
  Note that this choice represents a quotient $q\colon X\epito \{S, C\setminus
    S, X\setminus C\}$.

  \item \label{stepInformalQ} Refine $X/Q$ with respect to this
    quotient by splitting $C$ into two blocks $S$ and $C\setminus S$;
    that is, replace $X/Q$ with the intersection (greatest common
    refinement) of the partitions $X/Q$ and
    $\{S,C\setminus S, X\setminus C\}$ according to the terminology
    introduced in \autoref{sec:prelim}.

  \item \label{stepInformalP} Update $X/P$ to the coarsest refinement
    of~$X/P$ in which any two states are distinguished if the
    transition structure distinguishes them up to~$Q$, equivalently:
    If two states are identified, then the transition structure
    identifies them up to~$Q$.
  \end{enumerate}
\end{algorithm}
\begin{remark}
  \begin{enumerate}
  \item Since $X/Q$ is refined using only information from $X/P$ in
    each iteration, the algorithm maintains the invariant that the
    partition $X/P$ is finer than $X/Q$.

  \item The informal property of the refinement
    of~$X/P$ in step~(3) is referred to as \emph{stability} of $P$
    w.r.t.~$Q$. The formal definition of this notion takes into account
    the transition type of the given system. In fact,  Paige and
    Tarjan~\cite{PaigeTarjan87} have defined stability for transition
    systems, and we generalize their definition to the level of
    coalgebras in \autoref{D:stable}.
  \end{enumerate}
\end{remark}

\noindent Note that Steps~(1) and~(2) are independent of the given
transition type, as they perform only basic operations on
quotients. In contrast, the initialization procedure and Step~(3)
depend on the transition type of the specific system, encoded by the
type functor.

\begin{example} \label{exPartRef} \autoref{fig:exPartRef} illustrates
  runs of partition refinement algorithms and how the transition type
  specific steps are handled for two different types of systems: in
  Paige and Tarjan's algorithm for transition
  systems~\cite{PaigeTarjan87} and in Valmari and Franceschinis'
  Markov chain lumping algorithm, which in fact works for
  $\R$-weighted systems~\cite{ValmariF10}:
  \begin{enumerate}[label=(\textsc{\alph*})]
  \item In transition systems, the initial partition only
    distinguishes between deadlocks and live states; in our example
    $X/P_0 = \{\{x_0,x_1,x_2\},\{x_3,x_4\}\}$, and $X/Q_0 = \{X\}$
    identifies all states.
    \begin{itemize}
    \item In the first iteration of the loop of \autoref{informalPT}, there are
      two choices for $S$ in Step~\ref{stepInformalS}, $\{x_0,x_1,x_2\}$ and
      $\{x_3,x_4\}$, both leading to the same refinement step: In
      Step~\ref{stepInformalQ} the only block of $X/Q_0$ is split into two
      blocks, and so $X/Q_1 = X/P_0$. In Step~\ref{stepInformalP}, $X/P_1$ is
      the coarsest refinement of $X/P_0$ that is stable w.r.t.~$X/Q_1$. In
      transition systems, $X/P$ is stable w.r.t.~$X/Q$ if for all elements in
      the same block $x,x'\in B\in X/P$ and any other block $B'\in X/Q$, $x$ has
      an edge to $B'$ iff $x'$ has an edge to $B'$~\cite{PaigeTarjan87}. In
      detail, the state $x_2$ becomes a separate block because $x_2$ has a
      transition to $\{x_3,x_4\}\in X/Q_1$ whereas $x_0$ and $x_1$ do not. The
      states $x_0$ and $x_1$ remain in the same block in $X/P_1$, because both
      have a transition to $\{x_0,x_1,x_2\}\in X/Q_1$, and have no transition to
      $\{x_3,x_4\} \in X/Q_1$. So $X/P_1 = \{\{x_0,x_1\},\{x_2\},\{x_3,x_4\}\}$.
    \item In the second iteration, we have symmetric choices
      $\{x_0,x_1\}$ and $\{x_2\}$ for $S$, and both split the compound
      block $\{x_0,x_1,x_2\}$ so that $X/Q_2 = X/P_1$. The partition
      $X/Q_2$ is stable w.r.t.~itself because both $x_0$ and $x_1$
      have transitions to $\{x_0,x_1\}$ and $\{x_2\}$, and because
      $x_3$ and $x_4$ both are deadlock states (for the singleton
      block $\{x_2\}$ there is nothing to check). Hence,
      $X/P_2 = X/Q_2$ and the algorithm terminates.
    \end{itemize}

  \item In weighted systems, the initial partition groups the states
    by the sum of the weights of their outgoing
    transitions.\footnote{For actual Markov chains or probabilistic
      transition systems, where weights sum up to~$1$, the initial
      partition would thus be trivial, causing immediate termination;
      this corresponds to the fact that all states of a Markov chain
      are behaviourally equivalent unless we introduce additional
      observable features such as propositional atoms or deadlock. In
      standard treatments of Markov chain lumping, such additional
      features are abstracted in the choice of a non-trivial initial
      partition.} For $x_0$ and $x_1$, the sum is 2, and for $x_2$,
    $x_3$, $x_4$ the sum is 0. Hence
    $X/P_0 =\{\{x_0,x_1\},\{x_2,x_3,x_4\}\}$, and as always,
    $X/Q_0 = \{X\}$ identifies all states.
    \begin{itemize}
    \item In the first iteration of \autoref{informalPT}, we can choose
      $\{x_0,x_1\}$ or $\{x_2,x_3,x_4\}$ for $S$, both leading to $X/Q_1 =
      X/P_0$. In weighted systems, stability of $X/P$ w.r.t.~$X/Q$ means that
      all elements of the same block $x,x'\in B\in X/P$ have the same
      accumulated transition weight to any other block $B'\in X/Q$, for $f\colon
      X\to \R^{(X)}$ this means $\sum_{y\in B'} f(x)(y) = \sum_{y\in
        B'}f(x')(y)$ (called `compatibility' in \cite{ValmariF10}). When
      refining $X/P_1$ to be as coarse as possible and stable w.r.t.~$X/Q_1$, we
      have the following accumulated sums. The transition from $x_2$ to
      $\{x_2,x_3,x_4\} \in X/Q_1$ has weight~$0$, just like the (non-existent)
      transition from $x_3$ (resp.~$x_4$) to $\{x_2,x_3,x_4\} \in X/Q_1$, and
      hence $x_2, x_3, x_4$ are identified in $X/P_1$. The transition from $x_0$
      to $\{x_2,x_3,x_4\}\in X/Q_1$ has weight $\frac{3}{2}$, but the transition
      from $x_1$ to $\{x_2,x_3,x_4\}$ has weight $1$. Thus, $x_0$ and $x_1$ are
      split in $X/P_1$, and therefore we have $X/P_1 =
      \{\{x_0\},\{x_1\},\{x_2,x_3,x_4\}\}$.

    \item In the second iteration, we can choose $S$ to be $\{x_0\}$ or
      $\{x_1\}$. In either case, $X/Q_2 = X/P_1$, and we have that $X/Q_2$ is
      stable w.r.t.~$X/Q_2$ because the transition from $x_2$ to
      $\{x_2,x_3,x_4\}$ in $X/Q_2$ has weight 0. Thus, $X/P_2 = X/Q_2$ and the
      algorithm terminates.
    \end{itemize}
  \end{enumerate}
\end{example}
\noindent In the remainder of the paper we shall see that one can
unify the similarities between \autoref{fig:exPartRef}(\textsc{a}) and
\autoref{fig:exPartRef}(\textsc{b}) into a generic algorithm on a
coalgebraic level, where the specifics of the transition type are
hidden within the coalgebraic type functor.

\section{A Categorical Algorithm for Behavioural Equivalence}
\label{sec:cat}
\tikzsetfigurename{partref-catPT-}

We proceed to give a formal description of a categorical partition
refinement algorithm that computes the simple quotient of a given
coalgebra under fairly general assumptions.

\begin{assumption}\label{ass:sec3}
  In addition to \autoref{ass:C}, we fix an endofunctor $H\colon \C\to \C$ that
  preserves monomorphisms.
\end{assumption}

\begin{remark}
  For $\C = \Set$, the assumption that $H$ preserves monomorphisms is
  w.l.o.g. First note that every endofunctor on \Set preserves
  nonempty monomorphisms. Moreover, for every set functor $H$ there exists a
  mono-preserving set functor $H'$ that is identical to $H$ on the
  full subcategory of all nonempty
  sets~\cite[Theorem~3.4.5]{AdamekT90}. Hence, $H'$ has essentially
  the same coalgebras as $H$ since there is only one coalgebra
  structure on $\emptyset$.
\end{remark}

\noindent For a given coalgebra $\xi\colon X\to HX$ in $\Set$, any
partition refinement algorithm should maintain a quotient
$q\colon X\twoheadrightarrow \unnicefrac{X}{Q}$ that distinguishes
some (but possibly not all) states with different behaviour, and in
fact, initially~$q$ typically identifies everything. Using the
language of universal coalgebras, one can express the transition
type specific steps from \autoref{informalPT} generically. Informally,
our algorithm repeats the following steps until it stabilizes:
\begin{enumerate}
\item 
  Analyse $X\xrightarrow{\xi} HX \xrightarrow{Hq} H(\unnicefrac{X}{Q})$
  to identify equivalence classes w.r.t.~$q$ (i.e.~compound blocks) containing
  states that exhibit distinguishable behaviour when considering one more step of
  the transition structure $\xi$.
\item Use parts of this information to refine $q$.
\end{enumerate}
Here, the composite $Hq\cdot\xi$ (more precisely its kernel) defines
the finer partition $\unnicefrac X P$ in \autoref{informalPT} (our
algorithm will guarantee that this does refine the previous value
of~$P$). More precisely,~$P$ will be defined to be the kernel of
$Hq\cdot \xi$, and so $P$ is as coarse as possible to be stable
w.r.t.~$q\colon X\twoheadrightarrow X/Q$, in the following sense:
\begin{definition}\label{D:stable}
  Given a coalgebra $\xi \colon X\to HX$, a kernel
  $P \rightrightarrows X$ is said to be \emph{stable}
  w.r.t.~a morphism $q\colon X\to Y$ provided that there exists a morphism
  $\xi/q\colon X/P \to HY$ such that the following square
  commutes:
  \[
    \begin{tikzcd}
      X
      \arrow{r}{\xi}
      \arrow[->>]{d}[swap]{\kappa_{P}}
      &
      HX
      \arrow{d}{Hq}
      \\
      X/P
      \arrow{r}{\xi/q}
      &
      HY
    \end{tikzcd}
  \]
  Equivalently, the kernel $\pi_1,\pi_2\colon P\rightrightarrows X$ is
  stable w.r.t.~$q$ if $Hq\cdot \xi \cdot \pi_1 = Hq\cdot \xi \cdot
  \pi_2$.
\end{definition}
\begin{rem}
  Note that $P \rightrightarrows X$ is stable
  w.r.t.~$\kappa_P\colon X \epito X/P$ iff $\kappa_P$ is a coalgebra
  morphism and thus represents a quotient of $(X, \xi)$.
\end{rem}
\begin{example}
  This definition of stability for a coalgebra $\xi\colon X\to HX$ matches the
  concrete instances we have seen in \autoref{exPartRef}:
  \begin{enumerate}
  \item In a transition system $\xi\colon X\to \Potf X$, $P\rightrightarrows X$ is stable
    w.r.t.~$\kappa_Q\colon X\twoheadrightarrow X/Q$ if for all elements in the
    same block $x,x'\in B\in X/P$ and any other block $B'\in X/Q$, $x$ has an
    edge to $B'$ iff $x'$ has an edge to $B'$~\cite{PaigeTarjan87}.
  \item In a weighted system $\xi\colon X\to \R^{(X)}$, stability of $P
    \rightrightarrows X$ w.r.t.~$X/Q$ means that all elements of the same block
    $x,x'\in B\in X/P$ have the same accumulated transition weight to any other
    block $B'\in X/Q$, i.e.~$\sum_{y\in B'} \xi(x)(y) = \sum_{y\in B'}\xi(x')(y)$.
    This is called `compatibility' in \cite{ValmariF10}.
  \item For LTSs, Blom and Orzan~\cite{BlomO05} define the \emph{signature} of
    state $x\in X$ with respect to $q$ as $Hq\cdot \xi(x)$ (with
    $HX=\Pot(A\times X)$) and then define a partition $X/P$ to be stable if
    every two members $(x,x')\in P$ have the same signature w.r.t.~$q:=\kappa_P$.
  \end{enumerate}
\end{example}

The refinement step~(2) above corresponds to
the subblock selection in \autoref{informalPT}. This selection will be
encapsulated at the present level of generality in a routine
$\op{select}$, assumed as a parameter of our algorithm:
\begin{definition}
  A \op{select} routine is an operation that receives a chain of two
  regular epis
  \(
    \begin{mytikzcd}
      |[inner sep=0mm]|
      X\  \arrow[->>]{r}{y}
      & Y \arrow[->>]{r}{z}
      & Z
    \end{mytikzcd}
    \)
    and returns some morphism $\op{select}(y,z)\colon Y \to K$. We
    refer to~$Y$ and~$Z$ as the objects of \emph{subblocks} and
    \emph{compound blocks}, respectively.
\end{definition}
\noindent In our algorithm,~$y$ will represent the canonical quotient
$X\twoheadrightarrow X/P$ and~$z$ the canonical map
$X/P\twoheadrightarrow X/Q$ given by the invariant that $P$ is finer
than $Q$.  Intuitively, the morphism $\op{select}(y,z)\colon Y\to K$
selects some of the information contained in $Y$ and discards all the
remaining information by identifying the remaining elements. For
example, in the Paige-Tarjan algorithm with $Y=X/P$ and $Z=X/Q$,
$\op{select}$ models the selection of only a single block $S \in X/P$
leading to a split of the surrounding block $C\in X/Q, S\subseteq C$
into two blocks $S$ and $C\setminus S$. In this case,
$\op{select}(y,z)$ is thus essentially a characteristic function of
the following shape:
\begin{definition} \label{defChi3}
  For sets $S\subseteq C\subseteq X$, define the map
  \[
    \chi_S^C\colon X\to 3
    \qquad
    3 = \{0,1,2\}
    \qquad
    \chi_S^C(x) =
    \begin{cases}
      2 &\text{if }x \in S \\
      1 & \text{if }x \in C\setminus S \\
      0 & \text{if }x \in X\setminus C.
    \end{cases}
  \]
  This is a three-valued version of the characteristic function
  $\chi_{M}\colon X\to 2$ for a subset $M\subseteq X$. Put
  differently, $\chi_S^C$ is the codomain restriction of
  $\fpair{\chi_S,\chi_C}\colon X\to 2\times 2$ obtained by leaving out
  the impossible case of $x\in S\setminus C$. 
\end{definition}
\begin{example} \label{exampleSelects}
  We present some examples of \op{select} routines. Throughout, we fix
  a chain 
  \(
    \begin{mytikzcd}[inlinecd]
      |[inner sep=0mm]|
      X\  \arrow[->>]{r}{y}
      & Y \arrow[->>]{r}{z}
      & Z
    \end{mytikzcd}
    \)%
    of quotients.
\begin{enumerate}
\item \label{exampleSelectsChi} In Hopcroft's
  algorithm~\cite{Hopcroft71}, and in all the known efficient
  partition refinement algorithms mentioned so far, the goal is to
  find a proper subblock that is at most half the size of the compound
  block it is contained in. The optimized version of our algorithm
  over $\C = \Set$ (\autoref{sec:efficient}) will use the same
  strategy, embodied in the following $\op{select}$ routine. First, we
  identify the elements of $Y$ with the corresponding equivalence
  classes in $\unnicefrac{X}{\ker y}$ and those of $Z$ with
  equivalence classes in $X/\ker(z \cdot y)$. Now let $S$ be a
  subblock, i.e.~$S \in \unnicefrac{X}{\ker y} \cong Y$, such that its
  compound block,
  i.e.~$C = z(S) \in \unnicefrac{X}{\ker(z \cdot y)} \cong Z$
  satisfies $2\cdot |S| \le |C|$. Then we put (using our
  notation~\eqref{eq:funEquivClass} for equivalence classes induced by
  the map $z$):
  \begin{equation}\label{eq:select}
    \op{select}(y,z) = \chi_{\{S\}}^{[S]_z}\colon Y\to 3.
  \end{equation}
  Note that we then have
  \[
    \op{select}(y,z)\cdot y = \chi_S^C\colon X\to 3.
  \]
  If there is no such $S \in Y$, then $X$ is infinite or $z$ is an isomorphism,
  and in either case we simply put $\op{select}(y,z) = \id_Y$.
  
\item \label{exampleSelectsId} One obvious choice for
  $\op{select}(y,z)\colon Y\to K$ is the identity on $Y$, so that
  \emph{all} of the information present in $Y$ is used for further
  refinement. We will explain in \autoref{finalchain} how under this
  choice, our algorithm instantiates to König and Küpper's final chain
  algorithm~\cite{KonigKupper14}.

\item Two other, trivial, choices are
  $\op{select}(y,z)\colon Y\overset{!}\to 1$ and
  $\op{select}(y,z)=z$. Since both of these choices discard all the
  information in $Y$, this will leave the partitions computed by the
  algorithm unchanged, see the proof of \autoref{thm:correct}.
\end{enumerate}
\end{example}
\noindent Given a \op{select} routine, the most general form of our
partition refinement algorithm works as follows. Given a coalgebra
$\xi\colon X\to HX$, we successively refine equivalence relations
(i.e.~kernel pairs)~$Q \rightrightarrows X$ and
$P \rightrightarrows X$, maintaining the invariant that $P$ is finer
than $Q$ (cf.~\autoref{PfinerthanQ}), which is witnessed by a
(necessarily unique) morphism $f\colon X/P \epito X/Q$ such that
$f \cdot \kappa_P = \kappa_Q$, where $\kappa_P$ and $\kappa_Q$ denote
the canonical quotients (i.e.~coequalizers) of the above two kernel
pairs (\autoref{sec:equiv}).

Before each iteration of the main loop, we take into account new
information on the behaviour of states, represented by a morphism
$q\colon X\to K$, and accumulate this information in a morphism
$\bar q\colon X\to \bar K$ where $\bar K$ is the Cartesian product of
the instances of~$K$ encountered up to the present iteration. In order
to facilitate the analysis later, we index the variables
$P,Q,f,q,\bar q$ over loop iterations~$i$ in the description. For
brevity, we will just write the objects $Q_i$ and $P_i$ in lieu of the
respective kernel pairs $P_i \rightrightarrows X$ and
$Q_i \rightrightarrows X$.

\begin{algorithm} \label{catPT}
  Given a coalgebra $\xi\colon X\to HX$ and a $\op{select}$ routine,
  initially put
  \begin{equation*}
    Q_0  = X \times X,\qquad
    \bar q_0  = \mathbin{!}\colon X \to 1 = K_0,\qquad
    P_0 =  \ker(
    X \xrightarrow{\xi}
    HX \xrightarrow{H!} H1).
  \end{equation*}
  Then iterate the following steps, with~$i$ counting iterations starting
  at~$0$, while $P_i$ is properly finer than $Q_i$:
  \begin{enumerate}
  \item\label{step:S}
    $q_{i+1} := {\!\!\begin{mytikzcd}[inlinecd,sep=2.5em]
      X \arrow[->>]{r}{\kappa_{P_i}}
      & \unnicefrac{X}{P_i} \arrow{r}{\op{select}(\kappa_{P_i},f_i)}
      &[10mm] K_{i+1}
    \end{mytikzcd}\!\!}$,
    \quad$\bar q_{i+1} := \fpair{\bar q_i,q_{i+1}}\colon
    \begin{mytikzcd}[inlinecd]
      X \arrow{r}
      & \prod_{j \le i} K_i × K_{i+1}
    \end{mytikzcd}$
    \\
    where $f_i\colon X/P_i\twoheadrightarrow X/Q_i$ witnesses that
    $P_i$ is finer than $Q_i$.

  \item\label{step:Q} $Q_{i+1} := \ker\bar q_{i+1} = Q_i\cap \ker q_{i+1}$

  \item\label{step:P}
    $P_{i+1} := \ker\big(\!\begin{mytikzcd}[inlinecd] X \arrow{r}{\xi}
      & HX \arrow{rr}{H\bar q_{i+1}} && H\prod_{j\le i+1} K_j
    \end{mytikzcd}\!\big)$ 

  \end{enumerate}
  Upon termination of the above loop return
  $\unnicefrac{X}{P_i}=\unnicefrac{X}{Q_i}$.
\end{algorithm}
\begin{remark}
  Note that \autoref{catPT} is precisely the informal \autoref{informalPT}
  where
  \begin{itemize}
  \item the partitions and equivalence relations are replaced by
    coequalizers and kernel pairs;
  \item $P_i$ being properly finer than $Q_i$ means that the canonical
    morphism $f_i\colon X/P_i \epito X/Q_i$ is not an isomorphism; 
  \item in Step~\ref{step:Q}, we have
    $\ker \bar q_{i+1} = \ker \fpair{\bar q_i,q_{i+1}}= \ker \bar q_i
    \cap \ker q_{i+1} = Q_i\cap \ker q_{i+1}$, see~\eqref{eq:kercap};
  \item transition type specific steps involve how the type functor
    $H$ acts on morphisms;
  \item Step~\ref{step:P} makes $P_{i+1}$ stable w.r.t.~$\bar q_{i+1}$;
  \item the choice of a subblock $S$ and a compound block $C$ is
    performed by the $\op{select}$ routine.
  \end{itemize}
\end{remark}
\noindent In general, \autoref{catPT} need not terminate, but we
present sufficient conditions for termination in
\autoref{thm:correct}. We now proceed to show that when
\autoref{catPT} terminates, then it returns the (carrier object of
the) simple quotient of $(X,\xi)$, i.e.~we prove correctness. We
continue to use the notation established in
Algorithm~\ref{catPT}. Since $\bar q$ accumulates more information in
every step, it is clear that $P$ and $Q$ are being successively
refined:

\begin{lemma}\label{PfinerthanQ} 
  For every $i$, we have monomorphisms
  $P_{i+1} \monoto P_i \monoto Q_{i+1} \monoto Q_{i}$ witnessing
  inclusions of relations.
\end{lemma}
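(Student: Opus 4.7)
The plan is to verify the chain $P_{i+1} \monoto P_i \monoto Q_{i+1} \monoto Q_i$ componentwise; the two outer monomorphisms follow directly from the construction together with Remark~\ref{rem:kernel}, while the middle one, $P_i \monoto Q_{i+1}$, requires an induction on $i$ that bundles all three claims together.

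For the rightmost inequality I would unfold the definition
\[
Q_{i+1} = \ker \bar q_{i+1} = \ker \fpair{\bar q_i, q_{i+1}} = \ker \bar q_i \cap \ker q_{i+1} = Q_i \cap \ker q_{i+1},
\]
using~\eqref{eq:kercap}, which exhibits $Q_{i+1}$ as a subobject of $Q_i$. For $P_{i+1} \monoto P_i$, I use that $\bar q_i = \pi_1 \cdot \bar q_{i+1}$ for the product projection $\pi_1$, so $H \bar q_i \cdot \xi = H\pi_1 \cdot (H \bar q_{i+1} \cdot \xi)$, and Remark~\ref{rem:kernel}\ref{i:comp} yields $P_{i+1} = \ker(H \bar q_{i+1} \cdot \xi)$ finer than $\ker(H \bar q_i \cdot \xi) = P_i$.

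The middle inequality is the main step. Since $Q_{i+1} = Q_i \cap \ker q_{i+1}$, it suffices to prove that $P_i$ is finer than both of these. For $\ker q_{i+1}$, the defining factorization $q_{i+1} = \op{select}(\kappa_{P_i}, f_i) \cdot \kappa_{P_i}$ expresses $q_{i+1}$ as factoring through the regular epi $\kappa_{P_i}$, and then Remark~\ref{rem:kernel}\ref{i:finer} gives $P_i = \ker \kappa_{P_i}$ finer than $\ker q_{i+1}$. For $Q_i$, I proceed by induction on $i$: the base case $i = 0$ is immediate because $Q_0 = X \times X$ is the top relation, and for $i \geq 1$ the previous instance of the lemma supplies $P_i \monoto P_{i-1} \monoto Q_i$, whose composite is a monomorphism.

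The main subtlety --- and the reason the three claims must be proved jointly rather than separately --- is that the morphism $f_i \colon X/P_i \epito X/Q_i$ used in Step~\ref{step:S} only exists by virtue of the invariant $P_i \monoto Q_i$, which is itself part of what the lemma asserts at the previous index. Once this inductive bookkeeping is set up, no further difficulty arises.
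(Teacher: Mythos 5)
Your proof is correct and follows essentially the same route as the paper's: the outer two monomorphisms come from the product projection plus Remark~\ref{rem:kernel}\ref{i:comp} and from $Q_{i+1}=Q_i\cap\ker q_{i+1}$, while the middle one is obtained by combining the factorization of $q_{i+1}$ through $\kappa_{P_i}$ with an induction that chains $P_i\monoto P_{i-1}\monoto Q_i$ from the previous instance of the lemma. The only cosmetic difference is that you isolate the auxiliary claim $P_i\monoto Q_i$ with its own trivial base case $Q_0=X\times X$, whereas the paper runs the induction directly on $P_i\monoto Q_{i+1}$ using $Q_1=\ker q_1$; these are logically interchangeable.
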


\noindent (The above inclusions state that the kernel $P_{i+1}
\rightrightarrows X$ is finer than the kernel $P_i \rightrightarrows X$ etc., see \autoref{sec:equiv}.)
%
%
\begin{proof}
  We have that $Q_{i+1}$ is finer than $Q_i$ by definition
  and use \autoref{rem:kernel}\ref{i:comp} for the remaining inclusions:
\begin{enumerate}
\item\label{PfinerthanQ:1} $P_{i+1}\monoto P_i$: Let $p\colon \prod_{j\le i+1} K_j \to \prod_{j\le i} K_j$ be the product projection.
  We have
  \(
  H\bar q_i \cdot \xi = H p \cdot H\bar q_{i+1} \cdot
  \xi
  \),
  so $P_{i+1} = \ker (H\bar q_{i+1}\cdot \xi)$ is finer than $P_i = \ker(H\bar q_i\cdot
  \xi)$.

\item $P_i\monoto Q_{i+1}$: First observe that for every $i$, $P_i$ is
  finer than $\ker q_{i+1}$ because $q_{i+1}$ factors through
  $\kappa_{P_i}\colon X\epito\unnicefrac{X}{P_i}$ by step \ref{step:S}
  in \autoref{catPT}. We now obtain the desired result by induction on
  $i$. In the base case we have that $P_0$ is finer than
  $Q_1 = \ker \fpair{\bar q_0, q_1} = \ker \fpair{!, q_1} = \ker q_1$.
  For the induction step ($i> 0$), since
  $Q_{i+1}=Q_i\cap \ker q_{i+1}$, it suffices to show that~$P_i$ is
  finer than $Q_i $ and $\ker q_{i+1}$. The latter is just our lead-in
  observation, and for the former use the inductive hypothesis
  ($P_{i-1}\monoto Q_i$) and the fact that,
  by~\ref{PfinerthanQ:1},~$P_i$ is finer than~$P_{i-1}$.
\qedhere
\end{enumerate}
\end{proof}
\noindent
One of the key ingredients in the correctness proof is that the
partition $X/P_i$ is one ``transition step'' ahead of $X/Q_i$,
i.e.~the kernel $P_i$ is stable w.r.t.~the quotient $\kappa_{Q_i}$:%
\smnote{The extra property that $\xi/Q_i$ is monic is used
  later! So please do not reformulate as `$P_i$ is stable
  w.r.t.~$Q_i$' because more is proved here and needed later.}%
\\
\begin{minipage}[c]{.55\textwidth}
\begin{proposition}\label{propQuot}
  There exist monomorphisms
  $\unnicefrac{\xi}{Q_i}\colon \unnicefrac{X}{P_i} \monoto H(\unnicefrac{X}{Q_i})$
  for $i\ge 0$ (necessarily unique) such that \eqref{eq:xiQuotient}
  commutes.
\end{proposition}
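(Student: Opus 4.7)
My plan is to reduce the claim to an application of \autoref{rem:kernel}\ref{i:monofac}, which characterizes when a morphism factors through a regular epi as a mono. The target factorization will be
\[
  H\kappa_{Q_i} \cdot \xi = (\xi/Q_i) \cdot \kappa_{P_i},
\]
and \autoref{rem:kernel}\ref{i:monofac} guarantees both its existence and the monicity of $\xi/Q_i$ as soon as we verify that $\ker \kappa_{P_i} = \ker(H\kappa_{Q_i} \cdot \xi)$. Uniqueness is then automatic since $\kappa_{P_i}$ is epic.

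The key step will be to identify $P_i$ with $\ker(H\kappa_{Q_i} \cdot \xi)$. First I would note that $Q_i = \ker \bar q_i$ for all $i\ge 0$: for $i\ge 1$ this is by construction in step~\ref{step:Q}, and for $i=0$ we have $\ker\bar q_0 = \ker\mathord{!} = X\times X = Q_0$. Hence the (RegEpi,Mono)-factorization of $\bar q_i$ takes the form $\bar q_i = m_i \cdot \kappa_{Q_i}$ for a unique mono $m_i\colon X/Q_i \monoto \prod_{j\le i} K_j$. Similarly, $P_i = \ker(H\bar q_i \cdot \xi)$ holds for all $i\ge 0$ (for $i=0$ by the initialization of $P_0$, for $i\ge 1$ directly). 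Since $H$ preserves monomorphisms, $Hm_i$ is a mono, and therefore by \autoref{rem:kernel}\ref{i:mono},
\[
  P_i = \ker(H\bar q_i \cdot \xi) = \ker\bigl(Hm_i \cdot H\kappa_{Q_i} \cdot \xi\bigr) = \ker(H\kappa_{Q_i} \cdot \xi).
\]
Since $\kappa_{P_i}$ is the coequalizer of its own kernel pair $P_i \rightrightarrows X$, we also have $\ker \kappa_{P_i} = P_i$, so the two kernels coincide.

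Applying \autoref{rem:kernel}\ref{i:monofac} to $f := H\kappa_{Q_i} \cdot \xi$ and the regular epi $q := \kappa_{P_i}$ then yields a unique monomorphism $\xi/Q_i \colon X/P_i \monoto H(X/Q_i)$ making the required square commute. I do not expect any real obstacle here; the entire argument is bookkeeping with \autoref{rem:kernel}, and the only nontrivial ingredient is the assumption that $H$ preserves monomorphisms, which is precisely \autoref{ass:sec3} and is needed to pass from $\ker(H\bar q_i \cdot \xi)$ to $\ker(H\kappa_{Q_i} \cdot \xi)$.
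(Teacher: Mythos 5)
Your proof is correct and follows essentially the same route as the paper: both arguments hinge on the identity $P_i=\ker(H\bar q_i\cdot\xi)=\ker(H\kappa_{Q_i}\cdot\xi)$, obtained from the image factorization $\bar q_i=m\cdot\kappa_{Q_i}$ and mono-preservation of $H$ via \autoref{rem:kernel}\ref{i:mono}. The only cosmetic difference is that you invoke \autoref{rem:kernel}\ref{i:monofac} to get existence and monicity of $\unnicefrac{\xi}{Q_i}$ in one step, whereas the paper first uses the universal property of the coequalizer $\kappa_{P_i}$ and then reads off monicity from uniqueness of (RegEpi,Mono)-factorizations — the same underlying argument.
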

\end{minipage}\hfill
\raisebox{2mm}{
\begin{minipage}[c]{.4\textwidth}
\begin{equation}
    \begin{mytikzcd}[
            row sep=4mm,
            column sep=12mm,
        ]
        X
            \arrow[->>]{d}[swap]{\kappa_{P_i}}
            \arrow{r}{\xi\ }
        &  HX
            \arrow[->]{d}[yshift=1pt]{H\kappa_{Q_i}}
        \\
        \unnicefrac{X}{P_i}
            \arrow[dash pattern=on 2pt off 1pt,>->]{r}{\ \unnicefrac{\xi}{Q_i}}
        & H (\unnicefrac{X}{Q_i})
    \end{mytikzcd}
    \label{eq:xiQuotient}
  \end{equation}
\end{minipage}}

\begin{proof}
  Since $Q_i=\ker\bar q_i$, the image factorization of $\bar q_i$ has
  the form
  \[
    \bar q_i=
    \big(\!\!
    \begin{mytikzcd}
      X \arrow[->>]{r}{\kappa_{Q_i}}
      & X/Q_i
      \arrow[>->]{r}{m}
      & \prod_{j\le i} K_j
    \end{mytikzcd}\!\!
    \big).
  \]
  By definition of $P_i$ and
  because $H$ preserves monos, we thus have
  $P_i = \ker(H\bar q_i\cdot \xi)=\ker(H\kappa_{Q_i}\cdot\xi)$, and
  hence we obtain $\unnicefrac{\xi}{Q_i}$ as in~\eqref{eq:xiQuotient} by
  the universal property of the coequalizer~$\kappa_{P_i}$:
  \[
    \begin{tikzcd}
      P_i
      \arrow[xshift=-1mm]{d}\arrow[xshift=1mm]{d}
      \\
      X
      \arrow[->>]{d}[swap]{\kappa_{P_i}}
      \arrow{r}{\xi\ }
      &
      HX
      \arrow[->]{d}[yshift=1pt]{H\kappa_{Q_i}}
      \arrow{rd}{H\bar q_i}
      \\
      \unnicefrac{X}{P_i}
      \arrow[dash pattern=on 2pt off 1pt,>->]{r}{\ \unnicefrac{\xi}{Q_i}}
      &
      H (\unnicefrac{X}{Q_i})
      \arrow[>->]{r}{Hm}
      &
      H\big(\prod_{j\le i} K_j\big)
    \end{tikzcd}
  \]
  In fact, $\kappa_{P_i}$ is the regular-epi part of the factorization of $H\bar
  q_i\cdot \xi$, and so $Hm\cdot \xi/Q_i$ is the mono part, and thus $\xi/Q_i$
  is also a monomorphism.
\end{proof}

\begin{corollary}
  If $P_i = Q_i$ for some $i$, then $\unnicefrac{X}{Q_i}$ carries a unique coalgebra
  structure making $\kappa_{P_i}$ a coalgebra morphism.
\end{corollary}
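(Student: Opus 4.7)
The plan is to read off the desired coalgebra structure directly from Proposition \ref{propQuot}. Under the hypothesis $P_i = Q_i$, the two coequalizers $\kappa_{P_i}$ and $\kappa_{Q_i}$ represent the same quotient, so we may identify $X/P_i$ with $X/Q_i$ and $\kappa_{P_i}$ with $\kappa_{Q_i}$. Then the mono $\xi/Q_i\colon X/P_i \monoto H(X/Q_i)$ produced by Proposition \ref{propQuot} becomes a morphism $\zeta := \xi/Q_i\colon X/Q_i \to H(X/Q_i)$, i.e.\ a coalgebra structure on $X/Q_i$.

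Next I would verify that this choice of $\zeta$ makes $\kappa_{P_i}$ a coalgebra morphism. This is immediate from the commuting square \eqref{eq:xiQuotient}, which reads
\[
  H\kappa_{Q_i}\cdot \xi = (\xi/Q_i)\cdot \kappa_{P_i} = \zeta \cdot \kappa_{P_i},
\]
precisely the defining equation for a coalgebra morphism from $(X,\xi)$ to $(X/Q_i,\zeta)$ with carrier $\kappa_{P_i} = \kappa_{Q_i}$.

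For uniqueness, suppose $\zeta'\colon X/Q_i \to H(X/Q_i)$ also makes $\kappa_{P_i}$ a coalgebra morphism, so $\zeta'\cdot \kappa_{P_i} = H\kappa_{Q_i}\cdot \xi = \zeta\cdot \kappa_{P_i}$. Since $\kappa_{P_i}$ is a (regular) epimorphism, it is in particular epic, whence $\zeta' = \zeta$.

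I do not anticipate any real obstacle here: the content is entirely contained in Proposition \ref{propQuot}, and the corollary amounts to observing that when $P_i = Q_i$ the induced factorization $\xi/Q_i$ has matching domain and codomain shape to serve as a coalgebra structure, with uniqueness coming for free from $\kappa_{P_i}$ being epic.
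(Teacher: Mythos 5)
Your proof is correct and follows exactly the route the paper intends: the corollary is an immediate consequence of Proposition~\ref{propQuot}, with the square~\eqref{eq:xiQuotient} supplying the structure map $\xi/Q_i$ once $X/P_i$ and $X/Q_i$ are identified, and uniqueness coming from $\kappa_{P_i}$ being epic. Nothing is missing.
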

\noindent For $\C = \Set$, this means that all states of $X$ that are
merged by the algorithm are actually behaviourally equivalent. We
still need to prove the converse, namely that all behaviourally equivalent
states are indeed identified in $X/Q_i$:

\takeout{
\begin{lemma}
    \label{soundness}
    Let $h\colon (X,\xi)\to (D,d)$ represent a quotient of $(X,\xi)$. Then
    $\ker h$ is finer than both~$P_i$ and $Q_i$, for all $i\ge 0$.
\end{lemma}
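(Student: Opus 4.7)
The plan is induction on $i\ge 0$, leaning throughout on \autoref{rem:kernel} to manipulate the \emph{finer than} relation on kernels.

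For the base case, $Q_0 = X\times X$ is the coarsest kernel so nothing is to be shown for $Q_0$. For $P_0 = \ker(H!\cdot \xi)$, I would use that $h$ is a coalgebra morphism, i.e.\ $Hh\cdot \xi = d\cdot h$, together with $! = !\cdot h$, to rewrite $H!\cdot \xi = H!\cdot d\cdot h$; then \autoref{rem:kernel}\ref{i:comp} immediately gives $\ker h$ finer than $\ker(H!\cdot d\cdot h) = P_0$.

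For the inductive step, assume $\ker h$ is finer than both $P_i$ and $Q_i$. First I would handle $Q_{i+1} = Q_i \cap \ker q_{i+1}$: since $q_{i+1}$ factors through $\kappa_{P_i}$ by step~\ref{step:S} of \autoref{catPT}, $\ker q_{i+1}$ is coarser than $P_i$ by \autoref{rem:kernel}\ref{i:comp}, and the inductive hypothesis then gives $\ker h$ finer than both $Q_i$ and $\ker q_{i+1}$, hence finer than their intersection $Q_{i+1}$.

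For $P_{i+1} = \ker(H\bar q_{i+1}\cdot \xi)$, the key move is to invoke \autoref{rem:kernel}\ref{i:finer}: since $h$ represents a quotient of $(X,\xi)$, its carrier is a regular epimorphism, and since $\ker h$ has just been shown to be finer than $Q_{i+1} = \ker \bar q_{i+1}$, I obtain a factorisation $\bar q_{i+1} = g\cdot h$ for some morphism $g$. Then $H\bar q_{i+1}\cdot \xi = Hg\cdot Hh\cdot \xi = Hg\cdot d\cdot h$ by the coalgebra morphism equation for $h$, and \autoref{rem:kernel}\ref{i:comp} again yields $\ker h$ finer than $P_{i+1}$.

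The only mildly subtle point is the use of \autoref{rem:kernel}\ref{i:finer} (factorisation through a regular epi), which requires the carrier of $h$ to be a regular epimorphism; this is automatic because $h$ represents a \emph{quotient}. Apart from this, each inductive step is a two-line diagram chase, so I do not anticipate any real obstacle.
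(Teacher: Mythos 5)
Your proof is correct and follows essentially the same route as the paper's: an induction whose key step factorises the map defining $P_{i+1}$ through the regular epi $h$ via the coalgebra-morphism equation $Hh\cdot\xi=d\cdot h$, then applies \autoref{rem:kernel}\ref{i:comp}. The only cosmetic difference is that you work with $\bar q_{i+1}$ directly where the paper uses $\kappa_{Q_{i+1}}$ (these have the same kernel), and you re-derive $P_i\monoto Q_{i+1}$ inline rather than citing \autoref{PfinerthanQ}.
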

\begin{proof}
  We claim that
  \begin{equation}\label{eq:hq-to-hp}
    \text{if $\ker h$ is finer than $Q_i$, then $\ker h$ is finer than $P_i$}.
  \end{equation}
  This is seen as follows: If $\ker h$ is finer than $Q_i$, then
  $\kappa_{Q_i}\colon X\to\unnicefrac{X}{Q_i}$ factorizes through $h\colon X\epito D$, so
  that $H\kappa_{Q_i}\cdot\xi$ factorizes through $Hh\cdot\xi$ and hence
  through $h$, since $Hh\cdot\xi=d\cdot h$:
  \[
    \begin{mytikzcd}
      X
      \arrow{r}{\xi}
      \arrow{d}{h}
      &
      HX
      \arrow{d}{Hh}
      \arrow{r}{H\kappa_{Q_i}}
      &
      H(\unnicefrac{X}{Q_i})
      \\
      D
      \arrow{r}{d}
      &
      HD
      \arrow[swap]{ru}{Hq}
    \end{mytikzcd}
  \]
  Since $P_i=\ker(H\kappa_{Q_i}\cdot\xi)$, this implies that $\ker h$
  is finer than $P_i$.  The claim of the lemma is then proved by
  induction: for $i=0$, the claim for $Q_0=X\times X$ is trivial, and
  the one for $P_0$ follows by~\eqref{eq:hq-to-hp}. The inductive step
  follows from \autoref{PfinerthanQ}: $\ker(h) \monoto P_i \monoto
  Q_{i+1}$, whence we are done by~\eqref{eq:hq-to-hp}.
\end{proof}}

\begin{theorem}[Correctness] \label{correctness}
  If $P_i = Q_i$ for some $i$, then $\unnicefrac{\xi}{Q_i}\colon \unnicefrac{X}{Q_i}\to
  H(\unnicefrac{X}{Q_i})$ is a simple coalgebra.
\end{theorem}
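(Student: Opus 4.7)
The plan is to apply \autoref{prop:simple}: since $H$ preserves monomorphisms, $(X/Q_i,\xi/Q_i)$ is simple iff every coalgebra morphism out of it is carried by a monomorphism in~$\C$. First I would establish the companion to \autoref{PfinerthanQ} that for every coalgebra quotient $h\colon (X,\xi)\epito(D,d)$ and every $j\ge 0$, $\ker h$ is finer than both $P_j$ and $Q_j$. This controls the position of $\kappa_{Q_i}$ relative to arbitrary coalgebra quotients of $(X,\xi)$.

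I would prove this companion claim by induction on $j$, built around the implication
\begin{equation*}
  \text{if $\ker h$ is finer than $Q_j$, then $\ker h$ is finer than $P_j$.}
\end{equation*}
Recall from the proof of \autoref{propQuot} that $P_j=\ker(H\kappa_{Q_j}\cdot\xi)$, because $\bar q_j$ factors as $m\cdot\kappa_{Q_j}$ with $m$ monic and $H$ preserves monos (\autoref{rem:kernel}\ref{i:mono}). Assuming $\ker h$ is finer than $Q_j=\ker\kappa_{Q_j}$, \autoref{rem:kernel}\ref{i:finer} yields $r\colon D\to X/Q_j$ with $r\cdot h=\kappa_{Q_j}$, so $H\kappa_{Q_j}\cdot\xi = Hr\cdot Hh\cdot\xi = Hr\cdot d\cdot h$ factors through $h$, and \autoref{rem:kernel}\ref{i:comp} gives that $\ker h$ is finer than $P_j$. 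The base case $j=0$ is immediate since $Q_0=X\times X$, and the implication then handles $P_0$. For the step, \autoref{PfinerthanQ} provides $P_j\monoto Q_{j+1}$, so the inductive hypothesis on $P_j$ yields $\ker h$ finer than $Q_{j+1}$, and the implication lifts this to $P_{j+1}$.

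Assuming $P_i=Q_i$, let $g\colon(X/Q_i,\xi/Q_i)\to(Y,\omega)$ be any coalgebra morphism. Using (RegEpi,Mono)-factorization of coalgebra morphisms (available since $H$ preserves monos), it suffices to treat the case where $g$ is a regular epi. Then $h:=g\cdot\kappa_{Q_i}$ is a regular epi by \autoref{ass:C} and a coalgebra morphism, so the companion claim yields $\ker h$ finer than $Q_i=\ker\kappa_{Q_i}$; conversely \autoref{rem:kernel}\ref{i:comp} gives $\ker\kappa_{Q_i}$ finer than $\ker h$. Hence $\ker h = \ker\kappa_{Q_i}$, and \autoref{rem:kernel}\ref{i:monofac} forces $g$ to be monic and therefore an isomorphism. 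Thus every coalgebra morphism out of $(X/Q_i,\xi/Q_i)$ is monic, and \autoref{prop:simple} concludes. The main obstacle is the key implication ``$\ker h$ finer than $Q_j$ implies $\ker h$ finer than $P_j$'', which is where the invariant that $P_j$ is ``one step ahead of'' $Q_j$ gets leveraged, together with coalgebra-morphism naturality and the crucial preservation of monos by $H$.
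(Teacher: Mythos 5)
Your proposal is correct and follows essentially the same route as the paper's proof: the same key implication (``$\ker h$ finer than $Q_j$ implies $\ker h$ finer than $P_j$'', via factoring $\kappa_{Q_j}$ through $h$ and using $P_j=\ker(H\kappa_{Q_j}\cdot\xi)$), the same induction using $P_j\monoto Q_{j+1}$ from \autoref{PfinerthanQ}, and the same final step forcing $\ker(g\cdot\kappa_{Q_i})=Q_i$ so that $g$ is monic. The only cosmetic difference is that you route the conclusion through \autoref{prop:simple} and an image factorization of an arbitrary coalgebra morphism, whereas the paper verifies directly that every quotient of $(\unnicefrac{X}{Q_i},\unnicefrac{\xi}{Q_i})$ is an isomorphism.
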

\begin{proof}
  Let $h\colon (X,\xi) \epito (D,d)$ represent a quotient.

  \begin{enumerate}
  \item We first prove that for all $i\ge 0$, 
  \begin{equation}\label{eq:hq-to-hp}
    \text{if $\ker h$ is finer than $Q_i$, then $\ker h$ is finer than $P_i$}.
  \end{equation}
  This is seen as follows: If $\ker h$ is finer than $Q_i$, then
  $\kappa_{Q_i}\colon X\epito\unnicefrac{X}{Q_i}$ factorizes through
  $h\colon X\epito D$, i.e.~we have some
  $q\colon D \epito \unnicefrac{X}{Q_i}$ such that
  $q \cdot h = \kappa_{Q_i}$ (see
  \autoref{rem:kernel}\ref{i:finer}). So $H\kappa_{Q_i}\cdot\xi$
  factorizes through $Hh\cdot\xi$ and hence through $h$, since
  $Hh\cdot\xi=d\cdot h$:
  \[
    \begin{mytikzcd}
      X
      \arrow{r}{\xi}
      \arrow[->>]{d}[swap]{h}
      &
      HX
      \arrow{d}{Hh}
      \arrow{r}{H\kappa_{Q_i}}
      &
      H(\unnicefrac{X}{Q_i})
      \\
      D
      \arrow{r}{d}
      &
      HD
      \arrow[swap]{ru}{Hq}
    \end{mytikzcd}
  \]
  Since $P_i=\ker(H\kappa_{Q_i}\cdot\xi)$, this implies that $\ker h$
  is finer than $P_i$, again by \autoref{rem:kernel}\ref{i:finer}.

\item Next we prove by induction on $i$ that $\ker h$ is finer than
  both~$P_i$ and $Q_i$, for all $i\ge 0$. For $i=0$, the claim for
  $Q_0=X\times X$ is trivial, and the one for $P_0$ follows
  by~\eqref{eq:hq-to-hp}. In the induction step, we have by the
  inductive hypothesis that $\ker(h)$ is finer than $P_i$, thus by
  \autoref{PfinerthanQ} also finer than $Q_{i+1}$ and consequently
  by~\eqref{eq:hq-to-hp} finer than $P_{i+1}$.

\item Now we are ready to prove the claim of the theorem. Let
  $q\colon (\unnicefrac{X}{Q_i},\unnicefrac{\xi}{Q_i}) \epito (D,d)$
  represent a quotient. Then
  $q\cdot \kappa_{Q_i}\colon (X,\xi) \to (D,d)$ represents a quotient
  of $(X,\xi)$, so by point~(2) above, $\ker( q\cdot\kappa_{Q_i})$ is
  finer than $Q_i$. By \autoref{rem:kernel}\ref{i:comp},
  $Q_i = \ker(\kappa_{Q_i})$ is also finer than
  $\ker( q\cdot\kappa_{Q_i})$, so
  $\ker(q\cdot \kappa_{Q_i}) = \ker(\kappa_{Q_i}) =Q_i$.  This implies
  that $\kappa_{Q_i}\colon X\epito X/Q_i$ is the regular epi part of
  the image factorization of $q\cdot \kappa_{Q_i}$, i.e.~we have
  $m \cdot \kappa_{Q_i} = q \cdot \kappa_{Q_i}$ for some monomorphism
  $m$. Since $\kappa_{Q_i}$ is an epimorphism, we obtain $m = q$,
  i.e.~$q$ is a monomorphism, and hence an isomorphism. \qedhere
\end{enumerate}
\end{proof}

\begin{remark} \label{coalgebraInitialPartition}
  Most classical partition refinement algorithms are parametrized by an initial
  partition $\kappa_{\mathcal{I}}\colon X\twoheadrightarrow
  \unnicefrac{X}{\mathcal{I}}$. We start with the trivial partition $!\colon X \to 1$
  because a non-trivial initial partition might split equivalent behaviours and
  then would invalidate \autoref{correctness}. To accommodate an initial partition
  $\unnicefrac{X}{\mathcal{I}}$ coalgebraically, replace $(X,\xi)$ with the
  coalgebra $\fpair{\xi,\kappa_\mathcal{I}}$ for the functor
  $H(-)×\unnicefrac{X}{\mathcal{I}}$ -- indeed, already $P_0$ will then be finer
  than $\mathcal{I}$.
\end{remark}
\noindent We look in more detail at two corner cases of the algorithm
where the \op{select} routine retains all available information,
respectively none. 

\begin{remark} \label{finalchain}
  If $\op{select}(
    \begin{mytikzcd}
      |[inner sep=0mm]|
      X\  \arrow[->>]{r}{y}
      & Y \arrow[->>]{r}{z}
      & Z
    \end{mytikzcd}
    ) = \id_Y$ (cf.~\autoref{exampleSelects}\ref{exampleSelectsId}),
    then \autoref{catPT} becomes König and Küppers' final chain
    algorithm~\cite{KonigKupper14}, as we will now explain.
  \begin{enumerate}
  \item Recall that $H$ induces the \emph{final chain}:
    \[
      1 \xleftarrow{!} H1 \xleftarrow{H!} H^21 \xleftarrow{H^2!} \cdots
      \xleftarrow{H^{i-1}!} H^i 1 \xleftarrow{H^i !} H^{i+1} 1
      \xleftarrow{H^{i+1} !} \cdots
    \]
    (The chain is transfinite but we consider only the first~$\omega$
    stages.)  Every coalgebra $\xi\colon X\to HX$ then induces a
    \emph{canonical cone} $\xi^{(i)}\colon X\to H^i 1$ on the final
    chain, defined inductively by
    \[
      \xi^{(0)}=\mathbin{!}\colon X \to H^0 1 = 1
      \quad\text{and}\quad
      \xi^{(i+1)} = (X \xrightarrow{\xi} HX \xrightarrow{H\xi^{(i)}}
      HH^i 1 = H^{i+1} 1).
    \]
    The objects $H^n 1$ may be thought of as domains of $n$-step
    behaviour for $H$-coalgebras. If $\C=\Set$ and $X$ is finite, then
    states~$x$ and~$y$ are behaviourally equivalent iff
    $\xi^{(i)}(x) = \xi^{(i)}(y)$ for all $i < \omega$
    \cite{Worrell05}.  In fact, Worrell showed this for
    unrestricted~$X$ and for \emph{finitary} set functors~$H$,
    i.e.~set functors preserving filtered colimits; equivalently,~$H$
    is finitary if for every $x \in HX$ there exists a finite subset
    $m\colon Y \subto X$ and $y \in HY$ such that $x = Hm(y)$. Note
    that for a \emph{finite} coalgebra for an arbitrary set functor $H$,
    behavioural equivalence remains the same when we pass to the
    \emph{finitary part} of $H$, i.e.~the functor given by
    \[
      H_f X = \bigcup\{Hm[Y] \mid \text{$m\colon Y \subto X$ and $Y$
        finite}\}.
    \]
    To see this note that if two states in a finite
    coalgebra can be identified by a coalgebra morphism into some $H$-coalgebra,
    then they can be identified by a coalgebra morphism into a finite
    $H$-coalgebra. This is just by image factorization of coalgebras
    (see \autoref{sec:coalg}).
    
  \item The inclusions $P_i \monoto Q_{i+1}$ in \autoref{PfinerthanQ}
    reflect that only some and not necessarily all of the information
    present in the relation $P_i$ (resp.~the quotient
    $\unnicefrac{X}{P_i}$) is used for further refinement. If indeed
    everything is used, then $Q_{i+1}= P_i$, and our algorithm simply computes the
    kernels of the morphisms $\xi^{(i)}\colon X \to H^i 1$ forming the
    canonical cone:
  \end{enumerate}
\end{remark}
\begin{proposition}\label{P:chain}
  If $\op{select}(
    \begin{mytikzcd}[inlinecd]
      |[inner sep=0mm]|
      X\  \arrow[->>]{r}{y}
      & Y \arrow[->>]{r}{z}
      & Z
    \end{mytikzcd}\!\!) = \id_Y$, then for all $i\in
    \N$, $Q_i=\ker \xi^{(i)}$.
\end{proposition}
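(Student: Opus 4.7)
The plan is to prove the identity $Q_i = \ker \xi^{(i)}$ by induction on $i$. The base case $i = 0$ is immediate: by definition $Q_0 = X \times X$, and $\xi^{(0)} = \mathbin{!}\colon X \to 1$ whose kernel is $X \times X$.

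For the inductive step, assuming $Q_i = \ker \xi^{(i)}$, I would break the argument into two sub-steps: (a) $Q_{i+1} = P_i$, and (b) $P_i = \ker \xi^{(i+1)}$. For (a), the assumption $\op{select}(y,z) = \id_Y$ forces $q_{i+1} = \id_{X/P_i} \cdot \kappa_{P_i} = \kappa_{P_i}$ in step~(1) of \autoref{catPT}, hence $\ker q_{i+1} = P_i$. Then step~(2) gives $Q_{i+1} = Q_i \cap \ker q_{i+1} = Q_i \cap P_i$, and by \autoref{PfinerthanQ} we have $P_i \monoto Q_i$, i.e.~$P_i$ is finer than $Q_i$, so the intersection collapses to $P_i$.

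For (b), recall that $P_i = \ker(H\bar q_i \cdot \xi)$ by step~(3) of the algorithm (with the convention that $P_0 = \ker(H\bar q_0 \cdot \xi)$ since $\bar q_0 = \mathbin{!}$). By the inductive hypothesis $\ker \bar q_i = Q_i = \ker \xi^{(i)}$, so by \autoref{rem:kernel}\ref{i:monofac} both $\bar q_i$ and $\xi^{(i)}$ factor through the coequalizer $\kappa_{Q_i}\colon X \epito X/Q_i$ via monomorphisms $m_1,m_2$ into the respective codomains. Since $H$ preserves monomorphisms, $Hm_1$ and $Hm_2$ are also monic, and then \autoref{rem:kernel}\ref{i:mono} yields
\[
  \ker(H\bar q_i \cdot \xi) = \ker(Hm_1 \cdot H\kappa_{Q_i} \cdot \xi) = \ker(H\kappa_{Q_i} \cdot \xi) = \ker(Hm_2 \cdot H\kappa_{Q_i} \cdot \xi) = \ker(H\xi^{(i)} \cdot \xi).
\]
The rightmost kernel is exactly $\ker \xi^{(i+1)}$ by the inductive definition of the canonical cone, so combining with (a) we conclude $Q_{i+1} = P_i = \ker \xi^{(i+1)}$.

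The only genuine subtlety is step~(b), which hinges on pushing the kernel equality $\ker \bar q_i = \ker \xi^{(i)}$ from the inductive hypothesis through the action of $H$ and subsequent composition with $\xi$; this is handled cleanly via the image factorization combined with $H$ preserving monomorphisms. Everything else is a direct unfolding of definitions.
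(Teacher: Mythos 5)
Your proof is correct and follows essentially the same route as the paper: both first establish $Q_{i+1}=P_i$ from $q_{i+1}=\kappa_{P_i}$ and \autoref{PfinerthanQ}, and both then push the kernel equality through $H$ using mono-preservation. The only cosmetic difference is that the paper packages the induction as an explicit construction of monomorphisms $m_i\colon X/Q_i\rightarrowtail H^i1$ with $m_i\cdot\kappa_{Q_i}=\xi^{(i)}$ (via \autoref{propQuot}), whereas you recover the equivalent mono factorizations of $\bar q_i$ and $\xi^{(i)}$ through $\kappa_{Q_i}$ on the fly from \autoref{rem:kernel}\ref{i:monofac} and compare kernels directly.
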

\begin{proof}
  With $\op{select}(y,z) = \id_Y$, we have
  $q_{i+1} = \kappa_{P_i}\colon X\to \unnicefrac{X}{P_i}$, and so
  $\ker q_{i+1} = P_i$ for all $i\in \N$. Thus, $\ker q_{i+1}$ is
  finer than $Q_i$ by \autoref{PfinerthanQ}. It follows that
  $Q_{i+1} = Q_i\cap \ker q_{i+1} = \ker q_{i+1} = P_i$.

  In order to prove that $Q_i=\ker \xi^{(i)}$, for all $i\in \N$, we
  construct monomorphisms $m_i\colon X/Q_i\rightarrowtail H^{i}1$ with
  $m_i\cdot \kappa_{Q_i} = \xi^{(i)}$ inductively (which implies
  $Q_i = \ker \xi^{(i)}$ by~\autoref{rem:kernel}.\ref{i:mono}). For
  $i=0$, we trivially have $m_0\colon X/Q_0 \xrightarrow{\cong} 1$. In the inductive
  step, we put $m_{i+1} := Hm_i\cdot \xi/Q_i$:
    \[
      \begin{mytikzcd}[baseline=(bot.base)]
        X \arrow{r}{\xi}
        \descto{dr}{\eqref{eq:xiQuotient}}
        \arrow[->>]{d}[swap]{\kappa_{Q_{i+1}}}
        \arrow[shiftarr={yshift=7mm}]{rr}{\xi^{(i+1)}}
        & HX
        \arrow{d}[swap]{H\kappa_{Q_i}}
        \descto[pos=0.2]{dr}{IH}
        \arrow{r}{H\xi^{(i)}}
        & H^{i+1} 1
        \\
        |[alias=bot]|
        X/Q_{i+1}
        \arrow[>->]{r}{\xi/Q_i}
        & H(X/Q_i)
        \arrow[>->]{ur}[swap]{Hm_i}
        & {}
      \end{mytikzcd}
      \tag*{\qedhere}
    \]
\end{proof}

\noindent Intuitively, the \op{select} routine in \autoref{P:chain} retains all
available information. The other extreme is the following:
\begin{definition}
  We say that \op{select} is \emph{discarding} at $X \overset{y}{\epito} Y
  \overset{z}{\epito} Z$ if $\op{select}(y,z)\colon Y\to K$ factorizes through $z$.
  Further, we call $\op{select}$ \emph{progressing} if $\op{select}(y,z)$ is
  discarding at $y, z$ only if $z$ is an isomorphism.
\end{definition}
\begin{example}
\begin{enumerate}
\item The $\op{select}$ picking the smaller half in
  \autoref{exampleSelects}\ref{exampleSelectsChi} is progressing. We
  prove the contraposition: if $z$ in
  $X \overset{y}{\epito} Y \overset{z}{\epito} Z$ is not an
  isomorphism, let $S \in \unnicefrac{X}{\ker y} \cong Y$ be the
  subblock used in the definition~\eqref{eq:select} of
  $\op{select}(y,z)$, and note that then there also exists a subblock
  $B \in \unnicefrac{X}{\ker y} \cong Y$ with $z(B) = z(S)$ and
  $|S| \leq |B|$. By the definition of $k=\op{select}(y,z)$, we have
  $k(B) = 1 \neq 2 = k(S)$, and so $k$ cannot factor through $z$.
  
\item The $\op{select}$ routine that always returns $\id_Y$ for
  $X \overset{y}{\epito} Y \overset{z}{\epito} Z$ in
  \autoref{exampleSelects}\ref{exampleSelectsId} is trivially
  progressing: if $\id_Y$ factorizes through $z$, then $z$ is a
  (split) mono, and hence an isomorphism.
\item The $\op{select}$ routine that returns the morphism
  $!\colon Y\to 1$ or $z\colon Y\to Z$ is always discarding, and thus
  fails to be progressing (unless all regular epis in~$\C$ are
  isomorphisms).
\end{enumerate}
\end{example}
\begin{theorem}\label{thm:correct}
  If \op{select} is progressing, then \autoref{catPT} terminates and
  computes the simple quotient of the input coalgebra $(X,\xi)$,
  provided that $(X,\xi)$ has only finitely many quotients.
\end{theorem}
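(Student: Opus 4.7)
The plan is to reduce the theorem to loop termination, since correctness at exit follows directly from \autoref{correctness}: once $P_i = Q_i$, the object $X/Q_i$ carries a simple coalgebra structure $\xi/Q_i$ via \autoref{propQuot}, and $\kappa_{Q_i}\colon(X,\xi)\epito(X/Q_i,\xi/Q_i)$ is the desired simple quotient.

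The engine of termination is the following \emph{strict-refinement lemma}: if $P_i$ is properly finer than $Q_i$, then $Q_{i+1}$ is properly finer than $Q_i$. To prove it, observe that $Q_{i+1} = Q_i \cap \ker q_{i+1}$ by Step~\ref{step:Q}, so $Q_{i+1} = Q_i$ is equivalent to $Q_i$ being finer than $\ker q_{i+1}$; by \autoref{rem:kernel}\ref{i:finer} this in turn is equivalent to $q_{i+1}$ factoring through $\kappa_{Q_i}$. Since $\kappa_{Q_i} = f_i \cdot \kappa_{P_i}$ by the definition of $f_i$, and $q_{i+1} = \op{select}(\kappa_{P_i}, f_i) \cdot \kappa_{P_i}$ by Step~\ref{step:S}, right-cancelling the regular epimorphism $\kappa_{P_i}$ reduces the condition to $\op{select}(\kappa_{P_i}, f_i)$ factoring through $f_i$, i.e.~to $\op{select}$ being \emph{discarding} at $(\kappa_{P_i}, f_i)$. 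Progressiveness then forces $f_i$ to be an isomorphism, contradicting the hypothesis that $P_i$ is properly finer than $Q_i$.

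If the loop ran forever, we would obtain an infinite strictly descending chain $Q_0 \supsetneq Q_1 \supsetneq \cdots$ of pairwise distinct kernels on $X$, hence an infinite family of pairwise distinct regular quotients $\kappa_{Q_i}\colon X \epito X/Q_i$, which in turn give rise to pairwise distinct quotients of $(X,\xi)$. This contradicts the finiteness-of-quotients hypothesis, so the loop must exit after finitely many iterations; at exit, the contrapositive of the strict-refinement lemma gives $P_i = Q_i$, and \autoref{correctness} concludes.

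The main obstacle is the cancellation step inside the strict-refinement lemma: one must chase $q_{i+1}$ and $\kappa_{Q_i}$ through their common right factor $\kappa_{P_i}$ and recognise the resulting factorization precisely as the definition of \emph{discarding}, after which progressiveness discharges the contradiction almost verbatim. A secondary subtlety is bridging the $Q_i$-chain, which a priori lives in the lattice of $\C$-quotients of $X$, to the hypothesised finiteness of quotients of $(X,\xi)$; this is handled by lifting each intermediate quotient step so that distinct members of the chain yield distinct elements of the supply that the finiteness hypothesis bounds.
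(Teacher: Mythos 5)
Your argument is essentially the paper's own proof: the same key equivalence --- $Q_{i+1}=Q_i$ iff $\op{select}(\kappa_{P_i},f_i)$ factors through $f_i$, obtained by right-cancelling the epimorphism $\kappa_{P_i}$ --- combined with progressingness and stabilization of the descending chain $Q_0\supseteq Q_1\supseteq\cdots$, with correctness at exit delegated to \autoref{correctness}. The one caveat concerns your final clause converting the pairwise distinct quotients $X/Q_i$ of the object $X$ into distinct quotients of the coalgebra $(X,\xi)$: the $X/Q_i$ carry no coalgebra structure in general, and the paper sidesteps this by reading the finiteness hypothesis as finiteness of quotients of the carrier $X$ (which is exactly what its proof uses), so you should do the same rather than attempt the ``lifting'' you sketch.
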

\noindent
E.g.~for $\C=\Set$, every finite coalgebra has only finitely many quotients.
\begin{proof}
  \begin{enumerate}
  \item We first show that our algorithm fails to progress in
  the $(i+1)^{\text{st}}$ iteration, i.e.~$Q_{i+1}=Q_i$, iff \op{select}
  is discarding at $X/P_i,X/Q_i$, i.e.~iff
  $k_i:=\op{select}\big(\!
      X \overset{\kappa_{P_i}}\twoheadrightarrow
      \unnicefrac{X}{P_i} \overset{f_i}\twoheadrightarrow
      \unnicefrac{X}{Q_i}
    \!\big)$ factorizes through $f_i$.

    To see this, first note that \op{select} is discarding at
    $X/P_i,X/Q_i$ iff $q_{i+1}$ factorizes through~$\kappa_{Q_i}$:
  \[
    \begin{mytikzcd}
      X
      \arrow[->>]{r}{\kappa_{P_i}}
      \arrow[->>,shiftarr={yshift=6mm}]{rr}{q_{i+1}}
      \arrow[->>,swap]{rd}{\kappa_{Q_i}}
      &
      \unnicefrac{X}{P_i}
      \arrow{r}{k_{i+1}}
      \arrow[->>]{d}{f_i}
      &
      K_{i+1}
      \\
      &
      \unnicefrac{X}{Q_i}
      \ar[ru,dashed]
    \end{mytikzcd}
  \]
  We thus have the desired equivalence: $q_{i+1}$ factorizes through
  $\kappa_{Q_i}$ iff (by \autoref{rem:kernel}\ref{i:finer}) $Q_i$ is
  finer than $\ker q_{i+1}$ iff
  $Q_{i}=Q_i\cap \ker q_{i+1} = Q_{i+1}$.

\item We proceed to prove the claim. \autoref{PfinerthanQ} shows that
  we obtain a chain of successively finer quotients
  $\unnicefrac{X}{Q_i}$. Since $X$ has only finitely many quotients,
  there must be an $i$ such that $Q_i = Q_{i+1}$, and this implies,
  using point~(1), that $\op{select}$ is discarding at $X/P_i,
  X/Q_i$.
  Since the $\op{select}$ routine is progressing, we obtain
  $P_i = Q_i$ as desired.\qedhere
\end{enumerate}
\end{proof}


\section{Incremental Partition Refinement}
\label{sec:opti}
\tikzsetfigurename{partref-opti-}

\noindent In the most generic version of the partition refinement algorithm
(Algorithm~\ref{catPT}), the partitions are recomputed from scratch in every
step: In Step~\ref{step:P} of the algorithm, $P_{i+1}=\ker(H\fpair{\bar
  q_i,q_{i+1}}\cdot\xi)$ is computed from the information $\bar q_i$ accumulated
so far and the new information $q_{i+1}$, but in general one cannot exploit that
the kernel of $\bar q_i$ has already been computed. We now present a refinement
of the algorithm in which the partitions are computed incrementally, i.e.\
$P_{i+1}$ is computed from $P_i$ and $q_{i+1}$. This requires the type functor
$H$ to be \emph{zippable} (\autoref{D:zip}) and the \op{select} routine to
\emph{respect compound blocks} (\autoref{defRespectCompounds}).


Note that in Step~\ref{step:Q}, \autoref{catPT} computes a kernel
$Q_{i+1} = \ker \bar q_{i+1} = \ker \fpair{\bar q_i, q_{i+1}}$ as the
intersection of $\ker(\bar q_i)$ and $\ker(q_i)$ (cf.~\eqref{eq:kercap}).
Hence, the partition $X/\ker \bar q_{i+1}$ for such a kernel can be computed in
two steps:
\begin{enumerate}
\item Compute $\unnicefrac{X}{\ker \bar q_i}$.
\hfill
\item Refine every block in $\unnicefrac{X}{\ker \bar q_{i}}$ with respect to
$q_{i+1}\colon X\to K_{i+1}$.
\end{enumerate}
\autoref{catPT} can thus be implemented to keep track of the partition
$\unnicefrac{X}{Q_i}$ and then refine this partition by $q_{i+1}$ in
each iteration. 

However, the same trick cannot be applied immediately to the
computation of $\unnicefrac{X}{P_i}$, because of the functor $H$
inside the computation of the kernel:
\( P_{i+1} = \ker (H\fpair{\bar q_i, q_{i+1}}\cdot \xi) \).
In \autoref{propZippable}, we will provide sufficient conditions for $H$,
$a\colon D\to A$, $b\colon D\to B$ to satisfy
\[
    \ker H\fpair{a,b}
    = \ker \fpair{Ha,Hb}.
\]
As soon as this holds for $a=\bar q_i, b=q_{i+1}$, we can optimize the
algorithm by changing Step~\ref{step:P}~to%
\begin{equation}
    P_{i+1}' := \ker \fpair{H\bar q_i\cdot \xi, Hq_{i+1}\cdot \xi}
    \quad(= P_i \cap \ker 
    (Hq_{i+1}\cdot \xi)).
    \label{kernelOptimization}
\end{equation}
The conditions on $a$ and $b$ will be ensured by a condition on \op{select}, and
the condition on the functor $H$ is as follows:
\begin{definition}\label{D:zip}
    A functor $H\colon C\to \D$ is \emph{zippable}
    if the following morphisms are monomorphisms for every objects $A$
    and $B$:
    \[
        \op{unzip}_{H,A,B}\colon
        H(A+B) \xrightarrow{\fpair{H(A+!),H(!+B)}} H(A+1) × H (1+B)
    \]
\end{definition}

\noindent Intuitively, if $H$ is a functor on $\Set$, we may think of
elements $t$ of $H(A+B)$ as shallow terms with variables from
$A+B$. Then zippability means that each $t$ is uniquely determined by
the two terms obtained by replacing $A$- and $B$-variables,
respectively, by some placeholder $\gap$, viz.~the element of $1$, as
illustrated in the examples in \autoref{figZippable}.

\begin{figure}[h]
    \begin{subfigure}[b]{.22\textwidth}
        \(
        \begin{mytikzcd}[row sep = 0mm]
        a_1\,a_2\,b_1\,a_3\,b_2
        \arrow[shiftarr={xshift=18mm},mapsto]{d}[xshift=-4mm,pos=0.0,above]{\op{unzip}}
        \\
        \begin{array}{c}
        (a_1a_2\gap a_3 \gap, \\
        \phantom{(}\,\gap \gap\,b_1\!\gap b_2)
        \end{array}
        \end{mytikzcd}
        \)
        \caption{$(-)^*$ is zippable}
    \end{subfigure}%
\hfill%
\begin{subfigure}[b]{.22\textwidth}
        \(
        \begin{mytikzcd}[row sep = 0mm, ampersand replacement = \&]
        \{a_1,a_2,b_1\}
        \arrow[shiftarr={xshift=18mm},mapsto]{d}[xshift=-4mm,pos=0.0,above]{\op{unzip}}
        \\
        \begin{array}{r@{\,}l}
        (\{a_1,a_2,&\gap\}, \\
        \{\gap,&b_1\})
        \end{array}
        \end{mytikzcd}
        \)
        \caption{$\Potf$ is zippable}
    \end{subfigure}%
\hfill%
\begin{subfigure}[b]{.46\textwidth}
        \(
        \begin{mytikzcd}[row sep = 0mm,
                       column sep = -13mm,
                       ]
        |[inner xsep=0mm]|
        \begin{array}{@{}l@{}}
        \big\{\{a_1,b_1\},
        \{a_2,b_2\}\big\}
        \end{array}
        \arrow[start anchor={[xshift=-4mm]},
               rounded corners,
               pos=0.3,
               to path={ |- (\tikztotarget) \tikztonodes },
               mapsto]{dr}[left,]{\op{unzip}}
        &
        &
        |[inner xsep=0mm]|
        \begin{array}{@{}l@{}}
        \big\{\{a_1,b_2\},
        \{a_2,b_1\}\big\}
        \end{array}
        \arrow[start anchor={[xshift=4mm]},
               rounded corners,
               pos=0.3,
               to path={ |- (\tikztotarget) \tikztonodes },
               mapsto]{dl}[right,overlay]{\op{unzip}}
        \\
        &
        |[inner xsep = 1mm]|
        \begin{array}{@{}l@{}}
        (\big\{\{a_1,\gap\},\{a_2,\gap\}\big\}, \\
        \phantom{(}\big\{\{\gap, b_1\},\{\gap,b_2\}\big\}) \\
        \end{array}
        \end{mytikzcd}
        \)
        \caption{$\Potf\Potf$ is not zippable}
    \end{subfigure}
    \caption{Zippability of \Set-Functors for sets
    $A=\{a_1,a_2,a_3\}$, $B=\{b_1,b_2\}$.}
    \label{figZippable}
\end{figure}
\begin{lemma} \label{zippableExtended}
  Let $H$ be zippable and $f\colon A\to C$, $g\colon B\to D$. Then the following
  is a mono:
  \[
    H(A+B) \xrightarrow{\fpair{H(A+g),H(f+B)}} H(A+D) × H (C+B)
  \]
\end{lemma}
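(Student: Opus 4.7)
The plan is to factorize $\op{unzip}_{H,A,B}$ through the pair $\fpair{H(A+g),H(f+B)}$ and then appeal to the fact that whenever a composite $n\cdot m$ is a monomorphism, so is $m$. Concretely, I would observe first that by the uniqueness of morphisms into the terminal object $1$, we have $!_D\cdot g = !_B$ and $!_C\cdot f = !_A$, and hence $(A+!_D)\cdot (A+g) = A+!_B$ and $(!_C+B)\cdot (f+B) = !_A+B$ as morphisms $A+B\to A+1$ and $A+B\to 1+B$ respectively.

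Applying $H$ (which preserves composition) and pairing the two components, the composite
\[
H(A+B)\xrightarrow{\fpair{H(A+g),H(f+B)}}H(A+D)\times H(C+B)\xrightarrow{H(A+!_D)\,\times\, H(!_C+B)}H(A+1)\times H(1+B)
\]
equals $\fpair{H(A+!_B),H(!_A+B)} = \op{unzip}_{H,A,B}$. Since $H$ is zippable, $\op{unzip}_{H,A,B}$ is a monomorphism, and this forces its left factor $\fpair{H(A+g),H(f+B)}$ to be a monomorphism as well: if $\fpair{H(A+g),H(f+B)}\cdot u = \fpair{H(A+g),H(f+B)}\cdot v$ then postcomposing with $H(A+!_D)\times H(!_C+B)$ gives $\op{unzip}_{H,A,B}\cdot u = \op{unzip}_{H,A,B}\cdot v$ and hence $u=v$.

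I do not expect any serious obstacle here; the entire argument is a short diagram chase that turns the hypothesis of zippability (a statement about $!_A,!_B$) into the apparently stronger statement for arbitrary $f,g$ by exploiting that $!$ factors through any morphism into $1$. The only thing to be careful about is bookkeeping of the sum functor's action on morphisms, i.e.~the identities $(A+!_D)\cdot(A+g)=A+!_B$ and $(!_C+B)\cdot(f+B)=!_A+B$, which are immediate from the universal property of the coproduct.
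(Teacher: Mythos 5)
Your proof is correct and is essentially identical to the paper's: both factor $\op{unzip}_{H,A,B}$ as $\big(H(A+!)\times H(!+B)\big)\cdot\fpair{H(A+g),H(f+B)}$ using finality of $1$, and conclude that the left factor of a monomorphism is a monomorphism.
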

\begin{proof}
  By finality of $1$, the diagram
  \[
    \begin{mytikzcd}[column sep = 3cm]
      H(A+B)
      \arrow{d}[left]{\fpair{H(A+g),H(f+B)}}
      \arrow[>->]{dr}[sloped,above]{
        \op{unzip}_{H,A,B}
        = \fpair{H(A+!),H(!+B)}
      }
      \\
      H(A+D) × H (C+B)
      \arrow{r}{H(A+!) × H (!+B)}
      &
      H(A+1) × H (1+B)
    \end{mytikzcd}
  \]
  commutes. Since the diagonal arrow is monic, so is
  $\fpair{H(A+g),H(f+B)}$.
\end{proof}
\begin{assumption} \label{ass:zippableResults} For the remainder of
  \autoref{sec:opti}, we assume that $\C = \Set$. 
\end{assumption}
\noindent
However, most proofs are category-theoretic to clarify where working
in $\Set$ is really needed and where the arguments are more general.

\begin{example}
\begin{enumerate}
\item Constant functors $X\mapsto A$ are zippable: $\op{unzip}$
  is the diagonal $A \to A \times A$.

\item The identity functor is zippable since
  \(
  \fpair{A+!, !+B}\colon A+B \to (A+1) × (1+B)
  \)
  is monic in $\Set$.

  \item From Lemma~\ref{lem:closure} it follows that every
    polynomial endofunctor is zippable. Indeed, a polynomial functor
    is precisely one that is build from constant and the identity
    functors using (finite) products and coproducts (cf.~\autoref{ex:coalg}\ref{i:poly}).
\end{enumerate}
\end{example}
\begin{lemma}\label{lem:closure}
  Zippable endofunctors are closed under (possibly infinite) products, coproducts and
  subfunctors. 
\end{lemma}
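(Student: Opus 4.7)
My plan is to handle the three closure properties separately, exploiting in each case that the defining map $\op{unzip}_{H,A,B}$ is constructed from the functor's action on the coproduct injections into $A+1$ and $1+B$. Throughout, I shall use that in $\Set$, monomorphisms compose, are stable under products, and that if a composite $n \cdot u$ is monic then $u$ is monic.

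\textbf{Products.} For a family $(H_i)_{i \in I}$ of zippable functors with pointwise product $H = \prod_i H_i$, the naturality of the projections and the universal property of products give that
\[
  \op{unzip}_{H,A,B} \;=\; \prod_{i \in I} \op{unzip}_{H_i,A,B}
\]
modulo the canonical isomorphism $\prod_i (H_i(A{+}1) \times H_i(1{+}B)) \cong (\prod_i H_i(A{+}1)) \times (\prod_i H_i(1{+}B))$. Since each $\op{unzip}_{H_i,A,B}$ is monic and products of monos are monic, $\op{unzip}_{H,A,B}$ is monic.

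\textbf{Coproducts.} For $H = \coprod_i H_i$, an element of $H(A+B)$ has the form $(i, x)$ with $x \in H_i(A+B)$, and $\op{unzip}_{H,A,B}(i,x) = ((i, H_i(A{+}!)(x)), (i, H_i(!{+}B)(x)))$ lies in the $(i,i)$-component of the product $\coprod_i H_i(A{+}1) \times \coprod_j H_j(1{+}B)$. Suppose $(i,x)$ and $(j,y)$ have the same image; since coproducts in $\Set$ are disjoint, matching index components forces $i = j$, and then $\op{unzip}_{H_i,A,B}(x) = \op{unzip}_{H_i,A,B}(y)$, so $x = y$ by zippability of $H_i$. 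Hence $\op{unzip}_{H,A,B}$ is injective.

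\textbf{Subfunctors.} Let $m\colon H \hookrightarrow G$ be a natural transformation with monic components and $G$ zippable. Naturality of $m$ at the two maps $A+B \to A+1$ and $A+B \to 1+B$ gives a commutative square
\[
  \begin{mytikzcd}[column sep=18mm]
    H(A+B)
      \arrow{r}{\op{unzip}_{H,A,B}}
      \arrow[>->]{d}[swap]{m_{A+B}}
    & H(A+1) \times H(1+B)
      \arrow[>->]{d}{m_{A+1} \times m_{1+B}}
    \\
    G(A+B)
      \arrow[>->]{r}{\op{unzip}_{G,A,B}}
    & G(A+1) \times G(1+B)
  \end{mytikzcd}
\]
in which the left-hand vertical is monic by assumption, the bottom arrow is monic because $G$ is zippable, and so their composite, equal to the right-hand composite, is monic. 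Since the right-hand vertical is monic (product of monos), the general fact that a mono factoring as $(\text{mono}) \cdot f$ forces $f$ to be monic yields that $\op{unzip}_{H,A,B}$ is monic.

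No step here is genuinely difficult; the only subtle point is the coproduct case, which relies on disjointness of coproducts in $\Set$ (so the argument would need to be replayed in an extensive category for a more categorical statement, but \autoref{ass:zippableResults} restricts us to $\Set$, making this immediate).
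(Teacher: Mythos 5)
Your proof is correct and follows essentially the same route as the paper's: products via the factorization $\op{unzip}_{\prod_i H_i}=\prod_i\op{unzip}_{H_i}$ up to the canonical isomorphism, subfunctors via the naturality square for $m$ and the fact that $\op{unzip}_{H}$ is monic because a composite $g\cdot\op{unzip}_H$ is, and coproducts via zippability of each summand together with disjointness of coproducts in $\Set$. Your element-wise treatment of the coproduct case is just an unfolding of the paper's observation that the canonical map $\coprod_i\bigl(H_i(A{+}1)\times H_i(1{+}B)\bigr)\to\bigl(\coprod_i H_i(A{+}1)\bigr)\times\bigl(\coprod_i H_i(1{+}B)\bigr)$ is monic, and your closing remark on the $\Set$-dependence of that step matches the paper's own subsequent discussion.
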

\noindent (Recall that products and coproducts of functors are formed
pointwise, e.g.~$(F+G)(X)=FX+GX$.)
\begin{proof}
  For the closure under products and coproducts, we only provide the proof for the
  binary case; the proof for arbitrary products and coproducts is
  completely analogous. Let $F,
  G$ be endofunctors.

  \begin{enumerate}
  \item Suppose that both $F$ and $G$ are zippable. That $F \times
G$ is zippable follows from monos being closed under products:
\[
\begin{mytikzcd}[column sep=15mm,row sep=5mm]
    F(A+B) × G(A+B)
    \ar[r,>->,"\op{unzip}_{F,A,B} × \op{unzip}_{G,A,B}"{yshift=2mm}]
    \ar[to path={
            |- (\tikztotarget) \tikztonodes
        },
        rounded corners,
        ]{dr}[below,near end]{\op{unzip}_{F×G,A,B}}
    &
    F(A+1) × F(1+B)
    ×G(A+1) × G(1+B)
    \ar[phantom,d,"\cong" {sloped}]
    \\
    &
    \big(F(A+1)
    ×G(A+1)\big)
    × \big(F(1+B)
    × G(1+B)\big).
\end{mytikzcd}
\]

\item Suppose again that $F$ and $G$ are zippable. To see that $F+G$ is zippable consider the diagram  
\[
\begin{mytikzcd}[column sep=15mm,row sep=5mm]
    F(A+B) + G(A+B)
    \ar[r,>->,"\op{unzip}_{F,A,B} + \op{unzip}_{G,A,B}"{yshift=2mm}]
    \ar[to path={
            |- (\tikztotarget) \tikztonodes
        },
        rounded corners,
        ]{dr}[below,near end]{\op{unzip}_{F×G,A,B}}
    &
    \big(F(A+1) × F(1+B)\big)
    + \big(G(A+1) × G(1+B)\big)
    \ar[>->,d,"\fpair{(\pi_1+\pi_1), (\pi_2+\pi_2)}" {right}]
    \\
    &
    \big(F(A+1)
    +G(A+1)\big)
    × \big(F(1+B)
    + G(1+B)\big).
\end{mytikzcd}
\]
The horizontal morphism is monic since monos are closed under
coproducts in $\Set$. 
The vertical morphism is monic since for any sets
$A_i$ and $B_i$, $i = 1, 2$, the following morphism clearly is a
monomorphism:
\[
  (A_1×B_1) + (A_2×B_2)
  \xrightarrow{\fpair{(\pi_1+\pi_1), (\pi_2+\pi_2)}}
  (A_1+A_2) × (B_1+B_2).
\]

\item Suppose now that $F$ is a subfunctor of $G$ via $s\colon F
\rightarrowtail G$, where $G$ is zippable. Then the following diagram
shows that $F$ is zippable, too:
\[
    \begin{mytikzcd}[column sep = 2cm,baseline=(bot.base)]
        F(A+B)
            \arrow{r}{\op{unzip}_{F,A,B}}
            \arrow[>->]{d}[left]{s_{A×B}}
            \arrow[>->,
                  to path={
                    -- ([xshift=-4mm]\tikztostart.west)
                    |- ([yshift=-4mm]\tikztotarget.south)
                    -- (\tikztotarget)
                  },
                  rounded corners,
                  ]{dr}{}
        & F(A+1) × F(1+B)
            \arrow[>->]{d}{s_{A+1} × s_{1+B}}
        \\
        |[alias=bot]|
        G(A+B)
            \arrow[>->]{r}{\op{unzip}_{G,A,B}}
        & G(A+1) × G(1+B).
    \end{mytikzcd}
  \]
  Indeed, since the composition of the lower and left-hand morphisms
  is monomorphic, so is the upper morphism.\qedhere
\end{enumerate}
\end{proof}

%
%
\begin{lemma}\label{lem:additive}
  If $H$ has a componentwise monic natural transformation
  $H(X+Y) \rightarrowtail HX × HY$, then $H$ is zippable.
\end{lemma}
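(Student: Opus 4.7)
The plan is to exploit the naturality of $\alpha\colon H(X+Y)\monoto HX\times HY$ to factor $\op{unzip}_{H,A,B}$ through $\alpha_{A,B}$ in a way that exhibits it as a left-divisor of a monomorphism, and hence monic itself.

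More precisely, I would first observe that naturality of $\alpha$ with respect to the morphism $(\id_A,!_B)\colon (A,B)\to(A,1)$ in $\Set\times\Set$ (viewing both $H(-+-)$ and $H(-)\times H(-)$ as bifunctors) yields a commutative square
\[
  \begin{mytikzcd}
    H(A+B)\arrow{r}{\alpha_{A,B}}\arrow{d}[swap]{H(A+!)}
    & HA\times HB\arrow{d}{\id_{HA}\times H!_B}\\
    H(A+1)\arrow{r}{\alpha_{A,1}} & HA\times H1
  \end{mytikzcd}
\]
and symmetrically a square involving $!+B$ and $\alpha_{1,B}$. Composing with the first projection on the right gives
\[
  \pi_1\cdot\alpha_{A,1}\cdot H(A+!) \;=\; \pi_1\cdot\alpha_{A,B},
\]
and dually $\pi_2\cdot\alpha_{1,B}\cdot H(!+B)=\pi_2\cdot\alpha_{A,B}$.

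Next I would define the auxiliary morphism
\[
  \beta \;:=\; (\pi_1\cdot\alpha_{A,1})\times(\pi_2\cdot\alpha_{1,B})\colon H(A+1)\times H(1+B)\longrightarrow HA\times HB.
\]
Using the two naturality identities above together with the universal property of the product, a short calculation gives
\[
  \beta\cdot\op{unzip}_{H,A,B}
  \;=\;\bigl\langle\pi_1\cdot\alpha_{A,B},\,\pi_2\cdot\alpha_{A,B}\bigr\rangle
  \;=\;\alpha_{A,B}.
\]

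Since $\alpha_{A,B}$ is assumed to be a monomorphism and it factors as $\beta\cdot\op{unzip}_{H,A,B}$, the factor $\op{unzip}_{H,A,B}$ must itself be monic, which is exactly zippability of $H$. There is no real obstacle here; the only care needed is to verify that the bifunctoriality interpretation of naturality is the intended one, but since $H(-+-)$ and $H(-)\times H(-)$ are both functorial in each argument separately and the only $\alpha$-components used are $\alpha_{A,B}$, $\alpha_{A,1}$, $\alpha_{1,B}$, the argument goes through without any assumption specific to $\Set$.
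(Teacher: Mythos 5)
Your proof is correct and follows essentially the same route as the paper: both arguments use naturality of $\alpha$ in each component separately to relate $\op{unzip}_{H,A,B}$ to $\alpha_{A,B}$, and then conclude monicity of $\op{unzip}$ from the fact that a left factor of a monomorphism is a monomorphism. The paper phrases this as a commuting square whose bottom edge $\fpair{HA\times H!,\,H!\times HB}$ is split monic (with left inverse $\pi_1\times\pi_2$), whereas you precompose with that left inverse to exhibit the factorization $\alpha_{A,B}=\beta\cdot\op{unzip}_{H,A,B}$ directly; the two presentations are interchangeable.
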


\begin{proof}
  Let $\alpha_{X,Y}\colon H(X+Y) \rightarrowtail HX × HY$ be monic and
  natural in $X$ and $Y$. Then the square
  \[
  \begin{mytikzcd}[column sep = 4cm]
      H(A+B) \ar[d,>->,"\alpha_{A,B}"'] \ar[r,"\op{unzip} = \fpair{H(A+!),H(!+B)}"]
      & H(A+1)×H(1+B)
        \ar[d,"\alpha_{A,1}×\alpha_{1,B}"]
      \\
      HA×HB
        \ar[r,>->,"\fpair{HA×H!,H!×HB}"]
      & (HA×H1)×(H1×HB)
  \end{mytikzcd}
  \]
  commutes by naturality of $\alpha$. The
  bottom morphism is monic because it has a left inverse,
  $\pi_1×\pi_2$. Therefore, $\op{unzip}$ is monic as well.
\end{proof}

\begin{example} \label{exMonoidZippable}
\begin{enumerate}
\item For every commutative monoid, the monoid-valued functor
  $M^{(-)}$ (see \autoref{ex:coalg}\ref{ex:coalg:2}) admits a natural
  isomorphism $M^{(X+Y)} \cong M^{(X)} \times M^{(Y)}$, and hence is
  zippable by Lemma~\ref{lem:additive}. \label{item:monoid-zippable}
\item As special cases of monoid-valued functors we obtain
  that the finite powerset functor $\Potf$ and the bag functor $\Bagf$
  are zippable.
\item By \autoref{lem:additive}, the full powerset functor $\Pot$ is zippable.
\item The distribution functor $\Dist$ (see Example~\ref{ex:coalg}) is
  a subfunctor of the monoid-valued functor $\R_{\ge 0}^{(-)}$ for the
  additive monoid $\R_{\ge 0}$ of real numbers, and hence is zippable by
  Item~\ref{item:monoid-zippable} and Lemma~\ref{lem:closure}.
\item The previous examples together with the closure properties in
  Lemma~\ref{lem:closure} show that a number of functors of interest
  are zippable, e.g.~$2×(-)^\Inputs$,
  $2×\Potf(-)^\Inputs$, $\Potf(\Inputs×(-))$, $2×\big((-)+1\big)^\Inputs$,
  and variants where $\Potf$ is replaced by $\Bagf$, $M^{(-)}$, or
  $\Dist$.
\end{enumerate}
\end{example}

\begin{remark}
  Out of the above results, only zippability of the identity and
  coproducts of zippable functors make use of properties of $\Set$
  (\autoref{ass:zippableResults}). Indeed, zippable functors on a
  category $\C$ are closed under coproducts as soon as monomorphisms
  are closed under coproducts in $\C$, which is satisfied in most
  categories of interest. Zippability of the identity holds whenever
  $\C$ is extensive, i.e.~it has well-behaved set-like coproducts~(see
  e.g.~\cite{clw93} or \autoref{defExtensive} later). Examples of
  extensive categories are the categories of sets, posets and graphs
  as well as any presheaf category. We will take a closer look at
  extensive categories when we discuss multisorted coalgebras
  (\autoref{sec:multisorted}).
\end{remark}
\begin{example}\label{ex:nbhd}
    The monotone neighbourhood functor, which maps a set $X$ to the set
    \[
        \mathcal{M}(X) =
        \{
            N \subseteq \Pot X
            \mid
            A \in N \wedge B\supseteq A
            \implies B\in N
        \},
    \]
    of monotone neighbourhood systems over~$X$, is not zippable, that
    is, there are distinct monotone neighbourhood systems that are
    identified by $\op{unzip}$. Indeed, denoting the upwards closure
    of a set system by $(-)\upa$, we have
    \begin{align*}
        \op{unzip}\left(\left\{
            \{a_1,b_1\},
            \{a_2,b_2\}
        \right\}\!\upa\right)
        &=
        \big(
            \big\{\{a_1,\gap\},
            \{a_2,\gap\}\big\}\upa
        ,
            \big\{\{\gap,b_1\},
            \{\gap,b_2\}\big\}\upa
        \big)
        \\&=
        \op{unzip}\left(\left\{
            \{a_1,b_2\},
            \{a_2,b_1\}
        \right\}\!\upa\right).
    \end{align*}
\end{example}
\begin{example}\label{non-zippable}
  The functor $\Potf\Potf$ fails to be zippable, as shown in
  \autoref{figZippable}. First, this shows that zippable functors are
  not closed under quotients, since $\Potf\Potf$ is a quotient of the
  polynomial, hence zippable, functor $HH$ where
  $HX = \coprod_{n <\omega} X^n$.  Secondly, this shows that zippable
  functors are not closed under composition.
\end{example}
\noindent The following example shows that the optimized algorithm,
i.e.~\autoref{catPT} run with \eqref{kernelOptimization} in lieu of
Step~\ref{step:P}, is not correct for the non-zippable functor
$\Potf\Potf$, even though the $\op{select}$ routine used here (see
\autoref{exampleSelects}\ref{exampleSelectsChi}) behaves sufficiently
well (specified later in \autoref{defRespectCompounds} and
cf.~\autoref{optimizationSummary}).
\begin{example}
  \label{exWrongRunForPP}
Consider the following coalgebra $\xi\colon X\to HX$ for $HX = 2×\Potf\Potf X$:
\begin{center}
\begin{tikzpicture}[
        n/.style = {
               execute at begin node=\(,%
               execute at end node=\),%
               inner sep = 1mm,
        },
    ]
    \begin{scope}[
        grow'                   = right,
        sibling distance        = 7mm,
        level distance          = 12mm,
        edge from parent/.style = {
            draw,
            ->,
            shorten <= 1pt,
            shorten >= 1pt,
        },
        level 1/.append style = {
            sibling distance = 12mm,
        },
        level 2/.append style = {
            sibling distance = 7mm,
        },
        every node/.style       = {n,font=\footnotesize},
        sloped,
        dummy/.style    = {circle,
                           draw=black,
                           fill=black,
                           inner sep=1pt,
                           outer sep=1pt,
                           minimum width=0,
                           minimum height=0,
        },
        final/.style    = {
            circle,
            draw=black!50,
            line width=1pt,
            inner sep =1pt,
        },
    ]
        \node (a1) {a_1}
        child { node[dummy] {}
            child { node[final] (a2) {a_2} }
            child { node (a3) {a_3}
                child { node[dummy] {}
                    child { node (a6) {a_6} }
                }
            }
        }
        child { node[dummy] {}
            child { node (a4) {a_4} }
            child { node (a5) {a_5}
                child { node[dummy] {}
                    child { node[final] (a7) {a_7} }
                }
            }
        }
        ;
        \node (b1) at ([xshift=6cm]a1){b_1}
        child { node[dummy] {}
            child { node[final] (b2) {b_2} }
            child { node (b3) {b_3}
                child { node[dummy] {}
                    child { node[final] (b6) {b_6} }
                }
            }
        }
        child { node[dummy] {}
            child { node (b4) {b_4} }
            child { node (b5) {b_5}
                child { node[dummy] {}
                    child { node (b7) {b_7} }
                }
            }
        }
        ;
    \end{scope}
\end{tikzpicture}
\end{center}
Final states, i.e.~states $x$ with $\pi_1(\xi(x)) = 1$, are indicated by a circle. Let us replace
step \ref{step:P} of \autoref{catPT} by equation \eqref{kernelOptimization},
i.e.~we compute
\begin{equation*}
  P_{i+1} = P_{i+1}' \overset{\eqref{kernelOptimization}}{=}
  \ker \fpair{H\bar q_i\cdot \xi,Hq_{i+1}\cdot\xi}
  = \ker P_i' \cap \ker (Hq_{i+1}\cdot\xi).
\end{equation*}
We will show that the states $a_1$ and $b_1$ are identified by all $P_i'$ and $Q_i$, i.e.~they are not distinguished by the algorithm, although they are
clearly behaviourally inequivalent.

We simplify the partitions by defining abbreviations for the final and
non-final states without successors as well as the rest,
\[
  F:= \{a_2,a_7, b_2,b_6\},
  \quad
  N:=\{a_4,a_6, b_4,b_7\}
  \quad
  \text{and}
  \quad
  R:=\{a_1,a_3,a_5, b_1,b_3,b_5\}.
\]
Then we run the optimized algorithm with the $\op{select}$ routine in \autoref{exampleSelects}\ref{exampleSelectsChi}, computing $Q_i$ and $P_{i}'$ (see
\eqref{kernelOptimization}), and we obtain the following sequence of
partitions.  \renewcommand{\arraystretch}{1.3}%
\newcommand{\myblocklist}[1]{ \big\{ #1 \big\} }%
\newcommand{\mytopbotrule}{ \noalign{ \global\dimen1\arrayrulewidth
    \global\arrayrulewidth1.3pt }\hline \noalign{
    \global\arrayrulewidth\dimen1 } }%
\begin{center}%
\begin{tabular}{@{\hspace{0mm}}LL@{\hspace{3mm}}L@{\hspace{3mm}}L@{\hspace{0mm}}}
  \mytopbotrule
  i & q_i & \unnicefrac{X}{Q_i} & \unnicefrac{X}{P_i'}
\\ \hline
\\[-4.5mm]
    0
    & !\colon X\to 1
    & \myblocklist{X}
    &
        \myblocklist{F, N,
        R }
    \\[1mm]
    1
    & \kappa_{P_0'}\colon X\twoheadrightarrow \unnicefrac{X}{P_0'}
    & \myblocklist{F,N, R}
    & \myblocklist{F,N, \{a_1,b_1\}, \{a_3,b_5\}, \{a_5,b_3\}}
\\[1mm]
    2
    & \chi_{\{a_1,b_1\}}^R\colon X\to 3
    & \myblocklist{F,N, \{a_1,b_1\}, \{a_3,b_5, a_5,b_3\}}
    & \myblocklist{F,N, \{a_1,b_1\}, \{a_3,b_5\}, \{a_5,b_3\}}
\\[1mm]
    3
    & \chi_{\{a_3,b_5\}}^{\{a_3,b_5,a_5,b_3\}}\colon X\to 3
    & \myblocklist{F,N, \{a_1,b_1\}, \{a_3,b_5\}, \{a_5,b_3\}}
    & \myblocklist{F,N, \{a_1,b_1\}, \{a_3,b_5\}, \{a_5,b_3\}}
\\[1mm] \mytopbotrule
\end{tabular}
\end{center}
For the subblock $S=\{a_3,b_5\}$ selected in the third 
iteration we see that $\{a_1, b_1\}$ is not split in $\unnicefrac{X}{P_3'}$
because:
\begin{align*}
  H\chi_{\{a_3,b_5\}}^{\{a_3,b_5,a_5,b_3\}} \cdot \xi(a_1)
  &= H\chi_{\{a_3,b_5\}}^{\{a_3,b_5,a_5,b_3\}} \myblocklist{\{a_2,a_3\}, \{a_4,a_5\}} \\
    &= \phantom{H\chi_{\{a_3,b_5\}}^{\{a_3,b_5,a_5,b_3\}}}\myblocklist{\{0,2\}, \{0,1\}}
    \\ &
    = \phantom{H\chi_{\{a_3,b_5\}}^{\{a_3,b_5,a_5,b_3\}}}\myblocklist{\{0,1\}, \{0,2\}} \\
    &= H\chi_{\{a_3,b_5\}}^{\{a_3,b_5,a_5,b_3\}} \myblocklist{\{b_2,b_3\}, \{b_4,b_5\}}
    = H\chi_{\{a_3,b_5\}}^{\{a_3,b_5,a_5,b_3\}} \cdot \xi(b_1)
\end{align*}
At this point the algorithm terminates because $\unnicefrac{X}{Q_2} =
\unnicefrac{X}{P_2}$, while incorrectly not distinguishing $a_1$ and
$b_1$. 

Note that this result remains the same if we chose the subblock
$\{a_5,b_3\}$ in the third iteration or if we chose
$\{a_3,b_5\}$ in the second iteration and $\{a_1,b_1\}$ in the third one.
\end{example}

\noindent Observe that, in general,
$\ker H\fpair{a,b}$ differs from $\ker\fpair{Ha,Hb}$ even if~$H$ is
zippable:
\begin{example}
  For $H=\Potf$ and product projections
  $\pi_1\colon A \times B \to A$ and $\pi_2\colon A \times B \to B$,
  $\fpair{\Potf\pi_1,\Potf\pi_2}$ in general fails to be injective
  although $\Potf\fpair{\pi_1,\pi_2}=\Potf\id_{A \times B}=\id_{\Potf(A
    \times B)}$. Thus
  \[
    \ker \Potf\fpair{\pi_1,\pi_2} \cong \Potf(A \times B) \not\cong
    \ker \fpair{\Potf\pi_1,\Potf\pi_2}.
  \]
\end{example}

Hence, in addition to zippability of $H$, we will need to enforce constraints on the
\op{select} routine to achieve the desired
optimization~\eqref{kernelOptimization}.

The next example illustrates this issue, and a related one: One might
be tempted to implement splitting by a subblock $S$ by using the usual
characteristic function $q_i = \chi_S\colon X \to K_i$.
While this approach is sufficient for systems with real-valued
weights~\cite{ValmariF10}, it may in general let
$\ker (H\fpair{\bar q_i, q_{i+1}}\cdot \xi)$ and
$\ker \fpair{H\bar q_i\cdot \xi, Hq_{i+1}\cdot \xi}$ differ even if
$H$ is zippable, thus rendering the algorithm incorrect:

    \tikzstyle{coalgebraNodes}=[
            every node/.style={
                draw=none,
                inner sep = 1pt,
                label distance=0mm,
            },
            every label/.append style={
                font=\small,
                execute at begin node=\(,
                execute at end node=\),
                inner sep=2pt,
                fill=white,
            },
            x = 14mm,
            y = 14mm,
    ]
    \tikzstyle{partitionPi}=[
        on background layer,
        every node/.style={
          partitionBlock,
          solid,
        },
    ]
    \tikzstyle{partitionQi}=[
        partitionPi,
        every node/.append style={
            inner sep = 3mm,
            dashed,
        },
    ]
    \tikzstyle{doublearrow}=[
      draw=white,
      line width=2pt,
    ]
\begin{figure}
    \newcommand{\exampleCoalgebra}{
    \begin{scope}[coalgebraNodes]
        \node[label={north:t_1}] (triangle1) at (2,1) {$\blacktriangle$};
        \node[label={north:t_2}] (triangle2) at (3,1) {$\blacktriangle$};
        \node[label={north:s_1}] (square1) at (1,1) {$\mdblksquare$};
        \node[label={south:c_1}] (circle1) at (1,0) {$\smblkcircle$};
        \node[label={south:c_2}] (circle2) at (2,0) {$\smblkcircle$};
        \node[label={south:c_3}] (circle3) at (3,0) {$\smblkcircle$};
    \end{scope}
    \foreach \style in {doublearrow,->} {
      \path[\style]
          (triangle1) edge[bend right=10] (circle1)
          (triangle1) edge (circle2)
          (triangle1) edge (circle3)
          (triangle2) edge[bend left=5] (circle1)
          (triangle2) edge (circle3)
          (circle1) edge (square1)
          (circle2) edge (circle3)
          ;
    }
    }
    \begin{subfigure}{.48\textwidth} \centering
    \begin{tikzpicture}
        \exampleCoalgebra
        \begin{scope}[partitionQi ]
            \node[fit = (square1) (circle1) (circle2) (circle3)
                        (triangle2) (triangle1)] {};
        \end{scope}
        \begin{scope}[partitionPi ]
            \node[fit = (square1)] {};
            \node[fit = (circle1) (circle2)] {};
            \node[fit = (circle3)] {};
            \node[fit = (triangle2) (triangle1)] {};
        \end{scope}
    \end{tikzpicture}
    \caption{$Q_0, P_0$ for $\bar q_0 = \mathbin{!}$}
    \label{figPartitionQ0}
    \end{subfigure}%
\hfill\begin{subfigure}{.49\textwidth} \centering
    \begin{tikzpicture}
        \exampleCoalgebra
        \begin{scope}[ partitionQi ]
            \node[fit = (square1)] {};
            \node[fit = (circle1) (circle2)] {};
            \node[fit = (circle3)] {};
            \node[fit = (triangle2) (triangle1)] {};
        \end{scope}
        \begin{scope}[ partitionPi ]
            \node[fit = (square1)] {};
            \node[fit = (circle2)] {};
            \node[fit = (circle1)] {};
            \node[fit = (circle3)] {};
            \node[fit = (triangle2) (triangle1)] {};
        \end{scope}
    \end{tikzpicture}
    \caption{$Q_1, P_1$ for $\bar q_1 =
    \kappa_{P_0}$}\noshowkeys\label{figPartitionQ1}
    \end{subfigure}
    \caption{Partitions of a coalgebra $\xi$ for $H=\{\blacktriangle,
    \mdblksquare, \smblkcircle\} × \Potf(-)$. $\unnicefrac{X}{Q_i}$ is
    indicated by dashed, $\unnicefrac{X}{P_i}$ by solid lines.}
    \label{figSplitStep}
  \end{figure}
  \begin{figure}
    \begin{tikzpicture}[
            elementX/.style={
                execute at begin node=\(,
                execute at end node=\),
            },
            x=7mm,
            y=8mm,
            baseline=-5mm, 
        ]
        \foreach \prefix/\yshift in {top/2,middle/1,bottom/0} {
        \node[elementX] (\prefix circle1) at (0,\yshift) {c_1};
        \node[elementX] (\prefix circle2) at (1,\yshift) {c_2};
        \node[elementX] (\prefix circle3) at (2,\yshift) {c_3};
        \node[elementX] (\prefix square1) at (3,\yshift) {s_1};
        \node[elementX] (\prefix triangle1) at (4,\yshift) {t_1};
        \node[elementX] (\prefix triangle2) at (5,\yshift) {t_2};
        }
        \begin{scope}[
                every node/.append style={partitionBlock,inner sep=0mm},
                ]
            \node[fit = (topsquare1)] {};
            \node[fit = (topcircle1) (topcircle2)] {};
            \node[fit = (topcircle3)] {};
            \node[fit = (toptriangle2) (toptriangle1)] (XQ1) {};
        \end{scope}
        \begin{scope}[
                every node/.append style={partitionBlock,inner sep=0mm},
                ]
            \node[fit =  (middlecircle1)] {};
            \node[fit = (middlesquare1)
                        (middlecircle2) (middlecircle3) (middletriangle2)
                        (middletriangle1)] (chiS) {};
        \end{scope}
        \begin{scope}[
                every node/.append style={partitionBlock,inner sep=0mm},
                ]
            \node[fit =  (bottomcircle1)] {};
            \node[fit = (bottomcircle2) ] {};
            \node[fit = (bottomcircle3) (bottomsquare1) (bottomtriangle2)
                        (bottomtriangle1)]  (chiSC){};
        \end{scope}
        \begin{scope}[every node/.style={
                inner xsep=0cm,
                anchor=west,
                xshift=5mm
            }]
        \node at (chiS.east) {$\unnicefrac{X}{\ker \chi_S}$};
        \node at (chiSC.east) {$\unnicefrac{X}{\ker \chi_S^C}$};
        \node at (XQ1.east) {$\unnicefrac{X}{Q_1} = \unnicefrac{X}{P_0}$};
        \end{scope}
    \end{tikzpicture}
    \caption{Grouping of elements when $S:=\{c_1\}$ is chosen as the
      next subblock and $C := \{c_1,c_2\}$ as the compound block.  }
    \noshowkeys\label{groupingExamples}
\end{figure}
\begin{example}\label{expl:no-respect}
  Consider the coalgebra $\xi\colon X\to HX$ for the zippable functor
  $H=\{\blacktriangle, \mdblksquare, \smblkcircle\} × \Potf(-)$
  illustrated in \autoref{figSplitStep} (essentially a Kripke
  model). The initial partition $\unnicefrac{X}{P_0}$ splits the set
  of all states by shape and by $\Potf!$, i.e.~states with successors
  are distinguished from the ones without
  successors (\autoref{figPartitionQ0}). Now, suppose that
  $\op{select}$ returns $k_1 := \id_{\unnicefrac{X}{P_0}}$, i.e.\
  retains all information (cf.~\autoref{finalchain}), so that
  $Q_1 = P_0$ and $P_1$ puts $c_1$ and $c_2$ into different blocks
  (\autoref{figPartitionQ1}). Since $q_0 = \mathord{!}$, we have $\ker \bar q_1 = \ker
  q_1$ and thus simplify notation by directly defining
  $\bar q_1 := \kappa_{P_0}$. We now analyse the next partition that
  arises when we split w.r.t.~the subblock $S=\{c_1\}$ but not
  w.r.t.~the rest $C\setminus S$ of the compound block
  $C=\{c_1,c_2\}$; in other words, we take
  $k_2 := \chi_{\{\{c_1\}\}}\colon \unnicefrac{X}{P_1}\to 2$, making
  $q_2 = \chi_{\{c_1\}}\colon X\to 2$.  Then,
  $H\fpair{\bar q_1,q_2}\cdot \xi$ splits $t_1$ from $t_2$, because
  $t_1$ has a successor $c_2$ with $\bar q_1(c_2) = \{c_1,c_2\}$ and
  $q_2(c_2) = 0$ whereas $t_2$ has no such successor. However,
  $t_1,t_2$ fail to be split by $\fpair{H\bar q_1, Hq_2}\cdot \xi$
  because their successors do not differ when we look at successor
  blocks in $\unnicefrac{X}{Q_1}$ and $\unnicefrac{X}{\ker \chi_S}$
  separately: both have $\{c_1,c_2\}$ and $\{c_3\}$ as successor
  blocks in $\unnicefrac{X}{Q_1}$ and $\{c_1\}$ and
  $X\setminus\{c_1\}$ as successors in
  $\unnicefrac{X}{\ker\chi_S}$ (cf.~\autoref{groupingExamples}). Formally:
    \begin{align*}
        H\bar q_1\cdot \xi (t_1) &= 
        (\id×\Potf\kappa_{P_0})\cdot \xi (t_1) = 
        (\blacktriangle,\big\{
            \{c_1,c_2\},
            \{c_3\}
        \big\})
        = H\bar q_1\cdot \xi (t_2),
    \\
        H q_2\cdot \xi (t_1) &= 
        (\id× \Potf\chi_{\{c_1\}})\cdot \xi (t_1) = 
        (\blacktriangle,\big\{
            0, 1
        \big\})
        = H q_2\cdot \xi (t_2).
    \end{align*}
    So if we computed $P_2$ iteratively as in
    \eqref{kernelOptimization} for $q_2=\chi_S$, then $t_1$ and $t_2$
    would not be split, and we would reach the termination condition
    $P_2 = Q_2$ before all behaviourally inequivalent states have been
    separated.

    Already Paige and Tarjan \cite[Step~6 of the
    Algorithm]{PaigeTarjan87} note that one additionally needs to
    split by $C\setminus S=\{c_3\}$, which is accomplished by
    splitting by $q_i=\chi_S^C$ (see
    \autoref{exampleSelects}\ref{exampleSelectsChi}). This is
    formally captured by the condition we introduce next and explain
    further in \autoref{compoundBlockEquivalences}.
\end{example}

\begin{definition} \label{defRespectCompounds}
  A \op{select} routine \emph{respects compound blocks} if whenever
  $k= \op{select}(X\overset{y}{\twoheadrightarrow} Y
  \overset{z}{\twoheadrightarrow} Z)$
  then the union $\ker z\cup \ker k$ is a kernel.
\end{definition}

\noindent Since in $\Set$, reflexive and symmetric relations are
closed under unions, the definition boils down to $\ker z\cup\ker k$
being transitive. In $\Set$, we have an intuitive characterization:

\begin{lemma} \label{compoundBlockEquivalences}
    For maps $a\colon Y\to A$, $b\colon Y\to B$, the following are
    equivalent:
    \begin{enumerate}[ref={(\arabic*)}]
    \item $\ker a\cup \ker b\rightrightarrows Y$ is a kernel (i.e.~an
      equivalence relation).
    \item $\ker a\cup \ker b\rightrightarrows Y$ is the kernel of
    the pushout of $a$ and $b$.
    \item For all $x,y,z \in Y$, $a(x) =a(y)$ and $b(y)=b(z)$ implies
    $a(x) = a(y) = a(z)$ or $b(x)=b(y)=b(z)$.
    \item For all $x\in Y$, $[x]_a \subseteq [x]_b$ or  $[x]_b
    \subseteq [x]_a$.
    \label{classInclusion}
    \end{enumerate}
\end{lemma}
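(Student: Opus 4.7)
The plan is to establish a cycle $(1) \Leftrightarrow (2)$, $(1) \Rightarrow (3) \Rightarrow (4) \Rightarrow (1)$, working throughout in \Set.

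For $(1) \Leftrightarrow (2)$, I would describe the pushout of $a\colon Y\to A$ and $b\colon Y\to B$ explicitly as the quotient of $A+B$ by the equivalence relation generated by $a(y)\sim b(y)$, so that the induced map $\phi\colon Y\to P$ (via either leg) has $\ker\phi$ equal to the smallest equivalence relation on $Y$ containing both $\ker a$ and $\ker b$. Since $\ker a\cup \ker b$ is already reflexive and symmetric, this smallest equivalence relation is just its transitive closure. Thus the union is a kernel iff it is transitive iff it agrees with $\ker\phi$, which is $(2)$.

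For $(1)\Rightarrow(3)$: assuming transitivity, the hypotheses $a(x)=a(y)$ and $b(y)=b(z)$ give $(x,y),(y,z)\in\ker a\cup \ker b$, so $(x,z)$ lies in the union. If $a(x)=a(z)$, combine with $a(x)=a(y)$ to obtain the first disjunct; otherwise $b(x)=b(z)$, and combining with $b(y)=b(z)$ yields the second.

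For $(3)\Rightarrow(4)$: I would argue by contradiction. If $[x]_a\not\subseteq[x]_b$ and $[x]_b\not\subseteq[x]_a$, pick $y\in[x]_a\setminus[x]_b$ and $z\in[x]_b\setminus[x]_a$, so $a(x)=a(y)$, $b(x)\ne b(y)$, $b(x)=b(z)$, $a(x)\ne a(z)$. Apply $(3)$ to the ordered triple $(y,x,z)$: since $a(y)=a(x)$ and $b(x)=b(z)$, the conclusion forces $a(y)=a(x)=a(z)$ or $b(y)=b(x)=b(z)$, both contradicting what we have.

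For $(4)\Rightarrow(1)$: it suffices to verify transitivity of $\ker a\cup\ker b$, i.e.\ that $(x,y),(y,z)$ in the union implies $(x,z)$ in the union. The four cases according to which of $\ker a,\ker b$ each pair sits in are easy except the mixed case, say $a(x)=a(y)$ and $b(y)=b(z)$. There $x\in[y]_a$ and $z\in[y]_b$; by $(4)$ applied at $y$, either $[y]_a\subseteq[y]_b$ (giving $b(x)=b(y)=b(z)$) or $[y]_b\subseteq[y]_a$ (giving $a(z)=a(y)=a(x)$), so $(x,z)\in\ker a\cup\ker b$ either way; the symmetric mixed case is analogous.

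The only genuinely subtle step is $(3)\Rightarrow(4)$, since $(3)$ is phrased asymmetrically in the variables while $(4)$ is phrased pointwise; the trick is to apply $(3)$ to a \emph{reordered} triple so that the two given equalities fit the pattern \textqt{$a$-link followed by $b$-link}. Everything else is a straightforward unfolding.
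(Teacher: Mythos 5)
Your proof is correct, but it routes the equivalences differently from the paper and treats the pushout more concretely. The paper runs a single four\nobreakdash-cycle $(4)\Rightarrow(1)\Rightarrow(2)\Rightarrow(3)\Rightarrow(4)$, so that statement $(2)$ sits inside the main cycle and is then \emph{used} to derive $(3)$: from $p(x)=p(z)$ in the pushout one reads off $(x,z)\in\ker a\cup\ker b$. You instead detach $(1)\Leftrightarrow(2)$ as a side equivalence and derive $(3)$ directly from transitivity of the union, which is simpler and avoids touching the pushout at all in the main cycle. For $(1)\Leftrightarrow(2)$ itself, the paper works with universal properties: it reduces to surjective $a,b$ via stability of monos under pushouts, builds the mediating morphism $m$ with $m\cdot p=f$ for an arbitrary $f$ coequalizing the union, and verifies the universal property of the kernel abstractly. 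You instead invoke the explicit description of pushouts in $\Set$ as a quotient of $A+B$ and identify $\ker\phi$ with the transitive closure of $\ker a\cup\ker b$; this is more elementary and arguably more transparent, though the identification of $\ker\phi$ with the generated equivalence relation does hide a small zigzag\nobreakdash-unwinding argument (each generating identification $\inl(a(y))\sim\inr(b(y))$ in $A+B$ must be traced back to alternating $\ker a$/$\ker b$ steps in $Y$) that you assert rather than spell out --- it is routine, but it is the one place where your sketch relies on an unproved claim, and it is precisely the content the paper's universal\nobreakdash-property argument is designed to package. Your $(3)\Rightarrow(4)$ (contrapositive with a reordered triple) and $(4)\Rightarrow(1)$ (case split on which class inclusion holds at the middle element) coincide with the paper's arguments up to phrasing.
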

\noindent The last item states that when we move from $a$-equivalence classes to
$b$-equivalence classes, the classes either merge or split, but do not merge
with other classes and split at the same time. In \autoref{subfig:trans}, $\ker
a\cup\ker b$ is transitive, and thus also a kernel. For $x\in \{y_1,y_2,y_3\}$,
we have $[x]_b\subseteq [x]_a$ and for $x\in \{y_4,y_5\}$ we have
$[x]_a\subseteq [x]_b$. On the other hand in \autoref{subfig:nontrans}, $\ker
a\cup \ker b$ is not transitive because $(y_1,y_3)\in \ker b$ and $(y_3,y_5) \in
\ker a$, but $(y_1,y_5)\not \in \ker a\cup\ker b$, and indeed condition
\ref{classInclusion} of \autoref{compoundBlockEquivalences} fails because
$[y_3]_a\not\subseteq [y_3]_b$ and $[y_3]_b\not\subseteq [y_3]_a$ (because $[y_3]_a =
\{y_3,y_4,y_5\}$, $[y_3]_b=\{y_1,y_2,y_3\}$).

\begin{figure}
  \newcommand{\exampleDABpartitions}[2][5]{
    \begin{tikzpicture}[
            x=9mm,
            y=8mm,
            baseline=-5mm, 
        ]
        \foreach \prefix/\displayprefix/\yshift in {A/A/1,B/B/0,I/{Y\!/\!\ker\fpair{a,b}}/-1} {
          \foreach \x in {1,...,#1} {
            \node (\prefix \x) at (\x,\yshift) {\(y_\x\)};
          }
          \node[anchor=east,outer sep=0mm] at ([yshift=0pt]\prefix 1.west) {\(\displayprefix = \big\{\)};
          \node[anchor=west,outer sep=0mm] at ([yshift=0pt]\prefix #1.east) {\(\}\)};
        }
        \foreach \block [count=\mycount]  in {#2} {
        \begin{scope}[]
          \node[partitionBlock, inner sep=0pt, fit=\block] (thisnode) {};
          \ifthenelse{\equal{1}{\mycount}}{}{
            \draw[draw=none]
            let \p1=(lastnode.east) in
            let \p2=(thisnode.west) in
            (lastnode.east) to
            node[yshift=-1.5pt] {\ifthenelse{\lengthtest{\y1=\y2}}{,}{}} (thisnode.west);
          }
          \coordinate (lastnode) at (thisnode.east);
        \end{scope}
        }
    \end{tikzpicture}
  }
  \hfill
  \begin{subfigure}{.49\textwidth} \centering
    \exampleDABpartitions[5]{
      (A1), (A2) (A3), (A4), (A5),
      (B1), (B2), (B3), (B4) (B5),
      (I1), (I2), (I3), (I4), (I5)
    }
    \caption{$\ker a\cup \ker b$ is transitive.}
    \label{subfig:trans}
  \end{subfigure}
  \hfill
  \begin{subfigure}{.49\textwidth} \centering
    \exampleDABpartitions[5]{
      (A1) (A2), (A3) (A4) (A5),
      (B1) (B2) (B3), (B4) (B5),
      (I1) (I2), (I3), (I4) (I5)
    }
    \caption{$\ker a\cup \ker b$ is not transitive.}
    \label{subfig:nontrans}
  \end{subfigure}
  \hfill{}
  \caption{Maps $a\colon Y\to A$ and $b\colon Y\to B$, where the elements of~$A$
    and~$B$ are considered as equivalence classes of elements of $Y$, defining
    $a$ and $b$ implicitly. $Y/\ker\fpair{a,b}$ is the block-wise intersection
    of the partitions defined by $a$ and~$b$.}
  \label{fig:kerTransitive}
\end{figure}

In \autoref{expl:no-respect} of a concrete run of our algorithm, one sees in
\autoref{groupingExamples} that $Q_1\cup \ker \chi_S$ fails to be transitive,
while $Q_1\cup \ker \chi_S^C$ is transitive.

\begin{proof}[Proof of \autoref{compoundBlockEquivalences}]
    (4)~$\Rightarrow$ (1) In \Set, kernels are equivalence relations. Obviously, $\ker
    a\cup \ker b$ is both reflexive and symmetric. For transitivity,
    take $(x,y),(y,z) \in \ker a\cup \ker b$. Then $x, z \in [y]_a\cup
    [y]_b$. If $[y]_a\subseteq [y]_b$, then $x,z \in [y]_b$ and $(x,z)
    \in \ker b$; otherwise $(x,z) \in \ker a$.
    
    (1)~$\Rightarrow$ (2) In \Set, monomorphisms are stable under pushouts, so it is
    sufficient to show that $\ker a \cup \ker b$ is the kernel of the
    pushout of the regular epis from the image factorization of $a$
    and $b$, respectively. In other words, w.l.o.g.~we may
    assume that $a$ and $b$ are surjective maps, and we need to check that $\ker
    a\cup \ker b$ is the kernel of $p:= p_A\cdot a = p_B\cdot b$,
    where $p_A$ and $p_B$ are the two injections of the pushout below:
    \[
    \begin{mytikzcd}[baseline=(bot.base), ampersand replacement=\&]
        Y \ar[->>,d,swap,"b"] \ar[->>,r,"a"]
        \& A
        \ar[d,"p_A"]
        \\
        |[alias=bot]|
        B
        \ar[r,"p_B"]
        \& P.
        \pullbackangle{135}
    \end{mytikzcd}
    \]
    Let $\ker a\cup \ker b$ be the kernel of some $f\colon Y\to Y'$. Then,
    $f$ makes the projections of $\ker a$ (resp.~$\ker b$) equal and hence
    the coequalizer $a$ (resp.~$b$) induces a unique $f_A$
    (resp.~$f_B$) such that the triangles in the diagrams below commute:
    \[
    \begin{mytikzcd}
        \ker a
        \ar[r,hook]
        \ar[rr,shiftarr={yshift=5mm},"\pi_1"]
        \ar[rr,shiftarr={yshift=-5mm},"\pi_2"']
        &
        \ker a \cup \ker b
        \ar[r,shift left = 1,"{\pi_1}"]
        \ar[r,shift left = -1,"{\pi_2}"']
        & Y
        \ar[r,"f"]
        \ar[dr,->>,"a"']
        & Y'
        \\
        &&& A
        \ar[u,dashed,"\exists! f_A"']
    \end{mytikzcd}
    \quad
    \begin{mytikzcd}
        \ker b
        \ar[r,shift left = 1,"\pi_1"]
        \ar[r,shift left = -1,"\pi_2"']
        & Y
        \ar[r,"f"]
        \ar[dr,->>,"b"']
        & Y'
        \\
        && B
        \ar[u,dashed,"\exists! f_B"']
    \end{mytikzcd}
    \]
    Since $f_B \cdot b = f = f_A \cdot a$,  $(f_A,f_B)$ is a competing
    cocone for the above pushout. This induces a cocone morphism $m\colon (P,p_A,p_B)\to
    (Y,f_A,f_B)$, and we have 
    \begin{equation}\label{eq:ypp-y}
      m\cdot p = m\cdot p_A\cdot a = f_A\cdot a = f.
    \end{equation}
    We are ready to show that $\ker a\cup \ker b$ is a kernel of $p$. By the
    definition of $p$, the projections of $\ker a \cup\ker b$ are made equal by
    $p$. For the universal property, let $d_1\colon D'\to D$, $d_2\colon D'\to
    D$ such that $p\cdot d_1 = p\cdot d_2$. Then we have
    \[
      f\cdot d_1 \stackrel{\text{\eqref{eq:ypp-y}}}{=} m\cdot p \cdot d_1 =
      m\cdot p \cdot d_2 \stackrel{\text{\eqref{eq:ypp-y}}}{=} f\cdot d_2.
    \]
    This means that $d_1,d_2\colon D'\rightrightarrows D$ is a competing cone
    w.r.t.~the kernel of $f$.
    We thus obtain a unique cone morphism $u\colon D'\to \ker a \cup \ker b$ as
    desired.

    (2) $\Rightarrow$ (3) Take $x,y,z \in Y$ with $a(x)=a(y)$ and
    $b(y) = b(z)$. Then $a(x)$ and $b(z)$ are identified in the
    pushout $P$: 
    \[
      p(x) = p_A \cdot a(x) = p_A \cdot a(y) = p_B\cdot b(y) = p_B
      \cdot b(z) = p(z). 
    \]
    This shows that $(x,z)$ lies in $\ker a\cup \ker b$, hence we have
    that $a(x) = a(z)$ or $b(x) = b(z)$.

(3) $\Rightarrow$ (4) For a given $y\in Y$, there is nothing to show
in the case where $[y]_a \subseteq [y]_b$. Otherwise if
$[y]_a \not\subseteq [y]_b$, then there is some $x\in [y]_a$ that does
not lie in $[y]_b$, i.e.~such that $a(x) = a(y)$ and $b(x) \neq
b(y)$. Now let $z \in [y]_b$, i.e.~$b(y) = b(z)$. Then, by assumption,
$a(x) = a(y) = a(z)$ or $b(x) = b(y) = b(z)$. Since the latter does
not hold, we have $a(y) = a(z)$, i.e.~$z\in [y]_a$.
\end{proof}

\begin{example} \label{examplesRespectCompoundBlocks}
All $\op{select}$ routines in \autoref{exampleSelects} respect compound blocks.
To see this, let $k = \op{select}\big(
X\!\overset{y}{\twoheadrightarrow}\!Y\!
    \overset{z}{\twoheadrightarrow}\!Z
\big)
$.
\begin{enumerate}
\item For $S\in Y$ and $[S]_z\subseteq Y$, $k :=
  \chi_{\{S\}}^{[S]_z}\colon Y \to 3$ respects
  compound blocks using
  Lemma~\ref{compoundBlockEquivalences}\ref{classInclusion}. Indeed,
  one proceeds by case distinction on $B \in Y$:
  \begin{enumerate}
  \item If $B = S$, then $[B]_k = \{S\} \subseteq [S]_z = [B]_z$.
  \item If $B \in [S]_z$ and $B \neq S$, then $k(B) = 1$ so that
    \[
      [B]_k = [S]_z \setminus \{S\} \subseteq [S]_z = [B]_z.
    \]
  \item Finally, if $B \in Y \setminus [S]_z$, then we have
    $z(B) \neq z(S)$ and therefore
    $[B]_z \subseteq Y \setminus [S]_z = [B]_k$, where the latter
    equation follows from $k(B) = 0$.
  \end{enumerate}

  \takeout{%
  For $B \
    \begin{itemize}
        \item For $p \in Y\setminus C$, $z(p) \neq z(S)$ and so $[p]_z
        \subseteq Y\setminus C = [p]_k$.

        \item For $p \in C$, $z(p) = z(S)$ and so $[p]_k \subseteq C =
        [p]_z$.
      \end{itemize}
    }

\item The \op{select} routine returning the identity $\id_Y$ respects compound
blocks, because for any morphism $z\colon Y\to Z$, $\ker \id_Y \cup \ker z =
\ker z$ is a kernel.

\item The constant $k=\mathbin{!}$ respects compound blocks, because for all
  $B\in Y$: $[B]_z \subseteq Y = [B]_!$.
\qedhere
\end{enumerate}
\end{example}
\noindent For every pair $a\colon Y\to A$, $b\colon Y\to B$ of maps,
the kernel of $\fpair{a,b}\colon Y\to A\times B$ is the intersection
$\ker a\cap\ker b$. If the union $\ker a \cup \ker b$ is an
equivalence relation, then every block in the partition
$Y/\mathord{\ker\fpair{a,b}}$ is either is an equivalence class from $Y/\mathord{\ker a}$
or from $X/\mathord{\ker b}$. That this happens can be visually illustrated as
follows (see \autoref{subfig:trans}): Every equivalence class of
$\ker a\cap \ker b$ already appears in $\ker a$ or in $\ker b$ (or
both), i.e.~for all $x\in Y$, we have $[x]_a = [x]_{\fpair{a,b}}$ or
$[x]_b = [x]_{\fpair{a,b}}$. However for \autoref{subfig:nontrans},
$\{y_1,y_2,y_3\} = [y_3]_b\neq [y_3]_{\fpair{a,b}} \neq [y_3]_{a} =
\{y_3,y_4,y_5\}$.
In the following we prove formally that whenever $\ker a\cup\ker b$
is an equivalence relation, then every equivalence class of
$\ker\fpair{a,b}$ comes from one of $\ker a$ (i.e.~is in the set $A'$)
or of $\ker b$ (i.e.~is in the set $B'$).
\begin{lemma}
  \label{ABpartition}
  Let $a\colon Y\to A$,  $b\colon Y\to B$ such that
  $\ker a\cup \ker b$ is a kernel. Then there exist sets $A'$, $B'$,
  and maps $m$, $q$, $f_A$, and $f_B$ such that the following diagrams commute:
  \[
    \begin{mytikzcd}
      Y
      \arrow[shiftarr={yshift=6mm}]{rr}{\fpair{a,b}}
      \arrow{r}{q}
      & A'+B'
      \arrow{r}{m}
      & A\times B
    \end{mytikzcd}
    \quad
    \begin{mytikzcd}
      A' + \bar A'
      \arrow[equals]{d}
      & A' + B'
      \arrow{d}{m}
      \arrow{l}[swap]{A'+f_A}
      \arrow{r}{f_B+B'}
      & \bar B' + B'
      \arrow[equals]{d}
      \\
      A
      & A\times B
      \arrow{l}[swap]{\pi_1}
      \arrow{r}{\pi_2}
      & B
    \end{mytikzcd}
  \]
\end{lemma}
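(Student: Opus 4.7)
The plan is to construct $A'$ and $B'$ as the subsets of $A$ and $B$ corresponding to those blocks of $Y/\ker\fpair{a,b}$ that are, respectively, $a$-equivalence classes (fully contained in their $b$-class) and strictly smaller $b$-equivalence classes. The key tool is the equivalent characterization \autoref{compoundBlockEquivalences}\ref{classInclusion}: every $y\in Y$ satisfies $[y]_a\subseteq [y]_b$ or $[y]_b\subseteq [y]_a$. I would therefore split $Y$ disjointly as $Y = Y_A \cup Y_B$ where
\[
Y_A = \{y\in Y : [y]_a\subseteq [y]_b\},\qquad Y_B = \{y\in Y : [y]_b\subsetneq [y]_a\},
\]
and put $A':=a[Y_A]\subseteq A$, $\bar A' := A\setminus A'$, $B':=b[Y_B]\subseteq B$, $\bar B':=B\setminus B'$.

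A short verification, using that $Y_A$ is saturated under $\ker a$ and $Y_B$ is saturated under $\ker b$, shows that the assignments $a(y)\mapsto b(y)$ (for $y\in Y_A$) and $b(y)\mapsto a(y)$ (for $y\in Y_B$) are well-defined maps $f_B\colon A'\to B$ and $f_A\colon B'\to A$. I then define
\[
q\colon Y\to A'+B',\qquad q(y)=\begin{cases}\inl(a(y)) & y\in Y_A,\\ \inr(b(y)) & y\in Y_B,\end{cases}
\]
and $m\colon A'+B'\to A\times B$ by $m(\inl(\alpha))=(\alpha,f_B(\alpha))$ and $m(\inr(\beta))=(f_A(\beta),\beta)$. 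Commutativity of the left triangle $m\cdot q = \fpair{a,b}$ is then immediate by case distinction on $y\in Y_A$ versus $y\in Y_B$, and the two squares on the right collapse to $\pi_1\cdot m = A'+f_A$ and $\pi_2\cdot m = f_B+B'$, which hold by the very definition of $m$ once $A$ and $B$ are identified with $A'+\bar A'$ and $\bar B'+B'$ respectively.

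The step I expect to be the only real obstacle is checking that $f_B$ really lands in $\bar B'$ (and dually $f_A$ in $\bar A'$), since this is what forces the strict inclusion in the definition of $Y_B$. Concretely, if $f_B(a(y))=b(y)$ lay in $B'$ for some $y\in Y_A$, there would exist $y''\in Y_B$ with $b(y'')=b(y)$; exploiting both $y\in Y_A$ and $y''\in Y_B$ one derives $[y'']_a=[y]_a$ and $[y'']_b=[y]_b$, and then the strict inclusion $[y'']_b\subsetneq [y'']_a$ contradicts $[y]_a\subseteq [y]_b$. Dualizing gives the statement for $f_A$, and the proof is complete.
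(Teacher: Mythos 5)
Your construction is correct and coincides with the paper's own proof: the same decomposition $Y = Y_A + Y_B$ via Lemma~\ref{compoundBlockEquivalences}\ref{classInclusion}, the same sets $A' = a[Y_A]$, $B' = b[Y_B]$, the same maps $q$, $m$, $f_A$, $f_B$, and the same key verification that the codomains of $f_A$ and $f_B$ restrict to $\bar A'$ and $\bar B'$ (the paper phrases this contradiction slightly more tersely, but it is the identical argument). Nothing to add.
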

\begin{proof}
  Define the sets
  \begin{align*}
    Y_A &= \{ x \in Y\mid [x]_a\subseteq [x]_b \},
      &
      Y_B &= \{ x \in Y\mid [x]_a\supsetneqq [x]_b \},
      \\
      A' &= \{a(x)\mid x \in Y_A\},
      &
      B' &= \{b(x)\mid x \in Y_B\}.
  \end{align*}
  By \autoref{compoundBlockEquivalences}, $Y = Y_A + Y_B$. Next define
  $q = a' + b'\colon Y \cong Y_A+Y_B \to A' + B'$, where~$a'$ and~$b'$
  are the obvious restrictions of $a$ and $b$, respectively. Now
  put $\bar A' := A\setminus A'$, $\bar B' := B\setminus B'$, and
  define $f_A\colon B'\to \bar A'$
  and $f_B\colon A'\to \bar B'$ by
  \[
    f_A (b(x)) = a(x)
    \qquad
    f_B (a(x)) = b(x).
  \]
  These functions are well-defined by the definition of $Y_A$ and
  $Y_B$. Moreover, the
  codomain of $f_A$ restricts to~$\bar A'$, because $a(x) \in A'$
  implies that $[x]_a \subseteq [x]_b$, contradicting $x\in Y_B$.
  Analogously, the codomain of~$f_B$ restricts to $\bar B'$. Let $m$
  be the unique map such that the right-hand diagram above commutes,
  i.e.~$m$ is induced by the universal property of the product $A
  \times B$. Since  $(\id_{A'}+f_A)\cdot q = a$ and $(f_B +
  \id_{B'})\cdot q = b$, we see that the left-hand diagram above
  commutes. 
\end{proof}
\noindent This factorization is the main ingredient making~$H$
essentially commute with $\fpair{-,-}$:
\begin{proposition} \label{propZippable}
  Let $a\colon Y\to A$, $b\colon Y\to B$
  such that $\ker a\cup \ker b$ is a kernel, and let
  $H\colon \Set \to \Set$ be a zippable functor. Then we have
  \begin{equation}
    \ker \fpair{Ha,Hb} = \ker H\fpair{a,b}.
  \end{equation}
\end{proposition}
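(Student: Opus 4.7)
The strategy is to route both sides through the kernel of the single map $Hq$, where $q\colon Y\to A'+B'$ comes from the factorization supplied by \autoref{ABpartition}. Concretely, since $\ker a\cup\ker b$ is a kernel, that lemma yields $\fpair{a,b}=m\cdot q$, and moreover $m$ has the special shape $\pi_1\cdot m=\id_{A'}+f_A$ and $\pi_2\cdot m=f_B+\id_{B'}$ (modulo identifying $A=A'+\bar A'$ and $B=\bar B'+B'$). Applying $H$ yields $H\fpair{a,b}=Hm\cdot Hq$ and, using the displayed triangles from \autoref{ABpartition},
\[
  \fpair{Ha,Hb}=\fpair{H(\id_{A'}+f_A),H(f_B+\id_{B'})}\cdot Hq.
\]

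The plan is then to show that both $Hm$ and $\fpair{H(\id_{A'}+f_A),H(f_B+\id_{B'})}$ are monomorphisms, so that both kernels collapse to $\ker Hq$ by \autoref{rem:kernel}\ref{i:mono}. For $Hm$, I first observe that $m$ itself is monic in $\Set$: an element of $A'$ is sent to a pair whose first component lies in $A'$, whereas an element of $B'$ is sent to a pair whose first component lies in $\bar A'=A\setminus A'$, so the two summands have disjoint images and $m$ is injective on each. Since $H$ preserves monomorphisms (\autoref{ass:sec3}; restricted to nonempty sets, without loss of generality), $Hm$ is monic, giving $\ker H\fpair{a,b}=\ker(Hm\cdot Hq)=\ker Hq$.

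For the second map, the key is that this is precisely the situation handled by \autoref{zippableExtended}: taking that lemma's $A,B,C,D,f,g$ to be $A',B',\bar B',\bar A',f_B,f_A$ respectively, the lemma asserts that
\[
  H(A'+B')\xrightarrow{\fpair{H(A'+f_A),H(f_B+B')}}H(A'+\bar A')\times H(\bar B'+B')
\]
is monic, which, after identifying $A'+\bar A'=A$ and $\bar B'+B'=B$, is exactly our map. Hence $\ker\fpair{Ha,Hb}=\ker Hq$ as well, and combining gives $\ker H\fpair{a,b}=\ker\fpair{Ha,Hb}$.

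The main conceptual obstacle, and the reason both hypotheses are needed, is that neither side by itself equals $\ker Hq$ without the decomposition: $H$ is not generally compatible with pairings, but zippability rescues the situation once the pairing has been replaced by a coproduct (via \autoref{ABpartition}), which in turn requires the transitivity condition on $\ker a\cup\ker b$. Thus the two hypotheses interact to reduce the claim to two applications of \textqt{monos cancel in kernels}.
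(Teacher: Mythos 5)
Your proof is correct and follows essentially the same route as the paper's: factor $\fpair{a,b}=m\cdot q$ via \autoref{ABpartition}, apply \autoref{zippableExtended} to see that $\fpair{H(A'+f_A),H(f_B+B')}$ is monic, and collapse both kernels to $\ker Hq$ by \autoref{rem:kernel}\ref{i:mono}. The only (harmless) deviation is that you verify directly that $m$ is injective and invoke preservation of monos, whereas the paper reads monicity of $Hm$ off the commuting triangle $\fpair{H\pi_1,H\pi_2}\cdot Hm=\fpair{H(A'+f_A),H(f_B+B')}$, since the right-hand side is already known to be monic.
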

\begin{proof}
  Using the additional data provided by \autoref{ABpartition}, we note that
  the following diagram commutes:
  \[
    \begin{mytikzcd}
      HY
      \arrow{d}[swap]{Hq}
      \arrow{rr}{H\fpair{a,b}}
      &[26mm] & H(A\times B)
      \arrow{d}{\fpair{H\pi_1,H\pi_2}}
      \\
      H(A'+B')
      \arrow[>->]{r}[swap]{\fpair{H (A'+f_A), H(f_B+B')}}
      \arrow[>->]{urr}[sloped,above]{Hm}
      & H(A'+\bar A')
      \times H(\bar B'+ B')
      \arrow[equals]{r}
      & HA \times HB
    \end{mytikzcd}
  \]
  By \autoref{zippableExtended}, the composition at the bottom is a
  mono, because $H$ is zippable. Hence, $Hm$ is a mono as well.
  We conclude
  \[
    \ker \fpair{Ha, Hb} = 
    \ker \big(\fpair{H\pi_1,H\pi_2}
    \cdot H\fpair{a,b}\big)
    =
    \ker (H q)
    =
    \ker H\fpair{a,b}, 
  \]
  using \autoref{rem:kernel}\ref{i:mono} in the last two equalities.
\end{proof}
\begin{remark}
  Note that \autoref{propZippable} holds more generally for every
  functor $H\colon \Set \to \D$, where $\D$ is a finitely
  complete category. 
\end{remark}
\noindent For a $\op{select}$ routine respecting compound blocks, we
can now apply \autoref{propZippable} to prove the equivalence of
\eqref{kernelOptimization} and Step~\ref{step:P} of \autoref{catPT}:
\begin{theorem}\label{optimizationCorrectness}
  If $H\colon \Set\to\Set$ is zippable and $\op{select}$ respects
  compound blocks, then the optimization \eqref{kernelOptimization} is
  correct.
\end{theorem}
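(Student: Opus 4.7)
The plan is to reduce the claim to \autoref{propZippable} applied to $a = \bar q_i$ and $b = q_{i+1}$. Recall that by construction (Steps \ref{step:S} and \ref{step:Q} of \autoref{catPT}) we have $Q_i = \ker \bar q_i$, $q_{i+1} = k_{i+1} \cdot \kappa_{P_i}$ where $k_{i+1} := \op{select}(\kappa_{P_i}, f_i)$, and $f_i \cdot \kappa_{P_i} = \kappa_{Q_i}$. In particular $\ker \bar q_i = \ker \kappa_{Q_i} = \ker(f_i \cdot \kappa_{P_i})$ by \autoref{rem:kernel}\ref{i:mono}, since $\bar q_i$ has image factorization through $\kappa_{Q_i}$.

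First I would verify that $\ker \bar q_i \cup \ker q_{i+1}$ is a kernel on $X$. Because \op{select} respects compound blocks, the union $\ker f_i \cup \ker k_{i+1}$ is a kernel on $X/P_i$, hence an equivalence relation in \Set. Pulling this equivalence back along the regular epi $\kappa_{P_i}\colon X \epito X/P_i$ gives an equivalence relation on $X$; concretely, for any two maps $g, h$ from $X/P_i$,
\[
  \ker(g \cdot \kappa_{P_i}) \cup \ker(h \cdot \kappa_{P_i})
  = (\kappa_{P_i} \times \kappa_{P_i})^{-1}\bigl(\ker g \cup \ker h\bigr),
\]
and preimages of equivalence relations under surjections are again equivalence relations. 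Instantiating $g = f_i$ and $h = k_{i+1}$ yields that $\ker \bar q_i \cup \ker q_{i+1}$ is a kernel.

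Next I would apply \autoref{propZippable} with $a = \bar q_i$, $b = q_{i+1}$ (which is legal since $H$ is zippable and the union of kernels just established), obtaining
\[
  \ker \fpair{H\bar q_i, Hq_{i+1}} = \ker H\fpair{\bar q_i, q_{i+1}}.
\]
Finally, \autoref{rem:kernel}\ref{i:squares} (with $g := \xi\colon X \to HX$) turns this equality of kernels on $HX$ into an equality of kernels on $X$ after postcomposition-wait, the correct instance is the statement that equal kernels remain equal after precomposition on the right with~$\xi$, which is exactly \autoref{rem:kernel}\ref{i:squares}. We conclude
\[
  P'_{i+1} = \ker\bigl(\fpair{H\bar q_i, Hq_{i+1}} \cdot \xi\bigr)
  = \ker\bigl(H\fpair{\bar q_i, q_{i+1}} \cdot \xi\bigr)
  = \ker(H\bar q_{i+1} \cdot \xi) = P_{i+1},
\]
which is the desired correctness of the optimization.

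The only subtle step is the first one: passing from the defining clause for ``respects compound blocks'' (a statement on $X/P_i$) to the corresponding union being a kernel on $X$. This will be a one-line observation in \Set{} via the preimage argument above, but it is the step where the hypothesis on \op{select} actually gets used; everything else is then a direct combination of \autoref{propZippable} with the bookkeeping identities of \autoref{catPT}.
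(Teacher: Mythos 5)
Your proof is correct and follows essentially the same route as the paper: both arguments hinge on \autoref{propZippable} together with the bookkeeping identities $q_{i+1}=k\cdot\kappa_{P_i}$ and $\bar q_i=m\cdot f_i\cdot\kappa_{P_i}$, followed by \autoref{rem:kernel}\ref{i:squares} and the computation $\ker\fpair{H\bar q_i\cdot\xi,Hq_{i+1}\cdot\xi}=\ker(H\bar q_{i+1}\cdot\xi)$. The only (harmless) variation is that you instantiate \autoref{propZippable} at the level of $X$ with $a=\bar q_i$, $b=q_{i+1}$, justifying the union-of-kernels hypothesis by pulling back $\ker f_i\cup\ker k$ along $\kappa_{P_i}$, whereas the paper instantiates it at the level of $X/P_i$ with $a=m\cdot f_i$, $b=k$ and then precomposes with $H\kappa_{P_i}\cdot\xi$; your preimage observation is valid in $\Set$, so both versions go through.
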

\begin{proof}
  Correctness of \eqref{kernelOptimization} means that
  \[
    P_{i+1} = P_{i+1}' = \ker(H\fpair{\bar q_i,q_{i+1}}\cdot \xi) = P_i \cap \ker (Hq_{i+1}\cdot \xi).
  \]
  Indeed, suppose that $H$ is zippable, and let $k =
  \op{select}(\kappa_{P_i}, f_i)$, where $f_i\colon
  X/P_i\twoheadrightarrow X/Q_i$ witnesses that $P_i$ is finer than
  $Q_i$ (see \autoref{catPT}). Then we have
  \begin{equation}\label{eq:aux}
    q_{i+1} = k \cdot \kappa_{P_i}
    \qquad
    \text{and}
    \qquad
    \bar q_i = m \cdot \kappa_{Q_i} = m \cdot f_i \cdot \kappa_{P_i},
  \end{equation}
  where $m$ is obtained by the image factorization of $\bar q_i$. By
  \autoref{rem:kernel}\ref{i:mono} we have $\ker f_i = \ker (m \cdot
  f_i)$. Since $\op{select}$ respects compound blocks we know that
  $\ker f_i \cup \ker k$ is a kernel, thus so is $\ker(m \cdot f_i)
  \cup \ker k$. By \autoref{propZippable}, we obtain
  \[
    \ker \fpair{H(m \cdot f_i), Hk} = \ker H \fpair{m \cdot f_i, k}, 
  \]
  which, using \autoref{rem:kernel}\ref{i:squares}, implies
  \begin{equation}\label{eq:aux2}
    \ker (\fpair{H(m \cdot f_i), Hk} \cdot H\kappa_{P_i} \cdot \xi)
    = 
    \ker (H\fpair{m \cdot f_i, k} \cdot H\kappa_{P_i} \cdot \xi). 
  \end{equation}
  Thus we obtain the desired result:
  \begin{align*}
    P_i \cap \ker(Hq_{i+1} \cdot \xi) 
    &= \ker\big(\fpair{H\bar q_i \cdot \xi, Hq_{i+1} \cdot \xi}\big) &
    \text{def.~of $P_i$}
    \\
    &= \ker\big(\fpair{H(m \cdot f_i), Hk} \cdot H\kappa_{P_i} \cdot
    \xi\big)
    & \text{by~\eqref{eq:aux}}
    \\
    &= \ker (H\fpair{m \cdot f_i, k} \cdot H\kappa_{P_i} \cdot \xi)
    & \text{by~\eqref{eq:aux2}}
    \\
    &= \ker\big(H \fpair{\bar q_i, q_{i+1}} \cdot \xi\big)
    & \text{by~\eqref{eq:aux}} 
    \\
    &= \ker\big(H\bar q_{i+1} \cdot \xi\big)
    & \text{def.~of $\bar q_{i+1}$}
    \\
    &= P_{i+1} & \text{def.~of $P_{i+1}$}.
                 \tag*{\qedhere}
  \end{align*}
  \takeout{
  \[
    \begin{array}[b]{cl}
&\op{select}\text{ respects compound blocks}
                 \\
\overset{\text{Def.}}{\Leftrightarrow} &
\text{$\ker f_i\cup \ker k$ is a kernel}
\\
\overset{\ref{propZippable}}{\Rightarrow}
& \ker\fpair{Hf_i, Hk} = \ker H\fpair{f_i, k_{i+1}}
\\
\overset{\text{\ref{rem:kernel}\ref{i:squares}}}{\Rightarrow}
  &
\ker(\fpair{Hf_i, Hk_{i+1}}\cdot H\kappa_{P_i}\cdot\xi) = \ker (H\fpair{f_i,
k_{i+1}}\cdot H\kappa_{P_i}\cdot\mathrlap{\xi )}
\\
\overset{\phantom{\ref{propZippable}}}{\Rightarrow}
  &
      P_i \cap \ker(Hq_{i+1}\cdot \xi) 
    = \ker \big(\fpair{H\bar q_i\cdot \xi, H q_{i+1}\cdot \xi}\big)
    = \ker \big(H\fpair{\bar q_i, q_{i+1}} \cdot \xi\big) \\
    & \phantom{P_i \cap \ker(Hq_{i+1}\cdot \xi) }= \ker \big(H\bar q_{i+1} \cdot \xi\big)
     = P_{i+1}
  \end{array}
  \tag*{\qedhere}
\]
}
\end{proof}
\noindent
Combining \autoref{optimizationCorrectness} and \autoref{thm:correct}
we obtain:
\begin{corollary}\label{optimizationSummary}
  Suppose that $H$ is a zippable endofunctor and that
  $\op{select}$ respects compound blocks and is progressing. Then \autoref{catPT} with
  optimization~\eqref{kernelOptimization} terminates and computes the
  simple quotient of a given finite $H$-coalgebra.
\end{corollary}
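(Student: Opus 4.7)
The plan is to combine the two principal results of the section, \autoref{optimizationCorrectness} and \autoref{thm:correct}, by arguing that the optimized algorithm produces exactly the same sequence of kernels $(P_i, Q_i)$ as the unoptimized \autoref{catPT}, so that correctness and termination of the latter transfer to the former.

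First I would observe that in both variants the initialization of $Q_0$, $P_0$, $\bar q_0$ and the updates in Steps \ref{step:S} and \ref{step:Q} are identical; the only difference is how $P_{i+1}$ is recomputed in Step \ref{step:P}. Proceeding by induction on $i$, assume that $(P_j, Q_j, \bar q_j, q_{j+1})$ agree in both variants for all $j \le i$. Since $\op{select}$ respects compound blocks and $H$ is zippable, \autoref{optimizationCorrectness} applies to this common state and yields
\[
  \ker(H\bar q_{i+1}\cdot \xi) \;=\; P_i \cap \ker(Hq_{i+1}\cdot \xi),
\]
so the $P_{i+1}$ computed via \eqref{kernelOptimization} coincides with the $P_{i+1}$ computed by the original Step \ref{step:P}. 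This closes the induction and shows that the two runs are literally the same.

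Next, I would invoke \autoref{thm:correct}: since $\op{select}$ is progressing and the input coalgebra $(X,\xi)$ is finite (hence has only finitely many quotients in $\Set$), the unoptimized \autoref{catPT} terminates after finitely many iterations with $P_i = Q_i$, and by \autoref{correctness} the returned quotient $X/Q_i$ carries the simple quotient coalgebra. By the equality of runs just established, the optimized algorithm terminates at the same iteration with the same output, which proves the claim.

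There is no real obstacle here, as this is a direct composition of the two prior theorems; the only subtlety worth flagging is the inductive argument ensuring that the hypotheses of \autoref{optimizationCorrectness} remain satisfied in every iteration (which they do, because the common state of the two algorithms at stage $i$ is what that proposition takes as its input, and $\op{select}$ is assumed to respect compound blocks for every chain of quotients it receives).
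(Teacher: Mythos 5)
Your proposal is correct and follows exactly the route the paper intends: the paper derives \autoref{optimizationSummary} simply by combining \autoref{optimizationCorrectness} with \autoref{thm:correct}, and your inductive argument that the optimized and unoptimized runs produce identical sequences $(P_i,Q_i)$ is just the natural way of making that one-line combination explicit. No gaps.
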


\begin{remark}
  Note that all results in this section can be formulated and proved
  in a Boolean topos $\C$ in lieu of $\Set$, e.g.~the category of
  nominal sets and equivariant maps. In particular, the set-theoretic
  statements in \autoref{compoundBlockEquivalences} and
  \autoref{ABpartition} can be formulated in the internal language of
  a Boolean topos, i.e.~the ordinary set theory ZF (with bounded
  quantifiers only) without the axiom of choice, but with the law of
  excluded middle. A detailed definition and discussion of this
  language can be found, e.g.\ in Mac Lane and
  Moerdijk~\cite{MacLane1992}.
\end{remark}

\section{Refinement Interfaces}
\label{sec:interface}
\tikzsetfigurename{partref-efficient-} In \autoref{catPT}, we left
unspecified how the kernels and partitions can be computed
efficiently. In this section we introduce the notion of a
\emph{refinement interface} for the given type functor. This interface
is aimed at efficient computation of the partition $X/P$ in
\autoref{catPT}, both in the initialization and in the
\emph{refinement step}, i.e.~in the optimization
\eqref{kernelOptimization} of Step~\ref{step:P} in \autoref{catPT}. In
\autoref{sec:efficient} we will assume that the type functor $H$ comes
equipped with a refinement interface which is used as a parameter for
a generic initialization procedure and an implementation
of~\eqref{kernelOptimization} such that partition refinement on a
coalgebra $\xi \colon X\to HX$ with $n= |X|$ states and $m$ edges runs
in $\CO((m+n)\cdot \log n)$. In order to phrase and actually achieve
this complexity bound, we need a notion of edges in coalgebras. This
notion will also provide a representation of coalgebras, for purposes
of using them as inputs to the partition refinement algorithm:
\begin{definition}\label{D:enc}
  An \emph{encoding} of a functor $H$ consists of a set $A$ of
  \emph{labels} and a family of maps
  $\flat\colon HX\to \Bag(A\times X)$, one for every set~$X$. The
  \emph{encoding} of an $H$-coalgebra $\xi\colon X\to HX$ is given by
  the map
  $\fpair{H!,\flat}\cdot \xi\colon X\to H1\times \Bag(A\times X)$ and we
  say that the coalgebra has $n = |X|$ states and
  $m = \sum_{x\in X}|\flat(\xi(x))|$ edges.
\end{definition}
\noindent The purpose of these functions is to enable a representation
of an input coalgebra $\xi\colon X\to HX$ as a labelled graph. In
concrete examples, an encoding expresses how one intuitively
visualizes coalgebras (see e.g.~\autoref{fig:exPartRef},
\autoref{exWrongRunForPP}, and \autoref{figSplitStep}). The reader
should note here that standard partition refinement for graphs on the
encoding does \emph{not} correctly minimize the input coalgebra, that
is, an encoding does by no means provide a reduction of the
minimization problem from coalgebras to graphs
(\autoref{fig:exPartRef} illustrates how Markov chain lumping and
bisimilarity in transition systems lead to different
partitions). Technically speaking, behavioural equivalence in
$H$-coalgebras is not the same as behavioural equivalence in
$H1\times\Bag(A\times (-))$-coalgebras. Moreover, it is not even
assumed that the map~$\flat:HX\to \Bag(A\times X)$ in an encoding is
natural in~$X$.
\begin{example} \label{exFunctorEncoding}
  \begin{enumerate}
  \item For $H=\Potf$, the labels are a singleton set $A=1$ and we define
    $\flat\colon \Potf X\to \Bag(1\times X)\cong \Bag(X)$ to be the obvious
    inclusion $\flat(t)(x) = 1$ if $x\in t$ and $\flat(t)(x) = 0$ if $x\not\in t$.
  \item For $HX=\R^{(X)}$, the labels are non-zero real numbers $A= \R_{\neq 0}$, and $\flat\colon
    \R^{(X)} \to \Bag(\R_{\neq 0}\times X)$ is defined as $\flat(t)(r,x) = 1$ if $t(x) =
    r \neq 0$ and $\flat(t)(r,x) = 0$ if $t(x)\neq r$ or $t(x) = 0$. An example
    of this encoding is visualized in \autoref{fig:exPartRef:Markov}.
  \item For a polynomial functor $H_\Sigma$ (see
    \autoref{ex:coalg}\ref{i:poly}), we have an encoding where the
    label set $A= \N$ describes the ordering of the components:
    $\flat\colon H_\Sigma X\to \Bag(\N\times X)$ is defined by
    $\flat(\sigma(x_1,\ldots,x_n)) = \{(1,x_1),\ldots,(n,x_n)\}$ for
    an $n$-ary symbol $\sigma \in \Sigma$. Note that the set $H_\Sigma1$ may
    be identified with the signature $\Sigma$ so
    $H!\cdot \xi$ assigns to each state $x$ of a $H_\Sigma$-coalgebra its
    output symbol, and the bag $\flat\cdot \xi(x)$ contains
    the successor states of $x$ together with their corresponding input $i \in \{1,
    \ldots, n \}$. 
  \end{enumerate}
\end{example}
\noindent Having a coalgebra encoded in terms of labelled edges, the following
computational task is solved in the refinement step: Given a coalgebra $\xi
\colon X\to HX$ as its encoding $X\to H1\times \Bag(A\times X)$ and partitions
$X/P$ and $X/Q$ such that $P$ is stable w.r.t.~$Q$ (see \autoref{D:stable}), how
does splitting a block $C\in X/Q$ into $S\subseteq C$ and $C\setminus S$ affect
the partition $X/P$? The explicit case for $H = \Potf$ is visualized in
\autoref{fig:RefineStep} and is computed as follows:

\begin{example}\label{exPaigeTarjanSplit}
  For the finite powerset functor, fix a coalgebra $\xi\colon X \to \Potf X$. We say
  that there is an edge from $x$ to a subset $S \subseteq X$ if $x$ has a
  successor node in $S$.
  
  In the Paige-Tarjan algorithm~\cite{PaigeTarjan87}, when a block
  $C\in X/Q$ is split into smaller blocks $S\subseteq C$ and
  $C\setminus S$, every block of $B\in X/P$ is split into (at most)
  three blocks as visualized in \autoref{fig:RefineStep}:
  \begin{enumerate}
  \item States $x\in B$ with edges to $S$ but not to $C\setminus S$.
  \item States $x\in B$ with edges to both $S$ and $C\setminus S$.
  \item States $x\in B$ with edges to $C\setminus S$ but not to $S$.
  \end{enumerate}
  It is one of the key points in the analysis of the overall time
  complexity that this split is performed in linear time in the number
  of ingoing edges into~$S$, regardless of the sizes of $C\setminus S$
  and~$B$. Roughly, this works as follows. All edges from a state
  $x\in X$ to the same block $C\in X/Q$ share an integer counter
  variable that stores how many such edges exist. Sharing a variable
  means that every edge is equipped with a pointer to a memory cell
  holding the actual counter. For example, in
  \autoref{fig:RefineStep}, $x_2\to y_2$, $x_2\to y_4$, and
  $x_2\to y_3$ share a variable with the value 3. When splitting $C$
  into $S$ and $C\setminus S$, we count the edges ending in $S$ and
  then compare the result with the counter variable of each of those
  edges, which allows us to implement the above three-way split by
  iterating over the source nodes $x \in B$ with edges to~$S$:
  \begin{figure}
    \begin{tikzpicture}
  \begin{scope}[
    every node/.append style={
      inner sep=0pt,
      outer sep=2pt,
      minimum width=1pt,
      minimum height=4pt,
      anchor=center,
    }
    ]
    \begin{scope}[yshift=1.5cm]
      \node (x1) at (-1.2,0) {$x_1$};
      \node (x2) at (0,0) {$x_2$};
      \node (x3) at (1.2,0) {$x_3$};
    \end{scope}
    \node (y1) at (-1.6,0) {$y_1$};
    \node (y2) at (-0.5,0) {$y_2$};
    \node (y3) at (0.5,0) {$y_3$};
    \node (y4) at (1.2,0) {$y_4$};
  \end{scope}
  \node[anchor=west,xshift=2mm] (y5) at (y4.east) {$\ldots$};
  \draw (y1) edge[draw=none] node {$\ldots$} (y2);
  \begin{scope}[
    every node/.append style={
      partitionBlock,
      inner sep=2pt,
    },
    every label/.append style={
      font=\footnotesize,
      anchor=south,
      outer sep=1pt,
      inner sep=1pt,
      minimum height=2mm,
      shape=rectangle,
    },
    ]
    \node[fit=(x1) (x2) (x3)] (x123){};
    \node[fit=(y1) (y5)] (C) {};
    \node[every label] at ([xshift=-1mm]x123.north east) {\normalsize $B$};
    \node[every label] at ([xshift=-1mm]C.north east) {\normalsize $C$};
    \begin{scope}[every node/.append style={minimum width=10mm}, 
      ]
      \newcommand{\nextblockdistance}{8mm} 
      \foreach \name/\nodesource/\anchor/\direction in
      {Peast/x123/east/,
        Qeast/C/east/,
        Pwest/x123/west/-,
        Qwest/C/west/-} {
        \begin{scope}
          \clip ([yshift=1mm]\nodesource.north \anchor)
          rectangle ([yshift=-1mm,xshift=\direction\nextblockdistance]
                     \nodesource.south \anchor);
          \node at ([xshift=\direction\nextblockdistance]\nodesource.\anchor) (\name) {};
        \end{scope}
      }
    \end{scope}
  \end{scope}
  \draw[thick, decoration={brace,mirror},decorate]
  ([yshift=-2mm]y1.south west) -- node[anchor=north,yshift=-2pt]{$S$} ([yshift=-2mm]y2.south east) ;
  \draw[thick, decoration={brace,mirror},decorate]
  ([yshift=-2mm]y3.south west) -- node[anchor=north,yshift=-2pt]{$C\setminus S$} ([yshift=-2mm]y3.south west -| y5.south east) ;
  \foreach \name in {Peast,Pwest,Qeast,Qwest} {
    \node[draw=none,minimum width=1pt,minimum height=1pt,text height=1pt,inner xsep=1pt]
       at (\name.center) {$\ldots$};
  }
  \begin{scope}[text depth=2pt]
  \node[outer sep=5mm,anchor=east] at (Pwest.center -| Qwest.center) (P) {$X/P:$};
  \node[outer sep=5mm,anchor=east] at (Qwest.center) (Q) {$X/Q:$};
  \coordinate (mapXCoordinate) at ([xshift=10mm]Qeast.center);
  \node[outer sep=1pt,anchor=center] at (Peast.center -| mapXCoordinate) (X) {$X$};
  \node[outer sep=1pt,anchor=center] at (Qeast.center -| mapXCoordinate) (PY) {$\Potf X$};
  \end{scope}
  \draw[commutative diagrams/.cd, every arrow, every label] (X) edge node {$\xi$} (PY);
  \begin{scope}[
    bend angle=10,
    space/.style={
      draw=white,
      line width=4pt,
    },
    edge/.style={
      draw=black,
      -{>[length=2mm,width=2mm]},
      preaction={draw,-,line width=1mm,white},
      every node/.append style={
        fill=white,
        shape=rectangle,
        inner sep=1pt,
        anchor=base,
        pos=0.55,
      },
    },
    ]
    \draw[edge,bend right] (x1) to (y1);
    \draw[edge,bend left] (x1) to (y2);
    \draw[edge,bend right] (x2) to (y2);
    \draw[edge,bend left] (x2) to (y3);
    \draw[edge,bend left] (x2) to (y4);
    \draw[edge,bend right] (x3) to (y3);
    \draw[edge] (x3) to (y4);
    \draw[edge,bend left] (x3) to (y5);
  \end{scope}
  \begin{scope}[
    linestyle/.style={
      dashed,
      draw=black!50,
    },
    scissors/.style={
      text=black!50,
      font=\tiny,
      inner sep=0pt,
    },
    gapstyle/.style={
      draw=white,
      line width=1mm,
    }
    ]
    \foreach \leftnode/\rightnode/\northlen/\southlen in
    {x1/x2/4mm/4mm,x2/x3/4mm/4mm,y2/y3/4mm/4mm} {
      \coordinate (northend) at ($ (\leftnode) !.5! (\rightnode) + (0,\northlen)$);
      \coordinate (southend) at ($ (\leftnode) !.5! (\rightnode) - (0,\southlen)$);
      \node[anchor=east,rotate=-90,scissors] at (northend) {\scissors};
      \draw[gapstyle] (northend) -- (southend);
      \draw[linestyle] (northend) -- (southend);
    }
  \end{scope}
\end{tikzpicture}
    \caption{The refinement step for $\xi\colon X \to \Potf X$, for $S\subseteq
      C\in X/Q$.}
    \label{fig:RefineStep}
  \end{figure}
  \begin{enumerate}
  \item If the counters match, then all edges from $x$ to $C$ go indeed to $S$,
    and we move all these~$x$ to a new block.
  \item If the values mismatch, then some edges from $x$ to $C$ go to
    $S$ and some to $C\setminus S$. Hence, we move all these $x$ to a
    new block and then update the counter variables. To this end, we
    count the number~$n$ of edges from $x$ to $S$. We re-use the
    existing counter variable for edges $x\to C$ and make it the
    variable for $x\to C\setminus S$ by decrementing it by $n$. For
    example, in \autoref{fig:RefineStep} the counter for the edge
    $x_2\to y_2$ is decremented from $3$ to $2$, and this counter
    variable is shared with $x_2\to y_3$ and $x_2\to y_4$, so the
    count for these edges is correctly decremented as well. We then
    allocate a new counter with value $n$ and change the pointer of
    all edges $x\to S$ to this new counter. In the example in
    \autoref{fig:RefineStep}, the pointer of the edge $x_2\to y_2$ now
    points to this new counter, which holds the value $1$.

  \item All the remaining $x \in B$ have no edge to $S$, and
    nothing needs to be done for them, because the counters for edges
    from $x$ need not change. 
  \end{enumerate}
  We thus have refined the partition $X/P$ such that it becomes
  (again) stable w.r.t.~the new partition $X/Q$ where $C$ is split
  into~$S$ and $C\setminus S$.
\end{example}

\noindent Note that in \autoref{exPaigeTarjanSplit}, the block $B$ is
split into smaller blocks in such a way that elements $x,x'$ end up in
the same block iff $\Potf\chi_S^C(\xi(x)) = \Potf\chi_S^C(\xi(x))$.
This split is done efficiently by maintaining an integer variable for
counting edges.  The present section generalizes this procedure from
$\Potf$ to a \Set-functor~$H$ that implements a so-called
\emph{refinement interface}, in which the integer counter generalizes
to a set $W$ \textqt{weights} and the comparison of counters
generalizes to an $\op{update}$ function:

\begin{definition}\label{refinementInterface}
  A \emph{refinement interface} for a $\Set$-functor~$H$ equipped with
  a functor encoding is formed by a set $W$ of \emph{weights} and
  functions
  \begin{align*}
    \begin{array}{l@{\qquad}c@{\quad}l}
      \op{init}\colon H1×\Bagf A\to W,&\op{update}\colon \Bagf A × W \to W× H3× W
    \end{array}
  \end{align*}
  such that there exists a family of \emph{weight maps} $w\colon \Potf X \to (HX \to W)$ 
  such that for all
  $S\subseteq C\subseteq X$, the diagrams
  \vspace{-2mm}
  \begin{equation}
    \label{eqSplitterLabels}
    \hspace{-4mm}
    \begin{mytikzcd}
      H1 × \Bagf A
      \arrow{r}{\op{init}}
      & W
      \\
      HX
      \arrow{u}[left, pos=0.4]{
        \fpair{
          H!,
          \op{fil}_X\cdot \flat}}
      \arrow{ur}[swap]{w(X)}
    \end{mytikzcd}
    \hspace{1mm}
    \begin{mytikzcd}[column sep=1.4cm]
      \Bagf A × W
      \arrow{r}{\op{update}}
      & W × H3 × W
    \\
    HX
    \arrow[
    ]{ur}[swap]{\fpair{w(S),H\chi_S^C, w(C\setminus S)}}
    \arrow{u}[left]{\fpair{\op{fil}_S\cdot \flat,w(C)}}
  \end{mytikzcd}
  \hspace{-3mm}
\end{equation}
commute, where \( \op{fil}_S\colon \Bagf(A×X) \to \Bagf(A)\)
is the filter function
\(\op{fil}_S (f)(a) = \textstyle\sum_{y \in S} f(a,y)\)
for $S\subseteq X$ (e.g.~$\op{fil}_X = \Bagf \pi_1$). 
\end{definition}
\begin{remark} \label{R:filS}
  \begin{enumerate}
  \item Note that for a coalgebra $\xi\colon X \to HX$, $\op{fil}_S \cdot
    \flat \cdot \xi(x)$ is the bag of labels of all edges from $x$ to $S
    \subseteq X$ in the encoding of $\xi$.
  \item Observe that $\op{fil}_S$ is natural in $A$, i.e.~for every
    map $h\colon A \to A'$ the following square commutes:
    \[
      \begin{tikzcd}
        \Bagf(A \times X)
        \arrow{r}{\op{fil}_S}
        \arrow{d}[swap]{\Bagf(h \times X)}
        &
        \Bagf A \arrow{d}{\Bagf h}
        \\
        \Bagf(A' \times X)
        \arrow{r}{\op{fil}_S}
        &
        \Bagf A'.
      \end{tikzcd}
    \]
    Indeed, we have for all $f\in \Bagf(A\times X)$
    \begin{align*}
      \op{fil}_S \cdot \Bagf(h \times X) (f)
      &= \op{fil}_S \big( (a',x) \mapsto \sum_{(h \times X)(a,y) =
        (a',x)} f(a,y) \big)
      \\
      &= \op{fil}_S \big((a',x) \mapsto \sum_{h(a) =
        a'} f(a,x) \big)
      \\
      &= a'\mapsto \sum_{y\in S} \sum_{h(a) = a'} f(a,y)
      \\
      &= a'\mapsto \sum_{h(a) = a'} \sum_{y\in S} f(a,y)
      \\
      &= \Bagf h \big( a \mapsto \sum_{y \in S} f(a,y)\big)
      \\
      &= \Bagf h \cdot \op{fil}_S(f).
    \end{align*}
  \end{enumerate}
\end{remark}
\noindent
Informally, for a given coalgebra $\xi\colon X\to HX$, the above
axioms for \op{init} and \op{update} can be understood as a kind of
contract that their implementation for a given functor needs to
fulfil: \op{init} receives in its first argument the information which
states of $X$ are (non-)terminating, in its second argument the bag of
labels of all outgoing edges of a state $x \in X$ in the graph
representation of $\xi$, and it returns the total weight of those
edges. The operation \op{update} receives a pair consisting of the bag
of labels of all edges from some state $x \in X$ into the set
$S \subseteq X$ and the weight of all edges from $x$ to
$C\subseteq X$, and from only this information (in particular
$\op{update}$ does not know $S$ and $C$ explicitly) it computes the
triple consisting of the weight of edges from $x$ to $S$, the result
of $H\chi_S^C \cdot \xi(x)$ and the weight of edges from $x$ to
$C\setminus S$ (in \autoref{exPaigeTarjanSplit}, the number of edges
from $x$ to $S$, the value for the three way split, and the number of
edges from $x$ to $C\setminus S$). The significance of the set $H3$ is that when using a
set $S\subseteq C\subseteq X$ as a splitter, we want to split every
block $B$ in such a way that it becomes \emph{compatible} with $S$ and
$C\setminus S$, i.e.~we group the elements $x\in B$ by the value of
$H\chi_{S}^C\cdot\xi (x) \in H3$. The set~$W$ depends on the
functor. But in most cases $W=H2$ and
$w(C) = H \chi_C\colon HX \to H2$ are sufficient.

In implementations, the refinement interface does not need to
provide~$w$ explicitly, because the algorithm will compute the values
of $w$ incrementally using \eqref{eqSplitterLabels}, and $\flat$ need
not be implemented because we assume the input coalgebra to be already
\emph{encoded} via $\flat$.
\begin{example} \label{examplePowerset} The refinement interface for
  the powerset functor $H=\Potf$ needs to count the edges into blocks
  $C$, so that we know in the refinement with $S\subseteq C$ whether
  there are edges to $C\setminus S$, as in
  \autoref{exPaigeTarjanSplit}. For a natural number $n$, we define
  the auxiliary function
  \[
    (-) \geZero 0\colon \N \to 2\qquad\text{by }(n\geZero 0) =
    \min(n,1) =
      \begin{cases}
        1 &  \text{if $n > 0$} \\
        0 &  \text{else.}
      \end{cases}
  \]
  For the refinement interface we use
  the following weights and labels:
  \[
    A = 1,
    \quad
    W = 2\times \N,
    \quad
    \text{$w\colon \Potf X \to (\Potf X \to 2\times \N)$ with
      $w(C)(t) = (|t\setminus C| \geZero 0, |t\cap C|)$.}
  \]
  For a map $\xi\colon X\to \Potf X$, a state $x\in X$, and a block
  $C\subseteq X$, the weight $w(C)(\xi(x))$ is a tuple whose first
  component tells us whether there is any edge from $x$ to
  $X\setminus C$ and whose second component is the number of edges from
  $x$ to $C$. Since we have a singleton label alphabet $A=1$, every
  bag of labels is just a natural number, because
  $\Bagf A = \Bagf 1 \cong \N$. The functions 
  \[
    \op{init}\colon \Potf 1\times \N \to 2\times \N
    \qquad\text{and}\qquad
    \op{update}\colon \N \times (2\times \N) \to (2\times \N)\times \Potf 3 \times (2\times \N)
  \]
  in the refinement interface for $\Potf$ are implemented as follows:
  \[
    \op{init}(z,n) = (0,n) \qquad\text{and}\qquad
    \begin{array}[t]{r@{\,}l@{}l}
    \op{update}(n_S,(r,n_C)) &=
    \big(&(r\vee (n_{C\setminus S}\geZero 0), n_S), \\[5pt]
    && (r, n_{C\setminus S}\geZero 0, n_S\geZero 0), \\[5pt]
    && (r\vee (n_{S}\geZero 0), n_{C\setminus S})\big),
  \end{array}
  \]
  where $n_{C\setminus S} := \max(n_C - n_S, 0)$,
  $\vee\colon 2\times 2\to 2$ is disjunction, and the middle return
  value in $\Potf 3$ is written as a bit vector of length three. The
  axioms in~\eqref{eqSplitterLabels} ensure that
  $n_S, n_C, n_{C\setminus S}$ can be understood as the numbers of
  edges to $S$, $C$, and $C\setminus S$, respectively.

  For the verification of \eqref{eqSplitterLabels}, note that for all $S\subseteq X$,
  the map $\op{fil}_S\cdot \flat\colon \Potf X\to \N$ is given by
  $\op{fil}_S(\flat(t)) = |t\cap S|$. Hence, we see that for all $t\in
  \Potf X$ we have
  \[
    \op{init}(\Potf!(t), \op{fil}_X(\flat(t)))
    = (0, \op{fil}_X(\flat(t)))
    = (|t\setminus X|\geZero 0, |t\cap X|)
    = w(X)(t).
  \]
  In the verification of the axiom of \op{update}, the parameters expand as
  follows:
  \[
    \op{update}(\op{fil}_S(\flat(t)), w(C)(t))
    = \op{update}(|t\cap S|, (|t\setminus C|\geZero 0, |t\cap C|))
    \Rightarrow
    \left\{
    \begin{array}{l@{\,}l}
      n_S &= |t\cap S| \\
      r &= |t\setminus C|\geZero 0 \\
      n_C &= |t\cap C| \\
    \end{array}\right.
  \]
  So $n_{C\setminus S} = \max(n_C-n_S, 0)= \max(|t\cap C| - |t\cap S|, 0) =
  |t\cap (C\setminus S)|$, since $S\subseteq C$. The remaining steps of the
  verification are performed component-wise: for $x =
  \op{update}(\op{fil}_S(\flat(t)), w(C)(t))$ we have
  \begin{align*}
    \qquad
    \pi_1(x)
    &= \big((|t\setminus C|\geZero 0) \vee (|t\cap (C\setminus S)|\geZero 0), |t\cap S|\big)
      \\ &
      = \big(|t\setminus S|\geZero 0, |t\cap S|\big)
      = w(S)(t),
    \\[5pt]
    \pi_2(x) &= (|t\setminus C|\geZero 0,
                          |t\cap (C\setminus S)| \geZero 0,
                          |t\cap S|\geZero 0)
    \\
    &= (\bigvee_{\substack{y\in t\\ \chi_S^C(y) = 0}}\!\!1,
    \bigvee_{\substack{y\in t\\ \chi_S^C(y) = 1}}\!\!1,
    \bigvee_{\substack{y\in t\\ \chi_S^C(y) = 2}}\!\!1)
    \qquad\qquad \text{as an element of $\Potf 3$}
    \\ &
    = \{\chi_S^C(y)\mid y\in t\}
    = \Potf\chi_S^C(t),
    \\[5pt]
    \pi_3(x) &=
    \big((|t\setminus C|\geZero 0) \vee (|t\cap S|\geZero 0), |t\cap (C\setminus S)|\big)
                      \\ &
     = \big(|t\setminus (C\setminus S)|\geZero 0, |t\cap (C\setminus S)|\big)
     = w(C\!\setminus\! S)(t).
  \end{align*}
\end{example}
\begin{example} \label{exampleH2} In the following examples, we always
  take
  \[
    W = H2
    \qquad\text{and}\qquad
    w(C) = H\chi_C\colon HX\to H2.
  \]
  We also use the helper function
  \[
    \op{val} := \fpair{H(=2), \id_{H3}, H(=1)}\colon H3\to H2×H3×H2,
  \]
  where $(=x)\colon 3\to 2$ is the equality check for $x\in \{1,2\}$, and in
  each case define
  \[
    \op{update} = (
    \Bagf A \times H2 \xrightarrow{\op{up}} H3
    \xrightarrow{\op{val}} H2 \times H3 \times H2),
  \]
  for a function $\op{up}\colon \Bagf A× H2\to H3$ that is defined
  individually for every functor.

  For the verification of \eqref{eqSplitterLabels} note that, in
  general, for $S\subseteq C\subseteq X$, we have
  \[
    \op{val}\cdot H\chi_S^C = \fpair{H\chi_{S}, H\chi_S^C,
      H\chi_{C\setminus S}}.
  \]
  Hence, to verify the axiom for
  $\op{update} = \op{val}\cdot \op{up}$ it suffices to verify that
  \begin{equation}\label{eq:toveri}
    \op{up}\cdot\fpair{\op{fil}_S\cdot \flat, H\chi_C } = H\chi_S^C;
  \end{equation}
  in fact, using $w(C) = H \chi_C$ we have:
  \begin{align*}
    \op{update}\cdot \fpair{\op{fil}_S \cdot\flat, w(C)}
    &= \op{val} \cdot \op{up} \cdot \fpair{\op{fil}_S \cdot\flat, H\chi_C}
    \\
    &= \op{val} \cdot H \chi^C_S \\
    &= \fpair{H\chi_S, H\chi^C_S, H\chi_{C\setminus S}} \\
    &= \fpair{w(S), H\chi^C_S, w(C\setminus S)}.
  \end{align*}
  
\begin{enumerate}
\item\label{exampleH2.1} For the monoid-valued functor $H= G^{(-)}\!$
  over an Abelian group $(G,+,0)$, we take labels $A = G$ and define
  $\flat(f) = \{ (f(y),y) \mid y\in X, f(y) \neq 0\}$ (which is
  finite because $f$ is finitely supported). With $W= H2= G×G$, the
  weight $w(C) = H\chi_C\colon HX\to G×G$ assigns to $f \in HX = G^{(X)}$
  the pair of accumulated weights of $X\setminus C$ and $C$ under
  $f$:
  \[
    w(C)(f) = \Big(\sum_{y \in X \setminus C} f(y), \sum_{y \in C} f(y)\Big).
  \]
  The remaining functions are
  \begin{align*}
    \op{init}(h_1, e) = (0, \groupsum e) 
    \quad
    \text{and}
    \quad
    \op{up}(e, (r,c)) =
    (r, c - \groupsum e,  \groupsum e),
  \end{align*}
  where $\groupsum\colon \Bagf G \to G$ is the obvious summation map
  assigning to a bag of elements of $G$ their sum in $G$.

  Then for all $f\in HX = G^{(X)}$ and $S\subseteq C\subseteq X$, we have:
  \begin{align*}
    \op{init}(H!(f), \op{fil}_X\cdot\flat(f)) &=
    \big(0, {\textstyle\sum}\Bagf\pi_1\cdot\flat(f)\big)
    \\ & =
    \big(0, \sum_{\mathclap{\substack{y\in X}}} f(y)\big) =
    G^{(\chi_X)}(f) = w(X)(f), 
    \\
    \op{up}(\op{fil}_S(\flat(f)), H\chi_C(f))
    &=
    \op{up}\big(
        \{f(y)\mid y\in S\},
        \big(\sum_{\mathclap{y\in X\setminus C}} f(y),
         \sum_{\mathclap{y\in C}} f(y)\big)
         \big)
    \\ &
    = \big(\sum_{\mathclap{y\in X\setminus C}} f(y),
           \sum_{\mathclap{y\in C}} f(y)- \sum_{\mathclap{y\in S}} f(y),
           \sum_{\mathclap{y\in S}} f(y)
      \big)
    \\ &
    \overset{\mathclap{S\subseteq C}}{=} ~~\big(\sum_{\mathclap{y\in X\setminus C}} f(y),
           \sum_{\mathclap{y\in C\setminus S}} f(y),
           \sum_{\mathclap{y\in S}} f(y)
      \big)
    = H\chi_S^C(f),
  \end{align*}
  which verifies~\eqref{eq:toveri} and therefore~\eqref{eqSplitterLabels}.
    
\item\label{item:R} As a special case, we obtain a refinement
  interface for the functor $\R^{(-)}$, and from this we can derive one
  for the distribution functor $\Dist$, a subfunctor of $\R_{\geq 0}^{(-)}$, the following
    $\op{init}$ and $\op{up}$ functions:
  \[
    \op{init}(h_1, e) = (0,1) \in \Dist 2 \subset [0,1]^2
    \qquad\text{and}\qquad
    \op{up}(e, (r,c)) = (r, c - \groupsum e, \groupsum e),
  \]
  if the latter lies in $\Dist 3$, and $\op{up}(e,(r,c)) = (0,0,1)$ otherwise.

  The axiom for $\op{init}$ clearly holds since for every
  $f \in \Dist X$, we have
  $\groupsum \Bag\pi_1 \cdot \flat(f) = \sum_{y \in X} f(y) =
  1$. 

  The axiom~\eqref{eq:toveri} for $\op{up}$ is proved as in the
  previous example; in fact, note that for an $f \in \Dist X$ all
  components of the triple 
  $ 
  \big(\sum_{y\in X\setminus C} f(y),
  \sum_{y\in C\setminus S} f(y),
  \sum_{y\in S} f(y)
  \big)
  $
  are in $[0,1]$ and their sum is $\sum_{y \in X} f(y) = 1$. Thus,
  this triple lies in $\Dist 3$ and is equal to $\Dist \chi^C_S(f)$. 

\item Similarly, one obtains a refinement interface for $\Bagf =
  \N^{(-)}$, adjusting the one for $\Z^{(-)}\!$; in fact,
  $\op{init}$ remains unchanged and $\op{up}(e, (r,c)) = (r,
  c-\groupsum e, \groupsum e)$ if the middle component is a natural
  number and $(0,0,0)$ otherwise.

  To verify~\eqref{eqSplitterLabels} for the refinement interface for
  $\N^{(-)}$ we argue similarly as in point~\ref{item:R} above: If
  $f$ lies in $\N^{(X)}\subseteq \Z^{(X)}$, i.e.~has only non-negative components,
  then only non-negative components appear in the data returned by $\op{up}$ and
  $\op{update}$, so $\op{up}$ and $\op{update}$ restrict from $\Z^{(-)}$ to
  $\N^{(-)}$ as desired.
  
\item Given a polynomial functor $H_\Sigma$
  for the signature $\Sigma$,
  recall from \autoref{exFunctorEncoding} that the labels $A = \N$
  records the generators contained in a shallow term together with their indices:
  \[
    \flat(\sigma(y_1,\ldots,y_n)) = \{ (1,y_1),\ldots,(n,y_n)\}.
  \]
  Then the functor interface is given by $w(C) = H\chi_C$ and
  \begin{align*}
    \op{init}(\sigma(0,\ldots,0), f) &= \sigma(1,\ldots,1)
    \\
    \op{up}(I, \sigma(b_1,\ldots,b_n))
    &= \sigma(b_1 + (1\in I),\ldots,b_i + (i\in I),\ldots,b_n + (n\in I)).
  \end{align*}
  Here $b_i + (i\in I)$ means $b_i + 1$ if $i\in I$ and $b_i$ otherwise.

  To verify~\eqref{eqSplitterLabels}, let $S \subseteq C \subseteq X$,
  $t=\sigma(y_1,\ldots,y_n) \in H_\Sigma X$ with $\sigma$ of arity
  $n$, let $b_i = \chi_C(y_i)$, and
  $I= \{ 1 \le i \le n \mid y_i \in S\}$. Then we have:
  \begin{align*}
    \op{init}(H_\Sigma!(t), \Bagf\pi_1\cdot \flat(t))
    &= \op{init}(\sigma(0,\ldots,0), \Bagf\pi_1(\{(1,y_1),\ldots,(n,y_n) \}))
    \\ &
     = \op{init}(\sigma(0,\ldots,0), \{1,\ldots,n \})
     = \sigma(1,\ldots,1)
    \\ &
     = \sigma(\chi_X(y_1),\ldots,\chi_X(y_n))
     = H_\Sigma \chi_X (t).
     \\
     \op{up}(\op{fil}_S\cdot \flat(t), H_\Sigma\chi_C(t)) &=
     \op{up}(\op{fil}_S(\{(1,y_1),\ldots,(n,y_n)\}),
     \sigma(b_1,\ldots,b_n))
     \\ &
     = \op{up}(I, \sigma(b_1,\ldots,b_n))
     \\ &
     = \sigma(b_1 + (1\in I),\ldots,b_i + (i\in I),\ldots, b_n + (n\in I))
     \\ &
     = \sigma(\chi_S^C(y_1),\ldots,\chi_S^C(y_i),\ldots, \chi_S^C(y_n))
     \\ &
     =H_\Sigma\chi_S^C(t).
   \end{align*}
   In the second last step we use that:
   \[
     \begin{array}[b]{lcl}
       y_i \in X\setminus C
       &\Rightarrow& b_i + (i\in I) =
       0 + 0 = 0 = \chi_S^C(y_i),
       \\
       y_i \in C\setminus S
       &\Rightarrow& b_i + (i\in I) =
       1 + 0 = 1 = \chi_S^C(y_i),
       \\
       y_i \in S
       &\Rightarrow& b_i + (i\in I) =
       1 + 1 = 2 = \chi_S^C(y_i).
     \end{array}
     \tag*{\qedhere}
   \]
\end{enumerate}
\end{example}
\noindent
By~\autoref{P:reduction}, the interface for Abeliean-group-valued
functors can also be used for monoid-valued functors~$M^{(-)}$ for
cancellative monoids~$M$ (as these embed into Abelian groups). In
further work~\cite{DMSW19}, we provide a refinement interface for
monoid-valued functors~$M^{(-)}$ over unrestricted monoids~$M$; this
also yields a refinement interface for $\Potf$ as the monoid-valued
functor $(2,\vee,0)^{(-)}$. However, the more general refinement
interface is less efficient than the specific interface for $\Potf$
described in \autoref{examplePowerset}; in particular it does not
yield the linear run-time complexity we seek here and require next in
\autoref{interface-linear}.

The next result shows that for every refinement interface the weight
function $w(C)$ provides at least as much information as $H\chi_C$.
\begin{proposition}\label{prop:atleast}
    For every refinement interface, $H\chi_C = H (=1) \cdot \pi_2\cdot
    \op{update}(\emptyset)\cdot w(C)$.
\end{proposition}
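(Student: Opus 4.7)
The proof is a direct specialization of the second diagram in~\eqref{eqSplitterLabels} to the case $S = \emptyset$. Here is the plan.

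First, I would observe that the filter map degenerates on the empty set: since $\op{fil}_S(f)(a) = \sum_{y \in S} f(a,y)$, we have $\op{fil}_\emptyset(f) = \emptyset$ (the empty bag) for every $f$. Consequently $\op{fil}_\emptyset \cdot \flat\colon HX \to \Bagf A$ is constantly $\emptyset$, so for any coalgebra argument the pair $\fpair{\op{fil}_\emptyset \cdot \flat, w(C)}$ equals $\fpair{\emptyset, w(C)}$, and the composite $\op{update} \cdot \fpair{\op{fil}_\emptyset \cdot \flat, w(C)}$ is precisely $\op{update}(\emptyset) \cdot w(C)$, where $\op{update}(\emptyset)\colon W \to W \times H3 \times W$ denotes the map obtained by fixing the first argument.

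Next, I would plug $S = \emptyset$ into the refinement-interface axiom. Noting that $C \setminus \emptyset = C$, this yields
\[
\op{update}(\emptyset) \cdot w(C) = \fpair{w(\emptyset),\, H\chi_\emptyset^C,\, w(C)}.
\]
Projecting to the middle component (using the convention $\pi_2\colon W \times H3 \times W \to H3$, consistent with \autoref{examplePowerset}), this gives $\pi_2 \cdot \op{update}(\emptyset) \cdot w(C) = H\chi_\emptyset^C$.

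Finally, I would unfold the definition of $\chi_\emptyset^C\colon X \to 3$ from \autoref{defChi3}: since the case $x \in S = \emptyset$ is vacuous, $\chi_\emptyset^C$ sends $x \in C$ to~$1$ and $x \in X \setminus C$ to~$0$. Therefore $(=1)\cdot \chi_\emptyset^C = \chi_C\colon X \to 2$, and applying functoriality of~$H$ gives
\[
H(=1) \cdot \pi_2 \cdot \op{update}(\emptyset) \cdot w(C) \;=\; H(=1) \cdot H\chi_\emptyset^C \;=\; H\bigl((=1) \cdot \chi_\emptyset^C\bigr) \;=\; H\chi_C,
\]
as claimed. There is no real obstacle here; the only thing to be careful about is the indexing convention for $\pi_2$ and the vacuous case in the definition of $\chi_S^C$ when $S = \emptyset$.
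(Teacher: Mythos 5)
Your proof is correct and is essentially identical to the paper's: the paper also just instantiates the $\op{update}$ axiom of~\eqref{eqSplitterLabels} at $S=\emptyset$ (using that $\op{fil}_\emptyset\cdot\flat$ is constantly the empty bag), reads off the middle component $H\chi_\emptyset^C$, and post-composes with $H(=1)$ to obtain $H\chi_C$, presenting all of this as a single commuting diagram.
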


\begin{proof}
    The axiom for $\op{update}$ and definition of $\op{fil}_\emptyset$ makes
    the following diagram commute:
    \[
    \begin{mytikzcd}[sep=1.4cm,baseline=(bot.base)]
    HX
    \arrow[
        rounded corners,
        to path={
            -- ([yshift=-5mm]\tikztostart -| \tikztotarget) \tikztonodes
            -- (\tikztotarget)
        },
    ]{dr}[pos=0.4,sloped,below]{\fpair{w(\emptyset),H\chi_\emptyset^C,
    w(C\setminus \emptyset)}}
    \arrow[
        rounded corners,
        to path={
            (\tikztostart.east)
            -- ([yshift=-4.5mm]\tikztostart -| \tikztotarget) \tikztonodes
            -- (\tikztotarget)
        },
    ]{drr}[pos=0.8,below]{H\chi_\emptyset^C}
    \arrow[
        rounded corners,
        to path={
            (\tikztostart.east)
            -- (\tikztostart -| \tikztotarget) \tikztonodes
            -- (\tikztotarget)
        },
    ]{drrr}[pos=0.8,below]{H\chi_C}
    \arrow[shiftarr={xshift=-20mm}]{d}[left]{\fpair{\emptyset!, w(C)}}
    \arrow{d}[left]{\fpair{\flat\cdot \op{fil}_\emptyset,w(C)}}
    \\
    |[alias=bot]|
    \Bagf A × W
    \arrow{r}{\op{update}}
    & W × H3 × W
    \arrow{r}{\pi_2}
    & H3
    \arrow{r}{H(=1)}
    & H2
    \end{mytikzcd}
    \tag*{\qedhere}
    \]
\end{proof}

\begin{assumption}\label{interface-linear}
  From now on, we assume that $H\colon \Set \to \Set$ is zippable and given together with
  a refinement interface such that $\op{init}$ and $\op{update}$ run
  in linear time, $H3$ is linearly ordered, and its elements can be
  compared in constant time.
\end{assumption}
\begin{remark}\label{rem:ram}
  We implicitly impose some standard assumptions regarding arithmetic
  on our computational model, namely that integers can be stored in
  atomic memory cells and the usual operations on them, e.g.~addition
  and comparison, run in constant time.
\end{remark}
\begin{example}
  \label{exampleTime}
  The refinement interfaces in Examples~\ref{examplePowerset}
  and~\ref{exampleH2} satisfy \autoref{interface-linear}:
  
  \begin{enumerate}
  \item
  For all examples using the $\op{val}$-function from \autoref{exampleH2},
first note that $\op{val}$ runs in linear time (with a constant factor of 3,
because $\op{val}$ essentially returns three copies of its
input). Hence $\op{update}$ runs in linear time if $\op{up}$ does.

\item For monoid-valued functors $G^{(-)}$ over an abelian group~$G$,
  for $\N^{(-)}$ and for $\Dist$, all the operations, including the
  summation $\groupsum e$ (see \autoref{exampleH2}.\ref{exampleH2.1}),
  run in time linear in the size of the input. If the elements
  $g\in G$ have a bounded representation, i.e.\ fit into boundedly
  many memory cells, then so do the elements of $G^{(2)}$; thus
  comparing elements of $g_1,g_2\in G^{(2)}$ can be performed in
  constant time. (By our global assumptions as per \autoref{rem:ram},
  this includes cases where~$G$ consists of integer or rational
  numbers.)
\item Given a polynomial functor $H_\Sigma$, we assume that operation
  symbols $\sigma\in \Sigma$ are encoded as integers.  As per
  \autoref{rem:ram}, we can then assume that operation symbols can be
  compared in constant time. If the signature has \emph{bounded
    arities} (i.e.\ there is a finite bound on the arity of all
  symbols in $\Sigma$), then the maximum arity of operation symbols
  present in a given coalgebra $\xi:X\to H_\Sigma X$ is bounded
  independently of~$\xi$, so the comparison of the generators in two
  shallow $\Sigma$-terms $t_1,t_2\in H_\Sigma 3$ runs in constant time
  as well.
\begin{enumerate}
\item The first parameter of $\op{init}$ is of type $H_\Sigma 1$ and
  can be encoded by an operation symbol~$\sigma$, i.e.~by an
  integer. Let $t\in H_\Sigma 2$ be fixed. Then we explicitly
  implement \vspace{2mm}
\[
\op{init}(\sigma, f) =\begin{cases}
    \sigma(\smash{\overbrace{1,\ldots,1}^{\mathclap{\text{arity
    $\sigma$ many}}}})
    & \text{if arity}(\sigma) = |f|
    \\
    t & \text{otherwise.}
\end{cases}
\]
Both the check and the construction of $\sigma(1,\ldots,1)$ run in
linear time in the size of $f \in \Bagf \N$, or in constant time in the
second case, since $t$ was fixed beforehand.

\item In $\op{up}(I, \sigma(b_1,\ldots,b_n))$, we cannot naively check
  all the $1 \in I,\ldots,n\in I$ queries, since this would lead to
  quadratic run-time. Instead, we precompute all results of possible
  queries:

\begin{algorithmic}[1]
    \State Define an array $\op{elem}$ with indices $1\ldots n$, where
    each cell stores a value in $2$.
    \State Initialize \op{elem} to $0$ everywhere.
    \State \textbf{for} $i\in I$ with $i\le n$ \textbf{do} $\op{elem}[i] \gets 1$.
    \State \textbf{return} $\sigma(b_1 + \op{elem}[1], \ldots, b_n +
    \op{elem}[n])$.
\end{algorithmic}
The running time of every line is bound by $|I|+n$.
\end{enumerate}
\end{enumerate}
\end{example}

\section{Efficient Partition Refinement}
\label{sec:efficient}

We now proceed to present \autoref{catPT} in a concrete form that is
parametric in a refinement interface for the coalgebraic type functor
$H$. We will prove the correctness in \autoref{sec:correct} and then
analyse the efficient run-time in \autoref{sec:time}.

We continue to work under \autoref{interface-linear}.

\subsection{The Concrete Algorithm}

In the actual implementation, we need to address edges
explicitly, and so we define the following maps:
\begin{definition}
  \label{defTypeGraph}
  Given a functor $H$, equipped with an encoding, and a map
  $\xi\colon X\to HX$, the set~$E$ of \emph{edges} is defined by
  \[
    E := \coprod_{x\in X} \coprod_{(a,y) \in A\times X} (\flat\cdot\xi (x))(a,y)
  \]
  where $(\flat\cdot\xi (x))(a,y)\in \N$ is considered as a finite
  ordinal number. The \emph{encoding} of $\xi$ is represented by two
  functions (implemented as arrays):
  \[
    \op{type} = \big(X\xrightarrow{\xi} HX \xrightarrow{H!} H1\big)
    \quad
    \text{and}
    \quad
    \op{graph}\colon E\longrightarrow X\times A\times X
    \ \text{with}\ 
    \inj_x(\inj_{a,y} n) \mapsto (x,a,y).
  \]
\end{definition}

\begin{remark}
  For the complexity result, we assume that the partitions
  $\unnicefrac{X}{P}$ and $\unnicefrac{X}{Q}$ are implemented in such
  a way that we can add and remove elements in constant time, remove
  and create blocks in constant time, and find the surrounding block
  of an element $x\in X$ or $y\in X$ in constant time.
  \begin{enumerate}
  \item One possible implementation is by doubly linked lists of the
    blocks, where each block is in turn encoded as a doubly linked
    list of its elements, and additionally every element $x\in X$,
    $y\in X$ holds a pointer to the corresponding list entry in the
    blocks containing them~\cite{PaigeTarjan87}.

  \item An alternative implementation is the \emph{refinable partition} data structure~\cite{ValmariF10},
  in which the partition is an array of elements and elements in the same block
  appear consecutively in the array. So a block of the partition consists of two
  indices, denoting the interval in the array of all elements.
  \end{enumerate}
  
  \noindent Both implementations require $\CO(n)$ space, where $n$ is
  the number of elements, but the linked list approach has a much
  higher constant factor due to the high number of pointers. Thus, the
  implementation of our algorithm~\cite{DMSW19} uses the refinable
  partition structure.
\end{remark}
\noindent The algorithm maintains the following mutable
data structures:
\begin{itemize}
\item An array $\op{toSub}\colon X\to \Bagf E$, mapping $x\in
  X$ to its outgoing edges ending in the currently processed subblock.
\item A pointer mapping edges to memory addresses: $\op{lastW}\colon E \to \N$.
  By this pointer, we achieve that edges with a common source node $x\in X$ and
  a common target block $C\in X/Q$ point to the same memory cell holding
  $w(C,\xi(x))$, as in the concrete \autoref{exPaigeTarjanSplit} above.
\item The actual store for weights $\op{deref}\colon \N\to W$, into which
  $\op{lastW}$ points.
\item For each block $B \in X/P$ a set of markings $\op{mark}_B \subseteq B×\N$, that
  collects those pairs consisting of a state $x\in B$ and the pointer to
  $w(C,\xi(x))$ in the store $\op{deref}$ of weights for those $x$ that have an outgoing
  edge to the current subblock $S$ during the splitting operation. Initially,
  $\op{mark}_B$ is empty for every newly created block $B$ and after each refinement step,
  $\op{mark}_B$ is emptied again.
\end{itemize}
\begin{notation}
  In the following we write $e = x\xrightarrow{a}y$ in lieu of
  $\op{graph}(e) = (x,a,y)$. We also overload notation and write the
  weight function from the refinement interface of $H$ in its
  uncurried form as $w\colon \Potf X \times HX \to W$.
\end{notation}
\begin{definition}[Invariants]
  Our correctness proof below establishes that the following
  properties hold before and after each call to our splitting routine;
  we  call them \emph{the invariants}:
\begin{enumerate}
\item\label{invariant_toSub}
  The array $\op{toSub}$ is empty, i.e.~for all $x\in X, \op{toSub}(x) = \emptyset$.

\item\label{invariant_lastW}
  The pointers in $\op{lastW}$ are the same for two edges if and only
  if their source nodes and target blocks agree: for every $e_1 =
  x_1\xrightarrow{a_1}y_1$ and $e_2 = x_2\xrightarrow{a_2}y_2$ we have
  \[
    \op{lastW}(e_1)
    = \op{lastW}(e_2)
    \quad \Longleftrightarrow\quad %
    x_1=x_2 \text{ and }[y_1]_{\kappa_Q} = [y_2]_{\kappa_Q}.
  \]

\item\label{invariant_deref} The pointers in $\op{lastW}$ point to the
  correct weights in the store of weights, i.e.~for every
  $e = x\xrightarrow{a} y$ in $E$ and
  $C := [y]_{\kappa_Q} \in \unnicefrac{X}{Q}$, we have
  \[
    w(C, \xi(x)) = \op{deref}\cdot \op{lastW}(e).
  \]

\item\label{invariant_Hchi} For every block $C\in X/Q$, The partition $X/P$ is stable
  w.r.t.~$\chi_C$, i.e.~for every $x_1,x_2\in B\in \unnicefrac{X}{P}$ and
  $C\in \unnicefrac{X}{Q}$, we have
  $(x_1,x_2) \in \ker (H\chi_C\cdot \xi)$, cf.~\autoref{D:stable}.
\end{enumerate}
\end{definition}

In the following code listings, we use square brackets for
array lookups and updates in order to emphasize they run in constant
time. We assume that the functions
$\op{graph}\colon E \to X \times A \times X$ and $\op{type}\colon X \to H1$ are
implemented as arrays.  In the initialization step, we precompute the additional
static array $\op{pred}\colon X\to \Potf E$,
\[
  \op{pred}(y) = \{ e\in E \mid e = x\xrightarrow{a} y \}
\]
which holds for every $y\in X$ the set of incoming edges. 

Sets and bags are implemented as lists. We only insert
elements into sets not yet containing them.

\begin{definition}
  We say that we \emph{group} (or \emph{split}) a finite set $Z$ by a map
  $f\colon Z \to Z'$ to indicate that we compute $[-]_f$, i.e.~we
  partition $Z$ according the values of its elements under~$f$.
\end{definition}
This is implemented by first sorting the elements $z\in Z$ by a binary encoding
of $f(z)$ using any $\CO(|Z|\cdot \log|Z|)$ sorting algorithm, and
then grouping elements with the same $f(z)$ into blocks. In order to
keep the overall complexity for the grouping operations low enough,
one needs to use a possible majority candidate during sorting,
following Valmari and
Franceschinis~\cite{ValmariF10}. 
\begin{figure}[t]
\textsc{Initialization}\hspace*{\fill}
\begin{algorithmic}[1]
    \For{$e\in E$, $e = x\xrightarrow{a} y$}
        \State add $e$ to $\op{toSub}[x]$ and $\op{pred}[y]$\label{ini2}
    \EndFor
    \For{$x\in X$}
        \State $p_X \gets$ new cell in $\op{deref}$ containing
        $\op{init}(\op{type}[x],
        \Bagf(\pi_2\cdot\op{graph})(\op{toSub}[x]))$\label{ini4}
        \State \textbf{for} $e\in \op{toSub}[x]$ \textbf{do}
        \label{algoLineInit} $\op{lastW}[e] = p_X$
        \State $\op{toSub}[x] := \emptyset$
        \label{algoLineInitEmptyset}
    \EndFor
    \State $\unnicefrac{X}{P} \gets $ group $X$ by $\op{type}\colon X \to H1$;
    $\unnicefrac{X}{Q} \gets \{ X\}$.
\end{algorithmic}
\vspace{-1mm}
\caption{The initialization procedure.}
\vspace{-3mm}
\label{figAlgoInit}
\end{figure}
The algorithm computing the initial partition is listed in
\autoref{figAlgoInit}. The first two lines initialize $\op{toSub}(x)$
to be the bag of all outgoing edges of $x$ and compute
$\op{pred}(y)$. Then the loop in lines 3--6 initializes the array
$\op{lastW}$, and finally the two partitions $X/P$ and $X/Q$ are
initialized in line 7.

\begin{lemma}\label{lemmaInit}
The initialization procedure runs in time $\CO(|E|+|X|\cdot\log|X|)$ and
the result satisfies the invariants.
\end{lemma}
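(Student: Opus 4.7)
My plan is to split the argument into one block for the running time and one block for the four invariants, in the order they appear in the definition.

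For the running time, I would treat the initialization as three consecutive phases (the edge-iteration loop that populates $\op{toSub}$ and $\op{pred}$; the state-iteration loop that creates $\op{deref}$ cells and fills $\op{lastW}$; and the final grouping of $X$ by $\op{type}$) and analyse each phase separately. The first two phases do $\CO(1)$ work per edge (plus $\CO(|X|)$ loop overhead in the second phase, with the call to $\op{init}$ absorbed into the per-edge cost by the linear-time assumption from \autoref{interface-linear}), summing to $\CO(|E|+|X|)$. The final grouping is a comparison sort of $X$ by a binary encoding of $H1$-values, which gives $\CO(|X|\cdot\log|X|)$. Summing these yields the claimed bound.

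For the invariants, my key observation is that immediately after initialization $\unnicefrac{X}{Q}=\{X\}$, so $[y]_{\kappa_Q}=X$ for every $y\in X$. This collapses Invariant~\ref{invariant_lastW} into the claim that $\op{lastW}$-values coincide exactly on edges sharing a source, which is immediate because every iteration of the second loop allocates a fresh $\op{deref}$-cell $p_X$ and hands it only to the outgoing edges of $x$. Invariant~\ref{invariant_toSub} follows from the explicit reset $\op{toSub}[x]:=\emptyset$, and Invariant~\ref{invariant_Hchi} follows because $\chi_X$ factors through $1$, so $H\chi_X\cdot\xi$ factors through $\op{type}$, while $\unnicefrac{X}{P}$ is constructed by grouping under $\op{type}$.

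The only verification I expect to require real care, and therefore the main obstacle, is Invariant~\ref{invariant_deref}, which reduces to $w(X)(\xi(x))=\op{init}(\op{type}[x],\Bagf(\pi_2\cdot\op{graph})(\op{toSub}[x]))$. Here I would first rewrite the bag $\Bagf(\pi_2\cdot\op{graph})(\op{toSub}[x])$ as $\op{fil}_X(\flat(\xi(x)))$, using the definition of $E$ as the disjoint union indexed by triples $(x,a,y)$ with multiplicity $(\flat\cdot\xi(x))(a,y)$ together with $\op{fil}_X=\Bagf\pi_1$ (see \autoref{R:filS}). Combined with $\op{type}[x]=H!(\xi(x))$, the left-hand axiom of~\eqref{eqSplitterLabels} then delivers the desired identity with $w(X)(\xi(x))$.
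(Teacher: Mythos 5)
Your proposal is correct and follows essentially the same route as the paper's proof: the per-phase time analysis (with $\op{init}$'s linear-time assumption covering the second loop and the grouping contributing the $|X|\log|X|$ term), the collapse of Invariant~(2) to source-equality via $X/Q=\{X\}$ and fresh-cell allocation, the reduction of Invariant~(4) to $\ker(H\chi_X\cdot\xi)=\ker(H!\cdot\xi)$, and the verification of Invariant~(3) by rewriting the bag of labels as $\Bagf\pi_1\cdot\flat\cdot\xi(x)$ and applying the $\op{init}$ axiom of~\eqref{eqSplitterLabels} all match the paper's argument.
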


\begin{proof}
  The grouping in line~7 takes $\CO(|X|\cdot \log|X|)$ time. The first
  loop takes $\CO(|E|)$ steps, and the second one takes $\CO(|X| + |E|)$ time in total over all $x\in
  X$ since $\op{init}$ is assumed to run in linear time. For the invariants:
    \begin{enumerate}
    \item By line~\ref{algoLineInitEmptyset}.

    \item Let $e_1 = x_1\xrightarrow{a_1}y_1$ and
      $e_2 = x_2\xrightarrow{a_2}y_2$. Since
      $[y_1]_{\kappa_Q} = [y_2]_{\kappa_Q}$ holds for all $y_1$,
      $y_2$, it suffices to show that $\op{lastW}(e_1) =
      \op{lastW}(e_2)$ iff $x_1=x_2$. This holds after the procedure
      because $p_X$ in line~\ref{ini4} is the address of a new memory cell,
      whence $\op{lastW}(e_1)$ and $\op{lastW}(e_2)$ are equal iff
      they are assigned their value in the same \textbf{for} loop in line~\ref{algoLineInit},
      equivalently, if $e_1, e_2 \in \op{toSub}(x)$ for some
      $x$. Equivalently, $x_1 = x = x_2$ because after line~\ref{ini2}, $\op{toSub}(x)$ is the bag of
      all outgoing edges of $x$. 

    \item First, we have for every $x \in X$ that
      \begin{align*}
        \Bagf(\pi_2\cdot\op{graph})(\op{toSub}(x))
        &= \Bagf(\pi_2\cdot\op{graph})(\{e \mid e \in E, e = x
        \xrightarrow{a} y\}
        & \text{(line \ref{ini2})}
        \\
        &= \{a \mid \text{$x \xrightarrow{a} y$ in
          $E$}\}
        & \text{(Def.~\ref{defTypeGraph})}
        \\
        &= \Bagf\pi_1\big(\{(a,y) \mid \text{$x \xrightarrow{a} y$ in
          $E$}\}\big)
        \\
        &= \Bagf\pi_1 \cdot \flat \cdot \xi(x)
        & \text{(Def.~\ref{defTypeGraph})}
      \end{align*}
      where the comprehensions are read as \emph{multiset} comprehensions,
      i.e.~multiple edges with the same label~$a$ generate multiple occurrences
      of~$a$.
      Then we use this in the second step below to see that we have for
      every $e = x \xrightarrow{a}y$ in $E$: 
      \begin{align*}
        \op{deref}\cdot\op{lastW}(e)
        &= \op{init}(\op{type}(x),
        \Bagf(\pi_2\cdot\op{graph})(\op{toSub}(x)))
        & \text{(lines \ref{ini4} and \ref{algoLineInit})}
        \\
        &= \op{init}(\op{type}(x), \Bagf\pi_1\cdot \flat\cdot\xi(x))
        \\
        &=\op{init}(H! \cdot \xi(x), \Bagf\pi_1\cdot \flat\cdot\xi(x))
        & \text{(Def.~of \op{type})}
        \\
        &= w(X, \xi(x))
        & \text{by~\eqref{eqSplitterLabels},}
      \end{align*}
      and we are done since $X/Q = \{X\}$.

    \item Since $\ker(H\chi_X\cdot\xi) = \ker(H!\cdot\xi)$, this is
    just the way $\unnicefrac{X}{P}$ is constructed.\qedhere
    \qedhere
    \end{enumerate}
\end{proof}

\noindent
The algorithm for a single refinement step is listed in \autoref{figSplitAlgo}.
It receives as input the the partition $X/P$, where the kernel $P$ is
stable w.r.t.~$\kappa_Q\colon X \epito X/Q$, and a subblock $S
\subseteq X$ contained in $C \in X/Q$. Its task is to split
the blocks in $X/P$ in such a way that $P$ becomes stable
w.r.t.~$\fpair{\kappa_Q,\chi_S^C}$, i.e.~stable
w.r.t.~$\kappa_Q$ after $C$ has been split into $S$ and $C/S$ in
$X/Q$.
\begin{figure}[t]
\begin{subfigure}[t]{.45\textwidth}
\textsc{Split}$(\unnicefrac{X}{P}, S\subseteq X)$
\begin{algorithmic}[1]
\State{$\op{M} \gets \emptyset \subseteq \unnicefrac{X}{P}×H3$}
\noshowkeys\label{algoFirstLine}
\For{$y\in S, e\in \op{pred}[y]$} \label{algoLineToSubLoop}
    \State $x\xrightarrow{a}y \gets e$
    \State $B \gets$ block with $x\in B\in \unnicefrac{X}{P}$
    \If{$\op{mark}_B$ is empty}
        \State $w_C^x \gets \op{deref}\cdot\op{lastW}[e]$
        \State $v_\emptyset \gets
        \pi_2\cdot\op{update}(\emptyset,w_C^x)$
        \State add $(B,v_\emptyset)$ to $\op{M}$\label{algo8}
    \EndIf
    \If{$\op{toSub}[x] = \emptyset$}
        \State add $(x,\op{lastW}[e])$ to $\op{mark}_B$\label{algo10}
    \EndIf
    \State add $e$ to $\op{toSub}[x]$\label{algo11}
\EndFor
\algstore{splitAlgo}
\end{algorithmic}
\end{subfigure}\hfill%
\begin{subfigure}[t]{.52\textwidth}
\begin{algorithmic}[1]
\algrestore{splitAlgo}
\For{$(B,v_\emptyset) \in \op{M}$}
    \noshowkeys\label{algoSplitLoop}
    \State $B_{\neq\emptyset} \gets \emptyset \subseteq X × H3$
    \For{$(x,p_{C})$ in $\op{mark}_B$}
        \noshowkeys\label{algoLineMarkedX}
        \State $\ell \gets \Bagf(\pi_2\cdot \op{graph})(\op{toSub}[x])$
            \noshowkeys\label{algoLineLabels}
        \State $(w_S^x, v^x,w_{C\setminus S}^x) \gets
            \op{update}(\ell, \op{deref}[p_{C}]\mathrlap{)}$%
\noshowkeys\label{algoLineUpdate}
        \State $\op{deref}[p_{C}] \gets w^x_{C\setminus S}$
        \State $p_S\gets $ new cell containing $w_S^x$
            \noshowkeys\label{algoLineNewCell}
        \State \textbf{for} $e \in \op{toSub}[x]$ \textbf{do} $\op{lastW}[e] \gets p_S$
            \noshowkeys\label{algoLineSetLastW}
        \State $\op{toSub}[x] \gets \emptyset$
        \If{$v^x \neq v_\emptyset$} \noshowkeys\label{algoLineIfVXVempty}
            \State remove $x$ from $B$
            \State insert $(x,v^x)$ into $B_{\neq\emptyset}$
            \noshowkeys\label{algoLineInsertVX}
        \EndIf
    \EndFor
    \State $\op{mark}_B := \emptyset$
    \label{algoLastLine}
    \State \label{algoLineSplit}
        \(
        \begin{array}[t]{@{}l}
        B_1×\{v_1\},\ldots,B_\ell×\{v_\ell\} \gets
            \\
            \quad\text{group }B_{\neq\emptyset}\text{ by }\pi_2\colon X×H3\to H3
        \end{array}
        \)
    \State insert $B_1,\ldots,B_\ell \gets$ into $\unnicefrac{X}{P}$
\EndFor
\end{algorithmic}
\end{subfigure} \\[2mm]
\begin{subfigure}[t]{.45\textwidth}
    \caption{Collecting predecessor blocks}
    \label{algoPredCollecting}
\end{subfigure}\hfill%
\begin{subfigure}[t]{.52\textwidth}
    \caption{Splitting predecessor blocks} \noshowkeys\label{algoPredSplitting}
\end{subfigure}
\caption{Refining $\unnicefrac{X}{P}$ for $S\subseteq C$, $C\in X/Q$,
  i.e.~making it stable w.r.t.~$ \chi_S^C\colon X\to 3$.}
\label{figSplitAlgo}
\end{figure}

In the first part, all blocks $B\in \unnicefrac{X}{P}$
that have an edge into $S$ are collected, together with
$v_\emptyset= H\chi_S^C\cdot\xi(x) \in H3$ for all
$x\in B$ that have no edge into $S$. For each $x\in X$, $\op{toSub}[x]$
collects the edges from $x$ into $S$. The markings $\op{mark}_B$ list
those elements $x\in B$ that have an edge into $S$, together with one
of the pointers in $\op{lastW}$ that point to $w(C,x)$ in the store
$\op{deref}$ of weights. 

In the second part, each block $B$ with an edge into $S$ is split
by $H\chi_S^C\cdot \xi$, cf.~\autoref{exPaigeTarjanSplit}. First, for
every $(x,p_C)\in \op{mark}_B$, we compute $w(S,x)$, $v^x = H\chi_S^C\cdot\xi(x)$,
and $w(C\setminus S,x)$ using $\op{update}$. Then, the weight of all edges $x
\to C\setminus S$ is updated to $w(C\setminus S, x)$ and the weight of all edges
$x\to S$ is stored in a new cell containing $w(S,x)$. For all unmarked $x\in B$,
we know that $H\chi_S^C\cdot\xi(x) = v_\emptyset$; so all $x$ with
$v^x=v_\emptyset$ stay in $B$. All other $x\in B$ are removed from
$B$ and collected in $B_{\neq\emptyset}$ together with their value
$v^x = H\chi_S^C \cdot \xi(x)$, and then we group $B_{\neq\emptyset}$
by these values to obtain the new blocks $B_1, \ldots, B_\ell$ that we
add to $X/P$. 

\medskip Now we are ready to combine $\textsc{Split}$ from
\autoref{figSplitAlgo} with what we saw in Sections~\ref{sec:cat}
and~\ref{sec:opti} and instantiate \autoref{catPT} with the
$\op{select}$ routine from
{\autoref{exampleSelects}}.\ref{exampleSelectsChi}, i.e.~we have
\begin{equation}
  \label{eq:qi++}
  q_{i+1} = \op{select}(\kappa_{P_i},\kappa_{Q_i}) \cdot \kappa_{P_i}
  = \chi_{S_{i}}^{C_{i}},
\end{equation}
where $2\cdot |S_i|\le |C_i|$, $S_i,C_i\subseteq X$ in line \ref{step:S}, and
we replace line \ref{step:P} of the algorithm by
\begin{equation}
  \unnicefrac{X}{P_{i+1}} = \textsc{Split}(\unnicefrac{X}{P_{i}}, S_{i}),
  \tag{\ref{kernelOptimization}'}\label{eq:keropti2}
\end{equation}
where the indices are merely intended to facilitate the
analysis. 

This yields the following more concretion of
\autoref{catPT}:
\begin{algorithm} \label{algPTfinal}
  Given the \emph{encoding} of an $H$-coalgebra $\xi\colon X\to HX$ as
  input, do the following:

  \begin{itemize}
  \item Run \textsc{Initialization} (\autoref{figAlgoInit}).

  \item Iterate the following steps while $X/P$ is properly finer than $X/Q$:
  \begin{enumerate}[align=left]
  \item Pick a subblock $S$ in $X/P$, that has at most half of the size of its
    compound block $C\in X/Q$, i.e.~$S\subseteq C$ and $2\cdot |S| \le |C|$.
  \item Split $C$ into $S$ and $C\setminus S$ in $X/Q$.
  \item Call \textsc{Split}$(X/P, S)$ (\autoref{figSplitAlgo}).
  \end{enumerate}
\end{itemize}
\end{algorithm}

\subsection{Correctness}
\label{sec:correct}

\begin{lemma} \label{p1Properties}
Assume that the invariants hold. Then after part (a) of \autoref{figSplitAlgo}, for the given
$S\subseteq C \in \unnicefrac{X}{Q}$ we have:
\begin{enumerate}
    \item \label{p1PropToSub}
        For all $x\in X$: $\op{toSub}(x) = \{ e\in E\mid e =
        x\xrightarrow{a}y, y\in S\}$
    \item \label{p1PropFilS}
        For all $x\in X$: $\op{fil}_S\cdot \flat\cdot \xi(x) =
        \Bagf(\pi_2\cdot \op{graph})(\op{toSub}(x))(a)$.

    \item \label{p1PropBS}
        $\op{M}\colon \unnicefrac{X}{P} \partialto H3$ is a partial map
        defined by
        \[
          M(B) =  H\chi_\emptyset^C\cdot \xi(x),
          \qquad
          \text{if $x\in B$ and there exists an
            $e = x\xrightarrow{a} y$ with $y\in S$},
        \]
        and $M(B)$ is undefined otherwise.   
    \item \label{p1PropMark} For each $B\in\unnicefrac{X}{P}$, we have a partial map
      $\op{mark}_B\colon B\partialto \N$ defined by
      \[
        \op{mark}_B(x) =  \op{lastW}(e) \qquad
        \text{if there exists some $e = x\xrightarrow{a}y$ with $y\in
          S$,}
      \]
      and $\op{mark}_B$ is undefined otherwise.
      
    \item \label{p1PropDeref}
        If defined on $x$, $\op{deref}\cdot\op{mark}_B(x) = w(C,\xi(x))$.

    \item \label{p1PropUnmarked}
        If $\op{mark}_B$ is undefined on $x$, then
        $\op{fil}_S(\flat\cdot \xi(x)) = \emptyset$ and 
        $H\chi^C_S \cdot \xi(x) = H\chi^C_\emptyset \cdot \xi(x)$.

\end{enumerate}
\end{lemma}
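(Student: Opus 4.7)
The plan is to prove each of the six properties in turn by tracing the loop in \autoref{algoPredCollecting} and combining this information with the four invariants (plus the axioms of the refinement interface in~\eqref{eqSplitterLabels}). Throughout, I will use that $S \subseteq C \in X/Q$, so that for every edge $e = x\xrightarrow{a} y$ with $y \in S$ we have $[y]_{\kappa_Q} = C$.

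First, properties~\ref{p1PropToSub} and~\ref{p1PropFilS} follow by a straightforward inspection of the loop. By invariant~\ref{invariant_toSub}, each $\op{toSub}[x]$ is empty before the loop, and line~\ref{algo11} adds an edge $e = x\xrightarrow{a}y$ to $\op{toSub}[x]$ exactly once, for each $y \in S$ and each predecessor $e \in \op{pred}[y]$. This gives property~\ref{p1PropToSub} directly, and property~\ref{p1PropFilS} then follows by comparing the definition of $\op{fil}_S \cdot \flat \cdot \xi(x)$ with the bag of labels extracted by $\Bagf(\pi_2\cdot\op{graph})$ from $\op{toSub}[x]$.

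For properties~\ref{p1PropMark} and~\ref{p1PropDeref}, the key observation is that line~\ref{algo10} adds a pair $(x,\op{lastW}[e])$ to $\op{mark}_B$ only on the first edge $e = x \xrightarrow{a} y$ with $y \in S$ that the loop encounters for the state $x$. I will argue that this is well-defined in $x$ by appealing to invariant~\ref{invariant_lastW}: since all such edges satisfy $[y]_{\kappa_Q} = C$, any two of them have the same $\op{lastW}$ value. Then property~\ref{p1PropDeref} is immediate from invariant~\ref{invariant_deref}. For property~\ref{p1PropBS}, I trace that $(B, v_\emptyset)$ is added to $M$ in line~\ref{algo8} precisely when $B$ first receives a marking; I then unfold $v_\emptyset = \pi_2 \cdot \op{update}(\emptyset, w(C, \xi(x)))$ and apply the axiom~\eqref{eqSplitterLabels} of \op{update} in the degenerate case $S := \emptyset$, which yields $\pi_2 \cdot \op{update}(\emptyset, w(C, \xi(x))) = H\chi_\emptyset^C \cdot \xi(x)$. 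The subtlety I expect to spend most effort on is well-definedness across different witnesses $x \in B$: if $x_1, x_2 \in B$ both have edges into $S$, then the two candidate values $H\chi_\emptyset^C \cdot \xi(x_i)$ must agree. Since $\chi_\emptyset^C$ factors as $\iota \cdot \chi_C$ through the inclusion $\iota \colon 2 \hookrightarrow 3$, this follows from invariant~\ref{invariant_Hchi}, i.e.~stability of $P$ w.r.t.~$\chi_C$.

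Finally, for property~\ref{p1PropUnmarked}, if $\op{mark}_B$ is undefined on $x$, then $x$ never occurred as a source in the loop, so there is no $e = x \xrightarrow{a} y$ with $y \in S$. Hence $\op{fil}_S \cdot \flat \cdot \xi(x) = \emptyset$. To conclude $H\chi_S^C \cdot \xi(x) = H\chi_\emptyset^C \cdot \xi(x)$, I will apply the \op{update} axiom~\eqref{eqSplitterLabels} twice to $\xi(x)$ -- once for $S$ and once for $\emptyset$ -- and compare the middle components of the resulting triples, both of which are equal to $\op{update}(\emptyset, w(C, \xi(x)))$ in view of $\op{fil}_S \cdot \flat \cdot \xi(x) = \emptyset = \op{fil}_\emptyset \cdot \flat \cdot \xi(x)$. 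The main conceptual obstacle across the whole proof is indeed the well-definedness arguments for $M$ and $\op{mark}_B$, which are where the previously established invariants (stability and the bijection between $\op{lastW}$-pointers and source/target-block pairs) do the essential work; everything else is bookkeeping on the loop.
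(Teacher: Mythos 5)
Your proposal is correct and follows essentially the same route as the paper's proof: tracing the loop for items (1)–(2), the first-encounter argument plus Invariants (2) and (3) for $\op{mark}_B$ and $\op{deref}$, unfolding $\op{update}(\emptyset,-)$ via the axiom~\eqref{eqSplitterLabels} together with Invariant~(4) and $\ker(H\chi_\emptyset^C\cdot\xi)=\ker(H\chi_C\cdot\xi)$ for well-definedness of $M$, and the double application of the $\op{update}$ axiom for item (6). Your explicit justification that $\chi_\emptyset^C$ factors through $\chi_C$ via the monic inclusion $2\hookrightarrow 3$ is a welcome detail the paper leaves implicit.
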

\begin{proof}
\begin{enumerate}
\item By lines~\ref{algoLineToSubLoop} and~\ref{algo11},
    \begin{align*}
        \op{toSub}(x) &= \{
        e\in \op{pred}(y) \mid y\in S, e = x\xrightarrow{a}y \}
        \\ &= \{
        e\in E \mid y\in S, e = x\xrightarrow{a}y
        \}.
    \end{align*}

\item \(
\begin{aligned}[t]
    \op{fil}_S(\flat\cdot \xi(x))(a)
    &= \sum_{y\in S} (\flat\cdot\xi(x))(a,y)
    = \sum_{y\in S} |\{e\in E\mid e = x\xrightarrow{a} y\}|
    \\ &
    = |\{e\in E\mid e = x\xrightarrow{a} y, y\in S\}|
    = \Bagf(\pi_2\cdot \op{graph})(\op{toSub}(x)).
\end{aligned}\)

\item First, $\op{M}$ is a partial map since for every block $B$, a
  pair $(B,v)$ is added to $\op{M}$ at most once in line~\ref{algo8} because 
  when any node $x$ from $B$ occurs in line 3 for the first time we
  have that $\op{mark}_B$, $\op{toSub}(x)$ are both empty, and they
  are both nonempty after line~\ref{algo11}.
  By construction $\op{M}$ is defined precisely for those blocks
  $B$ which have at least one element~$x$ with an edge
  $e = x \xrightarrow{a} y$ to $S$. Let
  $C = [y]_{\kappa_Q} \in \unnicefrac{X}{Q}$. Then using
  Invariant~\ref{invariant_deref} in the second step we see that 
  \begin{align*}
    M(B)
    &= \pi_2\cdot \op{update}(\emptyset, \op{deref}\cdot  \op{lastW}(e))
    \\
    &= \pi_2\cdot \op{update}(\emptyset, w(C,\xi(x)))
    \\
    &= \pi_2\cdot \op{update}(\op{fil}_\emptyset(\flat\cdot\xi(x)),
    w(C,\xi(x))) \\
    &\overset{\mathclap{\eqref{eqSplitterLabels}}}{=}
        H\chi_\emptyset^C\cdot\xi(x)
\end{align*}
for the $e=x\xrightarrow{a} y$, $x\in B$, $y\in S$ that occurs first in the loop. Since
$\ker(H\chi_\emptyset^C\cdot\xi)= \ker(H\chi_C\cdot\xi)$,
Invariant~\ref{invariant_Hchi} proves well-definedness.

\item This is precisely how $\op{mark}_B$ has been constructed; that
  it is a partial map follows since for every $x \in X$ a pair $(x,
  n)$ is added to $\op{mark}_B$ at most once in line~\ref{algo10} if
  $\op{toSub}(x)$ is nonempty, and immediately after that
  $\op{toSub}(x)$ becomes nonempty in line~\ref{algo11}.
  Well-definedness follows from Invariant~\ref{invariant_lastW}. Note
  that for every $B$ on which $\op{M}$ is undefined, the list
  $\op{mark}_B$ is empty.

\item If $\op{mark}_B(x) = p_C$ is defined, then $p_C = \op{lastW}(e)$
for some $e\in \op{toSub}(x)$, and so $\op{deref}(p_C) =
\op{deref}\cdot \op{lastW}(e) = w(C,\xi(x))$ by Invariant~\ref{invariant_deref}.

\item If $x\in B$ is not marked in $B$, then $x$ was never mentioned in line 3. Hence, $\op{toSub}(x) =
\emptyset$, and we have
\[
    \op{fil}_S(\flat\cdot \xi(x))(a)
    = |\{e\in E\mid e = x\xrightarrow{a} y, y\in S\}|
    = |\op{toSub}(x)|
    = 0.
\]
Furthermore we have
\begin{align*}
  H\chi_S^C\cdot \xi(x)
  =&\ %
     \pi_2\cdot \op{update}(\op{fil}_S\cdot \flat\cdot\xi(x), w(C,\xi(x)))
  &\text{by \eqref{eqSplitterLabels}}
  \\=&\ %
       \pi_2\cdot \op{update}(\emptyset,  w(C,\xi(x)))
  &\text{as just shown}
  \\=&\ %
       \pi_2\cdot \op{update}(\op{fil}_\emptyset\cdot \flat\cdot\xi(x), w(C,\xi(x)))
  &\text{by definition}
  \\=&\ %
       H\chi_\emptyset^C\cdot \xi(x)
  &\text{by \eqref{eqSplitterLabels}}
  &
    \tag*{\qedhere}
\end{align*}
\end{enumerate}
\end{proof}

\begin{samepage}
\begin{theorem}[Correctness] \label{thmSplitCorrect}
  If the invariants hold before invoking \textsc{Split}, then
  \nopagebreak
  \begin{enumerate}[beginpenalty=9999999]
  \item[(i)]\label{thmSplitCorrect:1} \textsc{Split} returns the correct partitions, that is,
    \textsc{Split}$(\unnicefrac{X}{P}, \unnicefrac{X}{Q} ,S\subseteq
    C\in \unnicefrac{X}{Q})$ refines $\unnicefrac{X}{P}$ by
    $\smash{H\chi_S^C\cdot \xi\colon X\to H3}$, i.e.~all blocks in
    $X/P$ are split by $H\chi_S^C\cdot \xi$ so that $P$ is replaced by 
    $P \cap \ker (H\chi_S^C\cdot \xi)$,
      and
    \item[(ii)] upon termination of \textsc{Split} the invariants hold again.
  \end{enumerate}
\end{theorem}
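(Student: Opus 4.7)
The plan is to leverage \autoref{p1Properties} to obtain a clean description of the state after part~(a) of \textsc{Split}, and then analyze part~(b) block by block. The proof splits naturally into (i) verifying that the new partition is $\ker\big(H\chi^C_S\cdot\xi\big) \cap \ker\kappa_P$, and (ii) re-establishing each of the four invariants.

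For (i), fix $B \in X/P$ and distinguish two cases. If $B \notin \op{M}$, then no element of $B$ has an edge into $S$; by \autoref{p1Properties}\ref{p1PropUnmarked} combined with the incoming Invariant~\ref{invariant_Hchi} (stability of $P$ w.r.t.\ $\chi_C$), all $x \in B$ share a common value of $H\chi^C_S \cdot \xi$, so $B$ needs no splitting and indeed the algorithm leaves it untouched. If $B \in \op{M}$, the marked elements in $\op{mark}_B$ are precisely those $x \in B$ with at least one edge into $S$ (\autoref{p1Properties}\ref{p1PropMark}). For each such $x$, the $\op{update}$ axiom~\eqref{eqSplitterLabels} together with \autoref{p1Properties}\ref{p1PropFilS} and \ref{p1PropDeref} ensures that the middle component $v^x$ returned by $\op{update}$ equals $H\chi^C_S \cdot \xi(x)$. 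Unmarked $x \in B$ satisfy $H\chi^C_S \cdot \xi(x) = H\chi^C_\emptyset \cdot \xi(x) = v_\emptyset$ by \autoref{p1Properties}\ref{p1PropBS} and \ref{p1PropUnmarked}, with well-definedness of $v_\emptyset$ on $B$ again coming from Invariant~\ref{invariant_Hchi}. Hence keeping the unmarked elements and the marked ones with $v^x = v_\emptyset$ in the residual $B$, and grouping the remaining marked elements by $v^x$, yields exactly the desired quotient of $B$ by $H\chi^C_S \cdot \xi$.

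For the invariants: Invariant~\ref{invariant_toSub} holds because the only $x$ with nonempty $\op{toSub}$ after part~(a) are the marked ones (\autoref{p1Properties}\ref{p1PropToSub}), and each such $\op{toSub}[x]$ is reset to $\emptyset$ inside the loop of part~(b). Invariant~\ref{invariant_lastW} is re-established since, under the updated partition $X/Q'$ in which $C$ is replaced by $S$ and $C\setminus S$, each marked $x$ obtains a \emph{freshly allocated} cell $p_S$ for its edges into $S$ while retaining the old cell (now repurposed to store $w(C\setminus S, \xi(x))$) for its edges into $C\setminus S$; edges from unmarked sources or into unrelated blocks of $X/Q'$ are untouched. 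Invariant~\ref{invariant_deref} follows immediately from axiom~\eqref{eqSplitterLabels}, whose first and third components equal $w(S,\xi(x))$ and $w(C\setminus S, \xi(x))$ respectively, matching the two assignments to $\op{deref}$.

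The main obstacle is Invariant~\ref{invariant_Hchi}, which must be re-established for \emph{every} block of the new $X/Q'$. For blocks $D \in X/Q' \setminus \{S, C\setminus S\}$ we have $D \in X/Q$, so the old stability of $X/P$ w.r.t.\ $\chi_D$ transfers to the finer $X/P'$. For $D \in \{S, C\setminus S\}$, note that $\chi_S = ({=}2) \circ \chi^C_S$ and $\chi_{C\setminus S} = ({=}1) \circ \chi^C_S$, so $H\chi_S$ and $H\chi_{C\setminus S}$ factor through $H\chi^C_S$; since by~(i) the new partition $X/P'$ is stable w.r.t.\ $\chi^C_S$, it is automatically stable w.r.t.\ both $\chi_S$ and $\chi_{C\setminus S}$. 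This completes the verification.
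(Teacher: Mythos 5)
Your proposal is correct and follows essentially the same route as the paper's proof: the same case distinction on whether $B\in\op{M}$, the same use of \autoref{p1Properties} together with the $\op{update}$ axiom~\eqref{eqSplitterLabels} to identify $v^x$ with $H\chi^C_S\cdot\xi(x)$, and the same factorizations $\chi_S = ({=}2)\cdot\chi^C_S$ and $\chi_{C\setminus S} = ({=}1)\cdot\chi^C_S$ to re-establish Invariant~\ref{invariant_Hchi} for the two new blocks of $X/Q$. No gaps to report.
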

\end{samepage}
\begin{proof}
  (i)~We show that every block $B\in X/P$ is split by $H\chi_S^C\cdot
  \xi$, by case distinction whether $B\in \op{M}$:
  \begin{itemize}
  \item If $B$ is not in $\op{M}$, then $\op{mark}_B$ is undefined everywhere,
    so for all $x\in B$, we have by \autoref{p1Properties}\ref{p1PropUnmarked}
    that $H\chi_S^C\cdot \xi(x) = H\chi_\emptyset^C\cdot \xi(x)$, which is the
    same for all $x\in B$ by invariant \ref{invariant_Hchi}. Hence every $B$ not
    in $\op{M}$ stays unchanged when splitting by $H\chi_S^C\cdot\xi$.
  \item If $(B,v_\emptyset) \in \op{M}$ (in line \ref{algoSplitLoop}), then
    we show that
    \[
    B_{\neq\emptyset} = \{
    (x,H\chi_S^C\cdot \xi(x)) \mid x\in B, H\chi_S^C\cdot \xi(x) \neq v_\emptyset
    \}.
    \]
    This is sufficient for correctness, since
    $B_{\neq\emptyset}$ is split w.r.t.~the second component (line
    \ref{algoLineSplit}) creating new blocks, and since all elements $x\in B$
    not mentioned in $B_{\neq\emptyset}$ stay in $B$.

    For this characterization of $B_{\neq\emptyset}$, suppose first that $x\in
    B$ is marked, i.e.~we have $p_C = \op{mark}_B(x)$ and lines
    \ref{algoLineMarkedX}--\ref{algoLastLine} are executed. Then we have
    \begin{align*}
          (w^x_S, v^x, w^x_{C\setminus S}) =\ &  \op{update}(\Bagf(\pi_2\cdot \op{graph})(\op{toSub}[x]),
          \op{deref}[p_C])
          && \text{by line \ref{algoLineUpdate}}
          \\ =\ &
          \op{update}(\op{fil}_S\cdot \flat\cdot \xi(x), w(C,\xi(x)))
          && \text{by \autoref{p1Properties} \ref{p1PropFilS},
          \ref{p1PropDeref}}
          \\ =\ &
          (w(S,\xi(x)), H\chi_S^C\cdot \xi(x), w(C\setminus S, \xi(x)))
          && \text{by \eqref{eqSplitterLabels}}.
    \end{align*}
    Thus, if $H\chi_S^C\cdot \xi(x) = v_\emptyset$, $(x,v^x)$ is not added to
    $B_{\neq\emptyset}$ (line~\ref{algoLineIfVXVempty}), and otherwise $x$ is
    correctly removed from $B$ and added to $B_{\neq\emptyset}$ in
    line~\ref{algoLastLine}.

    Now suppose that $x\in B$ is not marked, then $x$ is not added to
    $B_{\neq\emptyset}$ and so we have to show that $v_\emptyset =
    H\chi_S^C\cdot\xi(x)$. We have $v_\emptyset = H\chi_\emptyset^C\cdot\xi(x')$
    for some $x'\in B$ by \autoref{p1Properties}\ref{p1PropBS} and
    $H\chi_S^C\cdot\xi(x) = H\chi_\emptyset^C\cdot\xi(x)$
    by \autoref{p1Properties}\ref{p1PropUnmarked}. Since $\ker(H\chi^C_\emptyset
    \cdot \xi) = \ker(H\chi_C \cdot \xi)$ and $(x,x')\in \ker(H\chi_C\cdot \xi)$
    by invariant \ref{invariant_Hchi}, we have $H\chi_\emptyset^C\cdot
    \xi(x') = H\chi_\emptyset^C\cdot \xi(x)$. Thus,
    \[
      v_\emptyset
      =
      H\chi_\emptyset^C \cdot \xi(x')
      =
      H\chi_\emptyset^C \cdot \xi(x)
      =
      H\chi_S^C \cdot \xi(x).
    \]
  \end{itemize}

  \medskip\noindent
  (ii)~We denote the former values of
  $P,Q, \op{deref}, \op{lastW}$ using the subscript $\old$.
    \begin{enumerate}
    \item It is easy to see that $\op{toSub}(x)$ becomes nonempty in
      line 11 only for marked $x$, and for those $x$ it is emptied
      again in line 20.

    \item Take $e_1 = x_1\xrightarrow{a_1}y_1$,  $e_2 =
    x_2\xrightarrow{a_2}y_2$.
    \begin{itemize}[align=left]
    \item[$\Rightarrow$] Assume $\op{lastW}(e_1) = \op{lastW}(e_2)$.
    If $\op{lastW}(e_1) = p_S$ is assigned in line
    \ref{algoLineSetLastW} for some marked $x$, then
    $\op{lastW}(e_2)$ must be assigned to $p_S$ in the same
    \textbf{for} loop since $p_S$ is the address of a new memory cell
    (line~\ref{algoLineNewCell}). Hence, $e_1, e_2 \in \op{toSub}(x)$,
    which implies that $x_1=x=x_2$ and
    $y_1,y_2\in S\in \unnicefrac{X}{Q}$.
    Otherwise, $\op{lastW}(e_1) =
    \op{lastW}_\old(e_1)$ and so $\op{lastW}_\old(e_1)=
    \op{lastW}_\old(e_2)$ and the desired property follows from the
    invariant for $\op{lastW}_\old$.

  \item[$\Leftarrow$] If $x_1=x_2$ and
    $y_1,y_2\in D\in \unnicefrac{X}{Q}$, where recall that
    $X/Q = X/Q_\old\setminus\{C\}\cup \{S,C\setminus S\}$, then we
    perform a case distinction on $D$. If $D = S$, then
    $e_1,e_2\in \op{toSub}(x)$
    (\autoref{p1Properties}\ref{p1PropToSub}) and so both entries in
    the array are set to the same value
    $\op{lastW}(e_1) = \op{lastW}(e_2) = p_S$ in
    line~\ref{algoLineSetLastW}. If $D \neq S$, then
    $e_1,e_2\not\in \op{toSub}(x)$ and so $\op{lastW}[e_1]$ and
    $\op{lastW}[e_2]$ stay unchanged. Hence, we have
    \[
      \op{lastW}(e_1) = \op{lastW}_\old(e_1) = \op{lastW}_\old(e_2) =
      \op{lastW}(e_2).
    \]

    \end{itemize}

  \item Let $e = x\xrightarrow{a} y$,
    $D := [y]_{\kappa_Q} \in \unnicefrac{X}{Q}$. We again perform a
    case distinction on $D$:
    \[
    \begin{array}{lll}
        D = S
        &\Rightarrow&
            \op{deref} \cdot \op{lastW}(e) = \op{deref}(p_S)
            = w_S^x = w(S,\xi(x)),
        \\
        D = C\setminus S
        &\Rightarrow&
            \op{deref} \cdot \op{lastW}(e) = \op{deref}(p_C)
            = w_{C\setminus S}^x = w(C\setminus S,\xi(x)),
        \\
        D \in \unnicefrac{X}{Q_\old}\setminus \{C\}
        &\Rightarrow&
            \op{deref} \cdot \op{lastW}(e)
            = \op{deref}_\old \cdot \op{lastW}(e)_\old
            = w(D,\xi(x)).
    \end{array}
    \]
    Note that the first equation in the second case holds due to lines~\ref{algo10} and~\ref{algoLineMarkedX}.
    For the first two cases note that
    $w_S^x = w(S,\xi(x))$ and
    $w_{C\setminus S}^x = w(C\setminus S, \xi(x))$ by line
    \ref{algoLineUpdate}, \autoref{p1Properties}, items~\ref{p1PropFilS}
    and~\ref{p1PropDeref}, and by the axiom for $\op{update}$ in
    \eqref{eqSplitterLabels}.

    \item Take $x_1,x_2\in B'\in \unnicefrac{X}{P}$ and $D\in
    \unnicefrac{X}{Q}$ and let $B := [x_1]_{P_\old} = [x_2]_{P_\old}
    \in \unnicefrac{X}{P_\old}$. By case distinction on $\op{M}(B)$ we first
    show that $H\chi_S^C\cdot \xi(x_1) = H\chi_S^C\cdot \xi(x_2)$.
    \begin{enumerate}
    \item If $\op{M}(B)$ is defined, then
        $H\chi_S^C\cdot \xi(x_1) = H\chi_S^C\cdot \xi(x_2)$ --
        for otherwise $x_1$ and $x_2$ would have been put into different blocks in line
        \ref{algoLineSplit}.

      \item If $\op{M}(B)$ is undefined, then $\op{mark}_B$ is undefined
        everywhere, in particular for $x_1$ and $x_2$. Then, by
        \autoref{p1Properties}\ref{p1PropUnmarked} we have
        $H\chi_S^C\cdot \xi(x_i) = H\chi_\emptyset^C\cdot \xi(x_i)$ for $i = 1, 2$.
        Since $C\in \unnicefrac{X}{Q_\old}$ and
        $\ker(H\chi^C_\emptyset \cdot\xi) = \ker(H\chi_C \cdot \xi)$,
        we have
        $H\chi_\emptyset^C\cdot\xi(x_1) =
        H\chi_\emptyset^C\cdot\xi(x_2)$ by invariant
        \ref{invariant_Hchi}, and so
        $(x_1,x_2) \in \ker (H\chi_S^C\cdot \xi)$.
    \end{enumerate}
    We can conclude the invariant by another case distinction on $D$:
    \[
    \begin{array}[b]{lll}
        D = S
        &\Rightarrow&
            (x_1,x_2) \in
            \ker (H\chi_S^C\cdot \xi)
            \subseteq
            \ker (\overbrace{H(=2)\cdot H\chi_S^C}^{H\chi_S=H\chi_D} \cdot \xi),
        \\
        D = C\setminus S
        &\Rightarrow&
            (x_1,x_2) \in
            \ker (H\chi_S^C\cdot \xi)
            \subseteq
            \smash{\ker (\underbrace{H(=1)\cdot H\chi_S^C}_{H\chi_{C\setminus
            S}=H\chi_D} \cdot \xi)},
        \\
        D \in \unnicefrac{X}{Q_\old}\setminus \{C\}
        &\Rightarrow&
            (x_1,x_2) \in \ker(H\chi_D \cdot\xi),
    \end{array}
    \]
    where we use \autoref{rem:kernel}\ref{i:comp} and where the last statement holds
    by invariant \ref{invariant_Hchi} for $\unnicefrac{X}{Q_\old}$.
    \qedhere
    \end{enumerate}
\end{proof}
\begin{corollary}
  \autoref{algPTfinal} computes the simple quotient of a given finite coalgebra
  $\xi\colon X \to HX$. 
\end{corollary}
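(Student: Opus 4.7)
The plan is to show that \autoref{algPTfinal} faithfully implements \autoref{catPT} equipped with the optimization~\eqref{kernelOptimization} and the \op{select} routine from \autoref{exampleSelects}\ref{exampleSelectsChi}, and then invoke \autoref{optimizationSummary} to conclude correctness and termination. Concretely, I would first match up the data structures: the partitions $\unnicefrac{X}{P}$ and $\unnicefrac{X}{Q}$ maintained by \autoref{algPTfinal} correspond to the quotients induced by the kernels $P_i$ and $Q_i$ of \autoref{catPT}, and the choice in step~(1) of \autoref{algPTfinal} of a subblock $S$ with $2\cdot|S|\le|C|$ realizes the map $q_{i+1}=\chi_{S}^{C}\colon X\to 3$ from~\eqref{eq:qi++}, which is precisely what the smaller-half \op{select} routine returns (up to post-composition with $\kappa_{P_i}$).

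Next I would verify the hypotheses of \autoref{optimizationSummary}. The functor $H$ is zippable by \autoref{interface-linear}. The smaller-half \op{select} routine respects compound blocks by \autoref{examplesRespectCompoundBlocks} and is progressing by the example directly following \autoref{defRespectCompounds}. Thus \autoref{optimizationSummary} guarantees that the abstract loop terminates and returns the simple quotient of $(X,\xi)$, provided each concrete iteration of \autoref{algPTfinal} performs exactly the kernel intersection $P_{i+1}=P_i\cap\ker(Hq_{i+1}\cdot\xi)$ prescribed by~\eqref{kernelOptimization}.

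That remaining bit of agreement is exactly what is established by the already-proved lemmas about the concrete routines. By \autoref{lemmaInit}, after \textsc{Initialization} the partition $\unnicefrac{X}{P}$ is the kernel partition of $H!\cdot\xi$ (hence equals $X/P_0$), $\unnicefrac{X}{Q}=\{X\}$ (hence equals $X/Q_0$), and the four invariants hold. Then I would argue by induction on $i$: given that the invariants hold, step~(2) of \autoref{algPTfinal} refines $\unnicefrac{X}{Q}$ to $X/Q_{i+1}=Q_i\cap\ker q_{i+1}$ (splitting $C$ into $S$ and $C\setminus S$, which is exactly the kernel of $\chi_S^C$ intersected with $Q_i$), while by \autoref{thmSplitCorrect}(i) the call $\textsc{Split}(\unnicefrac{X}{P},S)$ replaces $P_i$ by $P_i\cap\ker(H\chi_S^C\cdot\xi)=P_i\cap\ker(Hq_{i+1}\cdot\xi)$, which is the definition of $P_{i+1}$ in~\eqref{eq:keropti2}; \autoref{thmSplitCorrect}(ii) reestablishes the invariants for the next iteration. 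The loop condition \emph{``$\unnicefrac{X}{P}$ is properly finer than $\unnicefrac{X}{Q}$''} coincides with \emph{``$P_i\neq Q_i$''}, so \autoref{algPTfinal} exits exactly when \autoref{catPT} does, and on termination returns $\unnicefrac{X}{P_i}=\unnicefrac{X}{Q_i}$, i.e.~the simple quotient by \autoref{correctness}.

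The only mildly nontrivial point is confirming that the smaller-half \op{select} respects compound blocks (already handled in \autoref{examplesRespectCompoundBlocks}) and that the map realised by step~(1) of \autoref{algPTfinal} really is $\chi_S^C$ rather than $\chi_S$; otherwise the argument is a bookkeeping reduction to results already proved. No new technical obstacle arises.
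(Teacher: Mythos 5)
Your proposal is correct and follows essentially the same route as the paper's own proof: both reduce correctness of \autoref{algPTfinal} to \autoref{optimizationSummary} (hence to \autoref{thm:correct}) by using \autoref{thmSplitCorrect}(i) together with~\eqref{eq:qi++} to identify the concrete update~\eqref{eq:keropti2} with the abstract optimization~\eqref{kernelOptimization}, relying on zippability of $H$ and the fact that the smaller-half \op{select} respects compound blocks and is progressing. Your version merely makes explicit the induction on the invariants via \autoref{lemmaInit} and \autoref{thmSplitCorrect}(ii), which the paper leaves implicit.
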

\begin{proof}
  Indeed, by \autoref{thmSplitCorrect}(i) and \eqref{eq:qi++} we see
  that \eqref{eq:keropti2} is equivalent to
  \eqref{kernelOptimization}:
  \[
    P_{i+1}
    =
    P_i \cap \ker (H\chi_{S_i}^{C_i} \cdot \xi)
    =
    P_i \cap \ker (Hq_{i+1} \cdot \xi).
  \]
  By \autoref{optimizationSummary}, this is equivalent to the original line
  \ref{step:P} in \autoref{catPT},
  since the employed \textqt{select the smaller half} routine respects
  compound blocks and $H$ is zippable. 
  The desired result thus follows from \autoref{thm:correct}.
\end{proof}

\subsection{Efficiency}
\label{sec:time}

Having established correctness of \textsc{Split}, we next
analyse its time complexity. We first analyse lines
$\ref{algoFirstLine}-\ref{algoLastLine}$, then the complexity of the
grouping operation in line \ref{algoLineSplit}, and finally the
overall complexity of the algorithm, accumulating the time for all
\textsc{Split} invocations.
\begin{lemma}\label{lemmaTimeSplit}
  Lines $\ref{algoFirstLine}-\ref{algoLastLine}$ in \textsc{Split} run
  in time $\smash{\CO(\sum_{y\in S}|\op{pred}(y)|)}$.
\end{lemma}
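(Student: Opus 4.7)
The plan is to bound the work of part~(a) (lines \ref{algoFirstLine}--\ref{algo11}) and part~(b) (lines \ref{algoSplitLoop}--\ref{algoLastLine}) separately, and in each case show that the total is proportional to $N := \sum_{y\in S}|\op{pred}(y)|$, the total number of edges into~$S$.

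\textbf{Part~(a).} The outer loop in line~\ref{algoLineToSubLoop} iterates exactly $N$ times. For each edge $e = x\xrightarrow{a}y$ processed, I would check that every individual step takes constant time: the block lookup $B \gets [x]_P$ is $\CO(1)$ by the running assumption on the data structure for~$X/P$; testing whether $\op{mark}_B$ is empty is $\CO(1)$; the dereferencing $\op{deref}\cdot\op{lastW}[e]$ consists of two array accesses; the call $\op{update}(\emptyset, w_C^x)$ runs in linear time in the size of its first argument (here the empty bag) by \autoref{interface-linear}, hence in $\CO(1)$; and the comparisons/insertions into $\op{M}$, $\op{mark}_B$, and $\op{toSub}[x]$ are all constant time since these are implemented as lists and no duplicate check is needed (the guards on lines~5 and~10 ensure each pair is added at most once). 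Thus part~(a) runs in $\CO(N)$.

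\textbf{Part~(b).} Here I would first bound the total number of inner-loop iterations, then the cost per iteration. The outer loop at line~\ref{algoSplitLoop} ranges over those $B$ with $\op{M}(B)$ defined, and for each such $B$ the inner loop at line~\ref{algoLineMarkedX} ranges over the defined part of $\op{mark}_B$. By \autoref{p1Properties}\ref{p1PropMark}, $\op{mark}_B$ is defined exactly on those $x \in B$ that have some edge $x \xrightarrow{a} y$ with $y \in S$, so the total number of inner iterations over all $B \in \op{M}$ equals the number of distinct source states of edges into~$S$, which is at most~$N$. Per inner iteration at state~$x$, the dominant work is: building $\ell = \Bagf(\pi_2\cdot\op{graph})(\op{toSub}[x])$, which costs $\CO(|\op{toSub}[x]|)$; invoking $\op{update}(\ell, \op{deref}[p_C])$, which by \autoref{interface-linear} runs in $\CO(|\ell|) = \CO(|\op{toSub}[x]|)$; and the inner \textbf{for} loop on line~\ref{algoLineSetLastW} over $e\in\op{toSub}[x]$, which is again $\CO(|\op{toSub}[x]|)$. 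All remaining operations (allocating $p_S$, the test $v^x \neq v_\emptyset$, which is $\CO(1)$ by \autoref{interface-linear}, removing $x$ from~$B$, inserting into $B_{\neq\emptyset}$, and resetting $\op{toSub}[x]$ to $\emptyset$ by discarding the list pointer) are constant time. Using \autoref{p1Properties}\ref{p1PropToSub}, summing $|\op{toSub}[x]|$ over all marked~$x$ gives exactly the number of edges into~$S$, i.e.~$N$.

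\textbf{Combining.} Adding the two parts yields $\CO(N) = \CO(\sum_{y\in S}|\op{pred}(y)|)$. The main subtlety I anticipate is a careful accounting of the constant-time guarantees: in particular that the ``emptying'' of $\op{mark}_B$ and $\op{toSub}[x]$ is genuinely $\CO(1)$ (achieved by overwriting the list pointer rather than traversing), and that the \op{update}-call on an empty bag in part~(a) really is constant time. These are immediate from our list-based implementation and \autoref{interface-linear}, respectively.
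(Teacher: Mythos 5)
Your proof is correct and follows essentially the same route as the paper's: part~(a) is $\sum_{y\in S}|\op{pred}(y)|$ iterations of constant-time work, and part~(b) is charged to the marked states via the identity $\sum_{x}|\op{toSub}(x)|=\sum_{y\in S}|\op{pred}(y)|$, with the three linear-time statements per marked $x$ each bounded by $|\op{toSub}(x)|$. Your extra remarks (constant-time $\op{update}$ on the empty bag, $\CO(1)$ emptying of lists by pointer overwrite) only make explicit what the paper leaves implicit.
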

\begin{proof}
    The loop in line \ref{algoLineToSubLoop} has $\sum_{y\in S}
    |\op{pred}(y)|$ iterations, each consisting of constantly many
    operations taking constant time.
    Since each loop appends one element to some initially empty
    $\op{toSub}(x)$, we have
    \[
        \sum_{y\in S}|\op{pred}(y)| = \sum_{x\in X} |\op{toSub}(x)|.
    \]
    In the body of the loop starting in line \ref{algoLineMarkedX}, the
    only statements not running in constant time are
    $\ell \gets \Bagf(\pi_2\cdot \op{graph})(\op{toSub}(x))$ (line
    \ref{algoLineLabels}), $\op{update}(\ell,\op{deref}(p_C))$ (line
    \ref{algoLineUpdate}), and the loop in line
    \ref{algoLineSetLastW}; each of these require time linear in the
    length of $\op{toSub}(x)$. The loop in line \ref{algoLineMarkedX}
    has at most one iteration per $x\in X$.  Hence, since each $x$ is
    contained in at most one block $B$ from line~\ref{algoSplitLoop},
    the overall complexity of line \ref{algoSplitLoop} to
    \ref{algoLastLine} is at most
    \( \sum_{x\in X} |\op{toSub}(x)| = \sum_{y\in S}|\op{pred}(y)| \),
    as desired.
\end{proof}

\noindent In the grouping operation in line \ref{algoLineSplit}, it is not enough to group
the elements using a sorting algorithm. Instead we need to preprocess the
elements and extract a possible majority candidate.
\begin{definition}
When grouping $Z$ by $f\colon Z\to Z'$ we call an element
$p\in Z'$ a \emph{possible majority candidate} (PMC) if either
\begin{equation}
|\{z\in Z\mid f(z) = p\}|
\ge |\{z\in Z\mid f(z) \neq p\}|
\label{pmcproperty}
\end{equation}
or if no element in $Z'$ fulfilling \eqref{pmcproperty} exists.
\end{definition}
\noindent A PMC can be computed in linear time \cite[Sect.~4.3.3]{Backhouse1986}. When
grouping $Z$ by $f$ using a PMC, one first determines a PMC $p\in
Z'$, and then one only sorts and groups $\{z\mid f(z) \neq
p\}$ by $f$ using an $\CO(n\cdot \log n)$ sorting algorithm.
\begin{remark}\label{R:ith}
  In the following lemmas we again index data by the iterations $1
  \leq i \leq k$ in \autoref{algPTfinal}. That
  means we consider $S_i \subseteq C_i \in X/Q_{i+1}$ such that
  \begin{equation}\label{eq:XQi++}
    X/Q_{i+1} = X/Q_i \setminus \{C_i\} \cup \{S_i, C_i \setminus
    S_i\}. 
  \end{equation}
\end{remark}
\begin{lemma} \label{SplitComplexity} Summing over all \textsc{Split}
  invocations in \autoref{algPTfinal}, the
  total time spent on grouping $B_{\neq\emptyset}$ using a PMC is in
  $\CO(|E|\cdot \log|X|)$.
\end{lemma}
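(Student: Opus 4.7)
The plan is a standard Hopcroft--style amortized analysis, combined with the savings afforded by the PMC optimization. I would organize the argument in three stages.

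First, I would analyze the cost of the grouping step within a single invocation of \textsc{Split} on subblock $S_i\subseteq C_i$. The PMC-based grouping of $B^i_{\neq\emptyset}$ at line~\ref{algoLineSplit} first determines a PMC in linear time $O(|B^i_{\neq\emptyset}|)$ and then sorts only the non-PMC elements by their $H3$-values in $O(|B^i_{\neq\emptyset}|\log|B^i_{\neq\emptyset}|)$ time, using the constant-time comparison of $H3$ granted by \autoref{interface-linear}.

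Second, I would bound $|B^i_{\neq\emptyset}|$ summed over all blocks $B$ by the number of elements marked during iteration~$i$. By \autoref{p1Properties}\ref{p1PropMark}, an element $x$ is in some $\op{mark}_B$ only if it has an outgoing edge to $S_i$, so
\[
  \textstyle\sum_{B}|B^i_{\neq\emptyset}| \;\le\; \sum_{y\in S_i}|\op{pred}(y)|.
\]
Now I would invoke the ``smaller half'' property $2|S_i|\le|C_i|$ built into the \op{select} routine: because the compound block containing any $y$ is at least halved whenever $y$ enters a chosen subblock~$S_i$, each $y\in X$ lies in at most $\log_2|X|$ of the $S_i$ over the entire run. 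Summing,
\[
  \textstyle\sum_i\sum_{y\in S_i}|\op{pred}(y)|
  \;=\;\sum_{y\in X}|\op{pred}(y)|\cdot|\{i:y\in S_i\}|
  \;\le\;|E|\cdot\log|X|.
\]

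Third, I would use the PMC itself to save the remaining logarithmic factor in the sorting cost: any non-PMC element $x$ of $B^i_{\neq\emptyset}$ ends up in a new block of size at most $|B^i_{\neq\emptyset}|/2$ — in the true-majority case because all non-PMC elements together have size $\le|B^i_{\neq\emptyset}|/2$, and in the no-majority case because then \emph{every} group has size $<|B^i_{\neq\emptyset}|/2$. Consequently $x$'s block in $\unnicefrac{X}{P}$ at least halves whenever $x$ contributes to a sort, so each $x$ can be a non-PMC element in at most $\log_2|X|$ iterations. Combining this halving charge with the Hopcroft bound from the second stage, and accounting PMC-finding separately (which sums directly to $O(|E|\log|X|)$ by the same Hopcroft bound), I would obtain the claimed $O(|E|\cdot\log|X|)$ total.

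The main obstacle is the last stage: the two PMC regimes (true majority vs.\ no majority) must be handled uniformly so that the halving invariant applies to every sorted element, and the charge of a single non-PMC event to an edge into $S_i$ must be carefully coupled with the halving of $x$'s containing block so that neither argument loses a logarithmic factor.
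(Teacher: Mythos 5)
Your first two stages are sound: the per-invocation sort cost and the bound $\sum_B|B^i_{\neq\emptyset}|\le\sum_{y\in S_i}|\op{pred}(y)|$ combined with the smaller-half property (essentially \autoref{lemmaTime} of the paper) are correct. The gap is in the third stage, and it is not merely a matter of ``careful coupling'': the two charging schemes you propose cannot be combined to yield $\CO(|E|\cdot\log|X|)$. Your halving argument shows that each state is a sorted (non-PMC) element in at most $\log_2|X|$ iterations, so the total number of sorted element-occurrences is at most $|X|\cdot\log|X|$; multiplying by the $\CO(\log|X|)$ comparisons that each occurrence costs inside an $\CO(k\log k)$ sort gives only $\CO(|X|\cdot\log^2|X|)$. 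The Hopcroft charge to edges into $S_i$ fares no better, since a fixed edge $x\to y$ is ``into $S_i$'' for up to $\log_2|X|+1$ values of $i$, yielding $\CO(|E|\cdot\log^2|X|)$. For sparse coalgebras both bounds are a genuine logarithmic factor worse than the claim. What is missing is a bound of $\CO(|E|)$ --- with \emph{no} log factor --- on the total number of element-occurrences that are actually sorted; neither halving nor the smaller-half property delivers this.

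The paper obtains that linear bound by a different amortization. It splits the states underlying $B_{\neq\emptyset}$ into the \emph{left block} $\Leftblock_B^i$ (states with edges into $S_i$ but none into $C_i\setminus S_i$) and the \emph{middle block} $\Middleblock_B^i$ (states with edges into both). All of $\Leftblock_B^i$ shares the single value $\ell_B^i=H\chi_{C_i}^{C_i}\cdot\xi(x)$, so either $\ell_B^i$ is the PMC and only $\Middleblock_B^i$ is sorted, or $\ell_B^i$ is not the PMC and then $|\Leftblock_B^i|\le|\Middleblock_B^i|$; in both cases the sort costs $\CO(|\Middleblock_B^i|\cdot\log|\Middleblock_B^i|)$. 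Then, because every $x\in\Middleblock_B^i$ has edges into both $S_i$ and $C_i\setminus S_i$, each middle-block membership of $x$ strictly increases the number $\sharp_Q(x)$ of distinct blocks of $X/Q$ containing successors of $x$, and this number never exceeds the out-degree of $x$; summing over all states gives $\sum_{i,B}|\Middleblock_B^i|\le|E|$, so the single factor $\log|X|$ enters only through the per-sort cost. Your halving observation is true but plays no role in this accounting; the left/middle decomposition and the charge to the increase of $\sharp_Q(x)$ are the ideas you would need to add.
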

\noindent The proof is a generalization of that for the weighted setting of
Valmari and Franceschinis~\cite[Lemma~5]{ValmariF10}:
\begin{proof}
  We shall prove that for $S_i \subseteq C_i\in \unnicefrac{X}{Q_i}$, $0\le i
  < k$, in \autoref{R:ith}, the overall time spent on grouping the
  $B_{\neq\emptyset}$ in all the runs of $\textsc{Split}$ is in $\CO(|E|\cdot
  \log|X|)$. The partition returned by the algorithm is $X/P_k = X/Q_k$ and is
  obtained after \textsc{Split} has been called $k$-many times.

  In the first part of the proof, we define subsets
  $\Leftblock_B, \Middleblock_B \subseteq B_{\neq\emptyset}$ which are
  generalized version of the left-hand and middle subblocks of a block
  $B$ in \autoref{fig:RefineStep} for a $\Potf$-coalgebra, i.e.~those
  state with successors in $S$ but not in $C\setminus S$ and those with
  successors in both. We then give an equivalent characterization of
  $\Leftblock_B$ and $\Middleblock_B$.

  In the second part, we use this characterization and a PMC to argue
  that sorting each $B_{\neq\emptyset}$ is bounded by
  $2\cdot |\Middleblock_B|\cdot \log(2\cdot |\Middleblock_B|)$. Since
  we assume that comparing two elements of $H3$ runs in constant time,
  the time needed for sorting amounts to the number of comparisons
  needed while sorting, i.e.~$\CO(n\cdot \log n)$ many.

  Finally, in the third part, we use this to obtain the desired overall complexity.
%
\begin{enumerate}
\item For a $(B,v_\emptyset)\in \op{M}$ in the iteration $i$ consider
  $B_{\neq\emptyset}$. We define
    \begin{itemize}
    \item the \emph{left block} $\Leftblock_B^i := \{
        x \in B \mid
        H\chi_{C_i}^{C_i}\cdot \xi(x) = H\chi_{S_i}^{C_i}\cdot \xi(x) \neq
        H\chi_{\emptyset}^{C_i}\cdot \xi(x)
    \}$, and

    \item the \emph{middle block} $\Middleblock_B^i := \{
        x \in B \mid H\chi_{C_i}^{C_i}\cdot \xi(x) \neq
        H\chi_{S_i}^{C_i}\cdot \xi(x)
        \neq H\chi_{\emptyset}^{C_i}\cdot \xi(x)
        \}$.
    \end{itemize}
    Now, first note that
    \begin{equation}\label{eq:42}
      (x,v^x)\in B_{\neq\emptyset}
      \ \text{iff}\
      v^x = H\chi_{S_i}^{C_i}\cdot\xi(x) \neq v_\emptyset,
      \qquad\text{and}\qquad
      v_\emptyset = H\chi_\emptyset^C\cdot \xi(x),
    \end{equation}
    where the latter holds by
    \autoref{p1Properties}\ref{p1PropBS}. This implies that 
    the set of first components of the pairs in $B_{\neq\emptyset}$ is
    $\Leftblock_B^i \cup\Middleblock_B^i$. If $x\in B$ has no edge to
    ${S_i}$, then it is not marked, and so
    $H\chi_{S_i}^{C_i}\cdot\xi(x) = H\chi_\emptyset^{C_i}\cdot \xi(x)$, by
    \autoref{p1Properties}\ref{p1PropUnmarked}; by contraposition,
    every $x\in \Leftblock_B^i\cup\Middleblock_B^i$ has some edge into
    ${S_i}$. We can make a similar observation for $C_i\setminus
    S_i$. If $x\in B$ has no edge to ${C_i}\setminus {S_i}$, then
    $\op{fil}_{S_i}(\flat\cdot\xi(x)) =
    \op{fil}_{C_i}(\flat\cdot\xi(x))$
    by the definition of
    $\op{fil}_{S_i}$, and therefore we have:
    \[
      \begin{array}{r@{}c@{}l@{}}
        H\chi_{C_i}^{C_i}\cdot\xi(x)
        &\stackrel{\eqref{eqSplitterLabels}}{=}& \pi_2\cdot\op{update}(\op{fil}_{C_i}(\flat\cdot\xi(x)),
        w({C_i}, \xi(x)))
        \\
        &=& \pi_2\cdot\op{update}(\op{fil}_{S_i}(\flat\cdot\xi(x)),
        w({C_i}, \xi(x)))
        \stackrel{\eqref{eqSplitterLabels}}{=}
        H\chi_{S_i}^{C_i}\cdot\xi(x).
      \end{array} \hspace{-5mm}
    \]
    By contraposition, all $x\in \Middleblock_B^i$ have an edge to
    ${C_i}\setminus {S_i}$.

    Note that
    $\ker(H\chi_{C_i}^{C_i}\cdot \xi) = \ker(H\chi_{C_i}\cdot
    \xi)$; indeed, to see this use \autoref{rem:kernel}\ref{i:mono},
    that $H$ preserves monomorphisms, and that $\chi_{C_i}^{C_i} = m \cdot
    \chi_{C_i}$, where $m\colon 2 \cong\{0,2\} \hookrightarrow
    \{0,1,2\} = 3$ is the inclusion map. 
    Since $\Leftblock_B^i \subseteq B\in \unnicefrac{X}{P_i}$ and
    $C_i\in\unnicefrac{X}{Q_i}$, we conclude from invariant~\ref{invariant_Hchi}
    that there is an $\ell^i_B \in H3$ such that
    $\ell^i_B = H\chi_{C_i}^{C_i}\cdot \xi(x)$ for all
    $x\in B$. Using~\eqref{eq:42} we therefore obtain 
    \[
      \Leftblock_B^i = \{x\mid (x,v^x)\in B_{\neq\emptyset}, v^x =
      \ell^i_B \}\quad\text{and}\quad
      \Middleblock_B^i = \{x\mid (x,v^x)\in B_{\neq\emptyset},
      v^x \neq \ell^i_B \}.
    \]
    We have also seen that every $x \in \Leftblock_B^i$ has an edge to
    $S_i$ and every $x \in \Middleblock_B^i$ has both an edge to $S_i$
    and $C_i \setminus S_i$.

  \item We prove that sorting
    $B_{\neq\emptyset}$ in the iteration $i$ is bound by
    $2\cdot |\Middleblock_B^i|\cdot \log(2\cdot |\Middleblock_B^i|)$
    by case distinction on the possible majority candidate:
    \begin{itemize}
    \item If $\ell_B^i$ is the possible majority
    candidate, then the sorting of $B_{\neq\emptyset}$ sorts
    precisely $\Middleblock_B^i$ which indeed amounts
    to
    \[
        |\Middleblock_B^i|\cdot \log(|\Middleblock_B^i|)
        \le 2\cdot |\Middleblock_B^i|\cdot \log(2\cdot
        |\Middleblock_B^i|).
    \]
    \item If $\ell_B^i$ is not the possible
    majority candidate, then $|\Leftblock_B^i| \le
    |\Middleblock_B^i|$.
    In this case sorting $B_{\neq\emptyset}$ is bounded by
    \[
        (|\Leftblock_B^i|+|\Middleblock_B^i|)\cdot
        \log(|\Leftblock_B^i|+|\Middleblock_B^i|)
        \le 2\cdot |\Middleblock_B^i|\cdot \log(2\cdot
        |\Middleblock_B^i|).
    \]
  \end{itemize}
    
\item Finally, we show that we obtain the desired overall time
  complexity. Let the number of blocks to which $x$ has an edge be
  denoted by
    \begin{align*}
        \sharp^i_Q(x)
        &=
        |\{
            D\in \unnicefrac{X}{Q_i}\mid e = x\xrightarrow{a}{y}, y\in
            D
        \}|, \quad i\ge 0.
    \end{align*}
    Clearly, this number is bounded by the number of outgoing edges of
    $x$, i.e.~$\sharp^i_Q(x) \le |\flat\cdot\xi(x)|$, and so
    \[
        \sum_{x\in X} \sharp_Q^i(x) \le
        \sum_{x\in X} |\flat\cdot\xi(x)| = |E|.
    \]
    Define
    \begin{align*}
        \sharp_\Middleblock^i(x)
        &= |\{
            0 \le j < i \mid
            x\text{ is in some }\Middleblock_B^j
        \}|, \quad i\ge 0.
    \end{align*}
    If in the iteration $i\ge 0$, $x$ is in a middle block $\Middleblock_B^i$,
    then $\sharp_\Middleblock^{i+1}(x) = \sharp_\Middleblock^{i}(x) + 1$ and
    $\sharp_Q^{i+1}(x) = \sharp_Q^{i}(x) + 1$, by~\eqref{eq:XQi++} and since $x$
    has both an edge to $S_i$ and $C_i\setminus S_i$. Otherwise, if $x$ is not
    in any middle block in iteration $i$, then $\sharp_\Middleblock^{i+1}(x) =
    \sharp_\Middleblock^{i}(x)$ and $\sharp_Q^{i+1}(x) \ge \sharp_Q^{i}(x)$.
    This implies that for all $i\ge 0$, $\sharp_\Middleblock^i(x) \le
    \sharp_Q^i(x)$, and therefore
    \[
        \sum_{x\in X} \sharp_\Middleblock^i(x)
        \le \sum_{x\in X} \sharp_Q^i(x)
        \le |E|.
    \]
    Let $T$ denote the total number of middle blocks
    $\Middleblock_B^i$, $0\le i < k$, such that $B$ is contained in
    $\op{M}$ in iteration $i$,
    and let $\Middleblock_t$, $1\le t \le T$, be the $t^\text{th}$
    middle block. The sum of the sizes of all middle blocks is the
    same as the number of times each $x\in X$ was contained in a
    middle block, i.e.
    \[
        \sum_{t=1}^T |\Middleblock_t|
        = \sum_{x\in X} \sharp_\Middleblock^k(x)
        \le |E|.
    \]
    Using the previous bounds and $|\Middleblock_t| \le |X|$, we now obtain
    \begin{align*}
        &\phantom{\le\ }\sum_{t=1}^T
        2 \cdot |\Middleblock_t|\cdot \log (2\cdot|\Middleblock_t|)
        \le
        \sum_{t=1}^T
        2 \cdot |\Middleblock_t|\cdot \log (2\cdot |X|)
        = 2 \cdot \big(\sum_{t=1}^T |\Middleblock_t|\big) \cdot \log(2\cdot
        |X|)
        \\
        & \le 2\cdot |E|\cdot \log(2\cdot |X|)
        = 2 \cdot |E|\cdot \log(|X|) + 2 \cdot |E|\cdot \log(2)
        \in \CO\big(|E|\cdot \log(|X|)\big).
        \tag*{\qedhere}
    \end{align*}
\end{enumerate}
\end{proof}
\begin{lemma}\label{lemmaTime}
    \begin{enumerate}[topsep=1pt]
    \item For each $y\in X$, $|\{i < k \mid y\in S_i\}| \le \log_2 |X| + 1$.
    \item\label{lemmaTime2} The total run-time of all invocations of 
      $\textsc{Split}(X/P_i, S_i)$, $0\le i < k$, is in $\CO(|E|\cdot \log |X|)$.
    \end{enumerate}
\end{lemma}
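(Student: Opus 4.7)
Proof plan for Lemma~\ref{lemmaTime}:

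\textbf{Part (1).} The plan is to invoke the classical Hopcroft ``halving'' argument on the $X/Q_i$-block containing $y$. Let $B_y^i \in X/Q_i$ denote the (unique) block of $X/Q_i$ that contains $y$; note that $|B_y^0| = |X|$ and, by~\eqref{eq:XQi++}, for each $i < k$ one of three things happens: either $y \notin C_i$, in which case $B_y^{i+1} = B_y^i$; or $y \in C_i \setminus S_i$, in which case $B_y^{i+1} = C_i \setminus S_i \subseteq B_y^i$; or $y \in S_i$, in which case $B_y^{i+1} = S_i$. In the last case, the selection rule $2\cdot |S_i| \le |C_i|$ imposed in \autoref{algPTfinal} gives $|B_y^{i+1}| = |S_i| \le \tfrac{1}{2}|C_i| = \tfrac{1}{2}|B_y^i|$. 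Thus whenever $y \in S_i$, the size of its compound block at least halves, while outside of such iterations the size is non-increasing. Since $|B_y^i| \ge 1$ throughout, halving can occur at most $\lfloor \log_2 |X|\rfloor + 1$ times, giving the claimed bound.

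\textbf{Part (2).} I will combine three ingredients. First, by \autoref{lemmaTimeSplit}, lines~\ref{algoFirstLine}--\ref{algoLastLine} of the $i^{\text{th}}$ invocation of \textsc{Split} run in time $\CO(\sum_{y \in S_i}|\op{pred}(y)|)$. Second, by \autoref{SplitComplexity}, the remaining work (the grouping of $B_{\neq\emptyset}$ via a PMC in line~\ref{algoLineSplit}) sums over all invocations to $\CO(|E|\cdot \log |X|)$. Finally, applying part~(1) we reorder the summation:
\begin{align*}
  \sum_{i=0}^{k-1} \sum_{y \in S_i} |\op{pred}(y)|
  &= \sum_{y \in X} |\op{pred}(y)|\cdot |\{i < k \mid y \in S_i\}|
  \\
  &\le (\log_2 |X| + 1) \cdot \sum_{y \in X} |\op{pred}(y)|
   = (\log_2 |X| + 1)\cdot |E|,
\end{align*}
since $\sum_{y\in X}|\op{pred}(y)| = |E|$. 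Adding the two contributions yields total time $\CO(|E|\cdot \log|X|)$.

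The routine part is the bookkeeping in Part~(2); the only real content is the halving argument in Part~(1), and the slight subtlety there is simply noting that the $Q$-block of a given $y$ is unambiguous at each step and that its size genuinely halves \emph{only} when $y$ lands in the selected subblock $S_i$, while iterations with $y \in C_i \setminus S_i$ or $y \notin C_i$ do no harm because they can only shrink (or preserve) $B_y^i$ without being counted.
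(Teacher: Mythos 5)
Your proof is correct and follows essentially the same route as the paper: part~(1) is the standard Hopcroft halving argument (the paper phrases it via the chain of selected subblocks $S_{i_1}\supseteq C_{i_2}\supseteq S_{i_2}\supseteq\cdots$ containing $y$, while you track the $Q_i$-block of $y$, but the inequality $2|S_{i_{j+1}}|\le|C_{i_{j+1}}|\le|S_{i_j}|$ is the same), and part~(2) is the same exchange of summation order combined with \autoref{lemmaTimeSplit} and \autoref{SplitComplexity}. Your explicit inclusion of the grouping cost from \autoref{SplitComplexity} in part~(2) is if anything slightly more complete than the paper's write-up, which defers that term to the final theorem.
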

\begin{proof}
  \begin{enumerate}
  \item We know from \autoref{PfinerthanQ} that $Q_{j}$ is finer than
    $Q_i$ for every $j > i$. Moreover, by~\eqref{eq:XQi++}, we have
    $S_i\in\unnicefrac{X}{Q_{i+1}}$. For every $i < j$ with $y\in S_i$
    and $y\in S_j$, we know that $C_j \subseteq S_i$ since $C_j$ is
    the block containing $y$ in the refinement $\unnicefrac{X}{Q_j}$
    of $\unnicefrac{X}{Q_{i+1}}$ in which $S_i$ contains $y$. Hence,
    we have $2\cdot|S_j| \le |C_j|\le |S_i|$. Now let
    $i_1 < \ldots < i_n$ be all the elements in
    $\{i < k \mid y \in S_i\}$. Since
    $y\in S_{i_1},\ldots, y\in S_{i_n}$, we have
    $2^{n-1} \cdot |S_{i_n}| \le |S_{i_1}|$.  Thus
  \begin{align*}
    |\{i < k \mid y \in S_i\}| = n &= \log_2(2^{n-1})+1 \\ &
    \le \log_2(2^{n-1}\cdot |S_{i_n}|) + 1
    \le \log_2|S_{i_1}| + 1\le \log_2 |X| + 1,
  \end{align*}
  where the last inequality holds since $S_{i_1} \subseteq X$.

\item By \autoref{lemmaTimeSplit}, the first term below gives the total
  run-time in the $\CO$-calculus, and we continue to reason in that calculus 
  (note that the inner sums on the right-hand side of the first line
  are indexed by $S_i$, whence the $\ni$-symbol):
\begin{align*}
    \sum_{0\le i < k}
    \sum_{y\in S_i} |\op{pred}(y)|
    &=
    \sum_{y\in X}
    \sum_{\substack{0\le i < k \\  S_i \ni y}}
    |\op{pred}(y)|
    =
    \sum_{y\in X}
    \big(
    |\op{pred}(y)|
    \cdot
    \sum_{\substack{0\le i < k \\ S_i \ni y}} 1
    \big)
    \\ &
    = \sum_{y\in X}
    \big(
    |\op{pred}(y)|
    \cdot |\{i < k \mid y \in S_i\}|
    \\ &
    \le
    \sum_{y\in X}
    \big(
    |\op{pred}(y)|
    \cdot
    (\log|X|+1)
    \big)
    \\ &
    =
    \big(
    \sum_{y\in X}
    |\op{pred}(y)|
    \big)
    \cdot
    (\log|X|+1)
    \\ &
    =
    |E|\cdot (\log|X| + 1),
\end{align*}
where the inequality holds by the first part of our lemma.\qedhere
\end{enumerate}
\end{proof}

\noindent
By Lemmas~\ref{lemmaInit} and~\ref{lemmaTime}\ref{lemmaTime2}, we
obtain our main result:

\begin{theorem} \label{thmAlgoComplexity}
  Given a zippable functor $H\colon \Set\to\Set$ with a refinement interface,
  whose $\op{update}$ and $\op{init}$ functions can be computed in linear time,
  \autoref{algPTfinal} computes the quotient modulo
  behavioural equivalence of a given coalgebra $\xi\colon X\to HX$ with
  $n=|X|$ states and $m=\sum_{x\in X}|\flat\cdot\xi(x)|$ edges in time
  $\CO((m + n)\cdot \log n )$.
\end{theorem}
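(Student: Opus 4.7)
The plan is to assemble the theorem from the ingredients already established in the section: correctness has been handled by the corollary immediately preceding this theorem (via \autoref{thmSplitCorrect} and \autoref{optimizationSummary}), so only the complexity bound remains. I would first observe that by \autoref{defTypeGraph}, $m = \sum_{x \in X}|\flat\cdot\xi(x)| = |E|$, so every bound stated in terms of $|E|$ translates directly into a bound in $m$.

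The total running time decomposes into four contributions: (i) the initialization procedure, (ii) the main-loop bookkeeping that picks a subblock $S_i \subseteq C_i \in X/Q_i$ with $2\cdot|S_i| \le |C_i|$ and replaces $C_i$ by $\{S_i, C_i\setminus S_i\}$ in $X/Q$, (iii) lines \ref{algoFirstLine}--\ref{algoLastLine} of each \textsc{Split} call, and (iv) the grouping of $B_{\neq\emptyset}$ via a PMC in line \ref{algoLineSplit} of each \textsc{Split} call. By \autoref{lemmaInit}, contribution~(i) costs $\CO(m + n\log n)$; by \autoref{lemmaTime}\ref{lemmaTime2}, the sum of~(iii) over all at most $n-1$ iterations is $\CO(m \log n)$; and by \autoref{SplitComplexity}, the sum of~(iv) over all iterations is $\CO(m \log n)$.

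For contribution~(ii) I would invoke the standard Hopcroft/Paige--Tarjan bookkeeping: maintain a list of those compound blocks in $X/Q$ that strictly contain some $X/P$-subblock of at most half their size, so that picking $(S_i, C_i)$ and updating $X/Q$ takes $\CO(|S_i|)$ time per iteration (whenever Split creates new $P$-blocks from a $Q$-block, the smaller halves are added to this list). Summing and applying \autoref{lemmaTime}(1), each state lies in at most $\log_2 n + 1$ chosen subblocks, so $\sum_i |S_i| \in \CO(n \log n)$. Adding the four contributions yields the claimed bound $\CO((m+n)\log n)$.

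There is no genuinely hard step: the work has been distributed across the preceding lemmas, and the theorem is an accounting exercise. The only subtle point is to make the decomposition into the four contributions explicit and to argue that the subblock-selection overhead~(ii), which was not covered by the earlier per-call analysis, is also absorbed into $\CO((m+n)\log n)$ via the $\log n$ bound on the number of times any single state appears in a chosen subblock.
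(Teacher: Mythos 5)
Your proof is correct and follows essentially the same route as the paper, whose own argument is a one-line assembly of \autoref{lemmaInit} and \autoref{lemmaTime}\ref{lemmaTime2} (with \autoref{SplitComplexity} covering the grouping in line \ref{algoLineSplit}). You are in fact slightly more careful than the paper, which leaves the subblock-selection and $X/Q$-update overhead implicit; your $\CO(\sum_i |S_i|) \subseteq \CO(n\log n)$ accounting for it via \autoref{lemmaTime}(1) is a valid way to close that gap.
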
 
\noindent If the coalgebra is not too sparse, i.e.~every state has at
least one in- or outgoing edge, $2\cdot m \ge n$, then the complexity
is $\CO(m\cdot \log n)$, the bound typically seen in the literature
for efficient algorithms for bisimilarity minimization of transition
systems $\xi\colon X\to \Potf X$ or weighted systems
$\xi\colon X\to \R^{(X)}$. Unlike those algorithms we do not directly
admit an initial partition as a parameter; but switching from a
functor $G$ to $\unnicefrac{X}{\mathcal{I}}\times G$
(cf.~\autoref{coalgebraInitialPartition}) we can equip the generic
algorithm with this additional parameter while maintaining the same
$\CO(m\cdot \log n)$ complexity:

\begin{remark} \label{remInterfaceInitial}
There are two ways to handle functors of type $H=\unnicefrac{X}{\mathcal{I}}\times
G$.
First, we can modify the functor interface as follows:
\[
  \begin{array}{lll}
    G1 &\mapsto& \unnicefrac{X}{\mathcal{I}}\times G1 \\
    W &\mapsto& \unnicefrac{X}{\mathcal{I}}\times W \\
  \end{array}\qquad\qquad \begin{array}{lll}
    \flat &\mapsto& \unnicefrac{X}{\mathcal{I}}\times GX \xrightarrow{\pi_2} GX
                    \xrightarrow{\flat} \Bagf(A\times X) \\
    \op{init} &\mapsto& \id_\unnicefrac{X}{\mathcal{I}}\times \op{init} \\
  \end{array}
  \]
  $\op{update}$ is replaced by the following
  \[
    \begin{mytikzcd}[column sep=9mm]
         (\Bagf(A) \!\times\! W) \!\times\! \unnicefrac{X}{\mathcal{I}} 
         \arrow{r}{\op{update}\times \unnicefrac{X}{\mathcal{I}}}
         &[1mm] (W\!\times\! G3\!\times\! W)\times \unnicefrac{X}{\mathcal{I}}
         \arrow{d}
           \\
         \Bagf(A) \!\times\! (\unnicefrac{X}{\mathcal{I}}\!\times\! W)
         \arrow{u}{\cong}
         & (\unnicefrac{X}{\mathcal{I}}\!\times\! W)\times
           (\unnicefrac{X}{\mathcal{I}}\!\times\! G3)\times (\unnicefrac{X}{\mathcal{I}}\!\times\! W)
      \end{mytikzcd}
\]
where the first and the last morphism are the obvious ones. A second approach is
to decompose the functor into $\unnicefrac{X}{\mathcal{I}}\times(-)$
and $G$, moving to the multisorted setting, see
\autoref{sec:multisorted} for more details, in particular \autoref{ex:comp}. Both
methods have no effect on the complexity.
\end{remark}
\begin{example}\label{ex:singlesort}
  As instances of our algorithm, we obtain the following standard examples for partition refinement algorithms:
  \begin{enumerate}
  \item For $H=\unnicefrac{X}{\mathcal{I}}×\Potf$, we obtain the
    classical Paige-Tarjan algorithm~\cite{PaigeTarjan87} (with
    initial partition $\unnicefrac{X}{\mathcal{I}}$), with the same
    complexity $\CO((m+n)\cdot \log n)$.

    Recall from \autoref{exCoalgEquivalence} that the equivalence
    computed by the Paige-Tarjan algorithm -- bisimilarity -- is
    precisely behavioural equivalence for the powerset functor
    $\Potf$, also when an initial partition is taken into account
    (\autoref{coalgebraInitialPartition}). $\Potf$ and
    $\unnicefrac{X}{\mathcal{I}}×\Potf$ are zippable functors
    (\autoref{exMonoidZippable}, \autoref{lem:closure}), and we have
    refinement interfaces for them (\autoref{examplePowerset},
    \autoref{remInterfaceInitial}) that realize the desired time
    complexity bound (\autoref{exampleTime}). So the instantiation of
    \autoref{algPTfinal} computes the bisimilarity relation on an
    input coalgebra with~$n$ states and~$m$ edges in time
    $\CO((m+n)\cdot \log n)$ by \autoref{thmAlgoComplexity}.

    The instantiation of \autoref{algPTfinal} for the refinement interface of
    $\Potf$ is nearly identical to the Paige-Tarjan
    algorithm~\cite{PaigeTarjan87}, informally described in
    \autoref{exPaigeTarjanSplit}. For example, both maintain references from the
    edges to the integer counters. However, the \mbox{(three way-)}split of a block
    is simpler in the specific implementation than in \textsc{Split} (Line
    \ref{algoLineSplit}), because $|\Potf 3| = 8$ is finite and in fact in Line
    \ref{algoLineSplit} at most three different values can occur (namely the
    three cases from \autoref{exPaigeTarjanSplit}).

    For an example run, see \autoref{exPartRef} and \autoref{fig:exPartRef:LTS}.

  \item For $HX= \unnicefrac{X}{\mathcal{I}}×\R^{(X)}$, we solve
    Markov chain lumping with an initial partition
    $\unnicefrac{X}{\mathcal{I}}$ in time $\CO((m+n)\cdot \log n)$,
    like the best known algorithm by Valmari and
    Franceschinis~\cite{ValmariF10}:

    Coalgebraic behavioural equivalence for $\R^{(-)}$ captures precisely
    weighted bisimilarity (\autoref{exCoalgEquivalence}), and $\R^{(-)}$ is
    zippable (\autoref{exMonoidZippable}) and has a refinement interface
    (\autoref{exampleH2}) with the required complexity bounds
    (\autoref{exampleTime}). So the instantiation of \autoref{algPTfinal}
    computes weighted bisimilarity for an input coalgebra with $n$ states and
    $m$ edges in $\CO((m+n)\cdot \log n)$ by \autoref{thmAlgoComplexity}.

    The algorithm by Valmari and Franceschinis~\cite{ValmariF10} is essentially
    that of \textsc{Split} (\autoref{figSplitAlgo}), after making
    simplifications using properties of $\R^{(-)}$. For instance, Valmari and
    Franceschinis do not need to keep the accumulated weights from states to
    blocks in memory since the weights can be computed from the labels
    within the first loop of \textsc{Split}.

    For an example run, see \autoref{exPartRef} and \autoref{fig:exPartRef:Markov}.

  \item The functor $H=\Bagf$ has a refinement interface
    (\autoref{exampleH2}) with the desired run-time
    (\autoref{exampleTime}), and thus \autoref{algPTfinal} computes
    behavioural equivalence on a $\Bagf$-coalgebra with~$n$ edges
    and~$m$ states in time $\CO((m+n)\cdot \log n)$.

    Partition refinement for \textqt{undirected} $\Bagf$-coalgebras is known as
    \emph{colour refinement} and called the \emph{1-dimensional
      Weisfeiler-Lehman Algorithm} (WL), where undirected means that
    $x\xrightarrow{n} y$ iff $y\xrightarrow{n} x$. Colour refinement is an
    important subroutine in graph isomorphism checking (and was originally
    conjectured to check graph isomorphism, see
    e.g.~\cite{CaiEA1992,Weisfeiler1976,BerkholzBG17}). Its input is an
    undirected graph $(V,E)$, i.e.\ $E$ is a set of two-element subsets of $V$.
    Then color refinement is just partition refinement on the $\Bag$-coalgebra
    $\xi \colon V\to \Bag V$ defined by $\xi(u)(v) = 1$ if $\{u,v\} \in
    E$ and $\xi(u)(v) = 0$ otherwise (i.e.~one introduces two directed edges
    per undirected edge as usual). Our algorithm runs in $\CO((m+n)\cdot \log n)$,
    matching the run-time of the optimal algorithm by Berkholz, Bonsma, and
    Grohe~\cite{BerkholzBG17}, and improving the run-time of $\CO(m\cdot n)$ of
    a previous algorithm~\cite{ShervashidzeEA2011}.

  \item Hopcroft's classical automata minimization~\cite{Hopcroft71}
    is obtained by $HX = 2×X^A$, with running time $\CO(n\cdot\log n)$
    for the binary input alphabet $A = \{0,1\}$ (Gries~\cite{Gries1973} and
    Knuutila~\cite{Knuutila2001} present this algorithm for arbitrary
    finite input alphabets $A$ which are not fixed but part of
    the input of minimization).
    
    For a fixed finite alphabet $A$, the functor $HX=2\times X^A$ --
    and also any other polynomial functor with bounded arity -- is
    zippable (\autoref{lem:closure}) and has a refinement interface
    (\autoref{exampleH2}) fulfilling the complexity bound
    (\autoref{exampleTime}). So \autoref{algPTfinal} computes
    behavioural equivalence for an input coalgebra with $n$ states and
    $m$ edges in time $\CO((m+n)\cdot \log n)$ by
    \autoref{thmAlgoComplexity}.  The functor encoding for polynomial
    functors (\autoref{exFunctorEncoding}) encodes one occurrence of a
    $k$-ary operation symbol by $k$ edges. Since we assume the
    signature to have bounded arity, the maximal $k$ is independent
    of the coalgebra size and thus a constant factor, and so
    $m \le \max(k)\cdot n \in \CO(n)$.  So the run-time of
    \autoref{algPTfinal} simplifies to $\CO(n\cdot \log n)$.
    
    Note that the mentioned automata minimization algorithms
    \cite{Hopcroft71} are different from \autoref{algPTfinal} because
    they perform a refinement step for each input symbol $a\in A$,
    whereas \textsc{Split} refines a block w.r.t.~all labels, since it
    considers all edges into the subblock $S\subseteq C$ of interest.

    If $A$ is variable and part of the input, we can fit the minimization in the
    present generic framework via the following \autoref{sec:multisorted} on
    composite functors and multisorted sets (see~\autoref{ex:comp} below).

  \end{enumerate}
\end{example}

\section{Modularity via Multisorted Coalgebra}
\label{sec:multisorted}
\tikzsetfigurename{partref-multisorted-}
We next describe how to minimize systems that mix different transition
types. For example, recall from~\autoref{ex:coalg}\ref{i:segala}
that Segala systems mix non-deterministic and probabilistic branching
in a way that makes them coalgebras for the composite functor
$X\mapsto \Potf(\Dist(A \times X))$ (or
$X \mapsto \Potf(A \times \Dist X)$ in the case of simple Segala
systems, respectively).  For our purposes, such functors raise the
problem that zippable functors are not closed under composition. In
the following, we show how to deal with this issue by moving from
composite functors to multisorted coalgebras in the spirit of previous
work on modularity in coalgebraic
logic~\cite{SchroderPattinson11}. Subsequently, the arising
multisorted coalgebras are transformed back to singlesorted coalgebras
by coproduct formation.

\subsection{Explicit intermediate states via multisortedness}
\label{sec:intermediate}
\noindent The transformation from coalgebras for composite functors
into multisorted coalgebras is best understood by example:
\begin{example} \label{exVisualMultisort}
  The functor $TX = \Potf(\Bagf X \times \Dist(A\times X))$ can be visualized as
  \begin{center}
    \begin{tikzpicture}
      \node[unary] (Potf) {\Potf};
      \node[binary] (times) at (Potf.in) {\times};
      \node[unary] (Bagf) at ([yshift=4mm]times.in1) {\Bagf};
      \node[unary] (Dist) at ([yshift=-4mm]times.in2) {\Dist};
      \node[unary] (Atimes) at (Dist.in) {A\times (-)};
      \path[draw=black!50]
        (times.out) edge node[above] {$X_2$} (Potf.in)
        (Bagf.out) edge node[above] {$X_3$} (times.in1)
        (Dist.out) edge node[above] {$X_4$} (times.in2)
        (Atimes.out) edge node[above] {$X_5$} (Dist.in)
        (Potf.out) edge node[above] {$X$} +(-8mm,0)
        (Bagf.in) edge node[above] {$X$} +(8mm,0)
        (Atimes.in) edge node[above] {$X$} +(8mm,0);
    \end{tikzpicture}
  \end{center}
  where we label the inner connections with fresh names $X_2$, $X_3$,
  $X_4$, $X_5$. From this visualization, we derive a functor
  $\bar T\colon \Set^5 \to \Set^5$:
  \[
    \begin{array}{r@{}l@{\ }l@{\ }l@{\ }l@{\ }l}
    \bar T\,(&X,&X_2,&X_3,&X_4,&X_5\hspace{1.6em})= \\
    (&\Potf X_2, &X_3\times X_4, &\Bagf X, &\Dist X_5, &A\times X).
    \end{array}
  \]
\end{example}
\noindent Formal definitions following~\cite{SchroderPattinson11} are
as follows.
\begin{definition} \label{defFlattening} Given a set $\mathcal{H}$ of
  mono-preserving and finitary functors $H\colon \Set^k\to \Set$ (with
  possibly different arities $k < \omega$), let $T\colon \Set\to \Set$
  be a functor generated by the grammar
  \[
    T ::= (-) \mid H(T,\ldots,T)
  \]
  where $H$ ranges over $\mathcal{H}$ and $(-)$ is the argument, i.e.~$T = (-)$
  is the functor $TX = X$. By inspecting the structure of such a term $T$, we
  can define a functor $\bar T\colon \Set^n \to \Set^n$, where~$n$ is the number
  of non-leaf subterms of~$T$ (i.e.~subterms of~$T$ including $T$ itself but
  not~$(-)$). Let~$f$ be a bijection from non-leaf subterms of $T$ to natural
  numbers $\{1,\ldots,n\}$, with $f(T) = 1$, and write $f(-) = 1$ to simplify
  notation (still, $f^{-1}(1) = T$). The \emph{flattening} of $T$ is $\bar
  T\colon \Set^n\to\Set^n$, given by
  \[
    (\bar T(X_1,\ldots,X_n))_i = H(X_{f(G_1)},\ldots,X_{f(G_k)})
    \quad\text{where }f^{-1}(i) = H(G_1,\ldots,G_k).
  \]
\end{definition}
\noindent Intuitively speaking, we introduce a sort for each wire in
the visualization but identify the outermost wires (labelled~$X$ in
Example~\ref{exVisualMultisort}). Note that we keep track of duplicates, so
e.g.~$TX=\Potf X\times \Potf X$ has three subterms: $(-)\times (-)$, the left
hand and the right hand $\Potf$, so $n=3$ and $\bar T(X_1,X_2,X_3) = (X_2\times
X_3, \Potf X_1, \Potf X_1)$.

In the remainder of this section we write $T$ for functors defined
according to the grammar in \autoref{defFlattening} but continue to
write~$H$ for functors in general (in particular for elements
of~$\mathcal{H}$).
  
\begin{example} \label{exMultisortification}
  In \autoref{exVisualMultisort} the functor
  $T$ is built from the set $\mathcal H$ of functors containing
  \[
    \Potf, \Bagf, \Dist, A\times(-)\colon \Set\to \Set
    \qquad
    \times\colon \Set^2\to \Set,
  \]
  and the term $T= \Potf(\Bagf(-) \times \Dist(A\times (-)))$
  has the
  following non-leaf subterms, implicitly defining the bijection $f$:
  \[
    \begin{array}{lll}
    1.~\Potf(\Bagf(-)\times \Dist(A\times(-))) 
      &
    2.~\Bagf(-)\times \Dist(A\times (-)) 
        &
    3.~\Bagf(-)
          \\
    4.~\Dist(A\times (-))
      &
    5.~A\times (-).
    \end{array}
  \]
  Then, $\bar T\colon \Set^5\to \Set^5$ is defined by
  \begin{equation*}
    \bar T(X_1,X_2,X_3,X_4,X_5)
     = (\Potf X_2,X_3\times X_4,\Bagf X_1,\Dist X_5,A\times X_1).
  \end{equation*}
\end{example}\bigskip

Now a coalgebra for the flattening $\bar T$ of a functor term $T$ is a family of maps
\[
  \xi_{i}\colon X_i \to H(X_{f(G_1)},\ldots,X_{f(G_k)})
  \quad\text{where }f^{-1}(i) = H(G_1,\ldots,G_k).
\]
For example, given a $T$-coalgebra $\xi\colon X\to TX$, the morphism
$(\xi,\id,\ldots,\id)$ in $\Set^n$ is a coalgebra for the flattening
$\bar T\colon \Set^n\to\Set^n$ of~$T$. Note that this defines the
first sort to be~$X$ and implicitly defines the other sorts. This
mapping defines a functor
$\op{Pad}\colon \Coalg(T)\to \Coalg(\bar T)$, which is a fully
faithful right-adjoint \cite{SchroderPattinson11}.
\begin{equation*}
  \begin{mytikzcd}
  \Coalg(T)
  \arrow[bend left=15, shift left = 0]{r}[alias=Pad]{\op{Pad}}
  &
  \Coalg(\bar T)
  \arrow[bend left=15, shift left = 0]{l}[alias=Comp]{\op{Comp}}
  \arrow[draw=none,to=Pad,from=Comp]{}[,description,sloped]{\dashv}
  \end{mytikzcd}
\end{equation*}
The left adjoint $\op{Comp}$ composes the component maps of a
multisorted coalgebra in a suitable way as we now explain. For a given
$\bar T$-coalgebra $(\bar X, \bar \xi)$ with $\bar \xi = (\xi_1,
\ldots, \xi_n)$ we first define for every subterm $T'$ of $T$ a map
\[
  c_{T'}\colon X_{f(T')}\to T'X_1
\]
by induction as follows: for $T' = (-)$ put
$c_{(-)} = \id_{X_1}\colon X_{f(-)} = X_1 \to X_1$, and for $T' =
H(G_1, \ldots, G_k)$, $c_{T'}$ is the following map
\[
  \begin{tikzcd}
    \llap{$X_{f(T')} =\,$}X_{f(H(G_1,\ldots,G_k))}
    \arrow{d}{\xi_{f(H(G_1,\ldots,G_k))}}
    \\
    H(X_{f(G_1)},\ldots,X_{f(G_k)})
    \arrow{d}{H(c_{G_1},\ldots,c_{G_k})}
    \\
    H(G_1X_1,\ldots,G_kX_1)\rlap{$\,= T'X_1$.}
  \end{tikzcd}
\]
Then we obtain the $T$-coalgebra
\[
  \op{Comp}(\bar X, \bar \xi) = \big(X_1 \xrightarrow{c_T} TX_{f(T)} = TX_1\big),
\]
and on morphisms put $\op{Comp}(h_1,\ldots,h_n) = h_1$.
It is not difficult to prove that
\[
  \Comp(\Pad(X,\xi))=(X,\xi),
\]
see \cite[Text after Lemma~4.11]{SchroderPattinson11}.

\begin{example}
  Given a coalgebra
  $\xi\colon X\to \Potf(\Bagf X \times \Dist(A\times X))$ for $T$ as in
  \autoref{exMultisortification}, $\Pad(X,\xi)$ is the following $\bar T$-coalgebra:
  \[
    \begin{array}{l}
    (X,\Bagf X\times \Dist(A\times X),
      \Bagf X, \Dist (A\times X), A\times X)
      \xrightarrow{(\xi,\id,\id,\id,\id)}
      \\
    \qquad
    (\Potf(\Bagf X\times \Dist(A\times X)),
    \Bagf X\times \Dist(A\times X),
    \Bagf X, \Dist (A\times X), A\times X)
    \end{array}
  \]

  Given a $\bar T$-coalgebra
  \[
    \bar \xi\colon \bar X = (X_1,X_2,X_3,X_4,X_5)
    \xrightarrow{(\xi_1,\xi_2,\xi_3,\xi_4,\xi_5)}
    (\Potf X_2,X_3\times X_4,\Bagf X_1,\Dist X_5,A\times X_1) = \bar T
    \bar X
  \]
  we obtain the following $T$-coalgebra $\Comp(\bar X, \bar \xi)$ on $X_1$:
  \[
    X_1 \xrightarrow{\xi_1} \Potf X_2 \xrightarrow{\Potf \xi_2} \Potf
    (X_3 \times X_4) \xrightarrow{\Potf(\xi_3 \times \xi_4)}
    \Potf(\Bagf X_1 \times \Dist X_5) \xrightarrow{\Potf(\id \times \Dist \xi_5)}
    \Potf(\Bagf X_1 \times \Dist (A \times X_1)).
  \]
\end{example}
\noindent The above coalgebra $\Pad(X,\xi)$ is no longer finite; e.g.~$\Bagf(X)$ is
infinite for nonempty~$X$. However, one can find a finite
$\bar T$-coalgebra that conforms to $\xi$ by restricting e.g.~the
sort $\Bagf(X)$ to those elements of $\Bagf(X)$ that actually appear
in $\xi$.

\noindent Note that a functor $H\colon \Set^k\to \Set$ is finitary if
and only if for every finite set $X$ and every map
$g\colon X\to H(Y_1,\ldots,Y_k)$ there exist finite subsets
$m_i\colon Y_i'\subto Y_i$ such that $g$ factorizes through
$H(m_1,\ldots,m_k)$:
\begin{equation}
  \begin{mytikzcd}
    X
    \arrow{r}{g}
    \arrow[dashed]{dr}[swap]{\exists g'}
    & H(Y_1,\ldots,Y_k)
    \\
    & H(Y_1',\ldots,Y_k')
    \arrow{u}[swap]{H(m_1,\ldots,m_k)}
  \end{mytikzcd}
\end{equation}
In situations where $Y_i=G(Z_1,\ldots,Z_\ell)$ for another finitary
functor $G\colon \Set^\ell\to \Set$, this process can be repeated for
each of the $m_i \colon Y_i'\to G(Z_1,\ldots,Z_\ell)$. Formally:
\begin{construction}\label{constr:barT}
  Let $\mathcal H$ be a set of finitary functors, let
  $T\colon \Set \to \Set$ be a functor composed from~$\mathcal H$ as
  in \autoref{defFlattening} with flattening
  $\bar T\colon \Set^n\to \Set^n$ (and a bijection $f$), and let
  $(X,\xi)$ be a finite $T$-coalgebra. We construct a
  $\bar T$-coalgebra $\op{Factor}(X,\xi)$ by repeatedly applying the
  above factorization technique. In this way we obtain finite sets
  $X_1,\ldots,X_n$, for every subterm $T'$ a map
  $m_{T'}\colon X_{f(T')}\to T'X$, and for every non-leaf subterm
  $T' = H(G_1, \ldots, G_k)$ of $T$ a map
  $\bar\xi_{T'}\colon X_{f(T')} \to H(X_{f(G_1)}, \ldots,
  X_{f(G_k)})$. More precisely, we start by putting
  $m_{T} := \xi\colon X_1 := X\to TX$ and then proceed recursively
  down the syntax tree of $T$. For $T' = (-)$ we put
  $m_{(-)} = \id_X \colon X_1 = X \to X$, and for a non-leaf
  subterm $T'=H(G_1,\ldots,G_k)$ we apply the above factorization to
  (the already chosen) $m_{T'}$, i.e.~we choose finite subsets
  $m_{G_i}\colon X_{f(G_i)} \subto G_iX$, $i = 1, \ldots, k$, and a
  map $\bar \xi_{T'}$ such that the triangle below commutes:
  \begin{equation}
    \begin{mytikzcd}
      X
      \arrow{r}{m_{T'}}
      \arrow{dr}[swap]{\bar\xi_{T'}}
      & H(G_1X,\ldots,G_kX)
      \\
      & H(X_{f(G_1)},\ldots,X_{f(G_n)}).
      \arrow{u}[swap]{H(m_{G_1},\ldots,m_{G_n})}
    \end{mytikzcd}
    \label{eq:functorFactor}
  \end{equation}
  For $G_i\neq (-)$, we choose
  $m_{G_i}\colon X_{f(G_i)}\hookrightarrow G_i(X)$ to be a minimal
  finite subset admitting such a factorization $\bar \xi_{T'}$, and
  for $G_i = (-)$ we use
  $m_{(-)} = \id_{X}$ as defined above. Note that we keep track of
  duplicate subterms as in \autoref{defFlattening}.
  The $\ith$ component of the
  $\bar T$-coalgebra $\op{Factor}(X,\xi)$ is now defined using
  $f^{-1}(i) = H(G_1,\ldots,G_k)$ by
  \[
    X_{i} = X_{f(H(G_1,\ldots,G_k))} \xrightarrow{\bar\xi_{H(G_1,\ldots,G_k)}}
    H(X_{f(G_1)},\ldots,X_{f(G_k)}) = (\bar T(X_1,\ldots,X_n))_i.
  \]
  Note that the sets $X_1,\ldots,X_n$ are finite, and for $T'\neq T$,
  $m_{T'}$ is injective.
\end{construction}
\begin{proposition}
  For every $T$-coalgebra $(X,\xi)$ we have
  \[
    (X,\xi) = \op{Comp}(\op{Factor}(X,\xi)).
  \]
  Moreover,
  \[
    \bar m = \big(
    \op{Factor}(X,\xi) \xrightarrow{(\id_X,m_{f^{-1}(2)},m_{f^{-1}(3)},\ldots,m_{f^{-1}(n)})} \op{Pad}(X,\xi)
    \big)
  \]
  is a $\bar T$-coalgebra morphism, in fact a subcoalgebra inclusion.
\end{proposition}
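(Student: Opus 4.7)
The plan is to treat the two claims in order, both of which reduce to routine induction on the syntactic structure of the functor term $T$, exploiting the defining factorization~\eqref{eq:functorFactor} of Construction~\ref{constr:barT}.

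For the first equality $(X,\xi) = \op{Comp}(\op{Factor}(X,\xi))$, I would note that the carriers agree by construction (both are $X_1 = X$), so what is left is to identify the coalgebra structure. Since $\op{Comp}$ is defined via the family of composed maps $c_{T'}\colon X_{f(T')} \to T'X$ built inductively over subterms $T'$ of $T$, I would prove by induction on $T'$ that $c_{T'} = m_{T'}$, where $m_{T'}$ is the map chosen in Construction~\ref{constr:barT}. The base case $T' = (-)$ is immediate since both are $\id_X$. For the inductive step at $T' = H(G_1,\ldots,G_k)$, unfolding the definition of $\op{Comp}$ gives $c_{T'} = H(c_{G_1},\ldots,c_{G_k}) \cdot \bar\xi_{T'}$; by induction this equals $H(m_{G_1},\ldots,m_{G_k}) \cdot \bar\xi_{T'}$, which by the defining triangle~\eqref{eq:functorFactor} is precisely $m_{T'}$. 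Specialising to $T' = T$, where $m_T = \xi$ by the initial choice in the construction, yields $c_T = \xi$ as required.

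For the second claim, I would verify the coalgebra-morphism squares for $\bar m$ componentwise. Writing $f^{-1}(i) = H(G_1,\ldots,G_k)$, the $i$th square asks whether the $\op{Pad}$-structure map applied after $m_{f^{-1}(i)}$ equals $H(m_{G_1},\ldots,m_{G_k}) \cdot \bar\xi_{f^{-1}(i)}$, the latter being the action of $\bar T(\bar m)_i$ after the $\op{Factor}$-structure map. For $i \geq 2$ the $\op{Pad}$-structure map is the identity, so this equation is exactly~\eqref{eq:functorFactor} at the subterm $f^{-1}(i)$. For $i = 1$ the $\op{Pad}$-structure map is $\xi$ and the first component of $\bar m$ is $\id_X$, so the square reduces to $\xi = H(m_{G_1},\ldots,m_{G_k}) \cdot \bar\xi_T$, which is again~\eqref{eq:functorFactor} at $T' = T$ together with $m_T = \xi$. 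For the subcoalgebra claim I would observe that each component of $\bar m$ is a monomorphism: the first is $\id_X$, and for $i \geq 2$ the map $m_{f^{-1}(i)}$ is by construction the inclusion of a finite subset into $(f^{-1}(i))(X)$. Since monomorphisms in $\Set^n$ are pointwise and since $\bar m$ is a coalgebra morphism, it represents a subcoalgebra in $\Coalg(\bar T)$.

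There is no genuine obstacle here; the argument is essentially bookkeeping. The only point that needs care is tracking the notational convention that the sort $X_1$ carries $X$ (not $TX$) in both $\op{Factor}(X,\xi)$ and $\op{Pad}(X,\xi)$, while for $i\geq 2$ the $\op{Pad}$-sort is $(f^{-1}(i))X$ and the $\op{Factor}$-sort is a chosen finite subset thereof. Once this is set up, the inductive step for $c_{T'} = m_{T'}$ and the case distinction between $i=1$ and $i\geq 2$ in the morphism verification both collapse to a single invocation of the factorization~\eqref{eq:functorFactor}.
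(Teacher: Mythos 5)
Your proof is correct. The verification of the second claim (that $\bar m$ is a $\bar T$-coalgebra morphism) is essentially identical to the paper's: a componentwise case distinction between the first sort and the remaining sorts, each square collapsing to an instance of the factorization triangle~\eqref{eq:functorFactor}. Where you diverge is the first claim. You prove $(X,\xi) = \op{Comp}(\op{Factor}(X,\xi))$ directly, by an induction on subterms showing $c_{T'} = m_{T'}$ and then specializing to $T' = T$ where $m_T = \xi$. The paper instead deduces the first claim \emph{from} the second: since $\op{Comp}$ is a functor with $\op{Comp}(\op{Pad}(X,\xi)) = (X,\xi)$ and $\op{Comp}(\bar m) = \bar m_1 = \id_X$, applying $\op{Comp}$ to the morphism $\bar m$ immediately identifies $\op{Comp}(\op{Factor}(X,\xi))$ with $(X,\xi)$. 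Your route is more elementary and self-contained (it does not rely on the previously recorded identity $\op{Comp}\circ\op{Pad} = \op{Id}$ or on the action of $\op{Comp}$ on morphisms), at the cost of an explicit induction that the paper's argument avoids; the paper's route gets the equality for free once the morphism square is checked, but presupposes the cited facts about the adjunction. Both are sound, and your explicit remark that the components $m_{f^{-1}(i)}$ for $i\ge 2$ are injective (hence $\bar m$ is a subcoalgebra inclusion, monos in $\Set^n$ being componentwise) makes the final clause of the statement more explicit than the paper does.
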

\begin{proof}
  We first verify that $\bar m$ is a $\bar T$-coalgebra morphism. We do this
  component-wise and by case distinction.

  In the first component, we have $f^{-1}(1)= T = H(G_1,\ldots,G_k)$ and 
  \[
    \begin{tikzcd}
      \text{in }\op{Factor}(X,\xi):
      &
      X
      \arrow{d}[swap]{\bar m_1 = \id_X}
      \arrow{r}{\bar\xi_{H(G_1,\ldots,G_k)}}
      \descto{dr}{\eqref{eq:functorFactor}}
      &[8mm] H(X_{f(G_1)},\ldots,X_{f(G_k)})
      \mathrlap{\ = \bar T(X_1,\ldots,X_n)_1}
      \arrow{d}{H(m_{G_1},\ldots,m_{G_k}) = (\bar T(\bar m))_1}
      \\
      \text{in }\op{Pad}(X,\xi):
      &
      X \arrow{r}{\xi = m_{T}}
      & H(G_1(X),\ldots,G_k(X))
      \mathrlap{\ = \bar T(X_1,\ldots,X_n)_1}
    \end{tikzcd}
    \qquad
  \]
  In all other components we have $f^{-1}(i) = T' = H(G_1,\ldots,G_k)$ and
  \[
    \begin{tikzcd}[column sep=2mm]
      \text{in }\op{Factor}(X,\xi):
      &
      X_{f(H(G_1,\ldots,G_k))} 
      \arrow{d}[swap]{\bar m_i = m_{f^{-1}(i)} = m_{T'}}
      \arrow{r}{\bar\xi_{H(G_1,\ldots,G_k)}}
      \descto{dr}{\eqref{eq:functorFactor}}
      &[12mm] H(X_{f(G_1)},\ldots,X_{f(G_k)})
      \arrow{d}{H(m_{G_1},\ldots,m_{G_k}) = (\bar T(\bar m))_i}
      \mathrlap{\ = \bar T(X_1,\ldots,X_n)_i}
      \\
      \text{in }\op{Pad}(X,\xi):
      &
      T'X \arrow{r}{\id}
      & H(G_1(X),\ldots,G_k(X)).
      \mathrlap{\ = \bar T(X_1,\ldots,X_n)_i}
    \end{tikzcd}
    \qquad
  \]
  We know that $\op{Comp}$ is a functor and that
  $\op{Comp}(\op{Pad}(X,\xi)) = (X,\xi)$. Then the definition of
  $\op{Comp}$ on morphisms yields
  $\op{Comp}(\bar m) = \bar m_1 = \id_X$ and therefore the desired identity. 
  \[
    \begin{tikzcd}[column sep = 15mm]
      \op{Comp}(\op{Factor}(X,\xi))
      \arrow[equals]{r}{\op{Comp}(\bar m)}
      & 
      \op{Comp}(\op{Pad}(X,\xi)) = (X,\xi).
    \end{tikzcd}
    \tag*{\qedhere}
  \]
\end{proof}
\begin{remark}
  We do not need $\op{Factor}$ to be functorial. In fact, it is
  functorial if every $H\colon \Set^k\to \Set$ in $\mathcal{H}$
  preserves inverse images. However, some of our functors of interest,
  e.g.~$\R^{(-)}$, do not preserve inverse images.
\end{remark}

\begin{example}
  Consider a finite coalgebra $\xi\colon X \to TX$ for the functor
  $T = \Potf(\Bagf(-) \times \Dist(A\times (-)))$ from
  \autoref{exMultisortification}. The above \autoref{constr:barT}
  yields a (multisorted) $\bar T$-coalgebra $\op{Factor}(X,\xi)$ with
  finite carriers $(X,X_2, X_3, X_4,X_5)$, and structure maps\twnote{I've made
    the variable names consistent with the construction}
  \[
    \begin{array}{r@{}l@{\qquad}r@{}l@{\qquad}r@{}l}
    \bar \xi_{T}&\colon X\to \Potf X_2
    &
    \bar \xi_{\Bagf(-)\times \Dist(A\times (-))}&\colon X_2\to X_3\times X_4
      &
    \bar \xi_{\Bagf (-)}&\colon X_3\to \Bagf X
        \\
    \bar \xi_{\Dist(A\times (-))}&\colon X_4\to \Dist X_5
    &
    \bar \xi_{A\times(-)}&\colon X_5\to A\times X.
    \end{array}
  \]
\end{example}

\noindent For our purposes it is crucial that we may compute the
simple quotient of $\op{Factor}(X,\xi)$ and obtain from its first
component the simple quotient of $(X,\xi)$.  
\begin{proposition}
  If a $\bar T$-coalgebra $(\bar X,\bar \xi)$ is simple, then so is
  the $T$-coalgebra $\op{Comp}(\bar X,\bar\xi)$. 
\end{proposition}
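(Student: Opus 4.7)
The strategy is to exploit the adjunction $\op{Comp} \dashv \op{Pad}$ together with the monomorphism characterization of simple coalgebras (\autoref{prop:simple}). Since each $H \in \mathcal{H}$ preserves monomorphisms by assumption and compositions of mono-preserving functors are mono-preserving, both $T$ and $\bar T$ preserve monomorphisms (for $\bar T$, note that monos in $\Set^n$ are precisely componentwise monos, and each component of $\bar T$ is of the form $H \circ (\pi_{j_1}, \ldots, \pi_{j_k})$, with projections being mono-preserving). Hence \autoref{prop:simple} applies to both $T$- and $\bar T$-coalgebras.

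To show that $\op{Comp}(\bar X, \bar \xi)$ is simple, I would take an arbitrary $T$-coalgebra morphism $h\colon \op{Comp}(\bar X, \bar \xi) \to (Y,\zeta)$ and verify that~$h$ is a monomorphism in $\Set$. By the adjunction $\op{Comp} \dashv \op{Pad}$, the morphism~$h$ transposes to the $\bar T$-coalgebra morphism
\[
  \tilde h \;=\; \bigl((\bar X,\bar\xi) \xrightarrow{\eta} \op{Pad}(\op{Comp}(\bar X,\bar\xi)) \xrightarrow{\op{Pad}(h)} \op{Pad}(Y,\zeta)\bigr),
\]
where~$\eta$ is the unit. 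Since $(\bar X, \bar \xi)$ is simple and $\bar T$ preserves monos, \autoref{prop:simple} gives that $\tilde h$ is carried by a monomorphism in $\Set^n$, equivalently, each component $\tilde h_i$ is injective.

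The key computation is that the first component of $\tilde h$ equals $h$: on the one hand, $\op{Pad}$ preserves the first component of morphisms (this is dual to the fact that $\op{Comp}$ on morphisms is given by $\op{Comp}(h_1, \ldots, h_n) = h_1$, as the paper explicitly states), and on the other hand, the first component of the unit $\eta_{(\bar X, \bar\xi)}$ is $\id_{X_1}$ (since $\op{Comp}(\op{Pad}(X,\xi)) = (X,\xi)$ makes $\op{Pad}$ fully faithful, with the counit being an isomorphism, and the triangle identity pins down $\eta_1 = \id$). Therefore $\tilde h_1 = h \circ \id_{X_1} = h$, and injectivity of $\tilde h_1$ gives injectivity of $h$.

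Thus every coalgebra morphism out of $\op{Comp}(\bar X, \bar\xi)$ is carried by a monomorphism, and a second application of \autoref{prop:simple} (this time for $T$) yields simplicity of $\op{Comp}(\bar X, \bar\xi)$. The only subtle point is the identification $\tilde h_1 = h$; everything else is routine manipulation of the adjunction and its mono-preserving properties, so I do not anticipate a substantial obstacle.
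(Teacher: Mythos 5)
Your proposal is correct and follows essentially the same route as the paper: transpose the morphism along the adjunction $\op{Comp}\dashv\op{Pad}$, use simplicity of $(\bar X,\bar\xi)$ to get that the transpose is monic in $\Set^n$, and observe that its first component is the original morphism (the paper simply reads this off from the explicit form $\eta_{(\bar X,\bar\xi)}=(\id_{X_1},m_2,\ldots,m_n)$ and $\op{Pad}(q)=(q,q_2,\ldots,q_n)$, where you instead derive $\eta_1=\id$ from the triangle identity). No gaps.
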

\begin{proof}
  Let $q\colon \op{Comp}(\bar X,\bar \xi)\to (Y,\zeta)$; by
  \autoref{prop:simple}, we need to show that~$q$ is monic. From~$q$,
  we obtain a $\bar T$-coalgebra morphism $q' \colon (X,\xi)\to\Pad(Y,\zeta)$
  by adjoint transposition; that is,
  \begin{equation*}
    q' = \big(
    (\bar X,\bar \xi)
    \xrightarrow{\eta_{(\bar X,\bar \xi)} = (\id_{X_1},m_2,\ldots,m_n)}
    \op{Pad}(\op{Comp}(\bar X,\bar \xi))
    \xrightarrow{\op{Pad}(q) = (q,q_2,\ldots,q_n)}
    \op{Pad}(Y,\zeta)
    \big). 
  \end{equation*}
  Since $(\bar X, \bar \xi)$ is simple, $q'$ is monic in $\Set^n$; in
  particular, the first component~$q$ of~$q'$ is monic, as required.
\end{proof}
\begin{corollary}
  If $q\colon \op{Factor}(X,\xi)\epito (Y,\zeta)$ represents the
  simple quotient of the $\bar T$-coalgebra $\op{Factor}(X,\xi)$, then
  $\op{Comp}(q)\colon (X,\xi)\to \Comp(Y,\zeta)$ represents the simple
  quotient of $(X,\xi)$.
\end{corollary}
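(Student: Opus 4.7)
The plan is to combine the three ingredients already established: functoriality of $\op{Comp}$, the identity $\op{Comp}(\op{Factor}(X,\xi)) = (X,\xi)$, and the preceding proposition that $\op{Comp}$ preserves simplicity. What remains is to verify that applying $\op{Comp}$ to a regular epi in $\Coalg(\bar T)$ yields a regular epi in $\Coalg(T)$, and then invoke uniqueness of simple quotients.

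First, I would unpack that $q = (q_1,\ldots,q_n)\colon \op{Factor}(X,\xi)\epito (Y,\zeta)$ is a regular epimorphism in $\Coalg(\bar T)$. Since the forgetful functor $\Coalg(\bar T)\to \Set^n$ creates colimits (a fact used already in \autoref{cor:coalg-colims}), being a regular epi in $\Coalg(\bar T)$ amounts to having all components $q_i$ surjective in $\Set$. By the definition of $\op{Comp}$ on morphisms, $\op{Comp}(q) = q_1$, which is thus a surjection, hence a regular epi in $\Set$. Functoriality of $\op{Comp}$ (together with $\op{Comp}(\op{Factor}(X,\xi)) = (X,\xi)$) guarantees that $\op{Comp}(q)\colon (X,\xi) \to \op{Comp}(Y,\zeta)$ is a $T$-coalgebra morphism. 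Therefore $\op{Comp}(q)$ represents a quotient of $(X,\xi)$.

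Next, by the preceding proposition, since $(Y,\zeta)$ is simple as a $\bar T$-coalgebra, $\op{Comp}(Y,\zeta)$ is simple as a $T$-coalgebra. Hence $\op{Comp}(q)$ represents \emph{a} simple quotient of $(X,\xi)$. By \autoref{lemmaSimple}, simple quotients are unique up to isomorphism, so $\op{Comp}(q)$ represents \emph{the} simple quotient of $(X,\xi)$, as claimed.

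I do not foresee any real obstacle here; the main point that needs a moment's care is that regular epis are preserved by $\op{Comp}$, and this is immediate once one recalls that regular epis in $\Coalg(\bar T)$ and $\Coalg(T)$ are detected on carriers and that $\op{Comp}$ acts as the first-component projection on carriers. Everything else is plugging previously established lemmas together.
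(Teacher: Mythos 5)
Your proof is correct and follows exactly the route the paper intends: the corollary is stated without proof as an immediate consequence of the preceding proposition together with functoriality of $\op{Comp}$, the identity $\op{Comp}(\op{Factor}(X,\xi)) = (X,\xi)$, and uniqueness of simple quotients (Lemma~\ref{lemmaSimple}). Your extra care about regular epis is fine but even simpler than you suggest, since by the paper's definition a quotient is a coalgebra morphism \emph{carried by} a regular epi in the base category, so $q_1$ is surjective by assumption and $\op{Comp}(q)=q_1$ is immediately a quotient map.
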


\takeout{
A crucial property for us is that a $T$-coalgebra $(X,\xi)$ and its
associated $\bar T$-coalgebra $\op{Factor}(X,\xi)$ have essentially
the same quotients:

\begin{proposition}
  \label{compQuotientCorrespondence}
  For every $\bar T$-coalgebra $(X,\xi)$ and surjective map $q\colon
  X_1\epito Y$, where $X_1$ is the carrier of $\op{Comp}(X,\xi)$, the following
  are equivalent:
  \begin{enumerate}
    \item $q$ is a $T$-coalgebra morphism with domain $\op{Comp}(X,\xi)$.
    \item There exists surjective maps $e_2,\ldots,e_n$ such that
      $(q,e_2,\ldots,e_n)$ is a $\bar T$-coalgebra morphism with
      domain $(X,\xi)$.
  \end{enumerate}
\end{proposition}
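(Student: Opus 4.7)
\bigskip\noindent\textbf{Proof proposal.} The plan is to reduce both directions to two facts about the adjunction $\op{Comp}\dashv \op{Pad}$ recalled from \cite{SchroderPattinson11}: (i) on morphisms, the left adjoint $\op{Comp}$ is simply projection to the first component, and (ii) the unit $\eta_{(X,\xi)}\colon (X,\xi)\to \op{Pad}(\op{Comp}(X,\xi))$ also has $\id_{X_1}$ as its first component (as noted already in the excerpt).

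For the easy direction (2)$\Rightarrow$(1), given a $\bar T$-coalgebra morphism $(q,e_2,\ldots,e_n)\colon (X,\xi)\to (\bar Y,\bar\zeta)$, apply the functor $\op{Comp}$ to obtain a $T$-coalgebra morphism $\op{Comp}(\bar X,\xi)\to \op{Comp}(\bar Y,\bar\zeta)$; by~(i) this morphism is carried by $q$, so $q$ is a $T$-coalgebra morphism with domain $\op{Comp}(X,\xi)$.

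For the harder direction (1)$\Rightarrow$(2), let $\zeta\colon Y\to TY$ be the $T$-coalgebra structure witnessing that $q$ is a $T$-coalgebra morphism, and form the adjoint transpose
\[
  q^\# \;=\; \op{Pad}(q)\cdot \eta_{(X,\xi)}\;\colon\; (X,\xi)\longrightarrow \op{Pad}(Y,\zeta)
\]
in $\Coalg(\bar T)$. By fact~(ii), $q^\#_1 = \op{Pad}(q)_1\cdot \id_{X_1} = q$, which is surjective. The idea is now to image-factorize $q^\#$ inside $\Coalg(\bar T)$: this will produce a componentwise surjective $\bar T$-coalgebra morphism $e=(e_1,\ldots,e_n)\colon (X,\xi)\epito (\bar Y,\bar\zeta)$ together with a monic $m=(m_1,\ldots,m_n)\colon (\bar Y,\bar\zeta)\monoto \op{Pad}(Y,\zeta)$ with $m\cdot e = q^\#$. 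Because $q^\#_1 = q$ is already surjective, $m_1$ is an isomorphism and we can arrange (by replacing $\bar Y_1$ with $Y$ and absorbing $m_1$ into the identification) that $e_1=q$, giving the desired tuple $(q,e_2,\ldots,e_n)$.

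The main obstacle is justifying the image factorization in $\Coalg(\bar T)$. This reduces to showing that $\bar T$ preserves monomorphisms, since the paper already recorded that mono-preserving endofunctors lift the (RegEpi,Mono)-factorization from the base to coalgebras. Monomorphisms in $\Set^n$ are pointwise injections, and each component of $\bar T$ has the form $H(X_{f(G_1)},\ldots,X_{f(G_k)})$ for some $H\in\mathcal{H}$; since each $H\in\mathcal H$ is assumed mono-preserving (\autoref{defFlattening}), so is~$\bar T$. The only remaining subtlety is choosing the image factorization so that its first-component mono is the identity $\id_Y$; this is harmless because image factorizations are unique up to isomorphism, so we may pick any representative and then relabel $\bar Y_1 := Y$.
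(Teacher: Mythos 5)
Your proposal is correct and follows essentially the same route as the paper's own proof: the direction (2)\(\Rightarrow\)(1) by applying \(\op{Comp}\) (first-component projection on morphisms), and (1)\(\Rightarrow\)(2) by forming \(\op{Pad}(q)\cdot\eta_{(X,\xi)}\), whose first component is \(q\) since \(\eta_{(X,\xi)}=(\id_{X_1},m_2,\ldots,m_n)\), and then taking its image factorization in \(\Coalg(\bar T)\). You merely make explicit what the paper leaves implicit, namely that \(\bar T\) preserves monomorphisms (componentwise, since each \(H\in\mathcal{H}\) does) so that the factorization lifts to coalgebras, and that the first epi component can be taken to be \(q\) itself because \(q\) is already surjective.
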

\begin{proof}
  If $q\colon \op{Comp}(X,\xi)\to (Y,\zeta)$ is a $T$-coalgebra morphism, then $\op{Pad}(q)$ is a $\bar
  T$-coalgebra morphism, and so is
  \[
    q' = \big(
    (X,\xi)
    \xrightarrow{\eta_{(X,\xi)} = (\id_{X_1},m_2,\ldots,m_n)}
    \op{Pad}(\op{Comp}(X,\xi))
    \xrightarrow{\op{Pad}(q) = (q,q_2,\ldots,q_n)}
    \op{Pad}(Y,\zeta)
    \big).
  \]
  Denote the image factorization of $q'$ by $q' = h\cdot e$. Since $q$ was
  surjective, the $\bar T$-coalgebra morphism $e$ is of the form
  $(q,e_2,\ldots,e_n)$ for some surjective maps as desired.

  Conversely, consider a $\bar T$-coalgebra morphism
  \[
    (X,\xi) \xrightarrow{(q,e_2,\ldots,e_n)} (Y',\zeta)
  \]
  where $Y$ is the first component of $Y'$. We have $q =
  \op{Comp}(q,e_2,\ldots,e_n)\colon \op{Comp}(X,\xi) \to \op{Comp}(Y,\zeta)$ in
  $\op{Coalg}(T)$.
\end{proof}
\begin{corollary}
  \begin{enumerate}
  \item If a $\bar T$-coalgebra $(X,\xi)$ is simple, then so is the
    $T$-coalgebra $\op{Comp}(X,\xi)$.\lsnote{Just prove this directly, forget about
    the previous proposition}
  \item Let $(X,\xi)$ be $T$-coalgebra and let
    $q\colon \op{Factor}(X,\xi)\epito (Y,\zeta)$ represent the simple
    quotient. Then $\op{Comp}(q)\colon (X,\xi)\to \Comp(Y,\zeta)$ represents
    the simple quotient of $(X,\xi)$.
  \end{enumerate}
\end{corollary}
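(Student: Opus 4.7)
The plan is to chain together two facts already established just above: the identity $\op{Comp}(\op{Factor}(X,\xi)) = (X,\xi)$, and the preceding Proposition stating that $\op{Comp}$ preserves simplicity. The content of the corollary is then essentially bookkeeping: transport the simple quotient $q$ along the functor $\op{Comp}$ and check that both properties survive.

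First I would observe that, since $\op{Comp}\colon \Coalg(\bar T)\to \Coalg(T)$ is a functor, applying it to $q\colon \op{Factor}(X,\xi)\epito (Y,\zeta)$ yields a coalgebra morphism $\op{Comp}(q)\colon \op{Comp}(\op{Factor}(X,\xi))\to \op{Comp}(Y,\zeta)$, whose domain equals $(X,\xi)$ by the identity above. Next I would verify that $\op{Comp}(q)$ is itself a regular epimorphism, i.e.\ represents a quotient. Here I use that the forgetful functor $\Coalg(\bar T)\to \Set^n$ creates colimits, so $q$ being a regular epi in $\Coalg(\bar T)$ means its underlying morphism in $\Set^n$ is one, i.e.\ componentwise surjective. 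Since $\op{Comp}$ on morphisms is defined by $\op{Comp}(h_1,\ldots,h_n) = h_1$, we have $\op{Comp}(q) = q_1$, which is surjective, hence a regular epi in $\Set$.

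Finally, I would invoke the preceding Proposition: simplicity of $(Y,\zeta)$ in $\Coalg(\bar T)$ implies simplicity of $\op{Comp}(Y,\zeta)$ in $\Coalg(T)$. Together, these two points show that $\op{Comp}(q)\colon (X,\xi)\epito \op{Comp}(Y,\zeta)$ is a quotient of $(X,\xi)$ with simple codomain. By \autoref{lemmaSimple} the simple quotient of $(X,\xi)$ is unique up to isomorphism, so $\op{Comp}(q)$ represents it.

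There is no real obstacle; the potential pitfall is simply to make sure that \emph{the regular epi part} of the definition of ``quotient'' is preserved when passing from $\Set^n$ to $\Set$ via $\op{Comp}$. This is automatic because $\op{Comp}$ on morphisms is literally the first-component projection, and surjections in $\Set^n$ are componentwise surjections.
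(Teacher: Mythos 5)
Your argument for item~(2) is correct and is essentially what the paper intends: $\op{Comp}$ is a functor with $\op{Comp}(\op{Factor}(X,\xi)) = (X,\xi)$, it sends the regular epi $q$ to its first component $q_1$ (which is surjective, hence a regular epi in $\Set$), and the codomain $\op{Comp}(Y,\zeta)$ is simple; by definition of simple quotient (and uniqueness via \autoref{lemmaSimple}) this settles~(2). The paper indeed states the corresponding corollary without further proof.

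The gap is item~(1). You dispose of it by citing ``the preceding Proposition stating that $\op{Comp}$ preserves simplicity,'' but that assertion \emph{is} item~(1) of the statement you were asked to prove; the proposition actually preceding this corollary in the development is the identity $\op{Comp}(\op{Factor}(X,\xi))=(X,\xi)$ together with the subcoalgebra inclusion into $\op{Pad}(X,\xi)$, not a simplicity-preservation result. So an argument is missing, and it is not entirely trivial: the paper proves it using the adjunction $\op{Comp}\dashv\op{Pad}$. Given a simple $\bar T$-coalgebra $(\bar X,\bar\xi)$ and any $T$-coalgebra morphism $q\colon \op{Comp}(\bar X,\bar\xi)\to (Y,\zeta)$, its adjoint transpose
\[
  q' = \big((\bar X,\bar\xi) \xrightarrow{\ \eta_{(\bar X,\bar\xi)}\ } \op{Pad}(\op{Comp}(\bar X,\bar\xi)) \xrightarrow{\ \op{Pad}(q)\ } \op{Pad}(Y,\zeta)\big)
\]
is a $\bar T$-coalgebra morphism out of the simple coalgebra $(\bar X,\bar\xi)$, hence monic by \autoref{prop:simple}; since the first component of $\eta_{(\bar X,\bar\xi)}$ is the identity and that of $\op{Pad}(q)$ is $q$, the first component of $q'$ is $q$ itself, so $q$ is monic, and \autoref{prop:simple} (applied now in $\Coalg(T)$) yields simplicity of $\op{Comp}(\bar X,\bar\xi)$. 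You should supply this (or an equivalent) argument before item~(2) can be invoked.
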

That means we can compute the simple quotient of $\op{Factor}(X,\xi)$
and obtain from its first component the simple quotient of $(X,\xi)$.
}


\noindent In short, the problem of minimizing single-sorted coalgebras
for functors composed in some way from functors $H$ reduces to
minimizing multi-sorted coalgebras for the components~$H$.  In the
next subsection, we will, in turn, reduce the latter problem to
minimizing single-sorted coalgebras, using however coproducts of the
components~$H$ in lieu of $\op{Comp}$. The benefit of this seemingly
roundabout procedure is that refinement interfaces, which fail to
combine along functor composition, do propagate along coproducts of
functors as we show in Subsection~\ref{sec:coprod-interface}.

\subsection{De-sorting multisorted coalgebras}\label{sec:breakdown}
We fix a number~$n$ of sorts, and consider coalgebras over~$\C^n$. As
before, we assume that~$\C$, and hence also~$\C^n$, fulfils
\autoref{ass:C}. We assume moreover that $H\colon \C^n\to \C^n$ is a
mono-preserving functor modelling the transition type of multisorted
coalgebras. We now show that under two additional assumptions on~$\C$,
one can equivalently transform $H$-coalgebras into single-sorted
coalgebras, i.e.~coalgebras on~$\C$, formed by taking the coproduct
of the carriers, a process we refer to as
\emph{de-sorting}. Specifically, we need $\C$ to have finite
coproducts (implying finite cocompleteness in combination with
\autoref{ass:C}) and to be \emph{extensive}~\cite{clw93}. We begin by
taking a closer look at these additional assumptions in the setting of
$\C$ and $\C^n$.

\begin{notation}
We have the usual diagonal functor
\[
  \Delta\colon \C\hookrightarrow \C^n\qquad\Delta(X) = (X,\ldots,X).
\]
This functor has a left adjoint given by taking coproducts
(e.g.~\cite[p.~225]{awodey2010category}), which we denote by
\[
  \coprodfunctor\colon \C^n \to \C\qquad\coprodfunctor(X_1,\ldots,X_n)
  =
  X_1+\ldots+X_n.
\]
The unit $\eta_X\colon X\to \Delta\coprodfunctor X$ of the adjunction
consists of the coproduct injections, and the adjoint transpose of a
$\C^n$-morphism $f\colon X\to \Delta Y$, denoted
$[f]\colon \coprodfunctor X\to Y$, arises by cotupling.
\end{notation}
\begin{defiC}[\cite{clw93}]\label{defExtensive}
  A category $\C$ with finite coproducts is called \emph{extensive} if
  the canonical functor $\C/Y_1 \times \C/Y_2 \to \C/{Y_1+Y_2}$ is an
  equivalence of categories for every pair $Y_1, Y_2$ of objects.
\end{defiC}
\begin{remark}\label{rem:extensive}
  \begin{enumerate}
  \item This compact definition can be equivalently rephrased as
    follows~\cite[Proposition~2.2]{clw93}: $\C$ has pullbacks along
    coproduct injections and a diagram
    \begin{equation}\label{diag:ext}
      \begin{tikzcd}
        X_1' \arrow{d}[swap]{h_1'}
        \arrow{r}{f_1}
        &
        X
        \arrow{d}{h}
        &
        X_2'
        \arrow{l}[swap]{f_2}
        \arrow{d}{h_2'}
        \\
        Y_1
        \arrow{r}{\inl}
        &
        Y_1+Y_2
        &
        Y_2
        \arrow{l}[swap]{\inr}
      \end{tikzcd}
    \end{equation}
    comprises two pullback squares if and only if the top row is a
    coproduct diagram.

  \item\label{rem:extensive:2} It easy to see that coproduct
    injections in an extensive category $\C$ are monomorphic, hence
    $\eta_X\colon X\rightarrowtail \Delta\coprodfunctor X$ is a
    monomorphism in $\C^n$.
  \end{enumerate}
\end{remark}
\begin{example}
  Categories with $\Set$-like coproducts are extensive, in particular $\Set$
  itself, the category of partially ordered sets and monotone maps, and the
  category of nominal sets and equivariant maps. Presheaf categories
  are extensive, and more generally, so is every Grothendieck topos. 
\end{example}
We shall make use of the following equivalent description of extensivity:
\begin{proposition}
  A category $\C$ is extensive if and only if for
  every~$n\ge 0$ we have that a commutative square in $\C^n$ as below
  is a pullback iff $[f]\colon \coprodfunctor X'\to X$ is
  an isomorphism.
  \[
    \begin{mytikzcd}
      X'
      \arrow{r}{f}
      \arrow{d}[swap]{h'}
      & \Delta X
      \arrow{d}{\Delta h}
      \\
      Y
      \arrow{r}{\eta_Y}
      & \Delta \coprodfunctor Y
    \end{mytikzcd}
  \]
\end{proposition}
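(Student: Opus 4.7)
Since limits and coproducts in $\C^n$ are computed componentwise, the commutative square in the statement decomposes into $n$ commutative squares
\[
\begin{mytikzcd}
X'_i \arrow{r}{f_i} \arrow{d}[swap]{h'_i} & X \arrow{d}{h} \\
Y_i \arrow{r}{\inj_i} & \coprodfunctor Y
\end{mytikzcd}
\]
in $\C$, one for each $i \in \{1,\ldots,n\}$, and the original square is a pullback in $\C^n$ iff each of these component squares is a pullback in $\C$. The task therefore reduces to proving that $\C$ is extensive iff, for every $n \ge 0$ and every family of data as above, all $n$ component squares are pullbacks in $\C$ iff $[f] = [f_1,\ldots,f_n]\colon \coprodfunctor X' \to X$ is an isomorphism.

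The direction ``property $\Rightarrow$ extensive'' is immediate: specializing to $n = 2$ reproduces verbatim the characterization of extensivity recorded in \autoref{rem:extensive}, noting that pullbacks along coproduct injections already exist thanks to \autoref{ass:C}.

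For the converse, I proceed by induction on $n$. The cases $n = 0, 1$ are trivial (for $n = 1$, $\eta_Y$ is itself an iso), and $n = 2$ is the content of \autoref{rem:extensive}. For the inductive step from $n - 1$ to $n$, I use the canonical isomorphism $\coprodfunctor Y \cong (Y_1 + \cdots + Y_{n-1}) + Y_n$. Taking in $\C$ the two pullbacks of $h$ along the injections of this binary coproduct yields, by the binary case of extensivity, a decomposition $X \cong Z + W$ in which $Z$ lies above $Y_1 + \cdots + Y_{n-1}$ and $W$ above $Y_n$. Pullback pasting identifies $W$ with the pullback $P_n$ of $h$ along $\inj_n$, and identifies the pullback of $Z \to Y_1 + \cdots + Y_{n-1}$ along the $i$-th injection with $P_i$ for $i < n$; the inductive hypothesis applied to $Z$ therefore yields $P_1 + \cdots + P_{n-1} \cong Z$, so that $P_1 + \cdots + P_n \cong X$ via the canonical cotuple of the projections. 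Both directions of the inner iff now follow from uniqueness of pullbacks: if all component squares are pullbacks, then $X'_i \cong P_i$ canonically and so $[f]$ identifies with this iso; conversely, if $[f]$ is an iso, each $X'_i$ is recovered as the pullback $P_i$.

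The main obstacle is keeping the pasting bookkeeping straight so that the constructed isomorphism really is the cotuple $[f]$ of the given $f_i$; once one commits to the binary-then-inductive decomposition, each individual step is a routine application of the pullback pasting lemma and binary extensivity.
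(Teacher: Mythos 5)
Your proposal is correct and follows essentially the same route as the paper: both reduce the square in $\C^n$ to its $n$ component squares and identify the case $n=2$ with the pullback characterization of extensivity recalled in \autoref{rem:extensive}, the paper dismissing $n>2$ as an immediate consequence of $n=2$ while you make that step explicit via a binary-decomposition-plus-pasting induction. The one point your converse direction leaves implicit --- that $[f]$ being an isomorphism forces each comparison map $X_i'\to P_i$ into the componentwise pullback to be an isomorphism --- amounts to the routine fact that in an extensive category a coproduct $c_1+\cdots+c_n$ of morphisms is invertible only if each $c_i$ is, which one gets by pulling back along the coproduct injections; this is worth a line but is not a real gap.
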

\begin{proof}
  For $n = 0$ and $n=1$ the above equivalence always holds using that,
  for $n=1$, $\eta_Y = \id_Y$. We consider the case $n = 2$. The above
  diagram then reads:
  \[
    \begin{tikzcd}
      (X_1', X_2')
      \arrow{r}{(f_1,f_2)}
      \arrow{d}[swap]{(h_1',h_2')}
      &
      (X,X)
      \arrow{d}{(h,h)}
      \\
      (Y_1,Y_2) \arrow{r}{(\inl,\inr)}
      &
      (Y_1+Y_2,Y_1+Y_2)
    \end{tikzcd}
  \]
  This is a pullback in $\C^2$ iff each of its components is a
  pullback in $\C$, i.e.~the two squares in \eqref{diag:ext} are
  pullbacks. The adjoint transpose $[f]$ is the morphism $[f_1,
  f_2]\colon X_1'+X_2' \to X$, which is an isormorphism if and only if
  the top row in~\eqref{diag:ext} is a coproduct diagram. Hence, the
  equivalence in the statement of the proposition for $n =2$ is
  equivalent to extensivity. We are done since for $n > 2$
  that equivalence clearly follows from that for $n = 2$. 
\end{proof}
\takeout{
\noindent In this notation, we can define extensivity of $\C$ in the
following way:
\begin{definition} \label{defExtensive} A category $\C$ is
  \emph{extensive} if the following equivalence holds for
  every~$n\ge 0$ and every commuting square in $\C^n$ as below:
  The square is a pullback iff $[f]\colon \coprodfunctor X'\to X$ is
  an isomorphism.
  \[
    \begin{mytikzcd}
      X'
      \arrow{r}{f}
      \arrow{d}[swap]{h'}
      & \Delta X
      \arrow{d}{\Delta h}
      \\
      Y
      \arrow{r}{\eta_Y}
      & \Delta \coprodfunctor Y
    \end{mytikzcd}
  \]
\end{definition}
\begin{remark}\label{rem:extensive}
  The usual definition of extensivity states that the canonical
  functor $\C/A \times \C/B \to \C/{A+B}$ is an equivalence of
  categories for every pair $A, B$ of objects. Our
  \autoref{defExtensive} above is easily seen to be equivalent to the
  characterization of extensivity in terms of pullbacks of coproducts
  given in \cite[Proposition 2.2]{clw93}. In every extensive category,
  coproduct injections are monic, so
  $\eta_X\colon X\rightarrowtail \Delta\coprodfunctor X$ is a mono.
\end{remark}
\begin{example}
  Categories with $\Set$-like coproducts are extensive, in particular $\Set$
  itself, the category of partially ordered sets and monotone maps, and the
  category of nominal sets and equivariant maps.
\end{example}
}
 
\noindent Our goal in this section is to relate $H$-coalgebras (in
$\C^n$) with $\coprodfunctor H\Delta$-coalgebras (in $\C$). This is
via two observations:
\begin{enumerate}
  \item The obvious functor from $H$-coalgebras to
    $H\Delta\coprodfunctor$-coalgebras given by
    \[
      (X \xrightarrow{x} HX) \mapsto (X \xrightarrow{x} HX
      \xrightarrow{H\eta_X} H\Delta\coprodfunctor X)
    \]
    preserves and reflects (simple) quotients
    (\autoref{coalgebraEtaFunctor}).
  \item The categories of $\coprodfunctor H\Delta$-coalgebras and
    $H\Delta\coprodfunctor$-coalgebras are equivalent 
    (\autoref{coalgebraExtensivity}).
\end{enumerate}

Since $H$ preserves monomorphisms we have a natural transformation $H\eta\colon
H \monoto H \Delta \coprodfunctor$ with monic components. The first translation
thus follows from~\autoref{P:reduction}: \twnote{I've reordered the two steps
  because this is the order in which the translation works!}

\begin{corollary}
  \label{coalgebraEtaFunctor}
  If $\C$ is extensive, then a $H$-coalgebra $\xi\colon X\to HX$ and
  its induced $H\Delta\coprodfunctor$-coalgebra
  \[
    X\xrightarrow{\xi} HX \xrightarrow{H\eta_X} H\Delta\coprodfunctor X
  \]
  have the same quotients and, hence, the same simple quotient.
\end{corollary}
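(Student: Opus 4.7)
The plan is to invoke Proposition~\ref{P:reduction} with $G := H\Delta\coprodfunctor$ and the natural transformation $m := H\eta\colon H \to H\Delta\coprodfunctor$, where $\eta\colon \Id_{\C^n} \to \Delta\coprodfunctor$ is the unit of the adjunction $\coprodfunctor \dashv \Delta$. The only thing that needs to be verified in order to apply Proposition~\ref{P:reduction} is that $H\eta$ has monic components.

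First I would note that for every object $X$ of $\C^n$, the unit morphism $\eta_X\colon X \to \Delta\coprodfunctor X$ is componentwise a coproduct injection into $X_1 + \cdots + X_n$. By Remark~\ref{rem:extensive}\ref{rem:extensive:2}, coproduct injections in an extensive category are monomorphisms, so each component of $\eta_X$ is monic in $\C$, and therefore $\eta_X$ itself is a monomorphism in $\C^n$ (monomorphisms in product categories are computed componentwise).

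Next, since $H\colon \C^n \to \C^n$ preserves monomorphisms by our running \autoref{ass:sec3}, the morphism $H\eta_X\colon HX \to H\Delta\coprodfunctor X$ is a monomorphism for every $X$. Naturality of $H\eta$ follows from naturality of $\eta$ by whiskering with $H$. Hence $H\eta\colon H \monoto H\Delta\coprodfunctor$ is a natural transformation with monomorphic components, and Proposition~\ref{P:reduction} applies and yields directly that $(X,\xi)$ and its induced $H\Delta\coprodfunctor$-coalgebra $H\eta_X \cdot \xi$ have the same quotients, and therefore the same simple quotient (whenever it exists).

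There is no real obstacle here; the proof is a one-line citation of Proposition~\ref{P:reduction} once the monicity of $\eta_X$ has been extracted from extensivity. The only small subtlety worth spelling out is the reduction from monicity in $\C^n$ to componentwise monicity in $\C$, which is immediate since limits (and hence the defining pullback square for a monomorphism) in $\C^n$ are computed pointwise.
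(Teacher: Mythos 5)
Your proposal is correct and is essentially identical to the paper's own argument: the paper also derives the corollary by applying Proposition~\ref{P:reduction} to the natural transformation $H\eta\colon H\monoto H\Delta\coprodfunctor$, whose components are monic because coproduct injections are monomorphisms in an extensive category (Remark~\ref{rem:extensive}) and $H$ is assumed to preserve monomorphisms. The only cosmetic difference is that the mono-preservation of $H\colon\C^n\to\C^n$ is the standing assumption stated at the start of Section~\ref{sec:multisorted} rather than Assumption~\ref{ass:sec3}, which concerns the single-sorted setting.
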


\begin{lemma}
  \label{coalgebraExtensivity}
  If $\C$ is extensive, then the lifting $\bar\coprodfunctor$ of the
  coproduct functor $\coprodfunctor$
  \begin{equation*}
    \bar\coprodfunctor\colon \Coalg(H\Delta\coprodfunctor) \to
    \Coalg(\coprodfunctor H\Delta), \qquad \bar\coprodfunctor (
    X\xrightarrow{\xi} H\Delta\coprodfunctor X) = (\coprodfunctor X
    \xrightarrow{\coprodfunctor \xi} \coprodfunctor
    H\Delta\coprodfunctor X)
  \end{equation*}is an equivalence of categories.
\end{lemma}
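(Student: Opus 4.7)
The plan is to show that the functor $\bar\coprodfunctor$ is essentially surjective and fully faithful, in both cases by exploiting extensivity to transport coproduct decompositions between $\C$ and $\C^n$.

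For essential surjectivity, given $\zeta\colon Y \to \coprodfunctor H\Delta Y$, I observe that its codomain is the coproduct $(H\Delta Y)_1 + \cdots + (H\Delta Y)_n$. Pulling $\zeta$ back along each coproduct injection $\inj_i\colon (H\Delta Y)_i \to \coprodfunctor H\Delta Y$ and invoking the pullback characterization of extensivity (\autoref{rem:extensive}), I obtain a canonical decomposition $Y \cong Y_1 + \cdots + Y_n$ with restrictions $\zeta_i\colon Y_i \to (H\Delta Y)_i$. Setting $X := (Y_1, \ldots, Y_n) \in \C^n$ yields $\coprodfunctor X \cong Y$, whence $H\Delta\coprodfunctor X \cong H\Delta Y$, so the family $(\zeta_i)_i$ assembles into a $\C^n$-morphism $\xi\colon X \to H\Delta\coprodfunctor X$, and one reads off $\bar\coprodfunctor(X,\xi) \cong (Y,\zeta)$ directly from the universal property of the coproduct.

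For full faithfulness I consider coalgebras $(X,\xi)$ and $(X',\xi')$ for $H\Delta\coprodfunctor$ together with an arbitrary coalgebra morphism $h\colon \bar\coprodfunctor(X,\xi) \to \bar\coprodfunctor(X',\xi')$. The morphism condition $\coprodfunctor H\Delta h \cdot \coprodfunctor\xi = \coprodfunctor\xi' \cdot h$ lives in the coproduct $\coprodfunctor H\Delta\coprodfunctor X' = \coprod_i (H\Delta\coprodfunctor X')_i$, and the left-hand side respects this coproduct structure summand-wise. Pulling back along each injection $\inj_i$ and applying extensivity once more, the map $h$ itself decomposes as $h = h_1 + \cdots + h_n$ with $h_i\colon X_i \to X_i'$. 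A component-wise calculation then shows that the tuple $(h_1,\ldots,h_n)\colon X \to X'$ satisfies the $H\Delta\coprodfunctor$-coalgebra morphism condition, and uniqueness of the decomposition (yet again by extensivity) identifies it as the only preimage of $h$ under $\bar\coprodfunctor$.

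The main technical burden is the bookkeeping required to align the various pullback decompositions supplied by extensivity. Concretely, one must identify the pullback of $\coprodfunctor\xi' \cdot h$ along $\inj_i$ simultaneously with the pullback of $h$ along $X_i' \hookrightarrow \coprodfunctor X'$ (obtained by first splitting $\coprodfunctor\xi'$) and with $X_i$ itself (obtained from the left-hand side $\coprodfunctor H\Delta h \cdot \coprodfunctor\xi$), so that extensivity forces $h$ to decompose along the coproduct. Once this identification is made, the coalgebra morphism condition in $\C^n$ reduces component-wise to the one in $\C$ via the universal properties of the coproducts $\coprodfunctor X = X_1 + \cdots + X_n$ and $\coprodfunctor X' = X_1' + \cdots + X_n'$.
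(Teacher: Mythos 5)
Your proof is correct and follows essentially the same route as the paper: extensivity is used to decompose the carrier of a $\coprodfunctor H\Delta$-coalgebra along the coproduct structure of $\coprodfunctor H\Delta Y$ for essential surjectivity, and to decompose a coalgebra morphism $h$ summand-wise for fullness. The only difference is presentational -- you perform $n$ separate pullbacks along the coproduct injections in $\C$, whereas the paper packages these into a single pullback square in $\C^n$ along the unit $\eta$ of the adjunction $\coprodfunctor\dashv\Delta$; the content is the same.
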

\begin{proof}
  We have to show that $\bar\coprodfunctor$ is full, faithful and
  isomorphism-dense~\cite{joyofcats}. Faithfulness is immediate from
  the fact that already $\coprodfunctor\colon \C^n\to\C$ is faithful, since
  coproduct injections are monic
  (\autoref{rem:extensive}\ref{rem:extensive:2}). To see
  that $\bar\coprodfunctor$ is full, let
  $h\colon \bar\coprodfunctor(X,\xi) \to \bar\coprodfunctor(Y,\zeta)$
  be a $\coprodfunctor H\Delta$-coalgebra morphism. By naturality
  of $\eta$, we then have a commuting diagram
  \begin{equation*}
    \begin{mytikzcd}
      X
      \arrow{r}{\eta_X}
      \arrow{dd}[swap]{\xi}
      & \Delta\coprodfunctor X
      \arrow{d}{\Delta \scriptcoprodfunctor\xi}
      \arrow{r}{\Delta h}
      &[4mm] \Delta \coprodfunctor Y
      \arrow{d}{\Delta \scriptcoprodfunctor\zeta}
      & Y
      \pullbackangle{-135}
      \arrow{dd}{\zeta}
      \arrow{l}[swap]{\eta_Y}
      \\
      & \Delta \coprodfunctor H\Delta\coprodfunctor X
      \arrow{r}{\Delta\scriptcoprodfunctor H\Delta h}
      & \Delta \coprodfunctor H\Delta \coprodfunctor Y
      \\
      H\Delta \coprodfunctor X
      \arrow{ur}{\eta_{H\Delta\scriptcoprodfunctor X}}
      \arrow{rrr}[swap]{H\Delta h}
      & & {} &
      H\Delta \coprodfunctor Y
      \arrow{ul}[swap]{\eta_{H\Delta\scriptcoprodfunctor Y}}
    \end{mytikzcd}
    \text{in }\C^n.
  \end{equation*}
  We have the indicated pullback by extensivity (since $[\eta_Y]=\id$), and thus
  obtain $h'\colon X \to Y$ such that $\zeta\cdot h'= H\Delta h\cdot \xi$ and
  $\eta_Y\cdot h'=\Delta h\cdot \eta_X$. Using naturality of $\eta$
  the latter equality yields
  \[
    \Delta\coprodfunctor h' \cdot \eta_X = \eta_Y \cdot h' = \Delta h
    \cdot \eta_X,
  \]
  which implies that $\coprodfunctor h' = h$ since $\eta_X$ is
  a universal morphism. Now the first equality above states that
  $h'\colon (X,\xi)\to(Y,\zeta)$ is an
  $H\Delta\coprodfunctor$-coalgebra morphism.

  It remains to show that $\bar\coprodfunctor$ is
  isomorphism-dense. So let $(X,\xi)$ be a
  $\coprodfunctor H\Delta$-coalgebra. Form the pullback
  \begin{equation}\label{eq:pb-density}
    \begin{mytikzcd}
      X'
      \arrow{r}{x}
      \arrow{d}[swap]{\xi'}
      \pullbackangle{-45}
      & \Delta X
      \arrow{d}{\Delta \xi}
      \\
      H\Delta X
      \arrow{r}{\eta_{H\Delta X}}
      & \Delta \coprodfunctor H\Delta X
    \end{mytikzcd}
  \end{equation}   
  in $\C^n$. Since $\C$ is extensive,
  $[x]\colon \coprodfunctor X' \to X$ is an isomorphism; we thus have
  an $H\Delta\coprodfunctor$-coalgebra
  \begin{equation*}
    X'\xrightarrow{\xi'}H\Delta X\xrightarrow{H\Delta [x]^{-1}}
    H\Delta\coprodfunctor X'.
  \end{equation*}
  Applying the adjunction to the square~\eqref{eq:pb-density} shows
  $\xi \cdot [x] =\coprodfunctor \xi'$. It follows that the
  isomorphism~$[x]$ is a coalgebra morphism, and hence an isomorphism
  in $\Coalg(\coprodfunctor H\Delta)$, from
  $\bar\coprodfunctor (X',H\Delta[x]^{-1}\cdot\xi')$ to $(X, \xi)$:
  \begin{equation*}
    \begin{mytikzcd}[baseline=(bot.base), column sep = 17mm]
      \coprodfunctor X'
      \arrow{r}{[x]}
      \arrow{d}[swap]{\coprodfunctor \xi'}
      & X
      \arrow{dd}{\xi}
      \\
      \coprodfunctor H \Delta X
      \arrow{d}[swap]{\coprodfunctor H\Delta [x]^{-1}}
      \arrow{rd}{\id}
      \\
      |[alias=bot]|
      \coprodfunctor H \Delta \coprodfunctor X'
      \arrow{r}[swap]{\coprodfunctor H\Delta [x]}
      & \coprodfunctor H\Delta X 
    \end{mytikzcd}
    \tag*{\qedhere}
  \end{equation*}
\end{proof}

\takeout{
\begin{lemma}
  \label{coalgebraEtaFunctor}
  A $H$-coalgebra $\xi\colon X\to HX$ and its induced
  $H\Delta\coprodfunctor$-coalgebra
  \[
    X\xrightarrow{\xi} HX \xrightarrow{H\eta_X} H\Delta\coprodfunctor X
  \]
  have the same quotients and, hence, the same simple quotient.
\end{lemma}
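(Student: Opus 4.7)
The plan is to recognise this lemma as an immediate consequence of \autoref{P:reduction}, the general reduction result showing that quotients are preserved along monomorphic natural transformations. Since the whole apparatus has been laid out already, the proof should be quite short; the only task is to verify the hypothesis of \autoref{P:reduction} for the natural transformation $H\eta\colon H \monoto H\Delta\coprodfunctor$.

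First I would observe that $\eta\colon \Id_{\C^n} \to \Delta\coprodfunctor$ is the unit of the adjunction $\coprodfunctor \dashv \Delta$, and that its components $\eta_X\colon X \to \Delta\coprodfunctor X$ are precisely the tuples of coproduct injections. By \autoref{rem:extensive}(\ref{rem:extensive:2}), in an extensive category coproduct injections are monic, so each $\eta_X$ is a monomorphism in $\C^n$ (componentwise, and hence in the product category). Next, since $H$ preserves monomorphisms by \autoref{ass:sec3}, applying $H$ yields a natural transformation
\[
  H\eta\colon H \longrightarrow H\Delta\coprodfunctor
\]
again with monomorphic components $H\eta_X\colon HX \monoto H\Delta\coprodfunctor X$.

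At this point I would simply invoke \autoref{P:reduction} with $G := H\Delta\coprodfunctor$ and $m := H\eta$: any $H$-coalgebra $\xi\colon X \to HX$ and the induced $G$-coalgebra $H\eta_X \cdot \xi$ have the same quotients, and consequently the same simple quotient (whenever it exists). This is exactly the content of the lemma. The hard part, conceptually, was already done in proving \autoref{P:reduction} and in establishing the basic properties of extensive categories; here everything fits together mechanically. I do not anticipate any obstacle beyond confirming that $\C^n$ inherits the structural assumptions of \autoref{ass:C} from $\C$, which is standard since limits and colimits in $\C^n$ are computed componentwise.
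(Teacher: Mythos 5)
Your proposal is correct and matches the paper's own argument exactly: the paper likewise notes that $H\eta\colon H\monoto H\Delta\coprodfunctor$ has monomorphic components (since coproduct injections are monic in an extensive category and $H$ preserves monomorphisms) and then derives the statement as an immediate instance of \autoref{P:reduction}. Nothing is missing.
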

\begin{proof}
  We prove only the first claim. Let $q\colon X\epito Y$ be a regular
  epimorphism. It suffices to show that $q$ carries an $H$-coalgebra
  morphism with domain $(X,\xi)$ iff it carries an
  $H\Delta\coprodfunctor$-coalgebra morphism with domain
  $(X,H\eta_X\cdot\xi)$. Since $H$ preserves monomorphisms, we see
  that $H\eta\colon H \to H \Delta \coprodfunctor$ induces an
  embedding $\Coalg(H)\to\Coalg(H\Delta\coprodfunctor)$. Hence, `only
  if' is clear, and we prove `if'. Suppose that $q$ is a coalgebra
  morphism $(X, H\eta\cdot \xi) \to (Y,\zeta)$. Then the outside of
  the following diagram commutes:
  \[
    \begin{mytikzcd}
      X
      \arrow{r}{\xi}
      \arrow[->>]{d}[swap]{q}
      & HX
      \arrow{r}{H\eta_X}
      \arrow{d}[swap]{Hq}
      & H\Delta\coprodfunctor X
      \arrow{d}{H\Delta\scriptcoprodfunctor q}
      \\
      Y
      \arrow[dashed]{r}[near end]{\exists! \zeta'}
      \arrow{rr}[swap]{\zeta}
      & |[yshift=5mm]| HY
      \arrow[>->]{r}{H\eta_Y}
      & H\Delta\coprodfunctor Y
    \end{mytikzcd}
  \]
  Since $\eta_X$ is monic and $H$ preserves monomorphisms, $H\eta_Y$ is monic,
  so we obtain $\zeta'$ as in the diagram by the diagonal fill-in property
  (Section~\ref{sec:equiv}), making $q$ an $H$-coalgebra morphism
  $(X,\xi)\to(Y,\zeta')$.
\end{proof}}

So the task of computing the simple quotient of a
multisorted coalgebra is reduced again to the same problem on ordinary
coalgebras in \Set. Thus, it remains to check that the arising
functor $\Delta H\coprodfunctor$ indeed fulfils
\autoref{interface-linear}.

\subsection{Coproducts of refinement
  interfaces}\label{sec:coprod-interface}

We have already seen that zippable functors are closed under
coproducts (\autoref{lem:closure}). We proceed to show that we can
also combine refinement interface along coproducts. Let functors
$H_i\colon \Set\to \Set$, $1 \leq i \leq n$ have refinement interfaces
with labels $A_i$ and weights $W_i$, and associated functions
$\flat_i,\op{init}_i,w_i,\op{update}_i$. We construct a refinement
interface for the coproduct $H=\sortcoprod H_i$, with labels
$A = \sortcoprod A_i$ and weights $W = \sortcoprod W_i$, as follows.
First define the following helper function, which restricts a multiset
of labels to a given sort $i$:
  \[\textstyle
    \op{filter}_i\colon \Bagf(\textstyle\coprod_{j=1}^n A_j)\to\Bagf(A_i),
    \qquad \op{filter}_i(f)(a) = f(\inj_i(a)).
  \]
  (Note that this differs from the filter function $\op{fil}_S$ in
  \eqref{eqSplitterLabels}, which filters for a subset of states
  $S\subseteq X$.) Then we implement the refinement interface for $H$
  component-wise as follows (writing $I=\{1,\dots,n\}$):
  \begingroup\allowdisplaybreaks
  \begin{align*}
    \flat &= \big(\sortcoprod H_iY \xrightarrow{\coprod \flat_i}
    \sortcoprod\Bagf(A_i×Y) \xrightarrow{[\Bagf(\inj_i\times
      Y)]_{i\in I}
            }
            \Bagf(\sortcoprod A_i \times Y)\big)
    \\
    w(S) &= \big(\sortcoprod H_i Y \xrightarrow{\coprod w_i(S)} \sortcoprod W_i\big)
    \qquad\text{for }S\subseteq Y,
    \\
    \op{init} &= \big(
                       \begin{mytikzcd}[column sep=6mm,every cell/.append style={inner sep =0.5pt, outer sep=0.5pt}]
                       \sortcoprod H_i 1×\Bagf \sortcoprod A_i
                       \arrow{r}[yshift=2mm]{\underbrace{(\inj_i(t),a) \mapsto \inj_i(t,a)}}
                       & \coprod_i \big( H_i 1×\Bagf \coprod_j A_j  \big)
                       \arrow{r}[yshift=2mm]{\underbrace{\coprod{\id\times \op{filter}_i}}}
                       & \sortcoprod \left( H_i 1×\Bagf A_i  \right)
                       \arrow{r}[yshift=2mm]{\coprod{\op{init}}_i}
                       & \sortcoprod W_i
                       \end{mytikzcd}\big)
      \\
      \op{update}&= \big(
                       \begin{mytikzcd}[column sep=3mm,every cell/.append
                         style={inner sep =0.5pt, outer sep=0.5pt},
                         baseline=(firstline.base)]
                       |[alias=firstline]|
                       \Bagf \sortcoprod\!A_i\times \sortcoprod  W_i
                       \arrow{r}[yshift=2mm]{\underbrace{(a,\inj_i(t)) \mapsto \inj_i(a,t)}}
                       & \coprod_i \big(\,\Bagf\!\coprod_j A_j \times W_i\big)
                       \arrow{r}[yshift=2mm]{\underbrace{\coprod{(\op{filter}_i\times \id)}}}
                       & \sortcoprod (\Bagf A_i \times W_i)
                       \arrow{r}[yshift=2mm]{\underbrace{\coprod{\op{update}_i}}}
                       & \smash{\sortcoprod}\!  (W_i\times H_i3\times W_i)
                       \arrow{d}{[\inj_i\times\inj_i\times\inj_i ]_{i\in I}
 }                      \\
                       & & & \sortcoprod\!  W_i\times \sortcoprod\! H_i3\times \sortcoprod\! W_i\big)
                       \end{mytikzcd}\!\!
  \end{align*}\endgroup

\begin{proposition}
  \label{coprodInterface}
  The data $W$, $A$, $\flat$, $\op{init}$,~$w$, and $\op{update}$ as
  constructed above form a refinement interface for
  $H=\sortcoprod H_i$, and if the interfaces of the~$H_i$ fulfil
  \autoref{interface-linear}, then so does the one of~$H$.
\end{proposition}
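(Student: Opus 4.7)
The proof plan is to verify the two axioms of a refinement interface for $H=\coprodfunctor H_i$ by checking them sortwise, exploiting the fact that every element of $HY = \coprod_i H_iY$ is tagged with a unique sort $i$ and that the constructed $\op{init}$, $\op{update}$ are literally case-distinctions on that tag. The key technical fact I would prove first is the \emph{sort extraction lemma}: for every $t\in H_iY$ and every $S\subseteq Y$,
\[
  \op{filter}_i \cdot \op{fil}_S \cdot \flat \cdot \inj_i (t)
  = \op{fil}_S \cdot \flat_i(t),
\]
whereas $\op{filter}_j(\op{fil}_S(\flat(\inj_i t))) = \emptyset$ for $j\neq i$. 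This reduces to naturality of $\op{fil}_S$ in the label argument (\autoref{R:filS}) applied to the injection $\inj_i\colon A_i \to \coprod_j A_j$, since by construction $\flat\cdot \inj_i = \Bagf(\inj_i\times Y)\cdot \flat_i$, together with the trivial fact that $\op{filter}_i\cdot\Bagf \inj_i=\id$ and $\op{filter}_j\cdot \Bagf \inj_i$ is constant with value $\emptyset$ when $j\neq i$.

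Second, I would verify the two squares in~\eqref{eqSplitterLabels} by chasing an arbitrary element $\inj_i(t)\in HY$ through the defining composite of $\op{init}$ (resp.~$\op{update}$). The chase pushes the $\inj_i$-tag through the first two outer reindexings of the definitions, reducing the computation to $\op{init}_i$ (resp.~$\op{update}_i$) applied to $(H_i!\, t,\op{fil}_X\cdot\flat_i(t))$ (resp.~$(\op{fil}_S\cdot\flat_i(t), w_i(C)(t))$), where the sort extraction lemma is used to discharge the filter layer. The commutativity of~\eqref{eqSplitterLabels} for $H_i$ then yields $\inj_i(w_i(X)(t))$, which equals $w(X)(\inj_i t)$ by the componentwise definition of $w$, and analogously $(\inj_i\,w_i(S)(t),\, \inj_i H_i\chi_S^C(t),\, \inj_i w_i(C\setminus S)(t))$ for $\op{update}$; this coincides with $\fpair{w(S),H\chi_S^C,w(C\setminus S)}(\inj_i t)$ because $H\chi_S^C(\inj_i t) = \inj_i(H_i\chi_S^C(t))$.

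Third, for \autoref{interface-linear} I would observe that both $\op{init}$ and $\op{update}$ consist of the underlying $\op{init}_i$, resp.~$\op{update}_i$, preceded and followed by constant-cost reindexings together with one application of $\op{filter}_i$. The filter runs in linear time in its input (a single pass discarding entries of other sorts), and we need only compute the filter for the unique relevant sort, not for all of them. Hence linear run-time of $\op{init}_i, \op{update}_i$ lifts to $\op{init},\op{update}$. For the ordering on $H3 = \coprod_i H_i3$ I would use the lexicographic order (first the sort index, then the order on $H_i3$); comparison is constant time given constant-time comparison in each $H_i3$ and in the finite index set $\{1,\dots,n\}$.

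The main (and essentially only) obstacle is bookkeeping of the coproduct injections and the filter operations; the sort extraction lemma above isolates this bookkeeping once and for all, after which the two diagram verifications reduce to the corresponding already-established axioms for each $H_i$. There is no intrinsic difficulty, because the construction is a sortwise dispatch that never mixes information across components.
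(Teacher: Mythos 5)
Your proposal is correct and follows essentially the same route as the paper: the paper verifies the two axioms of \eqref{eqSplitterLabels} by precomposing with the jointly epic injections $\inj_i$, reducing to the axioms for each $H_i$ via the naturality of $\op{fil}_S$ (\autoref{R:filS}) applied to $\inj_i\colon A_i\to\coprod_j A_j$ — which is exactly the content of your \textqt{sort extraction lemma} — and then argues linearity of the filtering preprocessing and constant-time comparison via the lexicographic order on $\sortcoprod H_i3$. The only cosmetic difference is that you phrase the verification as an element chase where the paper draws two commuting diagrams.
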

\begin{proof}
  For $i = 1, \ldots, n$, the following diagram commutes:
  \twnote{the implicit indexing is wrong!!!}
  \[
    \begin{mytikzcd}[column sep=17mm]
        \sortcoprod[j] H_j Y
        \arrow{d}[left, pos=0.4]{\begin{array}{r}
            \fpair{
            H!,
            \\[-2mm]
            \Bagf\pi_1\cdot \flat}
            \end{array}}
          \arrow[to path={
            |- ([xshift=10mm,yshift=10mm]HiY.base)
            -- ([xshift=10mm,yshift=10mm]Wi.base) \tikztonodes
            |- (\tikztotarget)
          }, rounded corners]{rrdd}{w(Y)}
        & |[alias=HiY]| H_iY
        \arrow{l}[swap]{\inj_i}
        \arrow{d}[left, pos=0.4]{\begin{array}{r}
            \fpair{
            H_i!,
            \\[-2mm]
            \Bagf\pi_1\cdot \flat_i}
            \end{array}}
        \arrow[shift left=2]{dr}{w_i(Y)}
        \\
        \sortcoprod[j] H_j1 × \Bagf \sortcoprod[j]A_j
        \arrow{d}[swap]{(\inj_i(t),a) \mapsto \inj_i(t,a)}
        \arrow[to path={
          -- ([xshift=-10mm]\tikztostart.west)
          |- ([yshift=-5mm]\tikztotarget.south) \tikztonodes
          -- (\tikztotarget)
        }, rounded corners]{drr}[pos=0.8]{\op{init}}
        & H_i1 × \Bagf A_i
            \arrow{r}{\op{init}_i}
            \arrow{d}{\inj_i}
            \arrow{l}[swap]{\inj_i\times \Bagf \inj_i}
        & |[alias=Wi]| W_i
            \arrow{d}{\inj_i}
        \\
        \coprod_j \big(H_j1 × \Bagf \coprod_kA_k\big)
        \arrow{r}{\coprod(\id\times\op{filter}_j)}
        & \sortcoprod[j](H_j 1\times \Bagf A_j)
        \arrow{r}{\coprod \op{init}_j}
        & \sortcoprod[j] W_j
    \end{mytikzcd}
  \]
  Since the $\inj_i\colon H_i Y \to \sortcoprod H_i Y$ are jointly epic, the
  commutativity shows the axiom for $\op{init}$ in~\eqref{eqSplitterLabels}.
  For $S\subseteq C\subseteq Y$ we have the diagram:
  \[
    \begin{mytikzcd}[column sep=12mm,row sep=10mm]
    \smash{\sortcoprod[j]} H_i Y
    \arrow{d}[pos=0.4]{\fpair{\flat,w(C)}}
    \arrow[to path={
      -- ([yshift=6mm]\tikztostart.north)
      -| ([xshift=5mm]\tikztotarget.east) \tikztonodes
      -- (\tikztotarget)
    }, rounded corners]{dddrr}[pos=0.2]{\fpair{w(S),H\chi_S^C, w(C\setminus S)}}
    &
    H_iY
    \arrow[
        to path={
            -| (\tikztotarget) \tikztonodes
        },
        rounded corners,
    ]{ddr}[pos=0.23,above,yshift=1mm]{\fpair{w_i(S),H_i\chi_S^C, w_i(C\setminus S)}}
    \arrow{d}[pos=0.4,left]{\fpair{\flat_i,w_i(C)}}
    \arrow{l}[swap]{\inj_i}
    \descto[yshift=2mm,xshift=2mm]{ddr}{\eqref{eqSplitterLabels} for $H_i$}
    \\
    \Bagf(\sortcoprod[j] A_j×Y) × \sortcoprod[j] W_j
    \arrow{d}[]{\op{fil}_S × W_i}
    \descto[yshift=2mm]{dr}{Naturality of $\op{fil}_S$\\[-1pt]  (\autoref{R:filS})}
    & \Bagf(A_i×Y) × W_i
    \arrow{l}[swap,yshift=2mm]{\Bagf(\inj_i\times Y)\times \inj_i}
    \arrow{d}[]{\op{fil}_S × W_i}
    \\
    \Bagf \sortcoprod[j]A_j × \sortcoprod[j] W_j
    \arrow{d}[swap]{(a,\inj_i(t)) \mapsto \inj_i(a,t)}
    \arrow[to path={
      -- ([xshift=-11mm]\tikztostart.west)
      |- ([yshift=-5mm]\tikztotarget.south) \tikztonodes
      -- (\tikztotarget)
    }, rounded corners]{drr}[pos=0.8]{\op{update}}
    & \Bagf A_i × W_i
    \arrow{l}[swap,yshift=2mm]{\Bagf\inj_i\times \inj_i}
    \arrow{d}{\inj_i}
    \arrow{r}[yshift=2mm]{\op{update}_i}
    & W_i × H_i3 × W_i
    \arrow{d}{\inj_i}
    \\
    \smash{\coprod_j} (\Bagf\coprod_kA_k \times W_j)
    \arrow{r}{\sortcoprod[j] \op{filter}_j\times W_j}
    & \sortcoprod[j] (\Bagf A_j × W_j)
    \arrow{r}[yshift=2mm]{\sortcoprod[j]\op{update}_j}
    & \sortcoprod[j] (W_j × H_j3 × W_j)
    \end{mytikzcd}
  \]
  Using again that the $\inj_i\colon H_i Y \to \sortcoprod H_i Y$ are
  jointly epic, we see that the claimed refinement interface fulfils
  the axiom for $\op{update}$ in~\eqref{eqSplitterLabels}.  If for
  every $i$, the interface of $H_i$ fulfils the time constraints from
  \autoref{interface-linear}, then so does the interface for $H$: both
  $\op{init}$ and $\op{update}$ preprocess the parameters in linear
  time (via $\op{filter}$), before calling the $\op{init}_i$ and
  $\op{update}_i$ of the corresponding $H_i$. We order $H3$
  lexicographically, i.e.\ $\inj_i(x) < \inj_j(y)$ iff either $i<j$ or
  $i=j$ and $x< y$ in $H_i3$.  Comparison in constant time is then
  clearly inherited.
\end{proof}

\noindent Hence, our coalgebraic partition refinement algorithm is
modular w.r.t.~coproducts.  In combination with the results of
Sections~\ref{sec:intermediate} and~\ref{sec:breakdown}, this gives a
modular efficient minimization algorithm for multisorted coalgebras,
and hence for coalgebras for composite functors.

We proceed to see examples employing the multi-sorted approach,
complementing the examples already given for the single-sorted
approach (\autoref{ex:singlesort}). We build our examples from the
functors
\[
  \Dist,\Potf,A\times (-)\colon \Set\to \Set
  \qquad
  \times,+\colon \Set^2\to \Set,
\]
and in fact many of them appear in work on a coalgebraic hierarchy of
probabilistic system types~\cite{BARTELS200357}. 
\begin{example}\label{ex:comp}
\begin{enumerate}[wide=0pt]

\item \emph{Labelled transition systems} with an infinite set~$A$ of
  labels.  Here we decompose the coalgebraic type functor
  $\Potf (A \times (-))$ into $H_1=\Potf$ and $H_2=A\times(-)$. We
  transform a coalgebra $X\to \Potf(A\times X)$ (with $m$ edges) into
  a multisorted $(H_1,H_2)$-coalgebra; the new sort $Y$ then contains
  one element per edge. By de-sorting, we finally obtain a
  single-sorted coalgebra for $\bar H = H_1 + H_2$ with $n+m$ states
  and $m$ edges, leading to a complexity of
  $\CO((n+m)\cdot \log (n+m))$. If $m \geq n$ we thus obtain a run-time
  in $\CO(m\cdot \log m)$, like in \cite{DovierEA04} but slower
  than Valmari's $\CO(m\cdot \log n)$~\cite{Valmari09}.

  For fixed finite $A$, the running time of our algorithm is in
  $\CO((m+n)\log n)$.  Indeed, by finiteness of $A$, we have
  $\Potf(A\times(-)) \cong \Potf(-)^A$. Then a coalgebra
  $X\xrightarrow{\xi} \Potf(A\times X) \cong \Potf(X)^A$ with $n= |X|$
  states and $m = \sum_{x\in X}|\xi(x)|$ edges is transformed into a
  two-sorted coalgebra
  \[
  \eta_X\colon X\to (A\times X)^A, \;x\mapsto (a\mapsto (a,x));\qquad
  \bar \xi\colon A\times X \to \Potf(X),\;(a,x)\mapsto \xi(x)(a).
  \]
  The arising de-sorted system on $X+A\times X$ has $n + |A|\cdot n$
  states and $|A|\cdot n + m$ edges, so the simple quotient is found
  in $\CO((m+n)\cdot \log n)$ like in the single-sorted approach
  (\autoref{ex:singlesort}).

\end{enumerate}\begin{enumerate}[resume,wide]

\item As mentioned already, Hopcroft's classical automata
  minimization~\cite{Hopcroft71} is obtained by instantiating our
  approach to $HX = 2×X^A$, with running time $\CO(n\cdot\log n)$ for
  fixed alphabet $A$. For non-fixed $A$ the best known complexity is
  in $\CO(|A| \cdot n\cdot\log n)$~\cite{Gries1973,Knuutila2001}.  To
  obtain the alphabet as part of the input to our algorithm, we
  consider DFAs as labelled transition systems encoding the letters of
  the input alphabet as natural numbers. More precisely, given a
  finite input alphabet $A$, we choose
  some injective encoding map $c\colon A \monoto \N$ and we form the natural
  transformation with the components
  \[
    m_X\colon 2 \times X^A \monoto 2 \times \Potf(\N \times X)
    \qquad\text{with}\qquad
    m_X(b,f) = \big(b, \{(c(a), f(a)) \mid a \in A\}\big).
  \]
  Since $m_X$ is clearly monomorphic, we apply
  \autoref{P:reduction} to see that minimization of a DFA $\xi\colon
  X \to 2 \times X^A$ is reduced to minimizing the coalgebra
  \[   
    X \xrightarrow{\xi}2 \times X^A \xrightarrow{m_X} 2 \times \Potf(\N \times X).
  \]
  Further, we decompose the type
  functor into $H_1 = 2 \times \Potf$ and $H_2 = \N \times (-)$.  An
  automaton $\xi$ for a finite input alphabet $A$ 
  is then represented by the two-sorted system
  \[
    \xi\colon X\to 2\times \Potf (A\times X)
    \qquad
    c\times X \colon A\times X\to \N \times X
  \]
  With $|X| =n $, this system has $n + |A|\cdot n$ states and
  $|A|\cdot n + |A|\cdot n$ edges. Thus, our algorithm runs in time
  \[
    \CO((|A| \cdot n)\cdot\log (|A|\cdot n))
    = \CO(|A| \cdot n\cdot\log n + |A|\cdot n\cdot\log|A|),
  \]
  i.e.\ as fast as the above-mentioned best known algorithms except on
  automata with more alphabet letters than states.
  
\item\label{item:alternating} Coalgebras for the functor
  $H X = \Dist X + \Potf(A\times X)$ are \emph{alternating
    systems}~\cite{Hansson:1994:TPF:561335}. The functor $H$ is
  flattened to the multi-sorted functor
  \[
    \bar H(X_1,X_2,X_3,X_4) = (X_2+ X_3, \Dist X_1, \Potf X_4, A\times X_1)
  \]
  on $\Set^4$, which is then de-sorted to obtain the \Set-functor
  \[
    \coprodfunctor \bar H\Delta X = (X+ X) + \Dist X+ \Potf X+ A\times X
  \]
  which has a refinement interface as given by
  \autoref{coprodInterface}. Given an $H$-coalgebra with~$n$ states
  and~$m_d$ edges of type~$\Dist$ and~$m_p$ edges of type
  $\Potf(A\times (-))$, the induced
  $\coprodfunctor \bar H\Delta$-coalgebra has $n+m_p$ states and
  $n+m_d+m_p+m_p$ edges, and is minimized under bisimilarity in time
  \( \CO((n+m_d+m_p)\cdot \log (n+m_p)).  \)

  Other probabilistic system types~\cite{BARTELS200357} are handled
  similarly, where one only needs to take care of estimating the
  number of states in the intermediate sorts as in the treatment
  above. We discuss two further examples explicitly, simple and
  general Segala systems.

\item For a simple Segala system considered as a coalgebra $\xi\colon X\to
  \Potf(A\times \Dist X)$ there are partition refinement algorithms by Baier,
  Engelen, Majster-Cederbaum\ \cite{BaierEM00} and by Groote, Verduzco, and de
  Vink\ \cite{GrooteEA18}. For the complexity analysis, define the number of
  states and edges respectively as
  \[\textstyle
    n = |X|,\qquad
    m_p = \sum_{x\in X} |\xi(x)|.
  \]
  The arising multi-sorted coalgebra consists of maps
  \[
    p\colon X\to \Potf Y
    \qquad
    a\colon Y\to A\times Z
    \qquad
    d\colon Z\to \Dist X.
  \]
  In the coalgebra $\xi$ there is one distribution per
  non-deterministic edge, hence $|Y| = m_p = |Z|$. The
  non-deterministic map~$p$ has $m_p$ edges by construction, and the
  deterministic map~$a$ has $|Y|=m_p$ edges. Let $m_d$ denote the
  number of edges needed to encode $d$; then $m_d \le n\cdot m_p$. We
  thus have $n + 2\cdot m_p$ states and $2\cdot m_p + m_d$ edges, so
  our algorithm runs in time $\CO((n + m_p+m_d)\cdot \log(n + m_p))$.
  For every $z\in Z$, $d(z)$ is a non-empty distribution, and so under
  the assumption that there is at least one non-deterministic edge per
  state $x\in X$, we have $m_d\ge m_p\ge n$, simplifying the
  complexity to $\CO(m_d\cdot \log m_p)$. In independent work, Groote
  et al.~\cite{GrooteEA18} consider $Z$ as part of the input, and
  design and implement an algorithm of time complexity
  $\CO((m_p+m_d) \cdot \log |Z| + m_d\cdot \log n)$, which simplifies
  to the same complexity $\CO(m_d\cdot \log m_p)$ for $|Z| = m_p$ and
  $m_d\ge m_p\ge n$. This is more fine-grained than the complexity
  $\CO((n\cdot m_p)\cdot \log(n+m_p)) = \CO((n\cdot m_p)\cdot
  \log(n\cdot m_p))$
  of~\cite{BaierEM00}, and indeed leads to a faster run-time in the
  (presumably wide-spread) case that probabilistic transitions are
  sparse, i.e.~if~$m_d$ is substantially below $n\cdot m_p$.

\item\label{item:general-segala} For a general Segala system
  $\xi \colon X\to \Potf(\Dist(A\times X))$ one has a similar
  factorization:
  \[
    p\colon X\to \Potf Y
    \qquad
    d\colon Y\to \Dist Z
    \qquad
    z\colon Z\to A\times X
  \]
  So for $n= |X|$ states, $m_p$ non-deterministic edges, and $m_d$ probabilistic
  edges, the multisorted system has $n + |Y| + |Z| = n + m_p + m_d$ states and
  $m_p + m_d + m_d$ edges. Since $d(y)$ is non-empty for all $y\in Y$, $m_d\ge
  m_p$, and so the generic partition refinement has a run-time of
  $\CO(m_d\cdot \log (n + m_d))$. 
\end{enumerate}
\end{example}
\noindent Summing up the last three examples, on simple Segala systems
we obtain faster run-time than the best previous
algorithm~\cite{BaierEM00} (with similar results obtained
independently by Groote et al.~\cite{GrooteEA18}), and we obtain, to
our best knowledge, the first similarly efficient partition refinement
algorithms for alternating systems
(Example~\ref{ex:comp}\ref{item:alternating}) and general Segala
systems~(Example~\ref{ex:comp}\ref{item:general-segala}).
 
\section{Conclusions and Further Work}
\label{sec:conc}
We have presented a generic algorithm that quotients coalgebras by
behavioural equivalence. We have started from a category-theoretic
procedure that works for every mono-preserving functor on a category
with image factorizations, and have then developed an improved
algorithm for \emph{zippable} endofunctors on $\Set$.  Provided the
given type functor can be equipped with an efficient implementation of
a \emph{refinement interface}, we have finally arrived at a concrete
procedure that runs in time $\CO((m+n)\log n)$ where $m$ is the number
of edges and $n$ the number of nodes in a graph-based representation
of the input coalgebra. We have shown that this instantiates to (minor
variants of) several known efficient partition refinement algorithms:
the classical Hopcroft algorithm~\cite{Hopcroft71} for minimization of
DFAs, the Paige-Tarjan algorithm for unlabelled transition
systems~\cite{PaigeTarjan87}, Valmari and Franceschinis's lumping
algorithm for weighted transition systems~\cite{ValmariF10}, and the
1-dimensional Weisfeiler Lehman
Algorithm~\cite{CaiEA1992,Weisfeiler1976,ShervashidzeEA2011}. Moreover,
we have presented a generic method to apply the algorithm to mixed
system types. As an instance, we obtain an algorithm for simple Segala
systems that allows for a more fine-grained analysis of a\-symp\-to\-tic
run-time than previous algorithms~\cite{BaierEM00}, and matches the
run-time of a recent algorithm described independently by Groote
et~al.~\cite{GrooteEA18}.

Further instances can be covered by relaxing the time complexity
assumptions on the refinement interfaces~\cite{DMSW19}, which allows
covering monoid-valued functors~$M^{(-)}$ also for non-cancellative
monoids $M$; by our compositionality methods, we obtain in particular
efficient partition refinement algorithm for $M$-weighted tree
automata, which improves the run-time of a previous algorithm by
Högberg, Maletti, and May~\cite{HoegbergEA07}.

In further work~\cite{DMSW19}, we describe and evaluate a generic
implementation \CoPaR of the generic algorithm. The implementation
supports unrestricted combination of functors with refinement
interfaces (cf.~\autoref{sec:multisorted}) and implements all the
functors from \autoref{sec:efficient}.

It remains open whether our approach can be extended to, e.g.\ the monotone
neighbourhood functor, which is not itself zippable (see \autoref{ex:nbhd}) and
also does not have an obvious factorization into zippable functors. We do expect
that our algorithm applies beyond weighted systems. For example, it should be
relatively straightforward to extend our algorithm to nominal systems,
i.e.~coalgebras for functors on the category of nominal sets and equivariant
maps. Of course, precise complexity bounds will then depend on the
representation of nominal sets.

\bibliographystyle{alpha}
\bibliography{refs}

\end{document}